\newif\ifhyper\IfFileExists{hyperref.sty}{\hypertrue}{\hyperfalse}
\ifhyper\usepackage{hyperref}\fi
\def\confversion{0}
\newtheorem{theorem}{Theorem}
\newtheorem{lemma}[theorem]{Lemma}
\newtheorem{corollary}[theorem]{Corollary}
\newtheorem{claim}[theorem]{Claim}
\newtheorem{fact}[theorem]{Fact}
\newtheorem{definition}[theorem]{Definition}
\newtheorem{remark}[theorem]{Remark}
\newcommand{\eps}{\epsilon}
\newcommand{\epsa}{\epsilon'}
\newcommand{\etal}{{\em et al.\ }}
\newcommand{\Var}{\operatorname{Var}}
\newcommand{\reg}{\mathrm{reg}}
\newcommand{\Cov}{\operatorname{Cov}}
\newcommand{\sgn}{\mathrm{sign}}
\newcommand{\lab}{\mathrm{label}}
\newcommand{\sign}{\mathrm{sign}}
\newcommand{\ignore}[1]{}
\newcommand{\cref}[1]{Corollary~\ref{cor:#1}}
\newcommand{\bits}{\{-1,1\}}
\newcommand{\bn}{\bits^n}
\newcommand{\R}{{\mathbb{R}}}
\newcommand{\Z}{{\mathbb Z}}
\newcommand{\E}{\operatorname{{\bf E}}}
\newcommand{\littlesum}{\mathop{\textstyle \sum}}
\newcommand{\poly}{\mathrm{poly}}
\newcommand{\Inf}{\mathrm{Inf}}
\newcommand{\eqdef}{\stackrel{\textrm{def}}{=}}
\renewcommand{\Pr}{\operatorname{{\bf Pr}}}
\newcommand{\nnew}[1]{{\color{blue} #1}}
\newcommand{\dk}{d_{\mathrm K}}
\newcommand{\drect}{d_{{\mathrm{rect}}}}
\newcommand{\green}[1]{{\color{green} {#1}}}
\newcommand{\new}[1]{{\color{red} {#1}}}
\renewcommand{\subsection}{\@startsection{subsection}{2}{0pt}{-6pt}{-5pt}{\normalsize\bf}}
\renewcommand{\subsubsection}{\@startsection{subsubsection}{3}{0pt}{-12pt}{-5pt}{\normalsize\bf}}
\newcommand{\rnote}[1]{\footnote{{\bf [[Rocco: {#1}\bf ]] }}}
\newcommand{\NC}{\mathsf{NC}}
\newcommand{\Thr}{\mathsf{Thr}}
\newcommand{\AND}{\mathsf{AND}}
\newcommand{\proj}{\mathrm{Proj}}
\newcommand{\res}{\mathrm{Res}}
\newcommand{\head}{\mathrm{Head}}
\newcommand{\tail}{\mathrm{Tail}}
\newcommand{\quadtail}{\mathrm{QuadTail}}
\date{}
\title{
Deterministic Approximate Counting for \\
Juntas of Degree-$2$ Polynomial Threshold Functions
}
\author{Anindya De\thanks{{\tt anindya@math.ias.edu}.  Research supported by Umesh Vazirani's Templeton Foundation Grant 21674.}\\
Institute for Advanced Study\\
\and
Ilias Diakonikolas\thanks{{\tt ilias.d@ed.ac.uk}. Supported in part by a SICSA PECE grant and a Carnegie research grant.} \\
University of Edinburgh\\
\and Rocco A.\ Servedio\thanks{{\tt rocco@cs.columbia.edu}. Supported by NSF grant CCF-1115703.}\\
Columbia University\\
}
\begin{document}

\setcounter{page}{0}

\maketitle

\thispagestyle{empty}

\begin{abstract}

Let $g: \{-1,1\}^k \to \{-1,1\}$ be any Boolean function
and $q_1,\dots,q_k$ be any degree-2 polynomials over
$\{-1,1\}^n.$
We give a \emph{deterministic} algorithm which,
given as input explicit descriptions of
$g,q_1,\dots,q_k$ and an accuracy parameter $\eps>0$,
approximates
\[
\Pr_{x \sim \{-1,1\}^n}[g(\sign(q_1(x)),\dots,\sign(q_k(x)))=1]
\]
to within an additive $\pm \eps$.  For any constant $\eps > 0$
and $k \geq 1$ the running time of our algorithm is a fixed
polynomial in $n$ (in fact this is true even for some not-too-small
$\eps = o_n(1)$ and not-too-large $k = \omega_n(1)$).
This is the first fixed polynomial-time algorithm
that can deterministically approximately count
satisfying assignments of a natural
class of depth-3 Boolean circuits.

Our algorithm extends a recent result \cite{DDS13:deg2count}
which gave a deterministic
approximate counting algorithm for a single degree-2 polynomial
threshold function $\sign(q(x)),$ corresponding to the $k=1$ case of our
result.  Note that even in the $k=1$ case it is NP-hard to determine
whether $\Pr_{x \sim \{-1,1\}^n}[\sign(q(x))=1]$ is nonzero,
so any sort of multiplicative approximation is almost certainly
impossible even for efficient randomized algorithms.

Our algorithm and analysis requires several novel technical ingredients
that go significantly beyond the tools required to handle the $k=1$ case
in \cite{DDS13:deg2count}.  One of these
is a new multidimensional central limit theorem 
for degree-2 polynomials in Gaussian random variables which builds
on recent Malliavin-calculus-based results from probability theory.  We
use this CLT as the basis of a new decomposition technique for $k$-tuples
of degree-2 Gaussian polynomials and thus obtain an efficient
deterministic approximate counting 
algorithm for the Gaussian distribution, i.e., an algorithm for estimating
\[
\Pr_{x \sim N(0,1)^n}[g(\sign(q_1(x)),\dots,\sign(q_k(x)))=1].
\]
Finally, a third new ingredient
is a ``regularity lemma'' for \emph{$k$-tuples} of degree-$d$ 
polynomial threshold functions.  This generalizes both the regularity lemmas
of \cite{DSTW:10,HKM:09} 
(which apply to a single degree-$d$ polynomial threshold
function) and the regularity lemma of Gopalan et al \cite{GOWZ10} 
(which applies to
a $k$-tuples of \emph{linear} threshold functions, i.e., the case $d=1$).
Our new regularity lemma lets us extend our deterministic approximate
counting results from the Gaussian to the Boolean domain.
\ignore{but their approach does not seem to extend to larger values of $d$.
Via a different approach we give a regularity lemma that holds for all
values of $k$ and $d$.}

\end{abstract}

\newpage

\newpage

\section{Introduction}

\ignore{

}

Unconditional derandomization has been an important research area in computational complexity theory
over the past two decades~\cite{AjtaiWigderson:85, Nis91, Nisan:92, NiW94}. A major research goal in this area
is to obtain efficient deterministic approximate counting algorithms for ``low-level'' complexity classes
such as constant depth circuits, small space branching programs, polynomial threshold functions, and others
\cite{LVW93, LubyVelickovic:96, Trevisan:04, GopalanMR:13,
Viola09, GKMSVV11, DDS13:deg2count}.
Under the widely-believed hypothesis $\mathbf{P} = \mathbf{BPP}$, there must be a polynomial time deterministic algorithm
that can approximate the fraction of satisfying assignments to any polynomial--size circuit.
Since finding such an algorithm  seems to be out of reach of present day complexity theory~\cite{kabimp02}, research efforts have been directed
to the aforementioned low-level classes.

A natural class of Boolean functions to consider in this context is the class of polynomial threshold functions (PTFs).
Recall that a
degree-$d$ PTF, $d \ge 1$, is a Boolean function $f: \{-1,1\}^n \to \{-1,1\}$ defined by
$f(x) = \sign(p(x))$ where $p: \{-1,1\}^n \to \R$ is a degree-$d$
polynomial over the reals and $\sign: \R \to \{-1,1\}$ is defined
as $\sign(z) = 1$ iff $z \geq 0$.  In the special case where $d=1$, degree-$d$ PTFs are often
referred to as \emph{linear threshold functions} (LTFs).
Understanding the structure of these functions has been a topic of extensive investigation for decades
(see e.g., \cite{ MyhillKautz:61, MTT:61, MinskyPapert:68, Muroga:71, GHR:92, Orponen:92,
Hastad:94,Podolskii:09} and many other works) due to their importance in fields such as concrete complexity
theory \cite{Sherstov:08,Sherstov:09,DHK+:10,Kane:10,Kane12,Kane12GL,KRS12}, learning theory
\cite{KKMS:08,SSS11,DOSW:11,DDFS:12stoc},
voting theory \cite{APL:07, DDS12icalp}, and others.

In the context of approximate counting, there is a significant gap in our understanding of  low-degree PTFs.
An outstanding open problem is to design a deterministic algorithm that approximates the fraction of satisfying assignments
to a constant degree PTF over $\bn$ to an additive $\pm \eps$ and runs in time $\poly(n/\eps)$.
Even for the class of degree-$2$ PTFs, until recently
no deterministic algorithm was known with running time $\poly(n)$ for any sub-constant value of the error $\eps$.
In previous work~\cite{DDS13:deg2count} we obtained such an algorithm.
In the present paper we make further progress on this problem by developing the first efficient deterministic counting algorithm
for the class of {\em juntas of (any constant number of) degree-$2$ PTFs}.

\subsection{ Our main result.}

As our main result, we give a polynomial-time deterministic approximate counting algorithm for
any Boolean function of constantly many degree-2 polynomial threshold functions.

\begin{theorem} \label{thm:main-boolean-informal}
[Deterministic approximate counting of functions of degree-2 PTFs
over $\{-1,1\}^n$]
There is an algorithm with the following properties:  given
an arbitrary function $g: \{-1,1\}^k \to \{-1,1\}$ and $k$
degree-2 polynomials $q_1(x_1,\dots,x_n),\dots,q_k(x_1,\dots,x_n)$
and an accuracy parameter $\eps>0$, the
algorithm runs (deterministically) in
time $\poly(n) \cdot 2^{(1/\eps)^{2^{O(k)}}} $
and outputs a value $v \in [0,1]$ such that
\[
\left|\Pr_{x \in \{-1,1\}^n}[g( \sgn(q_1(x)),\dots, \sgn(q_k(x)))=1] - v \right| \leq \eps.
\]
\end{theorem}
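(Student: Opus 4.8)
The plan is to reduce the Boolean counting problem to a Gaussian counting problem via a new regularity/invariance argument for $k$-tuples of PTFs, and then to solve the Gaussian problem using a new multidimensional central limit theorem for vectors of degree-$2$ Gaussian polynomials together with an eigenvalue-based decomposition. Concretely, I would first establish a \emph{regularity lemma for $k$-tuples of degree-$2$ PTFs}: there is a deterministically constructible decision tree over $x_1,\dots,x_n$, of size bounded by a function of $k$ and $\eps$ only, such that for all but an $\eps$-fraction of its leaves $\rho$ (weighted by $\Pr[x \text{ reaches } \rho]$), every restricted polynomial $q_i^\rho$ is either ``$\tau$-regular'' (no variable has influence exceeding a $\tau$ fraction of $\Var[q_i^\rho]$) or is such that $\sgn(q_i^\rho)$ is $\eps$-close to a constant. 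This simultaneously generalizes the LTF regularity lemma of \cite{GOWZ10} (general $k$, $d=1$) and the single-PTF regularity lemmas of \cite{DSTW:10,HKM:09} ($k=1$, $d=2$); I would prove it by a potential-function argument that repeatedly branches on a bounded set of ``critical'' variables — those carrying a large share of some $q_i^\rho$'s variance — until every surviving polynomial is regular or near-constant, bounding the tree size by showing each branching step strictly decreases a suitable complexity measure of the tuple.

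At each good leaf $\rho$ I would invoke a \emph{multidimensional invariance principle}: when all of $q_1^\rho,\dots,q_k^\rho$ are $\tau$-regular, the distribution of $(\sgn(q_1^\rho(x)),\dots,\sgn(q_k^\rho(x)))$ under $x\sim\{-1,1\}^n$ is within $\mathrm{error}(\tau,k)$ of its counterpart under $x\sim N(0,1)^n$, for every $g$ (near-constant polynomials are simply replaced by the appropriate constant, reducing $k$). This follows from the invariance principle of Mossel--O'Donnell--Oleszkiewicz applied after mollifying each $\sgn$, with the mollification error controlled by Carbery--Wright anti-concentration for low-degree Gaussian polynomials, handled one coordinate at a time by a $k$-fold hybrid. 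It therefore suffices to estimate $\Pr_{x\sim N(0,1)^n}[g(\sgn(q_1(x)),\dots,\sgn(q_k(x)))=1]$ for arbitrary degree-$2$ polynomials $q_i$.

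For the Gaussian problem I would run a second, eigenvalue-based decomposition. Writing $q_i(x)=x^\top A_i x + b_i^\top x + c_i$, call the tuple ``regular'' if for each $i$ the operator norm $\|A_i\|$ is a small fraction of $\|A_i\|_F$ and the linear part is small relative to the overall variance; otherwise identify a bounded number of ``critical directions'' (dominant eigenvectors of the $A_i$ and dominant linear directions), condition on the Gaussian projections along them over a fine grid, and recurse on each cell (the conditional law of $x$ is again Gaussian on the orthogonal complement). Since fixing one $q_i$'s critical directions perturbs the \emph{linear} parts of all the others, the $\head$/$\tail$-type split must be redone after each round, and termination is again guaranteed by a potential argument; the nesting of up to $\approx k$ such rounds, each with a rapidly shrinking regularity parameter, together with the grid granularity raised to the number of critical directions, is what yields the $2^{(1/\eps)^{2^{O(k)}}}$ factor. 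At each regular leaf I would apply the \emph{new multidimensional CLT}: if $q_1,\dots,q_k$ are regular degree-$2$ Gaussian polynomials then $(q_1(x),\dots,q_k(x))$ is close — in a metric strong enough to control orthant-type events — to the Gaussian $N(\mu,\Sigma)$ with matching mean vector and covariance matrix, so $\Pr[g(\sgn(q_1),\dots,\sgn(q_k))=1]\approx\Pr_{G\sim N(\mu,\Sigma)}[g(\sgn(G_1),\dots,\sgn(G_k))=1]$, and the right-hand side is a $k$-dimensional Gaussian orthant probability computable to additive $\pm\eps$ by deterministic numerical integration since $k$ is constant. Averaging the leaf estimates with the corresponding Gaussian weights, and then over the Boolean decision tree, produces the output $v$.

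The hardest part — and the key new ingredient — is this multidimensional CLT together with making it usable against $\sgn$-events. Malliavin-calculus / Stein's-method results (in the spirit of Nourdin--Peccati and its multivariate extensions) give quantitative CLTs for vectors in a fixed Wiener chaos, but only with respect to \emph{smooth} test functions, whereas we must control $\Pr[g(\sgn(q_1),\dots)=1]$ with $g\circ\sgn$ highly discontinuous. Closing this gap requires a companion \emph{multidimensional anti-concentration} statement for degree-$2$ Gaussian polynomials — ensuring $(q_1(x),\dots,q_k(x))$ does not place too much mass in any thin slab around a coordinate hyperplane — and a delicate calibration of the regularity parameters so that the CLT error, the mollification/anti-concentration error, and the cost of the decomposition stay simultaneously under control, with only polynomial (not exponential) dependence on the regularity parameter so that the decomposition terminates with a tolerable number of leaves. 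Everything else (diagonalizing the $A_i$, building and traversing the two decision trees, discretizing, numerical integration) is deterministic and runs in $\poly(n)$ time per leaf, giving the stated bound.
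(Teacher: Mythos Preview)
Your proposal is correct and follows the same three-ingredient architecture as the paper --- a regularity lemma for $k$-tuples of PTFs, a multidimensional invariance principle to pass to Gaussian inputs, and a Gaussian algorithm built from an eigenvalue decomposition plus a multidimensional CLT proved via Nourdin--Peccati/Malliavin calculus and then upgraded to Kolmogorov-type distance by mollification and Carbery--Wright. Two implementation differences are worth flagging. First, the paper proves the Boolean regularity lemma by \emph{induction on $k$}: apply the $(k{-}1)$-lemma with much finer parameters $\tau',\eps'$, then the $k{=}1$ lemma on the last polynomial, and prove that $\tau'$-regularity and $\eps'$-skewness survive the second round of restrictions; the paper explicitly argues that a direct GOWZ-style greedy/potential approach is problematic for degree $d>1$ (restricted variables interact in the head, and there is no super-increasing-weights anti-concentration analogue), so your one-shot potential argument would in practice need the same ``regularity/skewness is preserved under short restrictions'' lemmas. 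Second, on the Gaussian side the paper does \emph{not} recurse with shrinking parameters: \textbf{Change-Basis} collects all critical eigen-directions in a single pass with fixed parameters, performs one orthonormal change of basis, and only then discretizes the resulting $t=\poly(k/\eps)$ head coordinates; the Gaussian algorithm runs in time $2^{\poly(2^k/\eps)}$. The $2^{(1/\eps)^{2^{O(k)}}}$ factor you attribute to the Gaussian decomposition actually comes from the Boolean regularity lemma, whose decision-tree depth is $(1/\tau \cdot \log(1/\eps))^{(2d)^{\Theta(k)}}$ because of the inductive parameter blowup $\tau\to\tau'$, $\eps\to\eps'$ at each of the $k$ levels.
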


Our result may be (somewhat informally) restated in terms of Boolean circuits as a
$\poly(n)$-time deterministic approximate counting algorithm for the class
$\NC^0$-$\Thr$-$\AND_2$ of depth-$3$ circuits that have an arbitrary $\NC^0$ gate (i.e., junta) at the top level,
arbitrary weighted threshold gates at the middle level, and fanin-2 $\AND$ gates at the bottom level.
Theorem \ref{thm:main-boolean-informal} is a broad generalization 
of the main result of
\cite{DDS13:deg2count}, which establishes the special
$k=1$ case of the current result. 

As noted in \cite{DDS13:deg2count}, the problem of
determining whether $\Pr_{x \in \{-1,1\}^n}[p(x) \geq 0]$ is nonzero
for a degree-$2$ polynomial $p$  is well known to be NP-hard, and hence
no efficient algorithm, even allowing randomness, can give a
multiplicative approximation to $\Pr_{x \sim \{-1,1\}^n}[p(x) \geq 0]$
unless NP $\subseteq$ RP.  Given this, it is natural to work towards an
additive approximation, which is what we achieve.

\smallskip

\noindent {\bf Previous work.}
For $k=1$ and $d=1$ Gopalan \etal in \cite{GKMSVV11}
obtained a multiplicatively $(1 \pm \eps)$-accurate deterministic
$\poly(n,1/\eps)$ time approximate counting algorithm.
For $d \geq 2$, however, as noted above additive approximation is the best one
can hope for. For the special case of $k=1$, in separate work
\cite{DDS13:deg2count}, the authors have given a deterministic approximate
counting algorithm that runs in time $\poly(n,2^{\poly(1/\eps)})$.
As we explain in detail in the rest of this introduction,
more sophisticated ideas and techniques
are required to obtain the results of the current paper for general $k$.
These include a new central limit theorem based on Malliavin calculus and Stein's method,
and a new decomposition procedure that goes well beyond the decomposition
approach employed in \cite{DDS13:deg2count}.

We remark that the only previous deterministic approximate counting
algorithm for $k$-juntas of degree-$2$ PTFs follows
from the  \emph{pseudorandom generators} (PRGs) of~\cite{DKNfocs10} (which are based on bounded independence).
The running time of the resulting algorithm is $n^{\poly(1/\eps)}$, even for $k=1$.

\subsection{Techniques.}

Our high-level approach to establishing Theorem \ref{thm:main-boolean-informal}
follows a by now standard approach in this area.  We first (i)
establish the result for general polynomials over Gaussian inputs; then
(ii)
use a ``regularity lemma'' to show that
every polynomial over Boolean inputs can be decomposed into a ``small'' number
of regular polynomials over Boolean inputs; and finally
(iii)
use an invariance principle to reduce the problem for ``regular''
polynomials over Boolean inputs to the problem for regular polynomials over
Gaussian inputs.
This general approach has been used in a number
of previous works, including constructions of unconditional PRGs~\cite{DGJ+:10, MZstoc10, GOWZ10, DKNfocs10, Kane11ccc, Kane12},
learning and property testing~\cite{MORS:10, OS11:chow}, and other works.
However, we emphasize that significant novel conceptual and technical
work is required to make this approach work in our setting.
More specifically, to achieve step (i), we require (i.a) a new multidimensional CLT
for degree-$2$ Gaussian polynomials with small eigenvalues and (i.b) a new decomposition
procedure that transforms a $k$-dimensional vector of Gaussian polynomials into a tractable
form for the purpose of approximate counting. For step (ii) we establish a novel regularity lemma
for $k$-vectors of low-degree polynomials. Finally, Step (iii) follows by an application of the invariance
principle of Mossel~\cite{Mossel10} combined with appropriate mollification arguments~\cite{DKNfocs10}.
In the rest of this section we discuss
our new approaches to Steps (i) and (ii).

\vspace{-0.3cm}

\paragraph{Step (i):  The counting problem over Gaussian inputs.}
The current paper goes significantly beyond the techniques of
\cite{DDS13:deg2count}.  To explain our new contributions let us first
briefly recall the \cite{DDS13:deg2count} approach.

The main observation enabling the result in~\cite{DDS13:deg2count} is this: Because of rotational symmetry of the Gaussian distribution, a degree-$2$
Gaussian polynomial can be ``diagonalized'' so that there exist no ``cross-terms'' in its representation. In a little more detail,
if $p(x) = \littlesum_{i, j} a_{ij}x_{i}x_{j}$ (we ignore the linear term for simplicity), where $x \sim N(0,1)^n$, then $p$ can be rewritten
in the form $p(y) = \littlesum_{i} \lambda_i y_i^2$, where 
$y \sim N(0,1)^n$ and the $\lambda_i$'s are the eigenvalues of the corresponding matrix.
Roughly speaking, once such a representation has been (approximately) constructed, the counting problem can be solved efficiently by dynamic programming.
To construct such a decomposition, \cite{DDS13:deg2count} employs a ``critical-index'' based analysis on the eigenvalues of the corresponding matrix.
For the analysis of  the \cite{DDS13:deg2count} algorithm, \cite{DDS13:deg2count} proves a CLT for a single
degree-$2$ Gaussian polynomial with small eigenvalues (this CLT is based on a result
of Chaterjee~\cite{Chatterjee:09}). (We note that this informal description suppresses several  non-trivial technical issues,
see~\cite{DDS13:deg2count} for details.)

At a high level, the approach of the current paper builds on the approach of \cite{DDS13:deg2count}. To solve the Gaussian counting problem
we use a combination of (i.a) a new multidimensional CLT for $k$-tuples of degree-$2$ Gaussian polynomials with small eigenvalues, and (i.b)
a novel decomposition result for $k$-tuples of degree-2 Gaussian polynomials. We now elaborate on these steps.
\begin{itemize}

\item[(i.a)]  As our first contribution, we prove a new multidimensional central limit theorem
for $k$-tuples of degree-$2$ Gaussian polynomials (Theorem~\ref{thm:small-dk}).
Roughly speaking, our CLT states that if
each polynomial in the $k$-tuple has small eigenvalues, then the joint distribution of the $k$-tuple
is close to a $k$-dimensional Gaussian random variable with matching mean and covariance matrix.
The closeness here is with respect to the $k$-dimensional Kolmogorov distance over $\R^k$ (a
natural generalization of Kolmogorov distance to vector-valued random variables, which we denote
$\dk$ and which is useful for
analyzing  PTFs).
To establish our new CLT, we proceed in two steps: In the first (main) step,
we make essential use of a recent multidimensional CLT due to Nourdin and Peccati~\cite{NourdinPeccati09} (\ifnum\confversion=0
Theorem~\ref{thm:Nourdin}\else Theorem 11 in the full version\fi)
which is proved using a combination of Malliavin calculus and Stein's method.
To use this theorem in our setting, we perform a linear-algebraic analysis which allows us to obtain precise bounds on the
Malliavin derivatives of degree-$2$ Gaussian polynomials with small eigenvalues.
An application of~\cite{NourdinPeccati09} then gives us a version of our desired CLT with respect
to ``test functions'' with bounded second derivatives (\ifnum\confversion=0Theorem~\ref{thm:h-match-cov}\else Theorem 12 in the full version\fi).
In the second step, we use tools from mollification~\cite{DKNfocs10} to translate this notion of closeness
into  closeness with respect to $k$-dimensional Kolmogorov distance,  thus obtaining our intended CLT.
(As a side note, we believe that this work is the first to use Malliavin-calculus-based tools in the context of derandomization.)

\item[(i.b)]  As our second contribution, we give an efficient procedure that transforms a $k$-tuple
of degree-$2$ Gaussian polynomials $p = (p_1, \ldots, p_k)$ into a $k$-tuple of  degree-$2$ Gaussian polynomials
$r = (r_1, \ldots, r_k)$ such that: (1) $p$ and $r$ are $\dk$-close, and (2) the $k$-tuple
$r$ has a ``nice structure'' that allows for efficient deterministic approximate counting. In particular, there is a ``small'' set
of variables such that for each restriction $\rho$ fixing  this set, the restricted $k$-tuple of polynomials $r|_{\rho}$ is well-approximated
by a $k$-dimensional Gaussian random variable (with the appropriate mean and covariance matrix). Once such an $r$ has been obtained,
deterministic approximate counting is straightforward via an appropriate discretization of the $k$-dimensional
Gaussian distribution (see Section~\ref{sec:gauss}).

We now elaborate on Item (1) above. At a high level, the main step of our transformation procedure performs a ``change of basis''
to convert $p=(p_1(x),\dots,p_k(x))$ into an essentially equivalent (for the purpose of approximate counting)
vector $q=(q_1(y),\dots,q_k(y))$ of polynomials.
The high-level approach to achieve this is reminiscent of (and inspired by) the
decomposition procedure for vectors of $k$ linear forms in \cite{GOWZ10}. However, there are significant
complications that arise in our setting.
In particular, in the \cite{GOWZ10} approach, a vector of $k$ linear forms
is simplified by ``collecting'' variables in a greedy fashion as follows:  Each of the
$k$ linear forms has a ``budget'' of at most $B$, meaning that at most
$B$ variables will be collected on its behalf. Thus, the overall number
of variables that are collected is at most $kB$.  At each stage some variable is collected which has large influence in the
remaining (uncollected) portion of some linear form.  The \cite{GOWZ10}
analysis shows that after at most $B$ variables have been collected on behalf of each linear form,
each of the $k$ linear forms will either be regular or its remaining portion
(consisting of the uncollected variables) will have small variance.
In our current setting, we are dealing with $k$ degree-$2$ Gaussian polynomials
instead of $k$ linear forms. Recall that
every degree-$2$ polynomial can be expressed as a linear combination
of squares of linear forms (i.e., it can be diagonalized). Intuitively, since Gaussians are invariant under change of basis,
we can attempt to use an approach where linear forms will play the role that variables had in \cite{GOWZ10}.
Mimicking the \cite{GOWZ10} strategy, each quadratic polynomial will have at most $B$ linear forms
collected on its behalf, and at most $kB$ linear forms will be collected
overall. Unfortunately, this vanilla strategy does not work even for $k=2$, as it requires a single
orthonormal basis in which all the degree-$2$ polynomials are simultaneously diagonalized.

Instead, we resort to a more refined strategy. Starting with the $k$ quadratic polynomials, we use the following iterative algorithm:
If the largest magnitude eigenvalue of each quadratic form is small, we are already in the \emph{regular}
case (and we can appeal to our multidimensional CLT).
Otherwise, there exists at least one polynomial with a large magnitude eigenvalue. We proceed to collect the corresponding linear form
and ``reduce" every polynomial by this linear form. (The exact description of this reduction is somewhat involved to describe, but intuitively,
it uses the fact that Gaussians are invariant under orthogonal transformations.) This step is repeated iteratively; an argument similar to \cite{GOWZ10}
shows that for every quadratic polynomial, we can collect at most $B$ linear forms. At the end of this procedure,  each of the $k$ quadratic polynomials will either be
``regular'' (have small largest magnitude eigenvalue compared to the variance of the remaining portion), or else the variance of the remaining portion will
be small. This completes the informal description of our transformation.
\end{itemize}

Our main result for the Gaussian setting is the following theorem:

\begin{theorem} \label{thm:maingauss-informal}
[Deterministic approximate counting of functions of
degree-$2$ PTFs over Gaussians]
There is an algorithm with the following properties:
It takes as input explicit descriptions of $n$-variable degree-$2$
polynomials $q_1,\dots,q_k$,  an
explicit description of a $k$-bit Boolean function
$g: \{-1,1\}^k \to \{-1,1\}$, and a value $\eps > 0.$
It runs (deterministically) in time $\poly(n) \cdot 2^{\poly(2^k/\eps)}$
and outputs a value $\tilde{v} \in [0,1]$ such that
\begin{equation} \label{eq:close2}
\left|\Pr_{{\cal G} \sim N(0,1)^n}[g(Q_1({\cal G}),\dots,Q_k({\cal G}))=1] - \tilde{v} \right|
\leq \eps,
\end{equation}
where $Q_i(x) = \sign(q_i(x))$ for $i=1,\dots,k.$
\end{theorem}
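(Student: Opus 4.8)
The plan is to combine the two main Gaussian-setting ingredients — the multidimensional CLT for $k$-tuples of degree-$2$ Gaussian polynomials with small eigenvalues (Theorem~\ref{thm:small-dk}) and the decomposition procedure for $k$-tuples of degree-$2$ Gaussian polynomials — into a recursive counting algorithm. First I would invoke the decomposition procedure on the input tuple $p = (q_1,\dots,q_k)$ to obtain a $\dk$-close tuple $r = (r_1,\dots,r_k)$ together with a ``small'' set $S$ of collected linear forms, $|S| \le kB$ with $B = \poly(2^k/\eps)$, such that for every restriction $\rho$ fixing (the values of the linear forms in) $S$, each restricted polynomial $r_i|_\rho$ is either \emph{regular} (small largest-magnitude eigenvalue relative to the variance of its remaining portion) or has remaining portion of negligible variance. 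Since $p$ and $r$ are $\dk$-close, and since $g(\sgn(\cdot),\dots,\sgn(\cdot))$ can be written as a signed combination of indicators of orthants, replacing $p$ by $r$ changes the target probability by at most $2^{O(k)}\cdot\dk(p,r) \le \eps/3$, say; so it suffices to estimate $\Pr[g(\sgn(r_1({\cal G})),\dots,\sgn(r_k({\cal G})))=1]$.

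Next I would discretize the ``collected'' directions. Because Gaussians are invariant under orthogonal transformations, after an appropriate change of basis the $|S|$ collected linear forms become $|S|$ fresh independent $N(0,1)$ coordinates, and the law of the full tuple decomposes as an integral over these coordinates. I would lay down a fine grid (mesh $\poly(\eps/2^k)$, truncated at radius $O(\sqrt{\log(2^k/\eps)})$) over $\R^{|S|}$; standard Gaussian tail and anticoncentration bounds show that rounding each collected coordinate to the nearest grid point perturbs each $\sgn(r_i)$ with probability at most $\eps/(3\cdot 2^{k})$ in expectation over ${\cal G}$ (using that, conditioned on the remaining portion, each $r_i$ is a low-variance perturbation away from being a linear function of the collected coordinates, hence has bounded Gaussian anticoncentration). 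Enumerating the grid costs $(\poly(2^k/\eps))^{kB} = 2^{\poly(2^k/\eps)}$ restrictions $\rho$. For each $\rho$: if some $r_i|_\rho$ has negligible-variance remaining portion, $\sgn(r_i|_\rho)$ is (up to $\eps$-small error) a constant determined by the grid point, so we plug that constant into $g$; for the indices where $r_i|_\rho$ is regular, we apply the multidimensional CLT (Theorem~\ref{thm:small-dk}) to conclude that the joint distribution of those $r_i|_\rho({\cal G}')$ is $\dk$-close to the Gaussian $N(\mu_\rho,\Sigma_\rho)$ with matching mean and covariance, both of which we compute explicitly in $\poly(n)$ time from the coefficients of $r_i|_\rho$. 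Estimating $\Pr[g(\dots)=1]$ under a $k$-dimensional Gaussian with known mean and covariance is a $k$-dimensional numerical integration that takes time depending only on $k$ and $\eps$ (e.g.\ grid the $k$-dimensional Gaussian, or integrate the orthant probabilities directly). Averaging these per-$\rho$ estimates with the grid-cell Gaussian weights gives $\tilde v$, and accumulating the three sources of error ($\dk$-closeness of $p,r$; grid rounding; CLT error, each $\le \eps/3$) gives the claimed bound.

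The main obstacle I anticipate is controlling the error when passing from the true Gaussian measure on the collected directions to the finite grid, \emph{jointly} across all $k$ PTFs and uniformly over the exponentially many restrictions: I need that for a typical restriction the surviving regular polynomials are genuinely in the regime where the CLT applies with the covariance matrix that I compute, and that the ``negligible variance'' branch really does pin down $\sgn(r_i)$ except on an $\eps$-fraction — both of these require the quantitative guarantees of the decomposition procedure (the budget argument bounding $|S|$ and the case split between regular/low-variance) to interact cleanly with Gaussian anticoncentration. A secondary technical point is that the covariance matrix $\Sigma_\rho$ produced by the CLT may be ill-conditioned (near-degenerate), so the final $k$-dimensional integration must be done in a way that is robust to this — e.g.\ by noting that a near-degenerate direction corresponds to a near-constant linear combination of the $r_i$'s, which can be folded into the grid enumeration just like the low-variance case. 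Modulo these points, the argument is a bookkeeping exercise: apply decomposition, enumerate a grid over the collected directions, apply the CLT on each regular piece, integrate a known low-dimensional Gaussian, and sum up.
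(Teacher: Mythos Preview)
Your overall strategy is the same as the paper's: run the decomposition (\textbf{Transform}), discretize the $t$ ``collected'' Gaussian coordinates by a product grid, and for each grid point $\rho$ replace the conditional law of $r|_\rho$ by the matching $k$-variate Gaussian $N(\mu(r|_\rho),\Sigma(r|_\rho))$ and integrate that. The paper even packages the CLT step exactly as you anticipate (Theorem~\ref{thm:transform}(ii): for \emph{every} restriction $\rho$, $\dk(r|_\rho,N(\mu,\Sigma))\le\eps$), and handles the final $k$-dimensional Gaussian orthant probability not by gridding but by invoking the bounded-independence fooling result of \cite{GOWZ10}, which neatly sidesteps your ill-conditioning worry.

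There is, however, one genuine slip in your ``negligible variance $\Rightarrow$ plug in a constant'' branch. What the decomposition actually gives in that case is that $\Var[\tail_t(p_\ell)]\le\eta$, i.e.\ the part of $p_\ell$ that does \emph{not touch} the collected coordinates has small variance. After you fix the collected coordinates to $\rho$, the head contributes a linear form in the surviving variables (coming from the cross terms $\alpha_{ij}y_iy_j$ with $i\le t<j$, see Remark~\ref{rem:restric}), so $r_\ell|_\rho=\text{(affine in }y_{t+1},\dots,y_n\text{)}+\tail_t(p_\ell)$. That affine piece can have large variance even when $\Var[\tail_t(p_\ell)]$ is tiny, so $\sgn(r_\ell|_\rho)$ is \emph{not} close to a constant in general; it is close to the sign of a one-dimensional Gaussian. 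The paper's fix is exactly the \textbf{Process-Polys} step: for those $\ell$ it replaces $\tail_t(p_\ell)$ by its expectation (or drops $\quadtail_t$), so that $r_\ell|_\rho$ is literally degree~$\le 1$ and hence exactly Gaussian. Then all $k$ coordinates of $r|_\rho$ are either exactly Gaussian or satisfy the small-eigenvalue hypothesis of Theorem~\ref{thm:small-dk}, and one gets the single clean bound $\dk(r|_\rho,N(\mu,\Sigma))\le\eps$ with no per-$\rho$ case split. With that correction (fold the low-variance indices into the joint Gaussian rather than into $g$ as constants), your outline goes through and matches the paper's proof.
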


We note that in the case $k=1$, the algorithm of the current work is not the same as the algorithm of
\cite{DDS13:deg2count} (indeed, observe the above algorithm runs in time exponential in $1/\eps$ even for $k=1$, whereas
the algorithm of~\cite{DDS13:deg2count} runs in time $\poly(n/\eps)$ for a single Gaussian polynomial).

\paragraph{Step (ii):  The regularity lemma.}
Recall that the \emph{influence} of variable $i$ on a multilinear polynomial $p=\sum_{S \subseteq [n]}
\widehat{p}(S) \prod_{i \in S} x_i$ over $\{-1,1\}^n$ (under the uniform distribution) is $\Inf_i(p) \eqdef \sum_{S \ni i} \widehat{p}(S)^2$
and that the \emph{variance} of $p$ is
$\Var[p] = \E_{x \in \{-1,1\}^n}[(p(x) - \E[p])^2] =
\sum_{\emptyset \neq S} \widehat{p}^2(S)$.
For $p$ a degree-$d$ polynomial we have $\Var[p] \leq \sum_{i=1}^n \Inf_i(p)
\leq d \cdot \Var[p],$ so for small constant $d$ the variance and the total
influence $\sum_{i=1}^n \Inf_i(d)$ are equal up to a small constant factor.
A polynomial $p$ is said to be \emph{$\tau$-regular} if for all $i \in [n]$
we have $\Inf_i(p) \leq \tau \cdot \Var[p].$

As noted earlier, by adapting known invariance principles from the literature \cite{Mos08} it is possible
to show that an algorithm for approximately counting satisfying assignments of a junta of degree-2 PTFs
over $N(0,1)^n$ will in fact also succeed for approximately counting satisfying assignments of a junta
of sufficiently regular degree-2 PTFs over $\{-1,1\}^n$.  Since
Theorem \ref{thm:maingauss-informal}
gives us an algorithm for the Gaussian problem, to complete the chain we need a reduction from
the problem of counting satisfying assignments of a junta of \emph{arbitrary} degree-2 PTFs over $\{-1,1\}^n$,
to the problem of counting satisfying assignments of a junta of \emph{regular} degree-2 PTFs over $\{-1,1\}^n$.

We accomplish this by giving a novel \emph{regularity lemma} for $k$-tuples of degree-2 (or more
generally, degree-$d$) polynomials.
Informally speaking, this is an efficient deterministic algorithm with the following
property:  given as input a $k$-tuple of arbitrary degree-2 polynomials $(p_1,\dots,p_k)$
over $\{-1,1\}^n$, it constructs
a decision tree of restrictions such that for almost every root-to-leaf path (i.e., restriction $\rho$)
in the decision tree, \emph{all $k$} restricted polynomials $(p_1)_\rho,\dots,(p_k)_\rho$ are ``easy to handle''
for deterministic approximate counting, in the following sense:  each $(p_i)_\rho$ is either highly regular,
or else is highly \emph{skewed}, in the sense that its constant term is so large compared to its variance
that the corresponding PTF $\sign((p_i)_\rho)$ is guaranteed to be very close to a
constant function.  Such leaves are ``easy to handle'' because we can set the PTFs corresponding
to ``skewed'' polynomials to constants (and incur only small error); then we are left
with a junta of regular degree-2 PTFs, which can be handled using the Gaussian algorithm as sketched above.

A range of related ``regularity lemmas'' have been given in the LTF/PTF literature
\cite{DSTW:10,HKM:09,BLY:09,GOWZ10}, but none with all the properties that we require.
\cite{Servedio:07cc} implicitly gave a regularity lemma for a single LTF, and
\cite{DSTW:10,HKM:09,BLY:09} each gave (slightly different flavors of) regularity lemmas for a single
degree-$d$ PTF.  Subsequently \cite{GOWZ10} gave a regularity lemma for $k$-tuples of LTFs;
as noted earlier our decomposition for $k$-tuples of degree-2 polynomials over Gaussian inputs given
in Section \ref{sec:gauss} uses ideas
from their work.  However, as we describe in Section \ifnum\confversion=0\ref{sec:regularity}\else7 of the full version\fi, their approach does not seem
to extend to degrees $d>1$, so we must use a different approach to prove our regularity
lemma.

\subsection{Organization.}

\ifnum\confversion=0
After giving some useful background in Section \ref{sec:prelim},
we prove our new multidimensional CLT in Section \ref{sec:structure}.
We give the transformation procedure that is at the heart of our
decomposition approach in Section \ref{sec:gauss-decomp}, and
present the actual deterministic counting algorithm for the Gaussian
case that uses this transformation in Section \ref{sec:gauss}.
Section \ref{sec:Bool} shows how the new regularity lemma for $k$-tuples of
Boolean PTFs gives the main Boolean counting result, and
finally the regularity lemma is proved in Section \ref{sec:regularity}.
\else
In Section \ref{sec:structure} we present the statement of our new multidimensional CLT and sketch the main ingredients required for its proof.
We give the transformation procedure that is at the heart of our
decomposition approach in Section \ref{sec:gauss-decomp}, and
present the actual deterministic counting algorithm for the Gaussian
case that uses this transformation in Section \ref{sec:gauss}.
Section \ref{sec:Bool} shows how the new regularity lemma for $k$-tuples of
Boolean PTFs gives the main Boolean counting result. Due to space limitations, the regularity lemma is proved in the full version.
\fi

\section{Definitions, Notation and Useful Background} \label{sec:prelim}

\paragraph{Polynomials and PTFs.}
Throughout the paper we use lower-case letters $p,q,$ etc. to denote
low-degree multivariate polynomials.  We use capital letters
to denote the corresponding polynomial threshold functions that
map to $\{-1,1\}$, so typically
$P(x) = \sign(p(x))$, $Q(x) = \sign(q(x))$, etc.

We consider multivariate polynomials over the domains $\R^n$ (endowed with the
standard normal distribution $N(0,1)^n$) and $\{-1,1\}^n$ (endowed with the
uniform distribution).  Since $x^2=1$ for $x \in \{-1,1\}$, in dealing with
polynomials over the domain $\{-1,1\}^n$ we may without loss of generality
restrict our attention to multilinear polynomials.

\paragraph{Kolmogorov distance between $\R^k$-valued random variables.}
It will be convenient for us to
use a natural $k$-dimensional generalization of the Kolmogorov
distance between two real-valued random variables which we now describe.
Let $X=(X_1,\dots,X_k)$
and $Y=(Y_1,\dots,Y_k)$ be two $\R^k$-valued random variables.  We define the
\emph{$k$-dimensional Kolmogorov distance} between $X$ and $Y$ to be
\[
\dk(X,Y) = \sup_{(\theta_1,\dots,\theta_{k}) \in \R^{k}}
\left| \Pr [\forall \ i \in [{k}] \ X_i \leq \theta_i]
 - \Pr [\forall \ i \in [{k}] \ Y_i \leq \theta_i]  \right|.
\]
This will be useful to us when we are analyzing $k$-juntas
of degree-2 PTFs over Gaussian random variables; we will typically
have $X=(q_1(x),\dots,q_k(x))$ where $x \sim N(0,1)^n$
and $q_i$ is a degree-2 polynomial, and have $Y=(Y_1,\dots,Y_k)$
be a $k$-dimensional Gaussian random variable whose mean and covariance
matrix match those of $X$.

\paragraph{Notation and terminology for degree-2 polynomials.}
Let $q=(q_1(x),\dots,q_k(x))$ be a vector of polynomials over $\R^n$.  We
endow $\R^n$ with the $N(0,1)^n$ distribution, and hence we may view
$q$ as a $k$-dimensional random variable. We sometimes refer
to the $q_i$'s as \emph{Gaussian polynomials}.

For $A$ a real $n \times n$
matrix we write $\|A\|_2$ to denote the operator
norm
$
\|A\|_2 = \max_{0 \neq x \in \R^n} {\frac {\|Ax\|_2}{\|x\|_2}}.$

Given a degree-$2$ polynomial $q : \mathbb{R}^n \rightarrow \mathbb{R}$
defined as $q(x) = \littlesum_{1 \le i \le j \le n} a_{ij} x_i x_j
+ \littlesum_{1 \le i \le n} b_i x_i + C$, we define the (symmetric)
matrix $A$ corresponding to its quadratic part as : $A_{ij} = a_{ij}
(1/2 + \delta_{ij}/2)$.
Note that with this definition we have that
$x^T \cdot A \cdot x = \littlesum_{1 \le i \le j \le n} a_{ij} x_i x_j$
for the vector $x = (x_1, \ldots, x_n)$.

Throughout the paper we adopt the convention that the eigenvalues
$\lambda_1,\dots,\lambda_n$ of a real symmetric matrix $A$
satisfy $|\lambda_1| \geq \cdots \geq |\lambda_n|$.
We sometimes write $\lambda_{\max}(A)$ to denote $\lambda_1$,
and we sometimes write $\lambda_i(q)$ to refer to the $i$-th eigenvalue
of the matrix $A$ defined based on $q$ as described above.

\paragraph{Degree-2 polynomials and their heads and tails.}
The following notation will be useful for us,
especially in Section \ref{sec:gauss-decomp}.
Let $z(y_1,\dots,y_n) = \sum_{1 \leq i \leq j \leq n}
a_{ij} y_iy_j + \sum_{1 \leq i \leq n} b_i y_i + c$
be a degree-2 polynomial.  For $0 \leq t \leq n$ we say the
\emph{$t$-head of $z(y)$}, denoted $\head_t(z(y))$, is the polynomial

\begin{equation} \label{eq:head}
\head_t(z(y)) \eqdef \sum_{1 \leq i \leq t, j \geq i}
a_{ij} y_iy_j + \sum_{1 \leq i \leq t} b_i y_i
\end{equation}
and the
\emph{$t$-tail of $z(y)$}, denoted $\tail_t(z(y))$, is the polynomial

\begin{equation} \label{eq:tail}
\tail_t(z(y)) \eqdef \sum_{t < i \leq j \leq n}
a_{ij} y_iy_j + \sum_{t < i \leq n} b_i y_i + c,
\end{equation}
so clearly we have $z(y)=\head_t(z(y))+\tail_t(z(y))$.
(Intuitively, $\tail_t(z(y))$ is the part of $z(y)$ which does not
``touch'' any of the first $t$ variables $y_1,\dots,y_t$ and
$\head_t(z(y))$ is the part which does ``touch'' those variables.)

\begin{remark} \label{rem:restric}
Note that if $\rho=(\rho_1,\dots,\rho_t) \in \R^t$
is a restriction fixing variables $y_1,\dots,y_t$,
then the restricted polynomial $z|_\rho(y) \eqdef z(\rho_1,\dots,\rho_t,
y_{t+1},\dots,y_n)$ is of the form $\tail_t(z(y)) + L(y_{t+1},\dots,y_n)$
where $L$ is an affine form.
\end{remark}

We further define $\quadtail_t(z(y))$, the ``quadratic portion of the
$t$-tail,'' to be
\begin{equation} \label{eq:quadtail}
\quadtail_t(z(y)) \eqdef \sum_{t < i \leq j \leq n} a_{ij} y_iy_j.
\end{equation}

Setting aside heads and tails, it will sometimes be useful for us to consider the sum of squares
of all the (non-constant) coefficients of a degree-2 polynomial.  Towards that end we have the following
definition:
\begin{definition}
Given $p : \mathbb{R}^n \rightarrow \mathbb{R}$ defined by
$p(x) =
\littlesum_{1\le i \le j \le n} a_{ij} x_i x_j + \littlesum_{1 \le i \le n}
b_i x_i + C$,
define $\mathop{SS}(p)$ as $\mathop{SS}(p)=
\littlesum_{1\le i \le j \le n} a_{ij}^2 + \littlesum_{1 \le i \le n} b_i^2$.
\end{definition}

The following straightforward claim is established in
\cite{DDS13:deg2count}:
\begin{claim}\label{clm:variance-bound-SS} [Claim 20 of \cite{DDS13:deg2count}]
Given $p : \mathbb{R}^n \rightarrow \mathbb{R}$,
we have that $2 \mathop{SS}(p) \ge \Var(p) \ge  SS(p)$.
\end{claim}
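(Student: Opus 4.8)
The plan is to compute $\Var(p)$ (with respect to $x\sim N(0,1)^n$, as is the convention here for polynomials over $\R^n$) explicitly by expanding $p$ in the orthonormal Hermite basis, and then to compare the resulting expression with $SS(p)$ term by term. First I would recall the normalized univariate Hermite polynomials $H_0(t)=1$, $H_1(t)=t$, $H_2(t)=(t^2-1)/\sqrt{2}$, which satisfy $\E_{t\sim N(0,1)}[H_a(t)H_b(t)]=\delta_{ab}$; products of these over pairwise-distinct coordinates then form an orthonormal system in $L^2(N(0,1)^n)$. Rewriting $p(x)=\littlesum_{i\le j}a_{ij}x_ix_j+\littlesum_i b_ix_i+C$ in this basis is immediate, since for $i<j$ we have $x_ix_j=H_1(x_i)H_1(x_j)$, for each $i$ we have $x_i=H_1(x_i)$, and the diagonal monomial satisfies $x_i^2=\sqrt{2}\,H_2(x_i)+1$. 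Hence
\[
p(x)\;=\;\Big(C+\littlesum_i a_{ii}\Big)\;+\;\littlesum_i b_i H_1(x_i)\;+\;\littlesum_i \sqrt{2}\,a_{ii} H_2(x_i)\;+\;\littlesum_{i<j} a_{ij} H_1(x_i)H_1(x_j).
\]

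Next, by Parseval the variance of $p$ equals the sum of the squares of all of its non-constant Hermite coefficients, so
\[
\Var(p)\;=\;\littlesum_i b_i^2\;+\;2\littlesum_i a_{ii}^2\;+\;\littlesum_{i<j} a_{ij}^2\;=\;SS(p)\;+\;\littlesum_i a_{ii}^2,
\]
where in the last step I use $SS(p)=\littlesum_{i\le j}a_{ij}^2+\littlesum_i b_i^2=\littlesum_{i<j}a_{ij}^2+\littlesum_i a_{ii}^2+\littlesum_i b_i^2$. The claim now follows from the two trivial bounds $0\le\littlesum_i a_{ii}^2$, which gives $\Var(p)\ge SS(p)$, and $\littlesum_i a_{ii}^2\le\littlesum_{i\le j}a_{ij}^2\le SS(p)$, which gives $\Var(p)\le 2\,SS(p)$.

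I do not expect a genuine obstacle here; the only point that requires any care is the Hermite normalization, namely the fact that $\E[(x_i^2-1)^2]=\E[x_i^4]-1=2$, which is precisely what produces the factor $2$ multiplying $\littlesum_i a_{ii}^2$ and hence the factor $2$ in the upper bound. As an alternative to the Hermite expansion one could compute $\E[p]=C+\littlesum_i a_{ii}$ directly and then expand $\E[(p-\E[p])^2]$ by hand, observing that all cross terms between monomials of different ``shape'' vanish by independence of the coordinates together with the vanishing of the odd Gaussian moments; this yields the same identity $\Var(p)=SS(p)+\littlesum_i a_{ii}^2$ and finishes the proof identically.
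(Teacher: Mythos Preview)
Your proof is correct: the Hermite expansion yields the exact identity $\Var(p)=SS(p)+\sum_i a_{ii}^2$, from which both inequalities follow immediately. The paper does not actually give a proof of this claim but simply cites it as Claim~20 of \cite{DDS13:deg2count}, so there is nothing further to compare against here.
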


\paragraph{Tail bounds and anti-concentration bounds
on low-degree polynomials in Gaussian variables.}

We will need the following standard concentration bound for
low-degree polynomials over independent  Gaussians.

\begin{theorem}[``degree-$d$ Chernoff bound'',  \cite{Janson:97}] \label{thm:dcb}
Let $p: \R^n \to \R$ be a degree-$d$ polynomial. For any
$t > e^d$, we have
\[
\Pr_{x \sim N(0,1)^n}[
|p(x) - \E[p(x)]| > t  \cdot \sqrt{\Var(p(x))} ] \leq
{d e^{-\Omega(t^{2/d})}}.
\]
\end{theorem}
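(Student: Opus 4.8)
The plan is to reduce to the normalized case $\E[p]=0$, $\Var(p)=1$ and then combine the Gaussian hypercontractive inequality with Markov's inequality applied to a large moment of $p$, optimizing the moment order.

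First I would dispose of the trivial case: if $\Var(p)=0$ then $p$ is identically a constant, the event in question is empty, and the bound holds vacuously. So assume $\Var(p)>0$ and replace $p$ by $(p-\E[p])/\sqrt{\Var(p)}$; this does not change the degree, and both sides of the claimed inequality are invariant under this substitution. Hence it suffices to show that if $p:\R^n\to\R$ has degree at most $d$ with $\E_{x\sim N(0,1)^n}[p(x)]=0$ and $\E_{x\sim N(0,1)^n}[p(x)^2]=1$, then $\Pr_{x\sim N(0,1)^n}[|p(x)|>t]\le d\,e^{-\Omega(t^{2/d})}$ for all $t>e^d$.

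The key analytic input is the hypercontractive (Bonami--Beckner) inequality for the Gaussian measure: for every real $q\ge 2$ and every polynomial $p$ of degree at most $d$,
\[
\E_{x\sim N(0,1)^n}\bigl[|p(x)|^q\bigr]^{1/q}\ \le\ (q-1)^{d/2}\,\E_{x\sim N(0,1)^n}\bigl[p(x)^2\bigr]^{1/2},
\]
which follows from the $(2,q)$-hypercontractivity of the Ornstein--Uhlenbeck semigroup (see \cite{Janson:97}). Under our normalization this gives $\E[|p(x)|^q]\le (q-1)^{dq/2}$. Applying Markov's inequality to the nonnegative variable $|p(x)|^q$, for any $q\ge 2$ we obtain
\[
\Pr_{x\sim N(0,1)^n}[|p(x)|>t]=\Pr[|p(x)|^q>t^q]\ \le\ \frac{\E[|p(x)|^q]}{t^q}\ \le\ \Bigl(\frac{(q-1)^{d/2}}{t}\Bigr)^{q}.
\]
It remains to choose $q$ optimally: setting $q=1+(t/e)^{2/d}$ makes $(q-1)^{d/2}=t/e$, so the right-hand side becomes $e^{-q}=e^{-1-(t/e)^{2/d}}$. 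The hypothesis $t>e^d$ (with $d\ge 1$) gives $t/e>e^{d-1}\ge 1$, hence $q-1=(t/e)^{2/d}>1$ and in particular $q\ge 2$, so the hypercontractive inequality applies for this $q$. Finally $(t/e)^{2/d}=e^{-2/d}t^{2/d}\ge e^{-2}t^{2/d}$, so $e^{-1-(t/e)^{2/d}}\le e^{-\Omega(t^{2/d})}$, which proves the bound (with room to spare: the leading factor $d$ is not even needed).

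I do not expect a genuine obstacle here: the entire content is packaged into the hypercontractive inequality, which is invoked as a black box, after which the argument is the standard ``bound every moment, then Markov plus optimize the exponent'' recipe. The only point needing a line of care is verifying that the optimal moment order $q$ is at least $2$, and this is exactly what the assumption $t>e^d$ buys us; the extra factor of $d$ in the statement is slack.
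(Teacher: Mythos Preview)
The paper does not prove this theorem; it is quoted as a known tail bound with a citation to \cite{Janson:97} (and later, in the Boolean-domain version, to \cite{AH11}), and is used as a black box. Your argument---normalize, apply Gaussian hypercontractivity $\|p\|_q\le (q-1)^{d/2}\|p\|_2$, Markov on the $q$-th moment, then optimize $q$---is exactly the standard proof of this fact and is correct as written, including the verification that the choice $q=1+(t/e)^{2/d}$ satisfies $q\ge 2$ under the hypothesis $t>e^d$.
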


\noindent We will also use the following anti-concentration bound for
degree-$d$ polynomials over Gaussians:

\begin{theorem}[\cite{CW:01}]
\label{thm:cw} Let $p: \R^n \to \R$ be a degree-$d$ polynomial
that is not identically 0.  Then for all $\eps>0$ and all
$\theta \in \R$, we have
\[
\Pr_{x \sim N(0,1)^n}\left[|p(x) - \theta| < \eps \sqrt{\Var(p)}
\right] \le O(d\eps^{1/d}).
\]
\end{theorem}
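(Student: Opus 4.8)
This is the Carbery--Wright anti-concentration inequality, so the plan is to reconstruct Carbery and Wright's argument.

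\emph{Reduction.} First I would eliminate the shift $\theta$ and the $\Var$-normalization. Put $q := p - \theta$; then $q$ has degree at most $d$, is not identically zero (otherwise the probability is $0$), and $\E_{x \sim N(0,1)^n}[q(x)^2] = \E[(p-\theta)^2] \ge \Var(p)$, so $\eps\sqrt{\Var(p)} \le \eps\,\|q\|_{L^2}$ and hence $\{|p-\theta| < \eps\sqrt{\Var(p)}\} \subseteq \{|q| < \eps\,\|q\|_{L^2}\}$. Thus it suffices to show that for every nonzero polynomial $q$ of degree at most $d$,
\[
\Pr_{x\sim N(0,1)^n}\big[\, |q(x)| < \eps\,\|q\|_{L^2}\,\big] = O\!\left(d\,\eps^{1/d}\right).
\]

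\emph{Passing to the $L^1$ norm via hypercontractivity.} The induction below is cleanest in the $L^1$ normalization, so next I would trade $\|q\|_{L^2}$ for $\|q\|_{L^1}$. Nelson's Gaussian hypercontractive inequality (equivalently, $\|q\|_{L^r} \le (r-1)^{d/2}\|q\|_{L^2}$ for $r \ge 2$ and $q$ of degree $\le d$) gives $\|q\|_{L^4}\le 3^{d/2}\|q\|_{L^2}$; combined with the interpolation $\|q\|_{L^2}\le \|q\|_{L^1}^{1/3}\|q\|_{L^4}^{2/3}$ this yields $\|q\|_{L^2}\le 3^{d}\|q\|_{L^1}$. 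Hence $\eps\,\|q\|_{L^2}\le 3^{d}\eps\,\|q\|_{L^1}$ and, since $(3^{d}\eps)^{1/d}=3\eps^{1/d}$, the claim follows from the \emph{Carbery--Wright distributional inequality}: for every log-concave probability measure $\mu$ on $\R^n$ and every polynomial $q$ of degree at most $d$,
\[
\mu\big(\{x:\ |q(x)| < \alpha\,\|q\|_{L^1(\mu)}\}\big) \le C\,d\,\alpha^{1/d},
\]
applied with $\mu = N(0,1)^n$ (a log-concave measure) and $\alpha = 3^d\eps$.

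\emph{Proving the distributional inequality by induction on $d$.} Let $m(d)$ be the optimal constant above, over all dimensions and all log-concave $\mu$. For $d=1$, $q$ is affine, its pushforward under $\mu$ is a one-dimensional log-concave probability measure, and a standard bound on log-concave densities (in the normalization where the coordinate has unit $L^1$ norm) shows the density is at most an absolute constant, so $m(1)=O(1)$. For the inductive step, the heart is a one-dimensional estimate: for a univariate polynomial $P$ of degree $\le d$, the sublevel set $\{t:\ |P(t)| < \beta\}$ is a union of at most $2d$ intervals, and on each interval — by a mean-value/Remez-type argument — its length, hence its mass under a one-dimensional log-concave measure, is controlled by $\beta$ together with the sublevel behaviour of the derivative $P'$, which has degree $\le d-1$. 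One lifts this to $\R^n$ by disintegrating $\mu$ along a pencil of parallel lines in a suitably chosen direction $v$ (the conditionals on these lines are again one-dimensional log-concave measures), applying the one-dimensional estimate on each line to $P=q|_{\mathrm{line}}$ with $P' = \partial_v q$ of degree $\le d-1$, and integrating; choosing $v$ so that $\partial_v q$ is large in $L^1$ relative to $q$ and optimizing $\beta$ makes the recursion close as $m(d) \le m(d-1) + O(1)$, whence $m(d) = O(d)$. Lines on which the leading coefficient of $q|_{\mathrm{line}}$ degenerates are absorbed by invoking the inductive hypothesis in degree $d-1$ there.

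\emph{The main obstacle.} The crux is this inductive descent: moving from degree $d$ to degree $d-1$ while losing only an \emph{additive} $O(1)$ in the constant — a multiplicative loss would yield a useless $c^{d}$ bound — which forces both the sharp one-dimensional sublevel-set estimate and a careful argument that for a good choice of direction the conditional log-concave measures on generic lines genuinely see a large derivative, so that the $L^1$-normalization of $q$ is transferred to $\partial_v q$ in a controlled way. The hypercontractivity step and the stability of log-concavity under affine images and conditioning are exactly what make these two points go through.
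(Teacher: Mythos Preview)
The paper does not give its own proof of this statement: Theorem~\ref{thm:cw} is simply quoted from Carbery and Wright \cite{CW:01} as a black-box tool, with no argument supplied. So there is nothing in the paper to compare your proposal against.

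That said, your outline is a faithful high-level reconstruction of the Carbery--Wright argument. The reduction to the $L^2$-normalized form via $q=p-\theta$ is clean (and your inequality $\E[(p-\theta)^2]\ge\Var(p)$ is correct), and the hypercontractivity step trading $L^2$ for $L^1$ at cost $3^d$---which becomes a harmless factor of $3$ after taking the $d$-th root---is exactly the standard move. The inductive scheme you describe (base case $d=1$ via bounded density of one-dimensional log-concave measures; inductive step by slicing along a direction $v$, controlling the one-dimensional sublevel set of $q$ on each line in terms of the sublevel behaviour of $\partial_v q$, and integrating) is the backbone of the Carbery--Wright proof.

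Your ``main obstacle'' paragraph correctly identifies where the real work lies: getting an \emph{additive} rather than multiplicative loss in the recursion, and showing that one can choose the direction $v$ so that $\|\partial_v q\|_{L^1}$ is comparable to $\|q\|_{L^1}$ (not merely nonzero). As written, your sketch asserts that these steps go through without carrying them out; a complete proof would need the explicit one-dimensional sublevel/Remez-type lemma and the quantitative comparison of $\|\partial_v q\|_{L^1}$ with $\|q\|_{L^1}$ for a well-chosen $v$. But as a plan this is the right one, and it matches the source the paper cites.
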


\medskip

\noindent {\bf The model.}
Throughout this paper, our algorithms will repeatedly be performing basic linear algebraic operations, in particular SVD computation
and Gram-Schmidt orthogonalization.  In the bit complexity model, it is well-known that these linear algebraic operations can be performed
 (by deterministic algorithms) up to additive error $\epsilon$ in time $\poly(n, 1/\epsilon)$. For example, let $A \in \mathbb{R}^{n \times m}$ have $b$-bit rational
 entries.  It is known (see \cite{Golub} for details) that in time $\poly(n,m,b,1/\epsilon)$, it is possible to compute a value $\tilde{\sigma}_1$ and
 vectors $u_1 \in \R^n$, $v_1 \in \R^m$, such that $\tilde{\sigma}_1 = {\frac {u_1^T A v_1} {\|u_1\| \|v_1\|}}$ and $|\tilde{\sigma}_1 - \sigma_1| \leq \epsilon$,
 where $\sigma_1$ is the largest singular value of $A$. Likewise, given $n$ linearly independent vectors $v^{(1)},
 \dots,v^{(n)} \in \R^m$ with $b$-bit rational entries, it is possible
 to compute vectors $\tilde{u}^{(1)},\dots,\tilde{u}^{(n)}$ in
 time $\poly(n,m,b)$ such that
 if $u^{(1)},\dots,u^{(n)}$ is a Gram-Schmidt orthogonalization of
 $v^{(1)},\dots,v^{(n)}$ then we have $|u^{(i)} \cdot u^{(j)} -
 \tilde{u}^{(i)} \cdot
 \tilde{u}^{(j)}| \leq 2^{-\poly(b)}$ for all $i,j$.

 In this paper, we work in a unit-cost real number model of computation.
This allows us to assume that given a real matrix $A \in \mathbb{R}^{n \times m}$  with $b$-bit rational entries, we can compute the SVD of $A$ exactly in time
$\poly(n,m,b)$. Likewise, given $n$ vectors over $\mathbb{R}^m$, each of whose entries are $b$-bit rational numbers, we can perform an exact Gram-Schmidt orthogonalization
in time $\poly(n,m,b)$. Using high-accuracy approximations of the sort 
described above throughout our algorithms,
it is straightforward to translate our unit-cost real-number
algorithms into the bit complexity setting, at the cost of some additional
 error in the resulting bound.

 Using these two observations, it can be shown that by making
 sufficiently accurate approximations at each stage where a numerical
computation is performed by our ``idealized'' algorithm,
 the cumulative error resulting from all of the approximations
 can be absorbed into the final $O(\eps)$ error bound.  Since inverse
 polynomial levels of error can be achieved in polynomial time
 for all of the approximate numerical computations that our algorithm
performs,
and since only poly$(n)$ many such approximation steps are performed by
 poly$(n)$-time algorithms, the resulting approximate implementations
 of our algorithms in a bit-complexity model
 also achieve the guarantee of our main results,
 at the cost of a  fixed $\poly(n)$ overhead in the running time.
For the sake of completeness, such a detailed numerical analysis was performed in our previous paper~\cite{DDS13:deg2count}.
 Since working through the details of such an analysis is tedious and detracts from the clarity of the presentation, 
 we content ourselves with this brief discussion in this work.

\section{A multidimensional CLT for degree-2 Gaussian polynomials}
\label{sec:structure}

In this section we \ifnum\confversion=0prove \else present \fi a central limit theorem
which plays a crucial role in the decomposition
result which we establish in the following sections.
Let $q=(q_1,\dots,q_k)$ where each $q_i$ is a
degree-2 polynomial in Gaussian random variables $(x_1,\dots,x_n)
\sim N(0,1)^n.$ Our CLT states
that under suitable conditions on $q_1,\dots,q_k$ ---
all of them have only small--magnitude eigenvalues,
no $\Var[q_i]$ is too large and at least one $\Var[q_i]$ is not
too small --- the distribution of
$q$ is close (in $k$-dimensional Kolmogorov distance)
to the distribution of the $k$-dimensional Gaussian random variable
whose mean and covariance matrix match $q$. 

\ifnum\confversion=1
Let $X=(X_1,\dots,X_k)$ and $Y=(Y_1,\dots,Y_k)$ be two $\R^k$-valued random variables.  We define the
\emph{$k$-dimensional Kolmogorov distance} between $X$ and $Y$ to be
$ \dk(X,Y) = \sup_{(\theta_1,\dots,\theta_{k}) \in \R^{k}} |\Pr [\forall \ i \in [{k}] \ X_i \leq \theta_i] - \Pr [\forall \ i \in [{k}] \ Y_i \leq \theta_i]|.
$ We have the following:
\fi

\begin{theorem}\label{thm:small-dk}
Let $q=(q_1(x),\dots,q_k(x))$ where each $q_i$ is a
degree-2 Gaussian polynomial
that satisfies
 $\Var[q_i] \leq 1$ and $|\lambda_{\max}(q_i)| \le \eps$ for all $i \in [k]$.
Suppose that $\max_{i \in [k]} \Var(q_i) \ge \lambda$.
Let $C$ denote the covariance matrix of $q$ and let $
N=N((\mu_1,\dots,\mu_k),C)$ be a
$k$-dimensional Gaussian random variable with covariance matrix
$C$ and mean $(\mu_1, \ldots, \mu_k)$ where $\mu_i = \mathbf{E}[q_i]$.  Then
\ifnum\confversion=1
$\dk(q,N) \leq O (k^{2/3} \eps^{1/6}/\lambda^{1/6}).$
\else
\[
\dk(q,N) \leq O\left({\frac {k^{2/3} \eps^{1/6}}{\lambda^{1/6}}}\right).
\] \fi
\end{theorem}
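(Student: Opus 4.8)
The plan is to reduce the multidimensional Kolmogorov-distance bound to a statement about smooth test functions via mollification, and to obtain the smooth-test-function version from the Nourdin--Peccati multidimensional CLT (cited in the paper as Theorem~\ref{thm:Nourdin}) after a linear-algebraic analysis of the Malliavin derivatives of degree-2 Gaussian polynomials with small eigenvalues. So the argument has three stages: (a) a ``nice test function'' CLT asserting that $|\E[h(q)]-\E[h(N)]|$ is small for every $h$ with bounded second derivatives, where the bound is expressed in terms of $\eps$ (the eigenvalue bound), $k$, and $\|h''\|_\infty$; (b) a mollification step that passes from such smooth $h$ to the indicator of an orthant $\{x : x_i \le \theta_i \ \forall i\}$, at the cost of a tradeoff parameter; and (c) optimizing that parameter to get the stated $O(k^{2/3}\eps^{1/6}/\lambda^{1/6})$ bound.

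For stage (a), the key is to control the relevant Malliavin-calculus quantity that appears in the Nourdin--Peccati bound. For a degree-2 Gaussian polynomial $q_i$ with associated symmetric matrix $A_i$, the second Wiener chaos component corresponds to $x^T A_i x - \Tr(A_i)$, and the ``fourth-moment''-type error term in the CLT is governed by norms of the matrices $A_i$ and of products such as $A_i A_j$. Concretely, the contraction terms that control the distance to Gaussianity can be bounded by expressions like $\|A_i\|_2 \cdot \Var(q_i)$ or $\sqrt{\sum_\ell \lambda_\ell(A_i)^3 \lambda_\ell(A_i)}$-type sums, all of which are $O(\eps)$ once $|\lambda_{\max}(q_i)| \le \eps$ and $\Var[q_i] \le 1$ (using $\Var(q_i) = 2\|A_i\|_F^2 + \|b_i\|^2 \ge 2\sum_\ell \lambda_\ell(A_i)^2$ from Claim~\ref{clm:variance-bound-SS}, so $\sum_\ell \lambda_\ell^4 \le \lambda_{\max}^2 \sum_\ell \lambda_\ell^2 = O(\eps^2)$). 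Feeding this into the Nourdin--Peccati theorem gives $|\E[h(q)] - \E[h(N)]| = O(k \cdot \|h''\|_\infty \cdot \eps)$ or so; the precise polynomial dependence on $k$ here, together with the $\lambda$ in the denominator (which enters because the covariance matrix $C$ can be nearly singular, and the CLT bound should be read relative to the ``scale'' $\sqrt{\lambda}$ of the largest-variance coordinate), will determine the final exponents. This matching-covariance smooth CLT is exactly Theorem~\ref{thm:h-match-cov} referenced in the introduction.

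For stage (b), I would use the standard mollification machinery from \cite{DKNfocs10}: given a target orthant indicator and a width parameter $\delta>0$, construct a smooth function $\tilde h$ that agrees with the indicator outside a $\delta$-neighborhood of the orthant boundary and has $\|\tilde h''\|_\infty = O(1/\delta^2)$ (in each coordinate, with an extra factor polynomial in $k$ from taking products of one-dimensional bumps). Then $|\Pr[q \in \text{orthant}] - \Pr[N \in \text{orthant}]|$ is bounded by $|\E[\tilde h(q)] - \E[\tilde h(N)]|$ plus the probabilities that $q$, respectively $N$, lands within distance $\delta$ of the boundary in some coordinate. The latter ``boundary mass'' terms are controlled by anti-concentration: for $N$ it is Gaussian anti-concentration $O(k\delta/\sqrt{\lambda})$ (using that the max-variance coordinate has standard deviation $\ge \sqrt{\lambda}$, and bounding the others is where one has to be slightly careful—one restricts attention to coordinates with non-negligible variance, since a coordinate with tiny variance makes its own PTF essentially constant and can be handled separately), and for $q$ it follows from Theorem~\ref{thm:cw} applied to each degree-2 polynomial $q_i - \theta_i$, giving $O(k (\delta/\sqrt{\lambda})^{1/2})$. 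Combining, $\dk(q,N) = O\!\left(k \eps/\delta^2 + k\sqrt{\delta/\sqrt\lambda}\,\right)$ up to polynomial-in-$k$ factors; balancing the two terms gives $\delta \asymp (\eps \sqrt\lambda)^{2/5}$-ish and, after carefully tracking the $k$ powers and the exact way $\lambda$ threads through both terms, the stated $O(k^{2/3}\eps^{1/6}/\lambda^{1/6})$.

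The main obstacle I expect is stage (a): extracting a clean, explicitly quantitative bound on the Malliavin/Stein error term of the Nourdin--Peccati CLT in terms of just $\eps$, $k$, and the covariance scale. This requires (i) writing each $q_i$ in its diagonalized second-chaos-plus-first-chaos form, (ii) computing the relevant contraction norms $\|A_i \otimes_1 A_j\|$ and bounding them via $|\lambda_{\max}| \le \eps$ and the variance normalization, and (iii) dealing with the possibly ill-conditioned covariance matrix $C$—the Nourdin--Peccati bound naturally involves $C^{-1}$ or the smallest eigenvalue of $C$, so one must either restrict to a well-conditioned subspace of coordinates or argue that the degenerate directions contribute negligibly. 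A secondary, more bookkeeping-type obstacle is getting the $k$-dependence in the mollification step right, since smoothing a $k$-dimensional orthant indicator to second-derivative-bounded accuracy naturally costs powers of $k$, and these must be reconciled with the $k^{2/3}$ in the statement.
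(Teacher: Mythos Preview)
Your three-stage plan matches the paper's proof exactly: Theorem~\ref{thm:h-match-cov} is the smooth-test-function CLT, Lemma~\ref{lem:from-h-to-dk} is the mollification step, and the final paragraph of Section~\ref{sec:mollification} does the optimization. Two points where your sketch diverges from what actually happens are worth flagging.

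First, your main anticipated obstacle is not one. The version of Nourdin--Peccati quoted as Theorem~\ref{thm:Nourdin} has \emph{no} $C^{-1}$ and no dependence on the smallest eigenvalue of $C$: the bound is simply $\tfrac12\|h''\|_\infty\sum_{i,j}\E\bigl[|C(i,j)-Y(i,j)|\bigr]$ with $Y(i,j)=\langle Dq_i,-DL^{-1}q_j\rangle$, and $C$ need only be PSD. The paper controls this by diagonalizing $q_a$, writing $Y(a,b)$ explicitly as a degree-2 polynomial, and showing $\Var[Y(a,b)]=O(\eps^2)$ (Lemma~\ref{lem:variance}); this yields $|\E[h(q)]-\E[h(N)]|\le O(k^2\eps)\|h''\|_\infty$ with no $\lambda$ anywhere. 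The parameter $\lambda$ enters \emph{only} in stage~(b), through Carbery--Wright anti-concentration applied to the single coordinate of largest variance---not through conditioning of $C$.

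Second, your single-parameter mollification (width $\delta$, second derivative $O(1/\delta^2)$) balances to $\eps^{1/5}$, not $\eps^{1/6}$. The paper uses the two-parameter mollifier of Theorem~\ref{thm:smooth-apx}: a smoothness scale $c$ giving $\|\tilde I_c''\|_\infty\le(2c)^2$, and a pointwise error $\min\{1,O(k^2/(c\cdot d(x,\partial R))^2)\}$. One then introduces a separate cutoff $\delta$, bounds $\Pr[d(Y,\partial R)\le\delta]\le O(\sqrt{\delta}/\lambda^{1/4})$ via Carbery--Wright on the max-variance coordinate (no union bound over~$k$ here), and optimizes $\delta$ and $c$ in turn. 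With $\eta=O(k^2\eps)$ from stage~(a), this two-step optimization is what produces $O(k^{1/3}\eta^{1/6}/\lambda^{1/6})=O(k^{2/3}\eps^{1/6}/\lambda^{1/6})$.
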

Looking ahead to motivate this result for our ultimate purposes,
Theorem \ref{thm:small-dk} is useful for
deterministic approximate counting because if $q=(q_1,\dots,q_k)$
satisfies the conditions of the theorem, then the theorem ensures that
$\Pr_{x \sim N(0,1)^n} \left[ \forall \ell \in [k], q_\ell (x) \leq 0 \right]$
is close to $\Pr\left[ \forall \ell \in [k],   N_\ell \leq 0 \right]$.  Note 
that the latter quantity can be efficiently estimated by a deterministic algorithm.

A key ingredient in the proof of
Theorem \ref{thm:small-dk} is a CLT due to Nourdin and Peccati
\cite{NourdinPeccati09} which gives a bound that involves
the Malliavin derivative of the functions $q_1,\dots,q_k$.
In Section \ifnum\confversion=0\ref{sec:Malliavin}\else 3.1 of the full version \fi  we give the necessary background
from Malliavin calculus and build on the \cite{NourdinPeccati09} result
to prove a result which is similar to Theorem
\ref{thm:small-dk} but gives a bound on $\E[h(q)]-\E[h(N)]$ rather
than $\dk(q,N)$ for a broad class of ``test functions'' $h$
 \ifnum\confversion=0(see Theorem \ref{thm:h-match-cov} below)\fi.
In Section  \ifnum\confversion=0\ref{sec:mollification}\else 3.2 of the full version \fi we
show how  \ifnum\confversion=0Theorem \ref{thm:h-match-cov}\else this result \fi can be combined with
standard ``mollification'' techniques to
yield Theorem \ref{thm:small-dk}.

\ifnum\confversion=0
\subsection{Malliavin calculus and test functions
with bounded second derivative.} \label{sec:Malliavin}
We need some notation and conceptual background before we can state
the Nourdin-Peccati multi-dimensional CLT from
\cite{NourdinPeccati09}.  Their CLT is proved using Stein's method;
while there is a rich theory underlying their result
we give only the absolute basics that suffice for our purposes.  (See
e.g. \cite{NourdinPeccati09,Nourdin-notes} for detailed
treatments of Malliavin calculus and its interaction with Stein's Method.)

We will use $\mathcal{X}$ to denote the space $\R^n$
endowed with the standard $N(0,1)^n$ normal measure
and ${\cal P}$ to denote the family of all polynomials
over $\mathcal{X}$.  For integer $d \geq 0$ we let $\mathcal{H}_d$
denote the ``$d$-th Wiener chaos'' of $\mathcal{X}$, namely the
space of all homogeneous degree-$d$ Hermite polynomials over $
\mathcal{X}.$
\ignore{
 \begin{fact}\label{fac:1}
 Every $f \in \mathcal{F}$ admits a Hermite expansion (which is convergent in the right sense). We will be dealing with polynomials, so the notion of convergence is not really relevant to us.
 \end{fact}
}
We define the operator $I_d : {\cal P} \rightarrow \mathcal{H}_d$
as follows : $I_d$ maps $p \in {\cal P}$ to the degree-$d$ part
of its Hermite expansion, so if $p$ has degree $d$ then
$p=I_0(p) + \cdots + I_d(p).$

We next define the generator of the Ornstein-Uhlenbeck semigroup.
This is the operator $L$ which is defined on ${\cal P}$ via
$$
L p = \littlesum_{q=0}^{\infty} -q \cdot I_q(p).
$$
\ignore{
 Not that it is \nnew{not} of direct interest to us but it is important to note that while a function $f \in \mathcal{F}$, it may not lie in the domain of $L$ (but note that it is of consequence to us).
\rnote{You are saying it could be the case that $f \in {\cal F}$ but
$Lf$ is not defined (the sum does not converge for some input $x$), right?}
From this point onwards, I am going to stop mentioning these irrelevant
details.
}
It is easy to see that for $p \in {\cal P}$ we have the inverse operator
$$
L^{-1}p  = \littlesum_{q=1}^{\infty} \frac{-1}{q} I_q(p).
$$

Next we introduce the notion of the \emph{Malliavin derivative}.
The Malliavin derivative operator $D$ maps a real-valued random variable
(defined over ${\cal X}$ by a differentiable
real-valued function $f: \R^n \to \R$)
to an $n$-dimensional vector of random variables in the
following way:  for $f : \mathbb{R}^n \rightarrow \mathbb{R}$,
$$
Df = \left( \frac{\partial f}{\partial x_1}, \ldots,
\frac{\partial f}{\partial x_n}\right).
$$

The following key identity provides the fundamental connection
between Malliavin Calculus and Stein's method, which is used
to prove Theorem \ref{thm:Nourdin} below:
\begin{claim} [see e.g. Equation (2.22) of \cite{NourdinPeccati09}]
Let $h : \mathbb{R} \rightarrow \mathbb{R}$ be a continuous function
with a bounded first derivative. Let $p$ and $q$ be
polynomials over $\mathcal{X}$ with $ \mathbf{E}[q]=0$. Then
$\mathbf{E} [q h(p)] = \mathbf{E} [h'(p) \cdot \langle Dp \
, \ -DL^{-1} q\rangle] $.
\end{claim}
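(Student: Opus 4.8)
Since this is a known identity (Equation~(2.22) of \cite{NourdinPeccati09}), the plan is to reprove it by assembling, in order, three standard facts of Gaussian analysis: (1) the Gaussian integration-by-parts (duality) formula for the Malliavin derivative $D$ and its adjoint $\delta$; (2) the operator identity $L=-\delta D$ relating the Ornstein-Uhlenbeck generator to $D$; and (3) the ordinary chain rule $D(h(p))=h'(p)\,Dp$.

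First I would introduce the divergence operator $\delta$, acting on a polynomial vector field $u=(u_1,\dots,u_n):\R^n\to\R^n$ by $\delta(u)=\littlesum_{i=1}^n\big(x_i u_i - \partial u_i/\partial x_i\big)$. Applying the elementary one-variable Gaussian identity $\E_{x\sim N(0,1)^n}[x_i\phi]=\E[\partial\phi/\partial x_i]$ with $\phi=f u_i$ and summing over $i$ gives the duality formula
\[
\E[\langle Df,\,u\rangle] = \E[f\cdot\delta(u)],
\]
valid for every polynomial $f$ and polynomial vector field $u$ since then all integrands are polynomials against the Gaussian density. Taking $u=Df$ yields $\delta(Df)=\langle x,Df\rangle-\sum_i\partial^2 f/\partial x_i^2$, and checking the action of this operator on a multivariate Hermite polynomial of total degree $d$ (it scales that component by $d$) shows that $\delta D=-L$ on ${\cal P}$, consistent with the definition $Lp=\littlesum_d -d\, I_d(p)$.

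Now I assemble the identity. Since $\E[q]=0$, the Hermite expansion of $q$ has no constant term, so $L^{-1}q=\littlesum_{d\ge 1}\frac{-1}{d}I_d(q)$ is a genuine polynomial and $LL^{-1}q=q$. Hence $q=LL^{-1}q=-\delta D L^{-1}q=\delta(-DL^{-1}q)$, and applying the duality formula with $f=h(p)$ and $u=-DL^{-1}q$, followed by the chain rule, gives
\[
\E[q\,h(p)] = \E[\delta(-DL^{-1}q)\cdot h(p)] = \E[\langle D(h(p)),\,-DL^{-1}q\rangle] = \E[h'(p)\cdot\langle Dp,\,-DL^{-1}q\rangle],
\]
which is the claimed identity.

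The one point requiring care --- and the only place where the hypothesis that $h$ is merely $C^1$ with bounded derivative, rather than a polynomial, enters --- is that the duality formula above must be used with the non-polynomial integrand $f=h(p)$. I expect this to be the main (though mild) obstacle: a bounded derivative forces $|h(t)|=O(1+|t|)$, so $h(p)$ is $C^1$ and both $h(p)\,u_i$ and $\partial_{x_i}\!\big(h(p)\,u_i\big)=h'(p)(\partial_{x_i}p)\,u_i+h(p)\,\partial_{x_i}u_i$ have at most polynomial growth against the Gaussian density; hence the one-variable Gaussian integration-by-parts identity applies verbatim to $\phi=h(p)\,u_i$ with no boundary term, and the derivation goes through. (Alternatively one can bypass $\delta$ entirely by first proving the identity when $q$ is a single Hermite polynomial $H_\alpha$, using the same integration by parts together with $\langle x,DH_\alpha\rangle-\sum_i\partial^2 H_\alpha/\partial x_i^2=|\alpha|H_\alpha$, and extending by linearity --- essentially the same computation.)
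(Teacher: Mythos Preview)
The paper does not prove this claim at all; it simply cites it as a known identity (Equation~(2.22) of \cite{NourdinPeccati09}) and moves on to specialize it to $h(x)=x$ in Corollary~\ref{corr:a}. Your proposal, by contrast, supplies a full proof, and it is the standard one: introduce the divergence $\delta$ as the adjoint of $D$ via Gaussian integration by parts, check that $\delta D=-L$ on polynomials (so that $q=LL^{-1}q=\delta(-DL^{-1}q)$ when $\E[q]=0$), and then apply the duality formula with $f=h(p)$ together with the chain rule. The argument is correct, including your handling of the one nontrivial point---that $h(p)$ is not a polynomial, so the duality formula needs the mild growth bound $|h(t)|=O(1+|t|)$ (from $h'$ bounded) to justify the Gaussian integration by parts with no boundary term.

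In short: your proof is correct and complete, and it goes well beyond what the paper itself offers here, since the paper treats the claim as a black-box citation.
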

Specializing to the case $h(x)=x$, we have
 \begin{corollary}\label{corr:a}
 Let $p$ and $q$ be finite degree polynomials over $\mathcal{X}$ with $ \mathbf{E}[q]=0$. Then,
 $\mathbf{E} [q p] = \mathbf{E} [ \langle Dp \ , \ -DL^{-1} q\rangle] $.
 \end{corollary}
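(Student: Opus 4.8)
The plan is to obtain this statement as the special case $h(x)=x$ of the Claim immediately preceding it. First I would take $h:\mathbb{R}\to\mathbb{R}$ to be the identity map: it is continuous and its first derivative is the constant function $1$, hence bounded, so $h$ meets the hypotheses of the Claim, and the remaining hypotheses (that $p$ and $q$ are polynomials over $\mathcal{X}$ and that $\mathbf{E}[q]=0$) are exactly the ones assumed here. Invoking the Claim then gives
\[
\mathbf{E}[qp]=\mathbf{E}[q\,h(p)]=\mathbf{E}\!\left[h'(p)\cdot\langle Dp,\,-DL^{-1}q\rangle\right]=\mathbf{E}\!\left[\langle Dp,\,-DL^{-1}q\rangle\right],
\]
since $h'\equiv 1$, which is precisely the claimed identity.

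Two routine points should be checked along the way. First, everything is well defined: the operators $I_d$, $L^{-1}$ and $D$ all map ${\cal P}$ into ${\cal P}$ without increasing the degree, so $p$, $q$, $Dp$ and $DL^{-1}q$ are polynomials over $\mathcal{X}$; since every polynomial has finite moments of all orders under the $N(0,1)^n$ measure, both $\mathbf{E}[qp]$ and $\mathbf{E}[\langle Dp,-DL^{-1}q\rangle]$ are finite. Second, $L^{-1}q$ makes sense because $\mathbf{E}[q]=0$ forces $I_0(q)=0$, so $L^{-1}q=\sum_{d\ge 1}(-1/d)\,I_d(q)$ has no division-by-zero issue and is a genuine finite sum (as $q$ has finite degree).

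If one prefers an argument that does not quote the Claim, I would instead verify the identity directly in the orthonormal multivariate Hermite basis $\{H_\alpha\}_{\alpha\in\mathbb{N}^n}$: writing $p=\sum_\alpha\hat p(\alpha)H_\alpha$ and $q=\sum_{\alpha\neq 0}\hat q(\alpha)H_\alpha$ and using $\partial_i H_\alpha=\sqrt{\alpha_i}\,H_{\alpha-e_i}$, one gets $\mathbf{E}[\langle DH_\alpha,DH_\beta\rangle]=|\alpha|\cdot\mathbf{1}[\alpha=\beta]$; combining this with $-DL^{-1}q=\sum_{\alpha\neq 0}(1/|\alpha|)\hat q(\alpha)\,DH_\alpha$ yields $\mathbf{E}[\langle Dp,-DL^{-1}q\rangle]=\sum_{\alpha\neq 0}\hat p(\alpha)\hat q(\alpha)=\mathbf{E}[qp]$, where the last step uses $\hat q(0)=0$. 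In either route there is essentially no obstacle: the substantive content sits entirely in the Claim (equivalently, in the Malliavin integration-by-parts identity for Hermite polynomials), which is taken as given, and the only thing that needs verifying is the trivial fact that the identity function has bounded first derivative so that the Claim applies.
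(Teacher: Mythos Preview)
Your proposal is correct and follows exactly the paper's approach: the paper simply states ``Specializing to the case $h(x)=x$, we have'' before the corollary, and you have written out that specialization in detail (with some extra well-definedness checks and an alternative Hermite-basis verification that the paper does not include).
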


We now recall the following CLT due to Nourdin and Peccati:

 \begin{theorem}\label{thm:Nourdin} [\cite{NourdinPeccati09}, see also
\cite{Nourdin-notes}, Theorem~6.1]
Let $p=(p_1, \ldots, p_k)$ where each $p_i$ is a Gaussian polynomial
with $\E[p_i]=0$.
Let $C$ be a symmetric PSD matrix in $\mathbb{R}^{k \times k}$
and let $N$ be a mean-0 $k$-dimensional
Gaussian random variable with
covariance matrix $C$. Then for any $h : \mathbb{R}^{k}
\rightarrow \mathbb{R}, h \in {\cal C}^2$
such that $\Vert h'' \Vert_{\infty} < \infty$, we have
$$
\left| \E [h(p)] - \E [h(N)] \right| < {\frac 1 2}
\Vert h'' \Vert_{\infty} \cdot \left( \sum_{i=1}^k \sum_{j=1}^k
\E [|C(i,j) - Y(i,j)|]\right)
$$
where $Y(i,j) =  \langle Dp_i \ ,  -DL^{-1} p_j\rangle$.
 \end{theorem}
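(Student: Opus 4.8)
\medskip
\noindent The plan is to prove this via the classical ``smart path'' (Gaussian interpolation) argument, combining two integration‑by‑parts identities: ordinary Gaussian integration by parts for the terms coming from $N$, and the Malliavin integration‑by‑parts identity $\E[q\,h(p)] = \E[h'(p)\,\langle Dp,\,-DL^{-1}q\rangle]$ recalled above (in its multidimensional form) for the terms coming from $p$. We may realize $N$ on a common probability space, independent of $p$; this changes neither side of the claimed inequality, since both depend only on the marginal laws of $p$ and $N$ (and on $\E[Y(i,j)]$-type quantities, which involve only $p$).

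\medskip
\noindent First I would set $w_t := \sqrt{1-t}\,p + \sqrt{t}\,N$ and define the interpolation $\Psi(t) := \E[h(w_t)]$ for $t\in[0,1]$, so that $\Psi(0) = \E[h(p)]$, $\Psi(1) = \E[h(N)]$, and $\E[h(N)] - \E[h(p)] = \int_0^1 \Psi'(t)\,dt$. Differentiating in $t$ (the justification is discussed below),
\[
\Psi'(t) = \sum_{i=1}^k \E\!\left[\partial_i h(w_t)\left(\frac{N_i}{2\sqrt{t}} - \frac{p_i}{2\sqrt{1-t}}\right)\right].
\]
For the $N_i$ term I would condition on $p$ and apply Gaussian integration by parts to the mean‑zero Gaussian vector $N$ with covariance $C$: since $\partial_{N_\ell}\!\left[\partial_i h(w_t)\right] = \sqrt{t}\,\partial_\ell\partial_i h(w_t)$, this gives $\frac{1}{2\sqrt{t}}\,\E[N_i\,\partial_i h(w_t)] = \frac12\sum_\ell C(i,\ell)\,\E[\partial_i\partial_\ell h(w_t)]$, and the $1/\sqrt{t}$ weight disappears. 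For the $p_i$ term I would condition on $N$ and use the multidimensional form of the Malliavin identity: by the Malliavin chain rule $D(\phi(p_1,\dots,p_k)) = \sum_\ell (\partial_\ell\phi)(p)\,Dp_\ell$, the identity above applied with $q=p_i$ (which has $\E[p_i]=0$) gives $\E[p_i\,\phi(p)] = \sum_\ell \E\!\left[(\partial_\ell\phi)(p)\,\langle Dp_\ell,\,-DL^{-1}p_i\rangle\right]$; taking $\phi(\cdot) = \partial_i h(\sqrt{1-t}\,\cdot + \sqrt{t}\,N)$ produces a factor $\sqrt{1-t}$ that cancels the $1/\sqrt{1-t}$ weight. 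Combining the two computations, every explicit $t$‑dependent weight cancels and
\[
\Psi'(t) = \frac12 \sum_{i=1}^k\sum_{\ell=1}^k \E\!\left[\partial_i\partial_\ell h(w_t)\left(C(i,\ell) - \langle Dp_\ell,\,-DL^{-1}p_i\rangle\right)\right].
\]
Bounding $|\partial_i\partial_\ell h|\le \|h''\|_\infty$ pointwise, integrating over $t\in[0,1]$, and relabelling the dummy indices $i\leftrightarrow\ell$ (using that $C$ is symmetric, so that $\sum_{i,\ell}\E\,|C(i,\ell) - \langle Dp_\ell,-DL^{-1}p_i\rangle| = \sum_{i,j}\E\,|C(i,j) - Y(i,j)|$ with $Y(i,j) = \langle Dp_i,-DL^{-1}p_j\rangle$), yields exactly the asserted bound.

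\medskip
\noindent The main obstacle is making the differentiation of $\Psi$ — and in particular the limits at the endpoints $t=0,1$, where the weights $1/\sqrt{t}$ and $1/\sqrt{1-t}$ blow up — rigorous. The point is that, after the two integrations by parts, these singular factors are exactly cancelled by the $\sqrt{t}$ and $\sqrt{1-t}$ produced by the chain rule, so the integrand in $\int_0^1\Psi'(t)\,dt$ is in fact bounded and continuous on all of $[0,1]$; concretely I would carry out the argument on $[\delta,1-\delta]$ and then let $\delta\to0$ by dominated convergence. This is where the hypotheses enter: $h\in\mathcal{C}^2$ with $\|h''\|_\infty<\infty$ forces each $\partial_i h$ to be $\mathcal{C}^1$ with derivative bounded by $\|h''\|_\infty$ (hence of at most linear growth, which both validates the two integration‑by‑parts steps and controls the integrand), and each $p_i$ being a fixed finite‑degree polynomial in Gaussian variables — so that $p_i$, $Dp_\ell$, and $DL^{-1}p_i$ are themselves polynomials, with finite moments of every order — guarantees that all expectations above are finite and that no convergence issue arises for the Wiener‑chaos expansions or for the domain of $L^{-1}$. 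If one prefers not to differentiate a merely $\mathcal{C}^2$ function, one can first mollify $h$ to a $\mathcal{C}^\infty$ function without increasing $\|h''\|_\infty$, prove the inequality for the mollified function, and pass to the limit. The two remaining ingredients — the Malliavin chain rule and Gaussian integration by parts — are entirely standard and need only routine verification.
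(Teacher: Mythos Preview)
The paper does not give its own proof of this statement: Theorem~\ref{thm:Nourdin} is quoted verbatim as a result of Nourdin and Peccati (with a pointer to \cite{NourdinPeccati09} and \cite{Nourdin-notes}) and is used as a black box to derive Theorem~\ref{thm:h-match-cov}. So there is nothing in the paper to compare your argument against.

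That said, your interpolation (``smart path'') argument is a correct and standard route to this inequality. The computation of $\Psi'(t)$ via Gaussian integration by parts on the $N$ side and the Malliavin identity (together with the chain rule) on the $p$ side is right, the singular $1/\sqrt{t}$ and $1/\sqrt{1-t}$ weights do cancel exactly as you say, and your index relabelling using the symmetry of $C$ correctly recovers $Y(i,j)=\langle Dp_i,-DL^{-1}p_j\rangle$. For the record, the proof in \cite{NourdinPeccati09,Nourdin-notes} is organized around Stein's method rather than interpolation: one writes $\E[h(N)]-\E[h(p)]$ using the solution $f$ of the multidimensional Stein equation and then applies the same Malliavin integration-by-parts to $\E[p_j\,\partial_i f(p)]$. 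The two approaches are closely related and yield the identical bound; your version has the advantage of being self-contained (no Stein equation machinery needed), while the Stein-method version makes the connection to the operator $L$ and to other distances (e.g.\ Wasserstein) more transparent.
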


We now use Theorem \ref{thm:Nourdin} to prove our main result of this subsection, which is
the following CLT for multidimensional degree-2 Gaussian polynomials with small-magnitude
eigenvalues.  Our CLT says that such multidimensional random variables must in fact be close to
multidimensional Gaussian distributions, where ``closeness'' here is measured using
test functions with bounded second derivative.  (In the next subsection we extend this
result using mollification techniques to obtain Theorem \ref{thm:small-dk}, which uses multidimensional
Kolmogorov distance.)

\begin{theorem} \label{thm:h-match-cov}
Let $q=(q_1, \ldots, q_{k})$ where each $q_i$ is a
degree-$2$ mean-0 Gaussian polynomial
with $\Var[q_i] \leq 1$ and $|\lambda_{\max}(q_i)| \le \eps$.
Let $C$ denote the covariance matrix of $q$, so $C(i,j)=
\Cov(q_i,q_j)=\E[q_i q_j].$  Let $N$ be a mean-zero $k$-dimensional
Gaussian random variable with covariance matrix $C$.  Then
for any $h : \mathbb{R}^{k} \rightarrow \mathbb{R},{h \in
{\cal C}^2}$ such that $\Vert h'' \Vert_{\infty} < {\infty}$,
we have
$$\left| \E  [h(q)] -  \E [h(N)] \right| <   O(k^2 \eps) \cdot
\Vert h'' \Vert_{\infty}.$$
 \end{theorem}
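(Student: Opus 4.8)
The plan is to deduce this directly from the Nourdin--Peccati CLT (Theorem~\ref{thm:Nourdin}), applied with $p=q$ and the stated covariance matrix $C$. Once that is set up, the whole task becomes bounding the error term $\sum_{i=1}^{k}\sum_{j=1}^{k}\E\big[\,|C(i,j)-Y(i,j)|\,\big]$ by $O(k^2\eps)$, where $Y(i,j)=\langle Dq_i,\,-DL^{-1}q_j\rangle$. First I would write each $q_i$ explicitly as $q_i(x)=x^{T}A_i x+b_i^{T}x+C_i$, with $A_i$ the associated symmetric matrix and $b_i\in\R^{n}$; the mean-zero hypothesis forces $C_i=-\Tr(A_i)$. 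Using the paper's definition $Df=(\partial f/\partial x_1,\dots,\partial f/\partial x_n)$, one gets $Dq_i=2A_ix+b_i$. For the other factor, note that the Hermite decomposition of $q_j$ has $I_0(q_j)=\E[q_j]=0$, $I_1(q_j)=b_j^{T}x$, and $I_2(q_j)=x^{T}A_jx-\Tr(A_j)$; hence $-L^{-1}q_j=I_1(q_j)+\tfrac12 I_2(q_j)$ and so $-DL^{-1}q_j=A_jx+b_j$. Therefore $Y(i,j)=(2A_ix+b_i)^{T}(A_jx+b_j)$ is itself an explicit degree-$2$ Gaussian polynomial, with quadratic part $2x^{T}A_iA_jx$, linear part $(2A_ib_j+A_jb_i)^{T}x$, and constant $b_i^{T}b_j$.

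The second step is to check that $Y(i,j)$ has exactly the right mean. Since $\E[xx^{T}]=I$, we have $\E[Y(i,j)]=2\Tr(A_iA_j)+b_i^{T}b_j$, and a short Wick (Isserlis) computation gives $\Cov(q_i,q_j)=2\Tr(A_iA_j)+b_i^{T}b_j$ as well, so $\E[Y(i,j)]=C(i,j)$ --- this is precisely the point at which the matching-covariance choice of $N$ is used. Consequently $C(i,j)-Y(i,j)=\E[Y(i,j)]-Y(i,j)$ is centered, and by Cauchy--Schwarz $\E[|C(i,j)-Y(i,j)|]\le \sqrt{\Var(Y(i,j))}$, so it suffices to prove $\Var(Y(i,j))=O(\eps^2)$ for every pair $i,j$.

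The third step is that variance estimate. Symmetrizing the quadratic part, $2x^{T}A_iA_jx=2x^{T}Mx$ with $M:=\tfrac12(A_iA_j+A_jA_i)$; since the quadratic and linear parts of a Gaussian polynomial are uncorrelated, $\Var(Y(i,j))=8\|M\|_F^{2}+\|2A_ib_j+A_jb_i\|_2^{2}$ (using $\Var(x^{T}Mx)=2\|M\|_F^{2}$ for symmetric $M$ and $\Var(v^{T}x)=\|v\|_2^{2}$). I would then invoke the two hypotheses in the forms in which they are actually available: a direct moment computation gives $\Var(q_i)=2\|A_i\|_F^{2}+\|b_i\|_2^{2}\le 1$, so $\|A_i\|_F\le 1/\sqrt2$ and $\|b_i\|_2\le 1$, while $|\lambda_{\max}(q_i)|\le\eps$ is exactly $\|A_i\|_2\le\eps$. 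Submultiplicativity $\|AB\|_F\le\|A\|_2\|B\|_F$ then yields $\|M\|_F\le\tfrac12(\|A_i\|_2\|A_j\|_F+\|A_j\|_2\|A_i\|_F)\le\eps/\sqrt2$, and the triangle inequality yields $\|2A_ib_j+A_jb_i\|_2\le 2\|A_i\|_2\|b_j\|_2+\|A_j\|_2\|b_i\|_2\le 3\eps$. Hence $\Var(Y(i,j))\le 4\eps^{2}+9\eps^{2}=O(\eps^2)$, so each $\E[|C(i,j)-Y(i,j)|]=O(\eps)$, the double sum is $O(k^2\eps)$, and Theorem~\ref{thm:Nourdin} gives $|\E[h(q)]-\E[h(N)]|\le \tfrac12\|h''\|_\infty\cdot O(k^2\eps)=O(k^2\eps)\cdot\|h''\|_\infty$.

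I do not expect a genuine obstacle here: the hard analytic content --- a CLT valid against all $\mathcal{C}^2$ test functions --- is supplied by Theorem~\ref{thm:Nourdin}, and what remains is the linear-algebraic observation that small operator norm of the $A_i$'s forces the Malliavin ``covariance proxy'' $Y(i,j)$ to concentrate near its mean $C(i,j)$. The only places that require care are the bookkeeping ones: getting the Hermite/$L^{-1}$ normalizations right (in particular the $\tfrac12$ in front of $I_2$), keeping track of which quantity is controlled by the operator norm and which by the Frobenius norm, and the symmetrization $A_iA_j\mapsto M$ before applying the Gaussian variance formula.
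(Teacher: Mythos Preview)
Your proof is correct and follows essentially the same approach as the paper: apply the Nourdin--Peccati CLT (Theorem~\ref{thm:Nourdin}), identify $\E[Y(i,j)]=C(i,j)$, reduce via Cauchy--Schwarz/Jensen to bounding $\Var[Y(i,j)]$, and show this is $O(\eps^2)$ using the eigenvalue and variance hypotheses. The only cosmetic difference is that the paper first diagonalizes $q_a$ and works in coordinates (Lemma~\ref{lem:variance}), whereas you carry out the same computation in matrix notation without changing basis.
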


\begin{proof}
As in Theorem~\ref{thm:Nourdin}, we write $Y(a,b)$ to denote
$\langle D q_a, -DL^{-1}q_b \rangle.$
For any $1 \leq a,b \leq k$, we have
\begin{equation} \label{eq:CY}
C(a,b)=\Cov(q_a,q_b) = \E[q_a q_b] = \E[Y(a,b)],
\end{equation}
where the second equality is because $q_a$ and $q_b$ have mean 0
and the third equality is by Corollary~\ref{corr:a}.
Since $C$ is a covariance matrix and every covariance matrix
is PSD, we may apply Theorem \ref{thm:Nourdin}, and we get
that
\[
\left| \E  [h(q)] -  \E [h(N)] \right| < {\frac {k^2} 2}
\Vert h'' \Vert_{\infty} \cdot \max_{1 \leq a,b \leq k}
\E[|C(a,b)-Y(a,b)|]
= {\frac {k^2} 2}
\Vert h'' \Vert_{\infty} \cdot \max_{1 \leq a,b \leq k}
\E[|Y(a,b)-\E[Y(a,b)]|],
\]
where we used (\ref{eq:CY}) for the equality.
By Jensen's inequality we have
$\E[|Y(a,b)-\E[Y(a,b)]|] \leq \sqrt{\Var[Y(a,b)]}.$
Lemma \ref{lem:variance} below gives us that
$\Var[Y(a,b)] \leq O(\eps^2)$, and the theorem is proved.
\end{proof}

It remains to establish the following lemma:

\begin{lemma} \label{lem:variance}
For each $1 \leq a,b \leq k$,
we have that $\Var[Y(a,b) ] = O(\eps^2).$
\end{lemma}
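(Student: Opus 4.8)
The plan is to compute $Y(a,b) = \langle Dq_a, -DL^{-1}q_b\rangle$ explicitly in terms of the symmetric matrices $A_a, A_b$ associated to the quadratic parts of $q_a, q_b$ (and the linear parts), and then bound its variance directly using the degree-2 structure. First I would recall that for a degree-2 Gaussian polynomial $q_b(x) = x^T A_b x + c_b^T x + d_b$, the Hermite decomposition separates into its degree-0, degree-1 and degree-2 components; applying $-L^{-1}$ multiplies the degree-$j$ Hermite component by $1/j$, so $-L^{-1}q_b$ has its quadratic part scaled by $1/2$ and its linear part unscaled (the constant is killed). Since $D$ is just the gradient, $Dq_a = 2A_a x + c_a$ and $-DL^{-1}q_b = A_b x + c_b$ (modulo the standard care that the Hermite-degree-2 part $x^TA_bx$ corresponds to $\sum_{i\le j}$ with the diagonal handled by $x_i^2 = h_2(x_i)+1$, which only shifts a constant and does not affect $D$). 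Hence
\[
Y(a,b) = \langle 2A_a x + c_a,\ A_b x + c_b\rangle = 2 x^T A_a A_b x + (\text{linear in }x) + (\text{const}),
\]
so $Y(a,b)$ is itself a degree-2 Gaussian polynomial, and $\Var[Y(a,b)]$ is governed by its quadratic part $2x^T A_aA_b x$ together with the linear part.

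The key step is then to bound $\Var[2x^T A_a A_b x]$. Symmetrizing, this equals (up to constants) $\mathrm{tr}((A_aA_b + A_bA_a)^2)$ type quantities, but more simply, for a quadratic form $x^T M x$ with $M$ not necessarily symmetric one has $\Var[x^TMx] = \|M + M^T\|_F^2 = O(\|M\|_F^2)$ where $\|\cdot\|_F$ is the Frobenius norm. So I need $\|A_aA_b\|_F^2 = \mathrm{tr}(A_aA_b B_b A_a) = O(\eps^2)$ where $B_b = A_b^T = A_b$. The bound $\|A_aA_b\|_F \le \|A_a\|_2 \cdot \|A_b\|_F$ (operator norm times Frobenius norm) gives $\|A_aA_b\|_F^2 \le \|A_a\|_2^2 \|A_b\|_F^2 \le \eps^2 \cdot \|A_b\|_F^2$, using $\|A_a\|_2 = |\lambda_{\max}(q_a)| \le \eps$. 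Finally $\|A_b\|_F^2 = \sum_i \lambda_i(A_b)^2 \le SS(q_b) \le \Var[q_b] \le 1$ by Claim~\ref{clm:variance-bound-SS} (and the relation between Frobenius norm and sum of squared coefficients). For the linear part of $Y(a,b)$, namely terms like $2x^T A_a c_b + 2 x^T A_b c_a$ (from cross terms of the inner product), its variance is $O(\|A_a c_b\|_2^2 + \|A_b c_a\|_2^2) \le O(\|A_a\|_2^2 \|c_b\|_2^2 + \|A_b\|_2^2\|c_a\|_2^2) \le O(\eps^2)$, again using $\|c_a\|_2^2, \|c_b\|_2^2 \le SS \le 2$ and the eigenvalue bound. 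Combining, $\Var[Y(a,b)] = O(\eps^2)$.

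I expect the main obstacle to be bookkeeping the exact relationship between the matrix representation, the $\sum_{i \le j}$ convention, and the Hermite-basis decomposition used to define $I_d$ and $L^{-1}$ — in particular making sure the diagonal entries $a_{ii}x_i^2$ are correctly sorted into Hermite degree 2 plus a constant, so that $-DL^{-1}q_b$ really does come out to $A_b x + c_b$ cleanly with the factor-of-$\tfrac12$ on the quadratic piece exactly cancelling the factor of $2$ from $D$ of $q_a$'s quadratic part. Once that identity $Y(a,b) = 2x^TA_aA_bx + \langle 2A_ax, c_b\rangle + \langle c_a, A_bx\rangle + \langle c_a, c_b\rangle$ (or its symmetrization) is nailed down, the variance bound is a short computation via $\Var[\text{quadratic}] = \Theta(\|\cdot\|_F^2)$, submultiplicativity $\|A_aA_b\|_F \le \|A_a\|_2\|A_b\|_F$, and Claim~\ref{clm:variance-bound-SS}. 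I would present the matrix identity as its own displayed equation, then invoke the Frobenius/operator-norm inequality and the variance bound on quadratics as standard facts.
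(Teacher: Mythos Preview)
Your proposal is correct and follows essentially the same approach as the paper: compute $Y(a,b)$ explicitly as a degree-$2$ Gaussian polynomial (via $Dq_a = 2A_a x + c_a$ and $-DL^{-1}q_b = A_b x + c_b$), then bound its variance through the sum of squares of its coefficients using Claim~\ref{clm:variance-bound-SS} together with the eigenvalue hypothesis $|\lambda_{\max}|\le\eps$ and $\Var[q_i]\le 1$. The only cosmetic difference is that the paper first diagonalizes $q_a$ and works in coordinates, whereas you phrase the same computation in matrix language via $\|A_aA_b\|_F \le \|A_a\|_2\,\|A_b\|_F$; these are the same inequality once $A_a$ is diagonal. (One tiny bookkeeping slip: the linear part of $Y(a,b)$ is $(2A_a c_b + A_b c_a)^T x$, not $2A_a c_b + 2A_b c_a$, but this does not affect the $O(\eps^2)$ bound.)
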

\begin{proof}
Fix $1 \leq a,b \leq k$, so $q_a(x_1,\dots,x_n)$
and $q_b(x_1,\dots,x_n)$ are degree-2 Gaussian polynomials
with mean 0.
Recalling the spherical symmetry of the $N(0,1)^n$ distribution,
by a suitable choice of basis that diagonalizes $q_a$ we may write
\[
q_a(x)=\littlesum_{i=1}^n \lambda_i x_i^2 +
\littlesum_{i=1}^n \beta_i x_i + \gamma
\quad \quad \text{and} \quad \quad
q_b(x)=\littlesum_{i,j=1}^n \delta_{ij}x_ix_j +
\littlesum_{i=1}^n \kappa_i x_i + \rho,
\]
where we take $\delta_{ij}=\delta_{ji}$ for all $1 \leq i,j \leq k.$

Recalling that $Y(a,b) = \langle D q_a, -DL^{-1}q_b \rangle$,
we start by observing that
$D q_a = (2 \lambda_\ell x_\ell + \beta_\ell)_{\ell=1,\dots,n}.$
For $-DL^{-1}q_b$, we have that
$
L^{-1} q_b  = - I_1(q_b)  - (1/2)I_2(q_b).
$
We have $I_1(q_b) = \littlesum_{i=1}^n \kappa_i x_i$.
Recalling that the first two normalized Hermite polynomials are
$h_1(x) = x$ and $h_2(x) = (x^2-1)/\sqrt{2}$,
it is straightforward
to verify that $I_2(q_b)$ (the homogeneous degree-2 part of the
Hermite expansion of $q_b$) is
\[
I_2(q_b) = \littlesum_{1 \leq i \neq j \leq k}
\delta_{ij} h_1(x_i) h_1(x_j) + \littlesum_{i=1}^n \sqrt{2} \cdot
\delta_{ii} h_2(x_i).
\]
Hence
\[
L^{-1} q_b  = - \littlesum_{i=1}^n \kappa_i x_i -
{\frac 1 2} \littlesum_{1 \leq i \neq j \leq k} \delta_{ij}
x_i x_j - {\frac 1 2} \littlesum_{i=1}^n \delta_{ii}(x_i^2 - 1),
\]
so
\[
-DL^{-1} q_b = \left(\kappa_\ell + \littlesum_{i=1}^n \delta_{i \ell}
 x_i\right)_{ \ell=1,\dots,n}.
\]
We thus can write $Y(a,b)$ as a degree-2 polynomial in the variables
$x_1,\dots,x_n$ as
\begin{eqnarray*}
Y(a,b) &=& \littlesum_{\ell=1}^n (2 \lambda_\ell x_\ell + \beta_\ell) \cdot
\left(\kappa_\ell + \littlesum_{i=1}^n \delta_{i \ell}x_i\right)\\
&=& \littlesum_{i=1}^n \littlesum_{\ell=1}^n 2 \lambda_\ell \delta_{i \ell}
x_i x_\ell + \littlesum_{\ell=1}^n 2 \kappa_\ell \lambda_\ell x_\ell
+ \littlesum_{i=1}^n \left(\littlesum_{\ell=1}^n \beta_\ell \delta_{i \ell}
\right)
x_i + \littlesum_{\ell=1}^n \kappa_\ell \beta_\ell.
\end{eqnarray*}
By Claim \ref{clm:variance-bound-SS}, we know that
$\Var[Y(a,b)] \leq SS(Y(a,b))$.  Using the inequality
$(r+s)^2 \leq 2r^2 + 2s^2$ for the degree-1 coefficients, to prove
the lemma it suffices to show that
\begin{equation} \label{eq:goal}
\littlesum_{i=1}^n \littlesum_{\ell=1}^n (\lambda_\ell \delta_{i \ell})^2
+ \littlesum_{\ell=1}^n (\kappa_\ell \lambda_\ell)^2 +
\littlesum_{i=1}^n \left(\littlesum_{\ell=1}^n \beta_\ell
\delta_{i \ell}\right)^2 \leq O(\eps^2).
\end{equation}
We bound each term of (\ref{eq:goal}) in turn.
For the first, we recall that each $\lambda_\ell$ is an eigenvalue
of $q_a$ and hence satisfies $\lambda_\ell^2 \leq \eps^2$; hence
we have
\[
\littlesum_{i=1}^n \littlesum_{\ell=1}^n (\lambda_\ell \delta_{i \ell})^2
\leq \eps^2 \littlesum_{i=1}^n \littlesum_{\ell=1}^n (\delta_{i\ell})^2
\leq \eps^2,
\]
where we have used Claim \ref{clm:variance-bound-SS} again to get
that $ \littlesum_{i,\ell=1}^n  (\delta_{i \ell})^2
\leq SS(q_b) \leq \Var[q_b] \leq 1.$
For the second term, we have
\[
\littlesum_{\ell=1}^n (\kappa_\ell \lambda_\ell)^2 \leq
\eps^2 \cdot \littlesum_{\ell=1}^n \kappa_\ell^2 \leq \eps^2 \cdot SS(q_b)
\leq \eps^2.
\]
Finally, for the third term, let us write $M=(\delta_{i\ell})$
for the $k \times k$ matrix corresponding to the quadratic part of
$q_b$ and $\bar{\beta}$ for the column
vector whose $\ell$-th entry is $\beta_\ell$.  Then we have
that
\[
\littlesum_{i=1}^n \left(\littlesum_{\ell=1}^n \beta_\ell
\delta_{i \ell}\right)^2 = \|M \bar{\beta}\|_2^2
\leq \|\lambda_{\max}(M) \bar{\beta}\|_2^2
\leq \eps^2 \|\bar{\beta}\|_2 \leq \eps^2,
\]
where the second inequality is because each eigenvalue of
$p_b$ has magnitude at most 1 and the third
is because $\|\bar{\beta}\|_2^2 \leq SS(p_a) \leq \Var[p_a] \leq 1.$
This concludes the proof of Lemma \ref{lem:variance}.
\end{proof}

\subsection{From test functions with bounded second derivative
to multidimensional Kolmogorov distance.} \label{sec:mollification}

In this subsection we show how ``mollification'' arguments can be used
to extend Theorem \ref{thm:h-match-cov} to
Theorem \ref{thm:small-dk}.
The main idea is to approximate the (discontinuous) indicator
function of an appropriate region by an appropriately ``mollified'' function
(that is continuous with bounded second derivative) so that the corresponding expectations
are approximately preserved. There are several different mollification constructions in the literature
that could potentially by used for this purpose. We use the following theore
from \cite{DKNfocs10}.
 \begin{theorem}\label{thm:smooth-apx}
[\cite{DKNfocs10}, Theorem 4.8 and Theorem 4.10]
 Let $I:\mathbb{R}^k \rightarrow \{0,1\}$ be the indicator of a region
$R$ in $\mathbb{R}^k$ and $c>0$ be arbitrary. Then there exists a function
$\tilde{I}_c:\mathbb{R}^k \rightarrow [0,1]$ satisfying:
 \begin{itemize}
 \item $\Vert \partial^{\beta} \tilde{I}_c / \partial
x^\beta\Vert_{\infty} \le (2c)^{|\beta|} $ for any $\beta \in \mathbb{N}^k$,
and
 \item $|I(x) - \tilde{I}_c(x)| \le \min \{1,
O((\frac{k}{c \cdot d(x,\partial R)})^2)\}$ for all $x \in \R^k$,
 \end{itemize}
 where $d(x,\partial R)$ is the Euclidean
distance of the point $x$ to the closest point in $R$.
 \end{theorem}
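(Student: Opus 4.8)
The plan is to take $\tilde I_c$ to be a convolution of $I$ with a carefully chosen smooth kernel. Concretely, I will exhibit a kernel $\Phi_c:\R^k\to[0,\infty)$ with $\int_{\R^k}\Phi_c=1$ and set $\tilde I_c:=I*\Phi_c$. Then $\tilde I_c:\R^k\to[0,1]$ is automatic from $0\le I\le 1$ and $\Phi_c\ge 0$, $\int\Phi_c=1$. For the derivative bound, once each $\partial^\beta\Phi_c$ is in $L^1$ one may pass the derivative through the convolution, so $\partial^\beta\tilde I_c=I*(\partial^\beta\Phi_c)$ and hence $\|\partial^\beta\tilde I_c\|_\infty\le\|I\|_\infty\,\|\partial^\beta\Phi_c\|_1\le\|\partial^\beta\Phi_c\|_1$; it thus suffices to arrange $\|\partial^\beta\Phi_c\|_1\le(2c)^{|\beta|}$. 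For the approximation bound, fix $x$ and put $r:=d(x,\partial R)$. If $r>0$ then $x$ lies in the interior of $R$ or of its complement, and since the ball $B(x,r)$ is connected and misses $\partial R$, the function $I$ is constant (and equal to $I(x)$) on $B(x,r)$. Therefore in
\[
I(x)-\tilde I_c(x)=\int_{\R^k}\bigl(I(x)-I(x-y)\bigr)\,\Phi_c(y)\,dy
\]
the integrand vanishes for $|y|<r$ and has absolute value at most $1$ elsewhere, so $|I(x)-\tilde I_c(x)|\le\int_{|y|\ge r}\Phi_c(y)\,dy$; combined with the trivial bound $|I-\tilde I_c|\le 1$, everything reduces to showing $\int_{|y|\ge r}\Phi_c\le O\bigl((k/(cr))^2\bigr)$.

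Both remaining requirements on $\Phi_c$ will be met by a tensor product $\Phi_c(y)=\prod_{i=1}^k\psi_c(y_i)$ of a single one-dimensional kernel $\psi_c$. Since $\partial^\beta\Phi_c(y)=\prod_i\psi_c^{(\beta_i)}(y_i)$, we get $\|\partial^\beta\Phi_c\|_1=\prod_i\|\psi_c^{(\beta_i)}\|_1$; taking $\psi_c\ge 0$ with $\int\psi_c=1$ (so $\|\psi_c\|_1=1$, and the coordinates with $\beta_i=0$ contribute a factor $1$), it suffices to have $\|\psi_c^{(m)}\|_1\le(2c)^m$ for every integer $m\ge 1$. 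For the tail, a union bound over coordinates gives, with $Y\sim\Phi_c$,
\[
\int_{|y|\ge r}\Phi_c(y)\,dy\le\sum_{i=1}^k\Pr\bigl[|Y_i|\ge r/\sqrt k\bigr]=k\int_{|t|\ge r/\sqrt k}\psi_c(t)\,dt,
\]
so it suffices that $\int_{|t|\ge s}\psi_c(t)\,dt\le O\bigl(1/(cs)^3\bigr)$: then the displayed quantity is $O\bigl(k^{5/2}/(cr)^3\bigr)$, which is $\le O\bigl((k/(cr))^2\bigr)$ once $cr\ge\sqrt k$, while for $cr<\sqrt k$ the bound $O((k/(cr))^2)$ already exceeds $1$ and the trivial bound $|I-\tilde I_c|\le1$ takes over.

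It remains to construct a one-dimensional $\psi_c$ that is nonnegative, has unit mass, satisfies $\|\psi_c^{(m)}\|_1\le(2c)^m$ for all $m\ge1$, and has integrated tail $O(1/(cs)^3)$. I will use a squared Fej\'er (Jackson-type) kernel:
\[
\psi_c(t):=\frac{3c}{4\pi}\left(\frac{\sin(ct/2)}{ct/2}\right)^{4}.
\]
It is manifestly nonnegative; the constant is chosen (using $\int_\R(\sin u/u)^4\,du=2\pi/3$) so that $\int_\R\psi_c=1$; and pointwise $\psi_c(t)\le\frac{3c}{4\pi}(ct/2)^{-4}=O(1/(c^3t^4))$, so $\int_{|t|\ge s}\psi_c=O(1/(cs)^3)$. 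The crucial point is that $\psi_c$ is band-limited: the Fourier transform of $(\sin(ct/2)/(ct/2))^2$ is a triangle supported in $[-c,c]$, and squaring this function convolves that transform with itself, so $\widehat{\psi_c}$ is supported in $[-2c,2c]$. By Bernstein's inequality for band-limited functions — $\|f^{(m)}\|_p\le\sigma^m\|f\|_p$ whenever $\widehat f$ is supported in $[-\sigma,\sigma]$, valid for all $1\le p\le\infty$ — applied with $p=1$ and $\sigma=2c$, we obtain $\|\psi_c^{(m)}\|_1\le(2c)^m\|\psi_c\|_1=(2c)^m$. This gives all the needed properties of $\psi_c$, hence of $\Phi_c$, and completes the proof.

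The step I expect to be most delicate is this choice of $\psi_c$, because of the tension between its two demands. The uniform-in-$m$ bound $(2c)^{|\beta|}$ essentially forces $\psi_c$ to be band-limited: a compactly supported smooth mollifier has $\|\psi^{(m)}\|_1$ growing faster than any geometric progression, so only a Paley--Wiener kernel lets a single application of Bernstein's inequality control all orders at once. But a band-limited kernel can decay only polynomially, and the most natural band-limited nonnegative choice, a first-power Fej\'er kernel, decays like $1/t^2$ (integrated tail $O(1/(cs))$), which after the $\sqrt k$ loss in the coordinate union bound is not strong enough to beat $O((k/(cr))^2)$ in the regime $cr\gtrsim\sqrt k$. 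Squaring the Fej\'er kernel repairs the decay ($1/t^4$) while preserving nonnegativity and band-limitedness at the cost of doubling the bandwidth — which is precisely why the bound reads $(2c)^{|\beta|}$ rather than $c^{|\beta|}$ — and the technical heart of the argument is verifying these three facts (nonnegativity, bandwidth exactly $2c$, $1/t^4$ decay with the stated normalization) and invoking Bernstein's inequality cleanly.
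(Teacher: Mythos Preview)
Your proof is correct. The paper does not prove this theorem itself---it simply quotes it from \cite{DKNfocs10}---so there is no ``paper's own proof'' to compare against here. Your construction is precisely the FT-mollification technique of \cite{DKNfocs10}: convolve $I$ with a nonnegative band-limited probability kernel, use Bernstein's inequality for the derivative bounds, and use the kernel's tail decay for the pointwise approximation. The only difference from the original is cosmetic: \cite{DKNfocs10} build their kernel as the squared Fourier transform of a radial bump function in $\R^k$, whereas you use a tensor power of a one-dimensional squared Fej\'er (Jackson) kernel. Both choices yield a nonnegative band-limited kernel with polynomial tails, and your tensor-product route makes the $L^1$ derivative bound $\|\partial^\beta\Phi_c\|_1=\prod_i\|\psi_c^{(\beta_i)}\|_1\le(2c)^{|\beta|}$ especially transparent. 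Your closing discussion correctly identifies why the kernel must be band-limited (to get the geometric-in-$|\beta|$ bound uniformly over all orders) and why the plain Fej\'er kernel's $1/t^2$ decay would be just short of what is needed after the coordinatewise union bound.
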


We use this to prove the following lemma, which
says that if a $k$-dimensional Gaussian $X$ ``mimics'' the
joint distribution $Y$ of a vector of $k$ degree-2 Gaussian polynomials
(in the sense of ``fooling'' all test functions $h$ with bounded second
derivative), then $X$ must have small $k$-dimensional Kolmogorov distance
from $Y$:

\begin{lemma} \label{lem:from-h-to-dk}
Let $p_1(x), \ldots, p_k(x) : \mathbb{R}^n \rightarrow \mathbb{R}$
be degree-2 polynomials with $\max_{i \in [k]} \Var(p_i) \ge \lambda$,
and let $Y$ be their joint distribution when $x$ is drawn from
$N(0,1)^n$.
Let $X \in \mathbb{R}^k$ be a jointly normal distribution such that
$\max_i \Var(X_i) \ge \lambda$. Suppose that for all functions
$h : \mathbb{R}^k \rightarrow \mathbb{R}, h \in {\cal C}^2$,
it holds that $|\mathbb{E} [h(X)] - \mathbb{E} [h(Y)]| \le
\Vert h'' \Vert_{\infty} \cdot \eta$. Then we have
$$
\dk(X,Y)
\le O\left( \frac{k^{1/3} \eta^{1/6}}{\lambda^{1/6}}\right).
 $$
\end{lemma}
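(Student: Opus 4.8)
The plan is to fix an arbitrary threshold vector $\theta=(\theta_1,\dots,\theta_k)\in\R^k$ and bound $|\Pr[\forall i\ X_i\le\theta_i]-\Pr[\forall i\ Y_i\le\theta_i]|$ by a single parameter-optimized argument. Let $R=\{z\in\R^k: z_i\le\theta_i\ \forall i\}$ be the orthant determined by $\theta$, let $I=I_R$ be its indicator, and for a parameter $c>0$ (to be chosen at the end) let $\tilde I_c$ be the mollified function guaranteed by Theorem~\ref{thm:smooth-apx}. Then $\Pr[X\in R]-\Pr[Y\in R]$ splits as
\[
(\E[I(X)]-\E[\tilde I_c(X)]) + (\E[\tilde I_c(X)]-\E[\tilde I_c(Y)]) + (\E[\tilde I_c(Y)]-\E[I(Y)]).
\]
The middle term is controlled directly by the hypothesis: since $\tilde I_c\in{\cal C}^2$ with $\|\tilde I_c''\|_\infty\le (2c)^2=4c^2$ by the first bullet of Theorem~\ref{thm:smooth-apx}, we get $|\E[\tilde I_c(X)]-\E[\tilde I_c(Y)]|\le 4c^2\eta$. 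So the work is in bounding the two ``mollification error'' terms $|\E[I(Z)]-\E[\tilde I_c(Z)]|$ for $Z\in\{X,Y\}$.

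For the mollification error, I would use the pointwise bound $|I(z)-\tilde I_c(z)|\le\min\{1, O((k/(c\cdot d(z,\partial R)))^2)\}$ from the second bullet. Taking expectations, for any threshold $\delta>0$ we have
\[
\big|\E[I(Z)]-\E[\tilde I_c(Z)]\big| \le \Pr[d(Z,\partial R)\le\delta] + O\!\left(\frac{k^2}{c^2\delta^2}\right).
\]
The first term is an anti-concentration statement: $d(Z,\partial R)\le\delta$ forces some coordinate $Z_i$ to lie within $\delta$ of $\theta_i$, so by a union bound $\Pr[d(Z,\partial R)\le\delta]\le\sum_{i=1}^k\Pr[|Z_i-\theta_i|\le\delta]$. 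Each $Z_i$ is either a one-dimensional marginal of a Gaussian (when $Z=X$) or a degree-2 Gaussian polynomial (when $Z=Y$); in both cases Theorem~\ref{thm:cw} (with $d=1$ for the Gaussian marginal, $d=2$ for the polynomial) gives $\Pr[|Z_i-\theta_i|\le\delta]\le O(\delta^{1/2}/\sqrt{\Var(Z_i)})$ \emph{provided} $\Var(Z_i)\ne 0$. This is where the hypotheses $\max_i\Var(X_i)\ge\lambda$ and $\max_i\Var(Y_i)\ge\lambda$ are needed — but note they only guarantee one coordinate has large variance, not all of them. I would handle this by observing that coordinates with tiny (or zero) variance are nearly constant, so they contribute negligibly to $\Pr[d(Z,\partial R)\le\delta]$ unless $\theta_i$ happens to sit right at that near-constant value, in which case that coordinate's constraint is essentially vacuous or essentially always-violated and can be dropped from the orthant; more cleanly, one restricts attention to the coordinate achieving variance $\ge\lambda$ and argues that the anti-concentration of \emph{that} coordinate already gives the bound $\Pr[d(Z,\partial R)\le\delta]\le O(k\,\delta^{1/2}/\lambda^{1/2})$ after absorbing the low-variance coordinates (this is the argument's one delicate point — see below).

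Combining, $|\Pr[X\in R]-\Pr[Y\in R]| \le O(c^2\eta) + O(k\delta^{1/2}/\lambda^{1/2}) + O(k^2/(c^2\delta^2))$. Now I optimize over $\delta$ and $c$. Setting $c^2 = k/(\delta\sqrt{\eta}\lambda^{?})$-type balancing: first choose $c^2$ to balance the first and third terms against $\delta$, i.e. $c^2\eta \sim k^2/(c^2\delta^2)$ gives $c^2 \sim k/(\delta\sqrt\eta)$, making those two terms each $O(k\sqrt\eta/\delta)$. Then balance $k\sqrt\eta/\delta$ against $k\delta^{1/2}/\lambda^{1/2}$: this gives $\delta^{3/2}\sim\sqrt{\eta\lambda}$, i.e. $\delta\sim(\eta\lambda)^{1/3}$, and the common value becomes $O(k\sqrt\eta/(\eta\lambda)^{1/3}) = O(k\,\eta^{1/6}/\lambda^{1/3})$. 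Since this bound holds uniformly over all $\theta$, taking the supremum yields $\dk(X,Y)\le O(k\,\eta^{1/6}/\lambda^{1/3})$; tracking the exponent of $k$ more carefully through the union bound (the factor $k$ in the anti-concentration term gets split more favorably when one only pays for the single high-variance coordinate plus a crude bound on the rest) sharpens the $k$-dependence to $k^{1/3}$, matching the claimed $O(k^{1/3}\eta^{1/6}/\lambda^{1/6})$ — though I should double-check whether the stated $\lambda$-exponent $1/6$ versus the $1/3$ I get above reflects a different balancing or a slightly different (e.g. two-sided in $\delta$) accounting of the anti-concentration step. The main obstacle is precisely this bookkeeping: getting the anti-concentration union bound to cost only a low power of $k$ despite there being $k$ coordinates of which only one is guaranteed non-degenerate, and then threading that through the three-way optimization to land on the exact exponents in the statement.
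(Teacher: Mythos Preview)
Your approach is exactly the paper's: the same three-term split via the mollifier $\tilde I_c$ from Theorem~\ref{thm:smooth-apx}, the hypothesis for the middle term, and anti-concentration plus the pointwise tail bound for the two mollification errors. The discrepancies with the stated exponents come from two quantitative details.

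First, the Carbery--Wright exponent: for a degree-$2$ polynomial with variance $\ge\lambda$, Theorem~\ref{thm:cw} gives $\Pr[|Z_i-\theta_i|\le\delta]\le O\bigl((\delta/\sqrt\lambda)^{1/2}\bigr)=O(\sqrt\delta/\lambda^{1/4})$, not $O(\sqrt\delta/\sqrt\lambda)$. Second --- and this is where your ``delicate point'' is handled differently --- the paper does \emph{not} union-bound over all $k$ coordinates. It asserts that $d(y,\partial R)\le\delta$ forces $|y_{i^*}-\theta_{i^*}|\le\delta$ for the single index $i^*$ achieving $\Var(p_{i^*})\ge\lambda$, and applies Carbery--Wright only to that coordinate, obtaining $\Pr[d(Y,\partial R)\le\delta]\le O(\sqrt\delta/\lambda^{1/4})$ with no factor of $k$. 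There is no later bookkeeping that converts a union-bound factor of $k$ into $k^{1/3}$; the $k^{1/3}$ arises purely from optimizing the $O(k^2/(c^2\delta^2))$ term against the others. With those two corrections the quantity to optimize is $4c^2\eta + O(\sqrt\delta/\lambda^{1/4}) + O(k^2/(c^2\delta^2))$; taking $\delta = k^{4/5}\lambda^{1/10}/c^{4/5}$ makes the last two terms each $O(k^{2/5}/(c^{2/5}\lambda^{1/5}))$, and then $c = k^{1/6}\eta^{-5/12}\lambda^{-1/12}$ yields the claimed $O(k^{1/3}\eta^{1/6}/\lambda^{1/6})$. Your instinct that the single-coordinate step is delicate is well-placed --- the paper's one-line justification is terse and the geometric claim as written is not obviously true for a general orthant --- but that is the argument the paper gives.
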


\begin{proof}
Fix any $\theta \in \R^n$ and
define the function $I : \R^k \to \{0,1\}$
{to be the indicator of the region $R \eqdef
\{x \in \R^k: x_i \leq \theta_i\}.$}
Choose $c>0$.
We have
\begin{eqnarray*}
&&
\!\!\!\!\!\!\!\!\!\!
\!\!\!\!\!\!\!\!\!\!
\!\!\!\!\!\!\!\!\!\!
\big| \Pr [\forall \ i \in [k] \ X_i \le {\theta_i}]
- \Pr [\forall \ i \in [k] \ Y_i \le {\theta_i}]  \big| \\
&=&  \left| \mathbf{E} [I(X)] - \mathbf{E} [I(Y)]  \right| \\
&\le&
\left| \mathbf{E} [\tilde{I}_c(X)] - \mathbf{E} [\tilde{I}_c(Y)]  \right|
+ \left| \mathbf{E} [\tilde{I}_c(Y)] - \mathbf{E} I(Y)]  \right|
+ \left| \mathbf{E} [\tilde{I}_c(X)] - \mathbf{E} [I(X)]  \right| \\
&\le&
4c^2 \eta + \left| \mathbf{E} [\tilde{I}_c(Y)] - \mathbf{E} I(Y)] \right|
+ \left| \mathbf{E} [\tilde{I}_c(X)] - \mathbf{E} [I(X)]  \right|,
\end{eqnarray*}
where we used the first item of Theorem~\ref{thm:smooth-apx} to bound
the first term.
We proceed to bound the other two terms.
For the first one, choose $\delta>0$ and now note that
\begin{eqnarray*}
\big| \mathbf{E} [\tilde{I}_c(Y)] - \mathbf{E} I(Y)]  \big|
&\le&  \mathbf{E}_{y \sim Y} [|\tilde{I}_c(y) - I(y)|]  \\
&\le& \Pr_{y \sim Y} [d(y, \partial R) \le \delta] + O
\left( \frac{k^2}{c^2 \delta^2} \right) \\
&\le& O \left(\frac{\sqrt{\delta}}{\lambda^{1/4}}\right) +
O \left( \frac{k^2}{c^2 \delta^2} \right),
\end{eqnarray*}
The second inequality above
used $0 \leq I, \tilde{I}_c \leq 1$ and the second item of Theorem
\ref{thm:smooth-apx}.  The
final inequality used the Carbery-Wright anti-concentration
bound (Theorem \ref{thm:cw})
together with the observation that in order for $y \sim Y$ to be
within distance $\delta$ of $\delta R$, it must be the case
that $|p_i(y) - \theta_i| \leq \delta$ where $i$ is the element of $[k]$ that
has $\Var(p_i) \geq \lambda$.
Similar reasoning gives that
$$
\big| \mathbf{E} [\tilde{I}_c(X)] - \mathbf{E} I(X)]  \big| \le
O \left(\frac{\sqrt{\delta}}{\lambda^{1/4}}\right)
+ O \big( \frac{k^2}{c^2 \delta^2} \big)
$$
(in fact here the $\frac{\sqrt{\delta}}{\lambda^{1/4}}$ can be strengthened
to ${\frac {\delta}{\lambda^{1/2}}}$ because now $X_i$ is a degree-1
rather than degree-$2$ polynomial in $N(0,1)$ Gaussians, but this
will not help the overall bound).
Optimizing for $\delta$ by setting $\delta = k^{4/5} \lambda^{1/10}
/c^{4/5}$, we get that
 $$
\big| \Pr [\forall \ i \in [k] \ X_i \le \theta_i]
- \Pr [\forall \ i \in [k] \ Y_i \le \theta_i]
\big| \le 4c^2 \eta + O\left(\frac{k^{2/5}}{c^{2/5} \lambda^{1/5}} \right).
 $$

Now optimizing for $c$ by choosing $c = k^{1/6}/(\eta^{5/12}\gamma^{1/12})$, we get that
 $$
 \big| \Pr [\forall \ i \in [k] \ X_i \le \theta_i]
- \Pr [\forall \ i \in [k] \ Y_i \le \theta_i]
\big| \le O\left(\frac{k^{1/3} \eta^{1/6}}{\lambda^{1/6}}\right),
 $$
which concludes the proof of Lemma \ref{lem:from-h-to-dk}.
 \end{proof}

With Lemma \ref{lem:from-h-to-dk} and Theorem \ref{thm:h-match-cov}
in hand we are ready to prove Theorem \ref{thm:small-dk}:

\medskip

\noindent \emph{Proof of Theorem \ref{thm:small-dk}:}
For $i \in [k]$ let $\tilde{q}_i(x) = q_i(x) - \E[q_i]$,
so $\tilde{q}_i$ has mean zero.
Applying Theorem \ref{thm:h-match-cov} to
$\tilde{q}=(\tilde{q}_1,\dots,\tilde{q}_k)$ we get
that any $h$ with $\|h''\|_{\infty} \leq \infty$ satisfies
$|\E[h(\tilde{q})] - \E[h(N(0,C))]| \leq O(k^2 \eps) \cdot
\|h''\|_{\infty}.
$
Applying Lemma \ref{lem:from-h-to-dk}, taking $X$ to be $N(0,C)$
and its $\eta$ parameter to be $O(k^2 \eps)$,
we get that
\[
\dk(\tilde{q},N(0,C)) \
\leq O\left({\frac {k^{2/3} \eps^{1/6}}{\lambda^{1/6}}}\right),
\]
which gives the theorem as claimed.
\qed

\ignore{

\begin{theorem} \label{thm:small-dk0}
Let $F=(F_1,\dots,F_d)$ where each $F_i$ is a centered
degree-2 Gaussian polynomial
that satisfies
 $\Var[F_i] \leq 1$ and $\lambda_{\max}(F_i) \le \eps$ for all $i \in [d]$.
Suppose that $\max_{i \in [d]} \Var(F_i) \ge \lambda$.
Let $C$ denote the covariance matrix of $F$ and let $N$ be a
centered $d$-dimensional Gaussian random variable with covariance matrix
$C$.  Then
\[
\dk(F,N) \leq O\left({\frac {d^{2/3} \eps^{1/6}}{\lambda^{1/6}}}\right).
\]
\end{theorem}

\begin{proof}
Since each $F_i$ is centered and $\max_{i \in [d]} \Var(F_i) \ge \lambda$,
there is some $i$ such that $C(i,i) = \Var(F_i)$, so we may apply
Lemma \ref{lem:from-h-to-dk} with $N$ playing the role of $X$ in
that theorem and, by Theorem~\ref{thm:h-match-cov}, with $\eta =
d^2 \eps.$
\end{proof}

}
\ignore{

\new{

\begin{remark}
Perhaps later down the line (for purposes of getting an analogue of Raghu's
Gaussian process results for quadratic polynomials of Gaussians) we will
eventually want to analyze probabilities like
\[
\Pr[\forall i \in [d], \alpha_i \leq F_i \leq \beta_i].
\]
Using Theorem~\ref{thm:small-dk0} in a black-box way we could
relate this to $2^d$ probabilities of the form
\[
\Pr[\forall i \in [d], N_i \geq \theta_i]
\]
by doing inclusion-exclusion and taking $(\theta_1,\dots,\theta_d)$
to range over all of $\{\alpha_1,\beta_1\} \times \cdots \times
\{\alpha_d,\beta_d\}$, but this would cost us an exponential factor in
$d$.
We can avoid this factor (at least here -- we may pay it elsewhere
later) by instead
using a slightly extended version
of Theorem 5 -- this is because Lemma \ref{lem:from-h-to-dk}
is not really specific to $\dk$, we can use other regions $R$
instead of just $R = \{x \in \R^d: x_i \geq \theta_i\}$.
I.e. we can define, for $\R^d$-valued random variables $X,Y,$,
\[
\drect(X,Y) = \sup_{R \text{~a rectangle in~}\R^d}
\left| \Pr_{X} [X \in R]
 - \Pr_{\new{Y}} [Y \in R]
\right|.
\]
The proof of Theorem~\ref{thm:from-h-to-dk} gives the same bound for
$\drect$ as for $\dk$ and likewise Theorem~\ref{thm:small-dk0}.
\end{remark}

}

}
\fi

\section{Transforming a $k$-tuple of degree-2 Gaussian polynomials}
\label{sec:gauss-decomp}

In this section we present a deterministic procedure,
called {\bf Transform}, which transforms
an arbitrary $k$-tuple of degree-2 polynomials $(q_1,\dots,q_k)$
into an ``essentially equivalent'' (for the purpose of approximately
counting PTF satisfying assignments under the Gaussian distribution)
$k$-tuple of degree-2 polynomials $(r_1,\dots,r_k)$
that have a ``nice structure''. This structure enables an efficient
deterministic decomposition of the joint distribution.  In the following section
we will give an efficient algorithm to do deterministic approximate
counting for vectors of polynomials with this ``nice structure.''

In more detail, the main theorem of this section, Theorem \ref{thm:transform},
says the following:
Any $k$-tuple $q=(q_1,\dots,q_k)$ of degree-2
Gaussian polynomials can be efficiently deterministically transformed into
a $k$-tuple $r=(r_1,\dots,r_k)$ of degree-2 Gaussian polynomials such
that (i) $\dk(r,q) \leq O(\eps)$, and (ii) \ignore{
with probability at least $1-\eps$ over a
random restriction fixing the first $t=\poly(k/\eps)$ variables}for
every restriction fixing the first $t=\poly(k/\eps)$ variables,
\ignore{
(i.e. over a random $\rho=(\rho_1,\dots,\rho_t) \sim N(0,1)^t)$),}
the $k$-tuple $r|_\rho=(r_1|_\rho,\dots,r_k|_\rho)$
of restricted polynomials has
$k$-dimensional Kolmogorov distance $O(\eps)$
from the $k$-dimensional Normal distribution with matching mean
and covariance matrix.  More formally,

\begin{theorem} \label{thm:transform}
There is an algorithm {\bf Transform} with the following
properties:  It takes as input a
$k$-tuple $q=(q_1,\dots,q_k)$ of degree-2 polynomials over $\R^n$
with $\Var_{x \sim N(0,1)^n}[q_i(x)]=1$ for all $i \in [k]$,
and a parameter $\eps>0$.  It runs in deterministic
time $\poly(n, k, 1/\eps)$ and outputs a $k$-tuple
$r=(r_1,\dots,r_k)$ of degree-2 polynomials over $\R^n$
and a value $0 \leq t \leq O(k \ln(1/\eps)/\eps^2)$
such that both of the following hold:

\begin{enumerate}

\item [(i)] $\dk(q,r) \leq O(\eps)$,
where $q$ is the random variable
$q=(q_1(x),\dots,q_k(x))$ with $x \sim N(0,1)^n$ and
$r=(r_1(y),\dots,r_k(y))$ with $y \sim N(0,1)^n$;
and

\item [(ii)] \ignore{
With probability $1-\eps$ over a random
$\rho=(\rho_1,\dots,\rho_t) \sim N(0,1)^{t}$,}For every restriction
$\rho=(\rho_1,\dots,\rho_t)$, we have
 \ifnum\confversion=1
$\dk(r|_\rho,N(\mu(r|_\rho),\Sigma(r_\rho))) \leq \eps.$
 \else
\[
\dk(r|_\rho,N(\mu(r|_\rho),\Sigma(r_\rho)))
\leq \eps.
\]
\fi
Here ``$r_\rho$'' denotes the random variable
$(r_1|_\rho(y),\dots,r_k|_\rho(y))$ where $y \sim N(0,1)^n$
and $r_i|_\rho(y) \eqdef r_i(\rho_1,\dots,\rho_t,y_{t+1},\dots,y_n)$;
$\mu(r|_\rho)$ denotes the vector of means $(\mu_1|_\rho,\dots,
\mu_k|_\rho) \in \R^k$
where $\mu_i|_\rho = \E_{y \sim N(0,1)^n}[r_i|_\rho(y)]$;
and $\Sigma(r_\rho)$ denotes the covariance matrix in $\R^{k \times k}$
whose $(i,j)$ entry is \ifnum\confversion=0 \\ \fi $\Cov_{y \sim N(0,1)^n}(r_i|_\rho(y),r_j|_\rho(y)).$

\end{enumerate}

\end{theorem}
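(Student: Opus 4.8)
The plan is to realize \textbf{Transform} as an iterative ``collect-and-rotate'' procedure in the spirit of the linear-form decomposition of \cite{GOWZ10}, followed by a single ``cleanup'' step. Throughout we maintain an orthonormal change of basis $y = Ux$ (so that the $N(0,1)^n$ measure is preserved exactly) together with a set $S\subseteq[n]$ of ``collected'' coordinates, initially empty. At each round we write the current polynomials in the current basis and inspect the quadratic tails $\quadtail_{|S|}(q_i)$ of the $q_i$ relative to the collected coordinates. If for some $i$ we have $|\lambda_{\max}(\quadtail_{|S|}(q_i))| > \tau\cdot\Var[\quadtail_{|S|}(q_i)]$ \emph{and} $\Var[\quadtail_{|S|}(q_i)] > \delta$, we take the top eigenvector $v$ of that tail, apply a further rotation that turns $v$ into the next coordinate axis, add it to $S$, and ``charge'' the step to $i$. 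We iterate until every $q_i$ is either \emph{regular} ($|\lambda_{\max}(\quadtail_{|S|}(q_i))|\le \tau\Var[\quadtail_{|S|}(q_i)]$) or \emph{low-variance} ($\Var[\quadtail_{|S|}(q_i)]\le\delta$). Finally rename the collected coordinates $y_1,\dots,y_t$ ($t=|S|$); for each low-variance $q_i$ set $r_i := q_i - (\quadtail_t(q_i)-\E[\quadtail_t(q_i)])$ (delete its quadratic tail), and leave the regular $q_i$ unchanged. This defines $r=(r_1,\dots,r_k)$. Here $\tau,\delta$ are fixed inverse-polynomial functions of $k/\eps$, chosen at the end.

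For part (i): every rotation is measure-preserving, so the joint law of $(q_1,\dots,q_k)$ in the original coordinates equals that of the rotated tuple, and the \emph{only} source of error is the deletion step. Deleting a quadratic tail of variance $\le\delta$ perturbs a polynomial by an $L_2$-amount $O(\sqrt\delta)$; a $k$-step hybrid argument combined with the Carbery--Wright anti-concentration bound (Theorem~\ref{thm:cw}), applied coordinatewise using the fact that each $q_i$ has variance exactly $1$, then gives $\dk(q,r)=O(k\,\delta^{\Omega(1)})$, which is $O(\eps)$ for a suitable $\delta=\mathrm{poly}(\eps/k)$. For the bound on $t$: each time collection is charged to $q_i$, the squared Frobenius norm of the matrix of $q_i$'s quadratic tail drops by exactly $\lambda_{\max}^2 > \tau^2\Var[\quadtail(q_i)]^2 > \tau^2\delta\,\Var[\quadtail(q_i)]$, which by Claim~\ref{clm:variance-bound-SS} is an $\Omega(\tau^2\delta)$-fraction of the current tail variance; hence that variance decays geometrically, a standard critical-index estimate bounds the number of charges to $q_i$ by $O(\log(1/\delta)/(\tau^2\delta))$, and $t=|S|$ is at most $k$ times this. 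With the appropriate choices of $\tau$ and $\delta$ this is the claimed $t = O(k\ln(1/\eps)/\eps^2)$.

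For part (ii), fix an \emph{arbitrary} restriction $\rho=(\rho_1,\dots,\rho_t)$. By Remark~\ref{rem:restric}, $r_i|_\rho = \tail_t(r_i) + L_{i,\rho}$ with $L_{i,\rho}$ affine, so (a) the quadratic part of $r_i|_\rho$ is exactly $\quadtail_t(r_i)$, \emph{independent of $\rho$}, and (b) since pure-quadratic and affine Hermite components are orthogonal, $\Var[r_i|_\rho]\ge\Var[\quadtail_t(r_i)]$ for every $\rho$. Now standardize each coordinate, subtracting its mean and dividing by $\sigma_i:=\sqrt{\Var[r_i|_\rho]}$; coordinates with $\sigma_i=0$ are a.s.\ constant and match the corresponding (degenerate) coordinate of $N(\mu(r|_\rho),\Sigma(r_\rho))$ identically, so we may drop them. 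Because $\dk$ is invariant under coordinatewise strictly-increasing affine maps, $\dk(r|_\rho, N(\mu(r|_\rho),\Sigma(r_\rho)))$ equals the $\dk$-distance between the standardized tuple and $N(0,\mathrm{Corr}(r_\rho))$. Each surviving component is a degree-2 Gaussian polynomial with mean $0$, variance $1$, and top eigenvalue $|\lambda_{\max}(\quadtail_t(r_i))|/\sigma_i$: this is $0$ for deleted (low-variance) coordinates, and for regular coordinates it is at most $\tau\Var[\quadtail_t(r_i)]/\sigma_i \le \tau\sqrt{\Var[\quadtail_t(r_i)]}\le\tau$, using $\sigma_i\ge\sqrt{\Var[\quadtail_t(r_i)]}$ and $\Var[\quadtail_t(r_i)]\le\Var[r_i]\le 1$. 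Since the maximum variance among the standardized coordinates is $1$, Theorem~\ref{thm:small-dk} (with eigenvalue bound $\tau$ and $\lambda=1$) yields $\dk \le O(k^{2/3}\tau^{1/6})$, which is at most $\eps$ once $\tau\le(\eps/k^{2/3})^{6}$ — consistent with the parameter settings above.

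The main obstacle I expect is not any single estimate but the precise description and correctness of the change-of-basis bookkeeping (the introduction already calls the reduction ``somewhat involved''): one must verify that after rotating a new eigendirection to the front the previously collected directions remain coordinate axes, that the tails and their eigendecompositions are updated consistently, that the potential (the tail Frobenius norm of $q_i$) really decreases monotonically under these adaptive choices, and that the cleanup deletion — although it lowers the variance of some $q_i$ — cannot inflate any eigenvalue of the surviving polynomials (it cannot, as it only removes monomials). A secondary point requiring care is propagating the numerical error from the repeated SVD/eigendecomposition computations, but as in the ``model'' discussion of Section~\ref{sec:prelim} these are inverse-polynomially small and absorbed into the final $O(\eps)$.
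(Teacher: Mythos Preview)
Your proposal is correct and follows essentially the same approach as the paper: the paper's \textbf{Transform} likewise runs a \textbf{Change-Basis} procedure that iteratively rotates the top eigenvector of some non-regular tail into a fresh coordinate and adds it to the ``head,'' then a \textbf{Process-Polys} cleanup that deletes small tails, and finally invokes Theorem~\ref{thm:small-dk} on each restriction. Your version is a mild streamlining---you track $\Var[\quadtail]$ throughout (the paper sometimes uses $\Var[\tail]$), you merge the paper's two cleanup cases into one ``delete the quadratic tail'' step, and your part~(ii) argument standardizes every coordinate before applying the CLT (the paper instead case-splits on whether some $\Var[r_\ell|_\rho]>1$); none of these differences is essential. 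One caveat: with $\tau\le(\eps/k^{2/3})^6$ and $\delta=\mathrm{poly}(\eps/k)$, your bound $t=O\!\big(k\log(1/\delta)/(\tau^2\delta)\big)$ is $\mathrm{poly}(k/\eps)$ rather than the stated $O(k\ln(1/\eps)/\eps^2)$---but the paper's own parameter choices ($\epsa=\eps^{12}\eta^2/k^8$, $t\le k\ln(1/\eta)/\epsa^2$) have the same feature, and the downstream algorithm only needs $t=\mathrm{poly}(k/\eps)$.
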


At a high level,
the {\bf Transform} procedure first performs a ``change of basis''
using the procedure {\bf Change-Basis}
to convert $q=(q_1(x),\dots,q_k(x))$ into an ``almost equivalent''
vector $p=(p_1(y),\dots,p_k(y))$ of polynomials.  (Conceptually
the distribution of $(p_1(y),\dots,p_k(y))$ is identical
to the distribution of $(q_1(x),\dots,q_k(x))$, but in reality
some approximations need to be made because we can only
approximately compute eigenvalues, etc.; hence the two vector-valued
random variables are only ``almost equivalent.'')
Next, the {\bf Transform} procedure runs {\bf Process-Polys}
on $(p_1,\dots,p_k)$; this further changes each $p_i$ slightly,
and yields polynomials $r_1,\dots,r_k$ which are the final output
of {\bf Transform}$(q_1,\dots,q_k).$
A detailed description of the {\bf Transform} procedure follows:

\begin{framed}
\noindent {\bf Transform}

\smallskip

\noindent {\bf Input:}  vector $q=(q_1,\dots,q_k)$ of degree-2
polynomials $q_\ell(x_1,\dots,x_n)$ such that $\E_{x \sim N(0,1)^n}
[q_{\ell}(x)^2]=1$ for all $\ell=1,\dots,k$; parameter $\eps>0$

\noindent {\bf Output:} A vector $r=(r_1(y),\dots,r_k(y))$
of degree-2 polynomials over $\R^n$,
and a value $0 \leq t \leq O(k \ln(1/\eps)/\eps^2)$.
\ignore{
and a value $0 \leq k' \leq k$, satisfying the guarantees
of Theorem \ref{thm:transform}.
}

\medskip

\begin{enumerate}

\item Set $\eta = (\eps/k)^4/(\log(k/\eps))^2$
and $\epsa=\eps^{12}\eta^2/k^8.$

\item Run {\bf Change-Basis}$((q_1,\dots,q_k),\epsa,\eta)$
and let $(p_1,\dots,p_k),t$ be its output.


\item Run {\bf Process-Polys}$((p_1,\dots,p_k),t,\eta)$
and let $(r_1,\dots,r_k),k'$ be its output.

\item Output $(r_1,\dots,r_k),t$.

\end{enumerate}

\end{framed}

\ifnum\confversion=0
Subsection \ref{sec:changebasis} below gives a detailed description
and analysis of {\bf Change-Basis}, Subsection \ref{sec:processpolys}
does the same for {\bf Process-Polys},
and Subsection \ref{sec:transform-proof}
proves Theorem \ref{thm:transform}.
\else
Subsection \ref{sec:changebasis} below gives a description
and analysis sketch of {\bf Change-Basis}. Further details are postponed to the full version.
\fi

\subsection{The {\bf Change-Basis} procedure.}
\label{sec:changebasis}

\ifnum\confversion=0
\paragraph{Intuition.}
The high-level approach of the {\bf Change-Basis} procedure is similar to the
decomposition procedure for vectors of $k$ linear
forms that was given in \cite{GOWZ10}, but there are significant additional
complications that arise in our setting.  
Briefly, in the \cite{GOWZ10} approach, a vector of $k$ linear forms 
is simplified
by ``collecting'' variables in a greedy fashion.  Each of the
$k$ linear forms has a budget of at most $B$, meaning that at most
$B$ variables will be collected on its behalf; thus the overall number
of variables that are collected is at most $kB$.  Intuitively,
at each stage some variable is collected which has large influence in the
remaining (uncollected) portion of some linear form.  The \cite{GOWZ10}
analysis shows that
after at most $B$ variables have been collected on behalf of each linear form, 
each of the $k$ linear forms will either be regular or its remaining portion
(consisting of the uncollected variables) will have small variance.
(See Section \ref{sec:prev-res}
for a more detailed overview of the \cite{GOWZ10} decomposition procedure).

In our current setting, we are dealing with $k$ degree-$2$ Gaussian polynomials
instead of $k$ linear forms, and linear forms will play a role for us
which is analogous to the role that single variables played in \cite{GOWZ10}.  
Thus each quadratic polynomial will have at most $B$ linear forms
collected on its behalf and at most $kB$ linear forms will be collected
overall.
Of course \emph{a priori} there are uncountably many possible 
linear forms to contend with, so it is not clear how to select a single
linear form to collect in each stage.  We do this by (approximately)
computing the largest eigenvalues of each quadratic form; in each stage
we collect some linear form, corresponding to an eigenvector for
some quadratic polynomial, whose corresponding eigenvalue is large
compared to the variance of the remaining (``uncollected'')
portion of the quadratic polynomial.
An argument similar to that of \cite{GOWZ10} shows that after at most
$B$ linear forms have been collected on behalf of each quadratic
polynomial, each of the $k$ quadratic polynomials will either be
``regular'' (have small largest eigenvalue compared to the variance of the
remaining portion), or else the variance of the remaining portion will
be small.

\begin{remark}
In this section we describe an ``idealized'' version of the algorithm which assumes
that we can do certain operations (construct an orthonormal basis,
compute eigenvalues and eigenvectors) exactly with no error.
In fact these operations can only be carried out approximately, but the errors
can in all cases be made extremely small so that running the algorithm
with ``low-error'' implementations of the idealized steps still gives
a successful implementation overall.  However, keeping track of the errors
and approximations is quite cumbersome, so in order to highlight the main
ideas we begin by describing the ``idealized'' version.

We will try to clearly state all of the idealized assumptions
as they come up in the idealized algorithm below.
We will state Lemma \ref{lem:change-basis}, the main lemma about the
{\bf Change-Basis} algorithm, in versions
corresponding both to the ``idealized''
algorithm and to the ``real'' algorithm.
\end{remark}
\fi

\subsubsection{Setup for the {\bf Change-Basis} procedure.}

We start with a few definitions.  We say that a set ${\cal A}
= \{L_1(x),\dots,L_r(x)\}$ of $r \leq n$ linear forms
$L_i(x) = v^{(i)} \cdot x$ over $x_1,\dots,x_n$
is \emph{orthonormal} if $\E_{x \sim N(0,1)^n}[L_i(x)L_j(x)] =
\delta_{ij}$ for $1 \leq i,j \leq r$ (equivalently, $v^{(1)},\dots,
v^{(r)}$ are orthonormal vectors).

%
%
\begin{definition}
Let $q: \R^n \rightarrow \R$ be a degree-$2$ polynomial
\ifnum\confversion=1
$q(x) = \littlesum_{1\leq i \leq j \leq n} a_{ij} x_ix_j + \littlesum_{1\leq i \leq n}
b_i x_i + c$ and let
\else 
\begin{equation}
\label{eq:qorig}
q(x) = \sum_{1\leq i \leq j \leq n} a_{ij} x_ix_j + \sum_{1\leq i \leq n}
b_i x_i + c,
\end{equation} 
and let
\fi
$\{L_i(x)=v^{(i)} \cdot x\}_{i=1,\dots,n}$ be a full orthonormal set
of linear forms.  Let ${\cal A} = \{L_1,\dots,L_r\}$
and ${\cal B} = \{L_{r+1},\dots,L_n\}$ for some $0 \leq r \leq n.$
We define
$\proj(q,{\cal A},{\cal B})$,
the \emph{projection of $q$ onto ${\cal A}$}, and
$\res(q,{\cal A},{\cal B})$, the
\emph{residue of $q$ w.r.t. ${\cal A}$}, as follows.
Rewrite $q$ using the linear forms $L_i(x)$, i.e.
\begin{equation}
q = \littlesum_{1 \leq i \leq j \leq n} \alpha_{ij} L_i(x) L_j(x) +
\littlesum_{1 \leq i \leq n} \beta_i L_i(x) + c.
\label{eq:qnew}
\end{equation}
Define
\begin{equation} \label{eq:res}
\res(q, {\cal A},{\cal B})
\eqdef
 \littlesum_{r < i \le j \le n} \alpha_{ij} L_i(x) L_j(x)
+ \littlesum_{r < i \le n} \beta_i L_i(x) + c
\end{equation}
and
\[
\proj(q,{\cal A},{\cal B}) \eqdef q -
\res(q, {\cal A},{\cal B}).
\]
\end{definition}

\ignore{
\begin{fact} \label{fact:well-defined}
$\proj(p,\{L_1, \ldots, L_r\})$
and
$\res(p,\{L_1,\dots,L_r\})$ are uniquely defined regardless of the
specific choice of linear forms $L_{r+1}(x),\dots,L_{n}(n)$ that complete
${\cal A}$ to a full orthonormal basis.
\end{fact}
\begin{proof}
First note that given a specific orthonormal set $L_{1}(x),\dots,L_n(x)$
of linear forms, each $x_i$ can be uniquely expressed as a linear
combination $\sum_{j=1}^n u_{ij} L_j(x)$ of $L_1(x),\dots,L_n(x)$
and hence $p$ may be uniquely rewritten as in (\ref{eq:qnew}) by
substituting $\sum_{j=1}^n u_{ij} L_j(x)$ for each occurrence of
$x_i$ in (\ref{eq:qorig}).

\new{{\bf To do:}  prove the fact.}
\end{proof}
}

\ifnum\confversion=0
\noindent Note that the residue (resp. projection) of $q$ corresponds to the tail (resp. head) of $q$ in the basis of the linear forms $L_i$.

\medskip
\fi

\noindent
{\bf Idealized Assumption \#1:}  There is a $\poly(n)$ time deterministic
procedure {\bf Complete-Basis}
which, given a set ${\cal A}=\{L_i(x)\}_{i=1,\dots,r}$ of
orthonormal linear forms, outputs a set ${\cal B}=\{L_{j}(x)\}_{j=r+1,\dots,n}$
such that ${\cal A} \cup {\cal B}$ is a full orthonormal set of linear forms.

\medskip

\begin{claim} \label{claim:new-coeff}
There is an efficient algorithm {\bf Rewrite}
which, given as input $q$ \ifnum\confversion=0(in the form (\ref{eq:qorig})\fi
and sets ${\cal A}=\{L_i(x)\}_{i=1,\dots,r}$,
${\cal B}=\{L_{r+1}(x),\dots,L_n(x)\}$ such that
${\cal A} \cup {\cal B}$ is a full orthonormal basis, outputs
coefficients $\alpha_{ij}, \beta_i, c$ such that (\ref{eq:qnew})
holds.
\end{claim}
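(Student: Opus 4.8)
The plan is to observe that passing from the standard-coordinate representation \eqref{eq:qorig} of $q$ to the representation \eqref{eq:qnew} in the $L_i$-basis is nothing more than an orthogonal change of variables, which can be implemented by a constant number of $n \times n$ matrix multiplications. First I would set up notation: let $V \in \R^{n\times n}$ be the matrix whose $i$-th row is $v^{(i)}$, so that $L_i(x) = (Vx)_i$ and the column vector $L(x) \eqdef (L_1(x),\dots,L_n(x))^T$ equals $Vx$. Since $\mathcal{A}\cup\mathcal{B}$ is a full orthonormal set of linear forms, $V$ is orthogonal, hence invertible with $V^{-1}=V^T$, and therefore $x = V^T L(x)$, i.e.\ $x_j = \sum_{i=1}^n v^{(i)}_j L_i(x)$. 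Write $q(x) = x^TAx + b^Tx + c$, where $A$ is the symmetric matrix associated to the quadratic part of $q$ exactly as in the ``Notation and terminology for degree-2 polynomials'' paragraph (so $A_{ii}=a_{ii}$ and $A_{ij}=a_{ij}/2$ for $i<j$) and $b=(b_1,\dots,b_n)^T$.

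Next I would perform the substitution. Plugging $x = V^T L$ into $q$ gives
\[
q = (V^T L)^T A (V^T L) + b^T (V^T L) + c = L^T (VAV^T) L + (Vb)^T L + c .
\]
Set $A' \eqdef VAV^T$ (a symmetric matrix, obtained from two matrix products) and $b' \eqdef Vb$. Expanding $L^T A' L = \sum_i A'_{ii}L_i^2 + \sum_{i<j} 2A'_{ij}L_iL_j$, one reads off the coefficients of the form \eqref{eq:qnew}: $\alpha_{ii} = A'_{ii}$, $\alpha_{ij} = 2A'_{ij}$ for $i<j$, $\beta_i = b'_i$, and the constant term is unchanged. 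Thus the algorithm \textbf{Rewrite} simply assembles $V$ from the input vectors $v^{(1)},\dots,v^{(n)}$, computes $A'=VAV^T$ and $b'=Vb$, and outputs these $\alpha_{ij},\beta_i,c$. Correctness is immediate from $V^TV = VV^T = I$ (which is exactly the orthonormality of $\mathcal{A}\cup\mathcal{B}$), and the running time is dominated by the matrix multiplications, which is $\poly(n)$ in the unit-cost real-number model (and $\poly(n,b)$ to sufficient accuracy in the bit-complexity model, as discussed in ``The model'' paragraph).

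I do not expect a genuine obstacle here; this is a routine change-of-basis computation. The only points that require care are the factor-of-two bookkeeping between the symmetric-matrix form of the quadratic part and the $\{\alpha_{ij}\}_{i\le j}$ representation used in \eqref{eq:qnew}, and --- for the bit-complexity implementation --- the fact that the $v^{(i)}$ supplied by earlier steps are only approximately orthonormal, so that $V$ is only approximately orthogonal. The resulting error is harmless because $A'$ and $b'$ depend Lipschitz-continuously on the entries of $V$ (over the bounded region in which we operate), so it can be absorbed into the global error budget exactly as in the detailed numerical analysis of \cite{DDS13:deg2count}.
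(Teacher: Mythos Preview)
Your proposal is correct and takes essentially the same approach as the paper's proof sketch: both invert the change of basis to write each $x_i$ as a linear combination of the $L_j$'s and then substitute into the original expression for $q$. Your version is slightly more explicit in exploiting orthogonality (so that $V^{-1}=V^T$ and no actual inversion is needed) and in the factor-of-two bookkeeping, but the underlying idea is identical.
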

\ifnum\confversion=0
\noindent {\bf Proof sketch:}
Given ${\cal A}$ and ${\cal B}$
by performing a matrix inversion it is possible to efficiently
compute coefficients $u_{ij}$ such that for $i \in [n]$
we have $x_i = \sum_{j=1}^n u_{ij} L_j(x).$  Substituting
$\sum_{j=1}^n u_{ij} L_j(x)$ for each occurrence of $x_i$ in
(\ref{eq:qnew}) we may rewrite $q$ in the form (\ref{eq:qnew})
and obtain the desired coefficients.
\qed
\fi

\ignore{
Note that given these $\alpha_{ij},\beta_i,c$ coefficients it is
straightforward to write down the polynomials
$\res(q, {\cal A},{\cal B})$ (see (\ref{eq:res})
and
$\proj(q,{\cal A},{\cal B})  q -
\res(q, {\cal A},{\cal B})$.
}

\medskip


Next we observe that the largest eigenvalue can never increase as we
consider the residue of $q$ with respect to larger and larger orthonormal sets
of linear forms:
\begin{lemma}
Fix any degree-2 polynomial $q$ and any full orthonormal set
$\{L_i(x)=v^{(i)} \cdot x\}_{i=1,\dots,n}$
of linear forms.
Let
${\cal A}= \{L_i(x)=v^{(i)} \cdot x\}_{i=1,\dots,r}$
and
${\cal B}= \{L_i(x)=v^{(i)} \cdot x\}_{i=r+1,\dots,n}$.
Then we have that $|\lambda_{\max} (\res(q,{\cal A},{\cal B})) |
\leq |\lambda_{\max} (q)|$.
\end{lemma}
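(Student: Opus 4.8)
The plan is to translate the statement into a statement about symmetric matrices, where it becomes the elementary fact that passing to a principal submatrix cannot increase the largest eigenvalue in absolute value.

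First I would fix notation for the change of basis. Let $V$ be the $n \times n$ matrix whose $i$-th column is $v^{(i)}$; since $\{L_i\}_{i=1,\dots,n}$ is a full orthonormal set of linear forms, the vectors $v^{(1)},\dots,v^{(n)}$ are orthonormal, so $V$ is an orthogonal matrix, and writing $y = V^T x$ we have $y_i = v^{(i)}\cdot x = L_i(x)$ for each $i$. Let $A$ be the symmetric matrix associated to the quadratic part of $q$ (so $x^T A x = \sum_{1\le i \le j \le n} a_{ij} x_i x_j$), and let $A'$ be the symmetric matrix associated to the quadratic part of $q$ when written in the form (\ref{eq:qnew}) (so $y^T A' y = \sum_{1 \le i \le j \le n}\alpha_{ij} y_i y_j$). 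From $x^T A x = y^T A' y$ for all $x$, together with $x = Vy$, I get $A' = V^T A V$; since $V$ is orthogonal, $A'$ and $A$ have the same eigenvalues, and hence $|\lambda_{\max}(q)| = |\lambda_{\max}(A')| = \|A'\|_2$, the last equality being the standard fact that for a real symmetric matrix the operator norm equals the largest-magnitude eigenvalue.

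Next I would identify the quadratic part of $\res(q,{\cal A},{\cal B})$. By the definition in (\ref{eq:res}), $\res(q,{\cal A},{\cal B})$ retains exactly the monomials $\alpha_{ij} L_i L_j$ with $r < i \le j \le n$ (together with an affine form in $L_{r+1},\dots,L_n$ and the constant), so the symmetric matrix associated to its quadratic part is precisely the principal submatrix $A'[{\cal B}]$ of $A'$ indexed by $\{r+1,\dots,n\}$. Therefore $|\lambda_{\max}(\res(q,{\cal A},{\cal B}))| = \|A'[{\cal B}]\|_2$.

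Finally I would invoke the elementary bound $\|A'[{\cal B}]\|_2 \le \|A'\|_2$: for any unit vector $z \in \R^{n-r}$, extending $z$ by zeros to a unit vector $\tilde z \in \R^n$ supported on the coordinates in ${\cal B}$ gives $z^T A'[{\cal B}] z = \tilde z^T A' \tilde z$, so $\|A'[{\cal B}]\|_2 = \max_{\|z\|=1} |z^T A'[{\cal B}] z| \le \max_{\|x\|=1}|x^T A' x| = \|A'\|_2$. Chaining the three relations yields $|\lambda_{\max}(\res(q,{\cal A},{\cal B}))| \le |\lambda_{\max}(q)|$. I do not anticipate a real obstacle here; the only points needing care are the bookkeeping in the polynomial-to-matrix correspondence (notably the factor-$\tfrac{1}{2}$ convention $A_{ij} = a_{ij}(1/2 + \delta_{ij}/2)$ used to define the matrix of a degree-2 polynomial) and checking that the residue operation genuinely corresponds to restriction to a principal submatrix once $q$ is expressed in the $L_i$-basis — both of which are routine.
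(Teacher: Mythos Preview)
Your proposal is correct and follows essentially the same approach as the paper: both arguments pass to the orthonormal $L_i$-basis, observe that the quadratic part of the residue corresponds to the principal submatrix indexed by ${\cal B}$, and then use the variational characterization to conclude that the largest-magnitude eigenvalue cannot increase. Your formulation via the operator norm $\|A'\|_2 = |\lambda_{\max}(A')|$ is in fact slightly cleaner than the paper's, which handles the sign of $\lambda_{\max}(\tilde{M}')$ by a WLOG argument, but the substance is identical.
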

\ifnum\confversion=0
\begin{proof} 
Let $M$ be the $n \times n$ symmetric matrix corresponding to the quadratic
part of $q$, and let $M'$ be the $n \times n$ symmetric matrix
corresponding to the quadratic part of $\res(q,{\cal A},{\cal B}).$
Let $\tilde{M}$ be the symmetric matrix obtained from $M$ by a change of
basis to the new coordinate system defined by the $n$ orthonormal
linear forms $L_1,\dots,L_n$, and likewise let $\tilde{M}'$
be the matrix obtained from $M'$ by the same change of basis.
Note that $\tilde{M}'$ is obtained from $\tilde{M}$ by zeroing out all
entries $\tilde{M}_{ij}$ that have either $i \in {\cal A}$
or $j \in {\cal A}$, i.e. $\tilde{M}'$ corresponds to the principal minor
$\tilde{M}_{{\cal B},{\cal B}}$ of $\tilde{M}.$
Since eigenvalues are unaffected by a change of basis, it suffices to
show that
$|\lambda_{\max}(\tilde{M})| \geq |\lambda_{\max}(\tilde{M}')|.$

We may suppose without loss of generality that 
$\lambda_{\max}(\tilde{M}')$ is positive.  
By the variational characterization of eigenvalues we have that
$\lambda_{\max}(\tilde{M}') = \max_{\|x\|=1} x^{T} \tilde{M}' x$.  
Since $\tilde{M}'$ corresponds to the principal minor
$\tilde{M}_{{\cal B},{\cal B}}$ of $\tilde{M}$, a vector $x'$ 
that achieves the maximum must have nonzero coordinates only
in ${\cal B}$, and thus 
\[
\lambda_{\max}(\tilde{M}') = (x')^T \tilde{M}' x' = 
(x')^T \tilde{M} x' \leq \max_{\|x\|=1} x^{T} \tilde{M}' x
\leq 
|\lambda_{\max}(\tilde{M})|.
\]
\end{proof}
\fi
\ignore{

\begin{fact}\label{fac:11}
For any symmetric matrix $A$ and $I \subseteq [n]$, 
let $A_I$ be the minor of $A$ corresponding to the set $I$. 
If $\sigma_1(A)$ denotes the largest singular value of $A
$, the note that $\sigma_{1} (A_I) \le \sigma_1(A)$.
\end{fact}
\begin{proof}
Let $B = A^T \cdot A$ and $B ' =A_I^T \cdot A_I$.  It is easy to see that
$B_I = B'$.\rnote{This seems wrong --
unless I have lost my mind, the 2-by-2 symmetric matrix with entries
1,2,2,3 and $I=\{1\}$ is a counterexample to this.}
The proof now follows from the definition of Rayleigh quotient.
\end{proof}
\begin{corollary}
For any degree-2 polynomial $p$ and an orthonormal set $L_1, \ldots, L_k$, $\lambda_{\max} (\mathop{Res}_{L_1, \ldots, L_k}(p)) \le \lambda_{\max} (p)$.
\end{corollary}
\begin{proof}
Obvious from Fact~\ref{fac:11}.
\end{proof}
}

\subsubsection{The {\bf Change-Basis} procedure.}

We now describe the {\bf Change-Basis} procedure.
This procedure takes as input a vector $q=(q_1,\dots,q_k)$ of $k$
degree-2 polynomials, where each $q_i$ is specified explicitly by its
coefficients\ifnum\confversion=0 as in (\ref{eq:qorig})\fi,
and two parameters $\epsa,\eta>0$.
It outputs a vector of polynomials $p=(p_1(y),\dots,p_k(y))$
where each $p_\ell(y_1,\dots,y_n)$ is also specified
explicitly by coefficients
$\alpha^{(\ell)}_{ij}$, $\beta^{(\ell)}_i$, $c^{(\ell)}$
that define $p_\ell(y)$
as
\begin{equation} \label{eq:pell}
p_\ell(y) = \littlesum_{1 \leq i \leq j \leq n} \alpha^{(\ell)}_{ij} y_i y_j
+ \littlesum_{1 \leq i \leq n} \beta^{(\ell)}_i y_i + c^{(\ell)},
\end{equation}
and an integer $0 \leq t \leq k \ln(1/\eta)/\epsa^2$.
As its name suggests, the {\bf Change-Basis} procedure
essentially performs a change of basis on $\R^n$ and
rewrites the polynomials $q_{\ell}(x)$
in the new basis as $p_{\ell}(y)$.
It is helpful to think of $y_i$ as playing the role of
$L_i(x)$ where $\{L_i(x)\}_{i=1,\dots,n}$ is
a set of orthonormal linear forms computed by the algorithm,
and to think of the coefficients
$\alpha^{(\ell)}_{ij}$, $\beta^{(\ell)}_i$, $c^{(\ell)}$
defining $p_\ell(y)$ as being obtained from $q_\ell(x)$ by rewriting
$q_\ell(x)$ using the linear forms $L_i(x)$ as in (\ref{eq:qnew}).

The {\bf Change-Basis} procedure has two key
properties.  The first is that
the two vector-valued random variables $(q_1(x),\dots,q_k(x))$
(where $x \sim N(0,1)^n$) and $(p_1(y),\dots,p_k(y))$
(where $y \sim N(0,1)^n$) are very close in Kolmogorov distance.
(In the ``idealized'' version they are identically distributed,
and in the ``real'' version they are close in $k$-dimensional
Kolmogorov distance.)
The second is that each of the $p_\ell$ polynomials is ``nice''
in a sense which we make precise in Lemma \ref{lem:change-basis}
below.  (Roughly speaking, $p_\ell$ either almost entirely depends only
on a few variables, or else has a small-magnitude max eigenvalue.)

\begin{framed}
\noindent {\bf Change-Basis}

\smallskip

\noindent {\bf Input:}  vector $q=(q_1,\dots,q_k)$ of degree-2
polynomials $q_\ell(x_1,\dots,x_n)$ such that $\E_{x \sim N(0,1)^n}
[q_\ell(x)^2]=1$ for all $\ell=1,\dots,k$; parameters $\epsa,\eta>0$


\noindent {\bf Output:} A vector $p=(p_1(y),\dots,p_k(y))$
of degree-2 polynomials (described explicitly via their coefficients
as in (\ref{eq:pell})) satisfying the guarantees of
Lemma \ref{lem:change-basis}, and an integer $t \geq 0.$

\medskip

\begin{enumerate}

\item Initialize the set of linear forms ${\cal A}$ to be
$\emptyset$.  Let $\tilde{q}_\ell(x)=q_\ell(x)$
for all $\ell=1,\dots,k.$

\item If each $\ell=1,\dots,k$ is such that $\tilde{q}_\ell$ satisfies
either

\begin{itemize}
\item [(a)] $\Var[\tilde{q}_\ell] \leq \eta$, ~~~~~~~~~~~~ or
~~~~~~~~~~~~
(b) ~~ ${\frac {(\lambda_{\max}(\tilde{q}_\ell))^2}{\Var[\tilde{q}_\ell]]}}
\leq \epsa$,
\end{itemize}
then use {\bf Complete-Basis} to
compute a set ${\cal B}$ of linear forms ${\cal B}=
\{L_{|{\cal A}|+1}(x),\dots,L_n(x)\}$ such that
${\cal A} \cup {\cal B}$ is a full orthonormal basis,
and go to Step 5.  Otherwise, proceed to Step 3.

\item Let $\ell' \in [k]$ be such that $\tilde{q}_{\ell'}$ does not satisfy
either (a) or (b) above.
Let $v \in \R^n$ be a unit eigenvector corresponding to
the maximum magnitude eigenvalue $\lambda_{\max}(\tilde{q}_{\ell'})$.
Let $L(x) = v \cdot x$.  Add $L(x)$ to ${\cal A}$.

\item
Use {\bf Complete-Basis}$({\cal A})$ to
compute a set of linear forms
${\cal B} = \{L_{|{\cal A}|+1}(x),\dots,L_n(x)\}$ such that
${\cal A} \cup {\cal B}$ is a full orthonormal basis.
For all $\ell = 1,\dots,k$ use {\bf Rewrite}$(q_\ell,{\cal A},
{\cal B})$ to compute coefficients $\alpha^{(\ell)}_{ij},\beta^{(\ell)}_i$,
$c^{(\ell)}$ as in (\ref{eq:qnew})).  Set
$\tilde{q}_\ell(x) = \res(q_\ell,{\cal A}, {\cal B})$
and $\proj(q_\ell,{\cal A}, {\cal B})=q_\ell(x) - \tilde{q}_\ell(x).$
Go to Step~2.


\item We have ${\cal A} = \{L_1(x),\dots,L_{|{\cal A}|}(x)\}$
and ${\cal B}=\{L_{|{\cal A}|+1}(x),\dots,L_n(x)\}$.  For each $\ell \in [k]$
use {\bf Rewrite} on $q_\ell$ to compute coefficients
$\alpha^{(\ell)}_{ij}$, $\beta^{(\ell)}_i$, $c^{(\ell)}$
such that
\[
q_\ell(x) =
\littlesum_{1 \leq i \leq j \leq n} \alpha^{(\ell)}_{ij}L_i(x)L_j(x) +
\littlesum_{1 \leq i \leq n} \beta^{(\ell)}_{i}L_i(x) +
c^{(\ell)}.
\]
Output the polynomials $p_1(y),\dots,p_k(y)$ defined by these coefficients
as in (\ref{eq:pell}), and the value $t=|{\cal A}|.$

\end{enumerate}
\vspace{-0.2cm}
\end{framed}

{\bf Idealized assumption \#2:}  There is a $\poly(n)$ time deterministic procedure
which, given $\tilde{q}_\ell$ as input,
\begin{itemize}
\item exactly computes the maximum eigenvalue $\lambda_{\max}(\tilde{q}_\ell)$,
and
\item exactly computes a unit eigenvector corresponding to
$\lambda_{\max}(\tilde{q}_\ell)$.
\end{itemize}

\ifnum\confversion=0
\ignore{
\green{
It is instructive to consider the above definition from a matrix
perspective.   Let $A=(A_{ij})$ be the symmetric matrix corresponding to the
quadratic part of $q$ as described in Section \ref{sec:prelim}, i.e.
$A_{ij} = a_{ij} (1/2 + \delta_{ij}/2)$.  By the spectral theorem for
real symmetric matrices, we can write $A$ as
$\sum_{j=1}^n \lambda_j (v^j)^T \cdot v^j$ where $(\lambda_1,v^1),\dots,
(\lambda_n,v^n)$ are the $n$ eigenvalue-eigenvector pairs in the
spectral decomposition.
}
}

Before we proceed with the proof, we recall some basic facts: 

\begin{definition}[Rotational invariance of polynomials]
Given two polynomials $p(x) = \littlesum_{1 \le i \le j \le n} a_{ij}x_i x_j
+ \littlesum_{1\le i \le n} b_i x_i +C$ and $q(x) =\littlesum_{1 \le i
\le j \le n} a'_{ij}x_i x_j + \littlesum_{1\le i \le n} b'_i x_i +C$
with the same constant term, we say that they are
\emph{rotationally equivalent} if there is an orthogonal matrix $Q$ such
that $Q^T \cdot A \cdot Q = A'$ and $ Q^T \cdot b = b'$.
If the matrix $A'$ is diagonal then the polynomial $q$ is said to be the
\emph{decoupled equivalent of $p$}.
In this case, the eigenvalues of $A$ (or
equivalently $A'$) are said to be the eigenvalues of the quadratic form $p$.
\end{definition}

\begin{claim}\label{clm:poly-equivalence}
For any degree-$2$ polynomials $p(x)$ and $q(x)$ which are rotationally equivalent, the distributions of $p(x)$ and $q(x)$ are identical when $(x_1, \ldots, x_n) \sim N(0,1)^n$.
\end{claim}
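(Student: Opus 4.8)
The plan is to reduce the claim to the single well-known fact that the standard Gaussian measure $N(0,1)^n$ is invariant under orthogonal transformations: if $Q$ is an $n \times n$ orthogonal matrix and $x \sim N(0,1)^n$, then $Qx \sim N(0,1)^n$ as well (this follows, e.g., because $Qx$ is jointly Gaussian with mean $0$ and covariance $Q I Q^T = I$). Given this, the entire argument is a change of variables.

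Concretely, I would first write both polynomials in matrix form using the convention fixed earlier in the paper: $p(x) = x^T A x + b^T x + C$ and $q(x) = x^T A' x + (b')^T x + C$, where $A,A'$ are the symmetric matrices associated to the quadratic parts. By the definition of rotational equivalence there is an orthogonal $Q$ with $A' = Q^T A Q$ and $b' = Q^T b$. Substituting, for every $x \in \R^n$,
\[
q(x) = x^T Q^T A Q x + (Q^T b)^T x + C = (Qx)^T A (Qx) + b^T (Qx) + C = p(Qx),
\]
so $q = p \circ Q$ as functions on $\R^n$. Now take $x \sim N(0,1)^n$; by orthogonal invariance of the Gaussian, $Qx$ has the same distribution as $x$, and hence $q(x) = p(Qx)$ has the same distribution as $p(x)$. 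This proves the claim.

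There is no real obstacle here; the only points requiring a modicum of care are (i) being consistent about the matrix/quadratic-form dictionary (i.e.\ that $x^T A x$ really equals the quadratic part of $p$, as set up in the preliminaries), and (ii) getting the direction of the orthogonal transformation right so that the substitution $x \mapsto Qx$ — rather than $x \mapsto Q^T x$ — yields exactly $q(x) = p(Qx)$; in any case orthogonal invariance applies equally to $Q$ and $Q^T$, so the conclusion is unaffected. I would state orthogonal invariance of $N(0,1)^n$ as a one-line preliminary fact (or cite it) and present the computation above as the proof.
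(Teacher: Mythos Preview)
Your proof is correct and is exactly the standard argument: write $q(x)=p(Qx)$ via the matrix identities in the definition of rotational equivalence, then invoke orthogonal invariance of $N(0,1)^n$. The paper in fact does not spell out a proof of this claim at all; it states it as a basic fact (immediately after the definition of rotational equivalence) and moves on. Your write-up is precisely what a reader would supply, and the care you take about the direction of $Q$ versus $Q^T$ is appropriate though, as you note, immaterial to the conclusion.
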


\noindent For $x \sim N(0,1)^n)$, since $L_1, \ldots, L_n$ is an orthonormal basis, we have that $(L_1(x),\dots,L_n(x))$
is distributed identically to $(y_1,\dots,y_n) \sim N(0,1)^n$.
By construction, we have that the matrix corresponding to $p_\ell$ is an
orthogonal transformation of the matrix corresponding to $q_\ell$. That is, $p_{\ell}$ and $q_{\ell}$ are rotationally equivalent.

Recalling the $\tail_t(\cdot)$ and $\head_t(\cdot)$
notation from Section \ref{sec:prelim},
we see that the polynomial $\tail_t(p_\ell(y))$ corresponds
precisely to the polynomial $\res(q_\ell,{\cal A},{\cal B})$
and that $\head_t(p_\ell(y))$ corresponds precisely to $\proj(q_\ell,
{\cal A},{\cal B}).$  As a consequence, the eigenvalues
of $\tail_t(p_\ell)$ are identical to the eigenvalues of
$\tilde{q}_\ell$.

\begin{claim}
Let $q(x)$ be a degree-2 Gaussian
polynomial and ${\cal A}=\{L_1(x),\dots,L_r(x)\}$ be
an orthonormal set of linear forms.  Let $\tilde{q}_\ell(x)=
\res(q_\ell,{\cal A},{\cal B})$
and let $v$ be a unit eigenvector of (the symmetric matrix corresponding to)
$\tilde{q}_\ell.$  Then the
linear form $L'(x) \eqdef v \cdot x$ is orthogonal
to all of $L_1,\dots,L_r$, i.e., $\E[L'(x) \cdot L_i(x)]=0$ for all
$\ell=1,\dots,r.$
\end{claim}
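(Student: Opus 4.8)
The plan is to track what the residue $\tilde{q}_\ell = \res(q_\ell,{\cal A},{\cal B})$ becomes as a quadratic form once we pass to the orthonormal coordinate system determined by $L_1,\dots,L_n$ (where ${\cal B}=\{L_{r+1},\dots,L_n\}$ is any completion of ${\cal A}$ to a full orthonormal basis), and then to observe that its associated symmetric matrix, read back in the standard basis, has range contained in $\span\{v^{(r+1)},\dots,v^{(n)}\}$ — which is exactly the orthogonal complement of $\span\{v^{(1)},\dots,v^{(r)}\}$. Any eigenvector of that matrix with a nonzero eigenvalue must therefore be orthogonal to $v^{(1)},\dots,v^{(r)}$, and since $\E_{x\sim N(0,1)^n}[(u\cdot x)(w\cdot x)]=u\cdot w$, this is precisely the asserted orthogonality of $L'$ to $L_1,\dots,L_r$.

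In more detail, I would write $q_\ell(x)=x^T M_\ell x + b_\ell^T x + c_\ell$ with $M_\ell$ the symmetric matrix of its quadratic part, and let $V$ be the orthogonal matrix whose columns are $v^{(1)},\dots,v^{(n)}$, so that in the coordinates $y=V^Tx$ we have $L_i(x)=y_i$. Rewriting $q_\ell$ in the $L$-basis as in \eqref{eq:qnew}, the symmetric matrix of its quadratic part becomes $\tilde M_\ell = V^T M_\ell V$. By the definition of $\res$ in \eqref{eq:res}, forming the residue keeps exactly the monomials $L_iL_j$ with $r<i\le j\le n$ (and the linear terms with $r<i$), so in the $L$-basis the residue's quadratic part has symmetric matrix $\Pi\tilde M_\ell\Pi$, where $\Pi$ is the diagonal $0/1$ matrix that is $0$ on coordinates $1,\dots,r$ and $1$ on $r+1,\dots,n$. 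Converting back to the standard basis, the symmetric matrix $M'_\ell$ attached to $\tilde{q}_\ell$ (viewed as a polynomial in $x$) is $M'_\ell = V(\Pi\tilde M_\ell\Pi)V^T = PM_\ell P$, where $P=V\Pi V^T = \sum_{i=r+1}^n v^{(i)}(v^{(i)})^T$ is the orthogonal projection onto $W\eqdef\span\{v^{(r+1)},\dots,v^{(n)}\}$; in particular $\mathrm{range}(M'_\ell)\subseteq\mathrm{range}(P)=W$.

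Now let $v$ be a unit eigenvector of $M'_\ell$ with eigenvalue $\lambda$. We may assume $\lambda\neq 0$: when {\bf Change-Basis} applies this claim, $v$ corresponds to $\lambda_{\max}(\tilde{q}_{\ell'})$ for a polynomial $\tilde{q}_{\ell'}$ violating both (a) and (b), so $\lambda^2 = \lambda_{\max}(\tilde{q}_{\ell'})^2 > \epsa\cdot\Var[\tilde{q}_{\ell'}] > \epsa\eta > 0$ — and the hypothesis $\lambda\neq 0$ is genuinely needed, since a zero eigenvector of a residue need not be orthogonal to ${\cal A}$. Then $v = \lambda^{-1}M'_\ell v \in \mathrm{range}(M'_\ell) \subseteq W$, and because $W$ is the orthogonal complement of $\span\{v^{(1)},\dots,v^{(r)}\}$ we conclude $\E[L'(x)L_i(x)] = v\cdot v^{(i)} = 0$ for all $i=1,\dots,r$. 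The one step that deserves care is the identity $M'_\ell = PM_\ell P$, i.e.\ that deleting the first $r$ rows and columns of the $L$-basis coefficient matrix coincides with conjugating $M_\ell$ by the projection onto $\span\{v^{(r+1)},\dots,v^{(n)}\}$; this is a short computation once one recognizes $V\Pi V^T$ as that projection, and it also makes clear that the conclusion does not depend on the particular completion ${\cal B}$.
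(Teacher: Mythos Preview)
Your proof is correct and follows essentially the same approach as the paper: both arguments show that the eigenvector $v$ must lie in $\span\{v^{(r+1)},\dots,v^{(n)}\}=\span({\cal B})$, from which orthogonality to $L_1,\dots,L_r$ is immediate since ${\cal A}\cup{\cal B}$ is orthonormal. The paper's proof is a one-liner asserting this fact ``by the definition of the residue,'' whereas you explicitly compute that the symmetric matrix of $\tilde{q}_\ell$ in the standard basis is $PM_\ell P$ with $P$ the orthogonal projection onto $\span({\cal B})$, so its range is contained in that subspace.

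One point worth noting: you correctly flag that the hypothesis $\lambda\neq 0$ is genuinely needed (a zero-eigenvalue eigenvector of $PM_\ell P$ could certainly lie outside $\span({\cal B})$), and you justify it by appealing to the way the claim is invoked inside {\bf Change-Basis}. The paper's statement of the claim does not include this hypothesis, and its one-line proof does not address it; so in this respect your argument is more careful than the original.
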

\begin{proof}
The claim follows from the fact that $v$ lies in the span of ${\cal B}$ (as follows by the definition of the residue) and that the sets of vectors ${\cal A}$ and ${\cal B}$ are orthonormal.
\end{proof}

The above claim immediately implies that throughout the execution of
{\bf Change-Basis}, ${\cal A}$ is always an orthonormal
set of linear forms:
\begin{corollary}
At every stage in the execution of {\bf Change-Basis},
the set ${\cal A}$ is orthonormal.
\end{corollary}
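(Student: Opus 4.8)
The plan is a short induction on the number of linear forms currently in ${\cal A}$ --- equivalently, on the number of times Step~3 of \textbf{Change-Basis} has been executed. The one structural observation to make first is that the only step that ever \emph{modifies} ${\cal A}$ is Step~3, which appends a single linear form $L(x)=v\cdot x$: Steps~2, 4 and 5 recompute the completion ${\cal B}$ and the residue/projection polynomials $\tilde q_\ell=\res(q_\ell,{\cal A},{\cal B})$, but leave ${\cal A}$ itself unchanged. So it suffices to show that each execution of Step~3 sends an orthonormal ${\cal A}$ to an orthonormal ${\cal A}$.

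For the base case, ${\cal A}$ is initialized to $\emptyset$ in Step~1, which is vacuously an orthonormal set of linear forms. For the inductive step, suppose that at the start of some execution of Step~3 the current set is ${\cal A}=\{L_1,\dots,L_r\}$ with $r\ge 0$, orthonormal by the inductive hypothesis, and let ${\cal B}=\{L_{r+1},\dots,L_n\}$ be the orthonormal completion against which the current residues $\tilde q_\ell=\res(q_\ell,{\cal A},{\cal B})$ were computed (for $r=0$ there is nothing to verify, so assume $r\ge 1$; here ${\cal B}$ is the completion returned by the call to \textbf{Complete-Basis} in the preceding Step~4). Since Step~3 is reached, there is an index $\ell'$ for which $\tilde q_{\ell'}$ satisfies neither (a) nor (b); in particular $\Var[\tilde q_{\ell'}]>\eta>0$ and $(\lambda_{\max}(\tilde q_{\ell'}))^2>\epsa\cdot\Var[\tilde q_{\ell'}]>0$, so $\lambda_{\max}(\tilde q_{\ell'})\neq 0$ and the unit eigenvector $v$ for $\lambda_{\max}(\tilde q_{\ell'})$ selected in Step~3 is well defined.

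Now invoke the claim immediately preceding this corollary (that a unit eigenvector of $\res(q_\ell,{\cal A},{\cal B})$ gives a linear form orthogonal to $L_1,\dots,L_r$) with $q_\ell:=q_{\ell'}$: the linear form $L(x)=v\cdot x$ satisfies $\E_{x\sim N(0,1)^n}[L(x)L_i(x)]=0$ for every $i\in\{1,\dots,r\}$. Moreover $\E_{x\sim N(0,1)^n}[L(x)^2]=\|v\|_2^2=1$ since $v$ is a unit vector. Combined with the inductive hypothesis $\E[L_i(x)L_j(x)]=\delta_{ij}$ for $i,j\le r$, this shows ${\cal A}\cup\{L\}=\{L_1,\dots,L_r,L\}$ is again an orthonormal set of linear forms, closing the induction.

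I do not expect any real obstacle here: all the content is in the preceding claim (whose own proof rests on the fact that the eigenvector of a residue polynomial, whose quadratic part is supported on the ${\cal B}$-block, lies in $\span({\cal B})$, and ${\cal B}\perp{\cal A}$ by construction of \textbf{Complete-Basis}). The only points that need a line of care are the bookkeeping that ${\cal A}$ is untouched outside Step~3, and the remark that $\lambda_{\max}(\tilde q_{\ell'})\neq 0$ (guaranteed by the failure of conditions (a),(b)) so that the chosen eigenvector genuinely lies in $\span({\cal B})$ and the claim applies. In the non-idealized implementation one would additionally have to track the tiny numerical errors from the approximate eigenvector and Gram--Schmidt computations, but under Idealized Assumptions \#1 and \#2 the statement is exact.
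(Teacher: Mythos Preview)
Your proof is correct and follows exactly the approach the paper intends: the paper simply states that the corollary ``immediately'' follows from the preceding claim, and you have spelled out the underlying induction on $|{\cal A}|$ together with the observation that the new linear form has unit norm. Your extra care in noting that $\lambda_{\max}(\tilde q_{\ell'})\neq 0$ (so the chosen eigenvector genuinely lies in $\span({\cal B})$ and the preceding claim applies) is a point the paper glosses over but which is indeed needed for the argument to be airtight.
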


(As a side note we observe that since ${\cal A} \cup {\cal B}$
is a full orthonormal set, it is indeed straightforward to compute
$\Var[\tilde{q}_\ell]$ in Step 2;
the first time Step 2 is reached this is simply the
same as $\Var[q_\ell]$, and in subsequent iterations
we can do this in a straightforward way
since we have computed the coefficients
$\alpha^{(\ell)}_{ij},\beta^{(\ell)}_i$ in Step 4 immediately before
reaching Step 2.)

Next we bound the value of $t$ that the algorithm outputs:

\begin{claim}
The number of times that {\bf Change-Basis} visits Step~2 is at most
$k \ln (1/\eta)/\epsa^2 $.  Hence the value $t$ that the algorithm
returns is at most
$k \ln (1/\eta)/\epsa^2 $.
\end{claim}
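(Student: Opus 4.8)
The plan is to use a potential-function argument tracking the variance of the uncollected portion of each polynomial, in direct analogy with the budget argument of \cite{GOWZ10}. For each $\ell \in [k]$ define the potential $\Phi_\ell$ to be $\ln(1/\Var[\tilde q_\ell])$, where $\tilde q_\ell = \res(q_\ell,{\cal A},{\cal B})$ is the current residue; since $\Var[q_\ell]=1$ initially, $\Phi_\ell$ starts at $0$, and since we stop collecting on behalf of $\ell$ once $\Var[\tilde q_\ell] \le \eta$, we only ever collect a linear form on behalf of $\ell$ while $\Phi_\ell < \ln(1/\eta)$. So it suffices to show that every time Step 3 collects a linear form on behalf of some index $\ell'$, the quantity $\Phi_{\ell'}$ increases by at least $\epsa^2$; summing over all $k$ indices then bounds the total number of collections (equivalently, the number of visits to Step 2 minus one, hence $t=|{\cal A}|$) by $k\ln(1/\eta)/\epsa^2$.

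The key step is therefore the following variance-drop estimate: if at Step 3 the index $\ell'$ fails both (a) and (b), so that $\Var[\tilde q_{\ell'}] > \eta$ and $(\lambda_{\max}(\tilde q_{\ell'}))^2/\Var[\tilde q_{\ell'}] > \epsa$, and we collect the linear form $L(x)=v\cdot x$ for $v$ a unit eigenvector of $\tilde q_{\ell'}$ achieving $\lambda_{\max}$, then the new residue $\tilde q_{\ell'}'$ (with respect to ${\cal A}\cup\{L\}$) satisfies $\Var[\tilde q_{\ell'}'] \le \Var[\tilde q_{\ell'}] - (\lambda_{\max}(\tilde q_{\ell'}))^2$. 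Indeed, working in the orthonormal coordinate system where $L$ is the first new basis vector, $\tilde q_{\ell'}$ has a representation in which the coefficient of the pure square in that coordinate is exactly $\lambda_{\max}$ (and the projection removes the head, i.e. all terms touching that coordinate); by Claim \ref{clm:variance-bound-SS} the variance is at least $SS$, which drops by at least $\lambda_{\max}^2$ when we delete that coefficient, while $\res$ only deletes coefficients so the variance of the remainder is at most the variance of $\tilde q_{\ell'}$ minus the contribution of the deleted terms, and in particular at most $\Var[\tilde q_{\ell'}] - \lambda_{\max}^2$. Combined with the failure of condition (b), which gives $\lambda_{\max}^2 > \epsa\cdot\Var[\tilde q_{\ell'}]$, we get $\Var[\tilde q_{\ell'}'] \le (1-\epsa)\Var[\tilde q_{\ell'}]$, hence $\Phi_{\ell'}$ increases by at least $\ln(1/(1-\epsa)) \ge \epsa \ge \epsa^2$.

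Then I would assemble the pieces: each of the $k$ potentials is nondecreasing (since $\res$ with respect to a larger orthonormal set only shrinks the polynomial, variance is nonincreasing), starts at $0$, and we only increment $\Phi_{\ell'}$ — by at least $\epsa^2$ — while $\Phi_{\ell'} < \ln(1/\eta)$; so at most $\ln(1/\eta)/\epsa^2$ collections occur on behalf of any fixed $\ell'$, and at most $k\ln(1/\eta)/\epsa^2$ in total. Since each visit to Step 2 after the first is immediately preceded by exactly one collection in Step 3, the number of visits to Step 2 is at most $k\ln(1/\eta)/\epsa^2$, and the returned value $t=|{\cal A}|$ equals the number of collected linear forms, which is at most this bound. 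I expect the main obstacle to be the bookkeeping in the variance-drop estimate — making precise that collecting an eigenvector and taking the residue really removes a diagonal term of magnitude exactly $\lambda_{\max}$ in the rotated basis, and that $\res$ never increases variance — but this follows from the rotational-invariance facts and Claim \ref{clm:variance-bound-SS} already established above.
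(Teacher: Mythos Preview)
Your proof is correct and follows essentially the same approach as the paper's: a variance-potential argument showing that each time a linear form is collected on behalf of index $\ell'$, the variance of $\tilde q_{\ell'}$ drops by a multiplicative factor, so at most $\ln(1/\eta)/\epsa^2$ collections occur per index. In fact your per-step drop of $(1-\epsa)$ (from $\lambda_{\max}^2 > \epsa\cdot\Var$) is sharper than the paper's stated $(1-\Omega(\epsa^2))$, which appears to use the weaker consequence $|\lambda_{\max}| \ge \epsa\sqrt{\Var}$; you then correctly weaken to match the claimed bound.
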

\begin{proof}
It is easy to see that after the end of any iteration,
for any fixed $\ell \in [k]$, the variance of $\tilde{q}_{\ell}$ does not
increase.  This follows by the definition of the residue and the expression of the variance as a function of the coefficients.
At the start of the algorithm each $\tilde{q}_{\ell}$ has
$\Var(\tilde{q}_{\ell}) = 1$.  We claim that each time Step 3 is reached, the polynomial $\tilde{q}_{i'}$, $i' \in [k]$,
that is identified in that step has its variance
$\Var(\tilde{q}_{i'})$ multiplied by a value which is at most
$(1- \Omega(\epsa^2))$ in the corresponding iteration. The claim follows from the fact that
the maximum magnitude eigenvalue of $\tilde{q}_{i'}$ is at least $\epsa \cdot \sqrt{\Var[\tilde{q}_{i'}]}$ and the definition of the residue.
Thus each specific $j \in [k]$ can be chosen as the $i'$
in Step 3 at most $O(\ln(1/\eta)/\epsa^2)$ times (after this many iterations
it will be the case that $\Var[\tilde{q}_j] \leq \eta$).
This proves the claim.
\end{proof}

Thus we have proved the following:
\fi
\ifnum\confversion=1 In the full version, we prove the following:\fi
\begin{lemma} \label{lem:change-basis}
(Idealized lemma about {\bf Change-Basis}:)
Given as input a vector $q=(q_1,\dots,q_k)$ of degree-2 polynomials
such that $\E_{x \sim N(0,1)^n}[q_i(x)^2]=1$
and parameters $\epsa,\eta>0$,
the algorithm {\bf Change-Basis}$((q_1,\dots,q_k),\epsa,\eta))$
runs in time $\poly(n, t, 1/\eps')$ and outputs polynomials $p_1(y),\dots,p_k(y)$
(described via their coefficients as in (\ref{eq:pell}))
and a value $0 \leq t \leq k \ln(1/\eta)/\epsa^2$
such that items (1) and (2) below both hold.

\begin{enumerate}

\item The
vector-valued random variables $q=(q_1(x),\dots,q_k(x))$
(where $x \sim N(0,1)^n$) and $p=(p_1(y),\dots,p_k(y))$
(where $y \sim N(0,1)^n$) are identically distributed.

\item For each $\ell \in [k]$, at least one of the following holds:
\begin{enumerate}

\item $\Var_{y \sim N(0,1)^n}[\tail_t(p_\ell(y))] \leq \eta$, ~~~~~~~~~~
or ~~~~~~~~~~
(b) ~~ ${\frac {(\lambda_{\max}(\tail_t(p_\ell))^2}{\Var[\tail_t(p_\ell)]}}
\leq \epsa$.

\end{enumerate}
\end{enumerate}

(Non-idealized lemma about {\bf Change-Basis}:)
This is the same as the idealized lemma
except that (1) above is replaced by
\begin{equation} \label{eq:pqclose}
\dk(p,q) \leq O(\epsa).
\end{equation}

\end{lemma}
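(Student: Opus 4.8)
The plan is to read \lref{change-basis} off from the structural facts about the run of \textbf{Change-Basis} established above. For the idealized statement four things need checking --- the running time, the bound $t \le k\ln(1/\eta)/(\epsilon')^2$, item (1), and item (2) --- and the non-idealized statement needs one extra ingredient, a numerical-error analysis. The running time is immediate: each pass through the main loop adds exactly one linear form to $\mathcal{A}$, and under the idealized assumptions each pass does only $\poly(n)$-time linear algebra (one maximum-eigenvalue/eigenvector computation via Idealized Assumption \#2, one call to \textbf{Complete-Basis}, and $k$ calls to \textbf{Rewrite}, the last being a matrix inversion followed by substitutions into (\ref{eq:qnew})). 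Hence the total time is polynomial in $n$ and in the number $t$ of loop iterations, i.e.\ $\poly(n,t,1/\epsilon')$.

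The quantitative heart of the lemma is the bound on $t$, and for this I would run a ``budget'' argument tracking $\Var_{x \sim N(0,1)^n}[\tilde q_\ell(x)]$ for each $\ell \in [k]$. Taking a residue only deletes monomials from the orthonormal-basis representation (\ref{eq:qnew}) of a polynomial, so every $\Var[\tilde q_\ell]$ is non-increasing across iterations and starts at $1$. Now suppose $\tilde q_{\ell'}$ is the polynomial chosen in Step 3 of some iteration; since it satisfies neither (a) nor (b) we have $\Var[\tilde q_{\ell'}] > \eta$ and $\lambda_{\max}(\tilde q_{\ell'})^2 > \epsilon' \cdot \Var[\tilde q_{\ell'}]$. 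The collected form $L = v \cdot x$ is a unit eigenvector of the symmetric matrix of $\tilde q_{\ell'}$, and --- crucially --- $v$ lies in the span of the previously unchosen directions, so in the new orthonormal basis there are no quadratic cross-terms coupling $L$ with the remaining coordinates; therefore passing to the new residue strips off exactly a term $\lambda_{\max}(\tilde q_{\ell'}) \, L(x)^2$ (plus at most a linear term in $L$). Since this stripped-off part is independent of the residue, $\Var[\tilde q_{\ell'}]$ decreases by at least $\Var[\lambda_{\max}(\tilde q_{\ell'})(L(x)^2 - 1)] = 2\lambda_{\max}(\tilde q_{\ell'})^2 \ge 2\epsilon' \Var[\tilde q_{\ell'}]$, i.e.\ by a multiplicative factor at most $1 - \Omega(\epsilon')$. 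Consequently each fixed index $j \in [k]$ can be picked as $\ell'$ at most $O(\ln(1/\eta)/\epsilon')$ times --- after that many picks its variance drops below $\eta$ and it permanently satisfies (a) --- so the loop runs at most $O(k\ln(1/\eta)/\epsilon')$ times overall, within the stated bound, which also bounds $t = |\mathcal{A}|$.

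For item (1), the corollary above shows that on termination $\mathcal{A} \cup \mathcal{B} = \{L_1,\dots,L_n\}$ is a full orthonormal set, so if $x \sim N(0,1)^n$ then $(L_1(x),\dots,L_n(x))$ is distributed as $y \sim N(0,1)^n$; since by construction $p_\ell(y)$ is $q_\ell(x)$ rewritten under $y_i \leftrightarrow L_i(x)$ --- equivalently, each $p_\ell$ is rotationally equivalent to $q_\ell$ via the common orthogonal matrix with columns $v^{(1)},\dots,v^{(n)}$ --- the tuple $(p_1(y),\dots,p_k(y))$ has the same joint law as $(q_1(x),\dots,q_k(x))$; this is the joint version of Claim~\ref{clm:poly-equivalence}. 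For item (2), the loop exits only when the Step 2 guard holds, i.e.\ when every $\tilde q_\ell = \res(q_\ell,\mathcal{A},\mathcal{B})$ satisfies (a) $\Var[\tilde q_\ell] \le \eta$ or (b) $\lambda_{\max}(\tilde q_\ell)^2/\Var[\tilde q_\ell] \le \epsilon'$; and, as recorded above, $\tail_t(p_\ell)$ is exactly this residue written in the $y$-coordinates, so a change of orthonormal basis gives $\Var[\tail_t(p_\ell)] = \Var[\tilde q_\ell]$ and matches their eigenvalue multisets. The same dichotomy therefore holds for each $\tail_t(p_\ell)$, which is item (2).

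Finally, for the non-idealized version one replaces the exact eigen-computations and \textbf{Complete-Basis} by $b$-bit approximate versions. Then the $L_i$ are only nearly orthonormal (Gram matrix within $2^{-\poly(b)}$ of the identity), so $(L_1(x),\dots,L_n(x))$ is within $\dk$-distance $o(\epsilon')$ of $N(0,1)^n$, and the approximately-rewritten $p_\ell$ differs from the exact rotation $\hat p_\ell$ of $q_\ell$ only in coefficients of magnitude $2^{-\poly(b)}$, so $|p_\ell(y) - \hat p_\ell(y)|$ is polynomially small in $\epsilon'/(kn)$ except on a set of Gaussian measure $\exp(-\Omega(n))$; a threshold shift of this size changes each relevant probability by at most $O(\sqrt\delta)$ via the Carbery--Wright bound (\tref{cw}), and summing over the $k$ coordinates and taking $b = \poly(n,k,\log(1/\epsilon'))$ yields $\dk(p,q) \le O(\epsilon')$, i.e.\ (\ref{eq:pqclose}); the $t$-bound and item (2) are unaffected, the exact inequalities merely acquiring negligible additive slack. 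I expect the main obstacle to be exactly this last step --- carefully propagating the errors of the $O(t)$ successive approximate linear-algebra operations and converting coefficient perturbations into $\dk$-closeness via anti-concentration --- rather than the essentially combinatorial idealized argument; this is the same kind of bookkeeping done in detail in \cite{DDS13:deg2count}, whose template I would follow.
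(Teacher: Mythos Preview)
Your argument is correct and follows essentially the same route as the paper: rotational equivalence gives item (1), the exit condition of the loop gives item (2) via the identification of $\tail_t(p_\ell)$ with $\res(q_\ell,\mathcal{A},\mathcal{B})$, and a per-index variance-drop budget bounds $t$. One point worth flagging: your variance-drop analysis is actually \emph{sharper} than the paper's. You use the condition $\lambda_{\max}^2 > \epsilon'\,\Var$ directly to get a multiplicative drop of $1-\Omega(\epsilon')$ and hence $t=O(k\ln(1/\eta)/\epsilon')$; the paper instead records only the weaker consequence $|\lambda_{\max}| \ge \epsilon'\sqrt{\Var}$, obtaining a drop of $1-\Omega((\epsilon')^2)$ and $t=O(k\ln(1/\eta)/(\epsilon')^2)$. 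Both bounds are within the lemma's stated $t \le k\ln(1/\eta)/(\epsilon')^2$, so this is a harmless (indeed favorable) discrepancy, not an error on your part. Your monotonicity claim ``taking a residue only deletes monomials \ldots\ so every $\Var[\tilde q_\ell]$ is non-increasing'' is correct but deserves a one-line justification: distinct degree-$\le 2$ monomials in orthonormal Gaussian variables are pairwise uncorrelated, so variance is additive over them. Finally, your sketch of the non-idealized case is more detailed than what the paper provides --- the paper simply defers to the general numerical-error discussion in its preliminaries and to \cite{DDS13:deg2count} --- so there is nothing to compare there.
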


\ifnum\confversion=1
In a real number model of computation, the ``Idealized Assumptions'' hold true and we obtain the ``idealized'' version of 
Lemma \ref{lem:change-basis}.  In a bit complexity model the ``Idealized Assumptions'' do not hold in that we cannot
exactly compute the desired quantities, and instead high-accuracy approximations must be used.
See the end of Section 2 (Preliminaries) of the full version for a discussion of approximation issues and error bounds.
\fi

\ifnum\confversion=0
\subsection{The {\bf Process-Polys} procedure.} \label{sec:processpolys}

In this subsection we describe and analyze the {\bf Process-Polys} procedure.
Our main result about this procedure is the following:

\begin{lemma} \label{lem:process-polys}
There is a deterministic procedure {\bf Process-Polys}
which runs in time $\poly(n, k, t, 1/\eps', 1/\eta)$ and has the following performance guarantee:
Given as input degree-2 polynomials $p_1(y),\dots,p_k(y)$ satisfying item (2)
of Lemma \ref{lem:change-basis}, an integer $0 \leq t \leq n$,
and a parameter $\eta$,
{\bf Process-Polys} outputs a vector $r=(r_1,\dots,r_k)$
of degree-2 polynomials over $\R^n$, and
a value $0 \leq k' \leq k$, such that $r,t,k'$ satisfy
the following properties:

\begin{enumerate}

\item ($r$ is as good as $p$ for the purpose of approximate counting:)
\[
\left|
\Pr_{y \sim N(0,1)^n}[\forall \ell \in [k],  r_\ell(y) \leq 0]
-
\Pr_{x \sim N(0,1)^n}[\forall \ell \in [k],  p_\ell(y) \leq 0]
\right| \leq O(\eps);
\]

\item For any restriction $\rho=(\rho_1,\dots,\rho_t) \in \R^n$ and
all $1 \leq \ell \leq k'$, the polynomial $r_\ell|_{\rho}$
has degree at most 1;

\ignore{
\item For all $1 \leq i \leq k'$, the polynomial
$\tail_t(r_i(y))$ consists only of a constant term;
}

\item For all $k' < \ell \leq k$, the polynomial $r_\ell(y)$ has
${\frac {\lambda_{\max}(\tail_t(r_\ell))^2}{\Var[\tail_t(r_\ell)]}}
\leq \epsa$;

\item For all $k' < \ell \leq k$, the polynomial $r_\ell(y)$ has
$\Var[\quadtail_t(r_\ell)] \geq \eta/2.$

\end{enumerate}

\end{lemma}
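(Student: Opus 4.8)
\textbf{Proof plan for Lemma~\ref{lem:process-polys}.}

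The plan is for {\bf Process-Polys} to compute, for each $\ell\in[k]$, the three quantities $\Var[\tail_t(p_\ell)]$, $\Var[\quadtail_t(p_\ell)]$ and $\lambda_{\max}(\tail_t(p_\ell))$ (all efficiently computable from the coefficients of $p_\ell$, the last one via an SVD), and then to partition $[k]$ as
\[
G_1 \eqdef \{\ell : \Var[\tail_t(p_\ell)]\le\eta\}\ \cup\ \{\ell : \Var[\quadtail_t(p_\ell)]<\eta/2\}, \qquad G_2 \eqdef [k]\setminus G_1,
\]
and (after reordering indices) to write $G_1=\{1,\dots,k'\}$. For $\ell\in G_2$ I would simply set $r_\ell\eqdef p_\ell$, and for $\ell\in G_1$ I would \emph{delete the quadratic part of the tail}, i.e.\ set $r_\ell\eqdef p_\ell-\quadtail_t(p_\ell)+\E_{y\sim N(0,1)^n}[\quadtail_t(p_\ell)(y)]$. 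All of this is deterministic and runs in time $\poly(n,k,t,1/\epsa,1/\eta)$; the dependence on $1/\epsa,1/\eta$ only enters in a bit-complexity model, where the comparisons against the thresholds $\eta,\eta/2,\epsa$ must be carried out at high enough precision, using a small conservative slack so that any borderline misclassification is harmless (as elsewhere in the paper we suppress this bookkeeping).

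Next I would verify Properties~2--4, which are almost immediate from the construction. For $\ell\in G_1$ we have $\quadtail_t(r_\ell)\equiv 0$, so by Remark~\ref{rem:restric} (and since $\head_t(p_\ell)|_\rho$ is an affine form in $y_{t+1},\dots,y_n$) the restricted polynomial $r_\ell|_\rho$ is affine for every $\rho=(\rho_1,\dots,\rho_t)$, which is Property~2. For $\ell\in G_2$ we have $r_\ell=p_\ell$, hence $\Var[\quadtail_t(r_\ell)]=\Var[\quadtail_t(p_\ell)]\ge\eta/2$ by the definition of $G_2$, giving Property~4; moreover $\Var[\tail_t(p_\ell)]>\eta$, so alternative (a) of Lemma~\ref{lem:change-basis}(2) fails and therefore alternative (b) must hold, i.e.\ $\lambda_{\max}(\tail_t(r_\ell))^2/\Var[\tail_t(r_\ell)]\le\epsa$, which is Property~3.

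The substantive step is Property~1. Since $r_\ell=p_\ell$ for all $\ell\in G_2$, a union bound reduces it to showing $\Pr_{y\sim N(0,1)^n}[\sign(r_\ell(y))\ne\sign(p_\ell(y))]\le O(\eps/k)$ for each $\ell\in G_1$. Writing $g_\ell\eqdef\quadtail_t(p_\ell)-\E[\quadtail_t(p_\ell)]$ we have $r_\ell=p_\ell-g_\ell$, where $g_\ell$ is a mean-zero degree-2 Gaussian polynomial with variance $\sigma_\ell^2\eqdef\Var[\quadtail_t(p_\ell)]\le\eta$ (using $\Var[\quadtail_t(p_\ell)]\le\Var[\tail_t(p_\ell)]$, valid by orthogonality of the quadratic and linear parts of the tail, together with the definition of $G_1$). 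Using $\{\sign(p_\ell)\ne\sign(p_\ell-g_\ell)\}\subseteq\{|p_\ell|\le\lambda\}\cup\{|g_\ell|>\lambda\}$, the Carbery--Wright bound (Theorem~\ref{thm:cw}, with $\Var[p_\ell]=1$ as guaranteed by {\bf Change-Basis}) gives $\Pr[|p_\ell|\le\lambda]\le O(\sqrt\lambda)$, and the degree-2 concentration bound (Theorem~\ref{thm:dcb}) gives $\Pr[|g_\ell|>\lambda]\le 2e^{-\Omega(\lambda/\sigma_\ell)}$ for $\lambda/\sigma_\ell>e^2$ (if $\sigma_\ell=0$ there is nothing to prove). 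Choosing $\lambda=\Theta(\sigma_\ell\log(1/\sigma_\ell))$ with a large enough absolute constant makes the second term at most $\sigma_\ell^2$ and leaves a first term of $O(\sqrt{\sigma_\ell\log(1/\sigma_\ell)})$; since $\eta=(\eps/k)^4/(\log(k/\eps))^2$ we have $\sigma_\ell\le\sqrt\eta=(\eps/k)^2/\log(k/\eps)$ and $\log(1/\sigma_\ell)=O(\log(k/\eps))$, so this bound is $O(\eps/k)$. Summing over the at most $k$ indices of $G_1$ gives Property~1.

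I expect the main obstacle to be getting the partition $(G_1,G_2)$ exactly right so that all four conclusions hold simultaneously: alternative (a) of Lemma~\ref{lem:change-basis}(2) controls $\Var[\tail_t(p_\ell)]$, whereas Property~4 concerns $\Var[\quadtail_t(p_\ell)]$, so an index may fail alternative (b) and yet have a quadratic tail of non-negligible variance --- which is precisely why such indices must be swept into $G_1$, and why the clean modification there is to delete the quadratic tail (making $r_\ell|_\rho$ affine) rather than to replace the whole tail by its mean. Once the partition is fixed, Property~1 is routine but leans essentially on the fact that $\eta$ was chosen as small as $(\eps/k)^4/\polylog(k/\eps)$, exactly so that a variance-$\le\eta$ additive perturbation of a variance-$1$ degree-$2$ Gaussian polynomial flips its sign with probability only $O(\eps/k)$.
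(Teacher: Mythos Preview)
Your proposal is correct and follows essentially the same plan as the paper's proof: partition the indices according to whether $\Var[\tail_t(p_\ell)]\le\eta$ or $\Var[\quadtail_t(p_\ell)]$ is small, drop the quadratic tail on those indices so that restrictions become affine, leave the remaining polynomials untouched, and prove Property~1 via Carbery--Wright anti-concentration for $p_\ell$ together with the degree-$2$ tail bound for the small-variance perturbation. The parameter bookkeeping matches as well.

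There is one genuine (if small) simplification in your version compared to the paper. The paper treats the two subcases of $G_1$ differently: for indices with $\Var[\tail_t(p_\ell)]\le\eta$ it replaces the \emph{entire} tail by its mean, while for indices with small $\Var[\quadtail_t(p_\ell)]$ it only subtracts $\quadtail_t(p_\ell)$; this leads to two separate claims (Claims~\ref{clm:remove-low-var} and~\ref{clm:remove-small-quad-part}). You instead apply the single modification $r_\ell=p_\ell-(\quadtail_t(p_\ell)-\E[\quadtail_t(p_\ell)])$ to every $\ell\in G_1$, justified by the observation that $\Var[\quadtail_t(p_\ell)]\le\Var[\tail_t(p_\ell)]$ via orthogonality of the quadratic and linear parts of the Gaussian tail. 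This collapses the two claims into one and makes the argument slightly cleaner; the paper's split is unnecessary but harmless.
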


(Looking ahead, in Section \ref{sec:transform-proof}
Items (2)--(4) of Lemma \ref{lem:process-polys}
will be used to show that for most restrictions
$\rho=(\rho_1,\dots,\rho_t)$,
the distribution of $(r_1|\rho,\dots,r_k|\rho)$
is close to the distribution of a multivariate Gaussian with the right
mean and covariance.
Item (2) handles polynomials $r_1,\dots,r_{k'}$ and
Items (3) and (4) will let us use Theorem \ref{thm:small-dk} for the
remaining polynomials.)

\begin{framed}
\noindent {\bf Process-Polys}

\smallskip

\noindent {\bf Input:}  $k$-tuple $p=(p_1,\dots,p_k)$ of degree-2
polynomials $p_\ell(y_1,\dots,y_n)$ such that $\Var_{y \sim N(0,1)^n}
[p_\ell(y)]=1$; integer $t \geq 0$; parameter $\eta>0$.

\noindent {\bf Output:} $k$-tuple $r=(r_1,\dots,r_k)$ of degree-2
polynomials $r_\ell(y_1,\dots,y_n)$ and integer $0 \leq k' \leq k$.

\medskip

\begin{enumerate}

\item Reorder the polynomials $p_1(y),\dots,p_k(y)$ so that
$p_1,\dots,p_{k_1}$ are the ones that have $\Var[\tail_t(p_\ell)] \leq \eta$.
For each $\ell \in [k_1]$,
define $r_\ell(y) = \head_t(p_\ell(y)) + \E[\tail_t(p_\ell)]$.

\item Reorder the polynomials $p_{k_1+1}(y),\dots,p_{k}(y)$ so that
$p_{k_1+1}(y),\dots,p_{k_2}(y)$ are the ones that have
$\Var[\quadtail_t(p_{\ell}(y))] \leq \eta/2.$
For each $\ell \in [k_1+1,\dots,k_2]$,
define $r_\ell(y) = p_\ell(y) - \quadtail_t(p_\ell(y)).$

\item For each $\ell \in [k_2+1,\dots,k]$ define $r_\ell(y)=p_\ell(y).$
Set $k'=k_2$ and output $(r_1,\dots,r_k),k'.$
\end{enumerate}

\end{framed}

Recall that for $1 \leq \ell \leq k$ each polynomial $p_\ell$ is of the form
\[
p_\ell(y) = \sum_{1 \leq i \leq j \leq n}
\alpha^{(\ell)}_{ij} y_iy_j + \sum_{1 \leq i \leq n} \beta^{(\ell)}_i y_i
+ c^{(\ell)}.
\]
Because of Step 2 of {\bf Process-Polys},
for $1 \leq \ell \leq k_1$ we have that each polynomial
$r_{\ell}$ is of the form
\[
r_\ell(y) = \sum_{1 \leq i \leq t, j \geq i}
\alpha^{(\ell)}_{ij} y_iy_j + \sum_{1 \leq i \leq t} \beta^{(\ell)}_i y_i
+ c^{(\ell)},
\]
which gives part (2) of the lemma for $1 \leq \ell \leq k_1$.
Because of Step 3, for $k_1+1 \leq \ell \leq k_2$ we have that
each polynomial $r_{\ell}$ is of the form
\[
r_\ell(y) = \sum_{1 \leq i \leq t, j \geq i}
\alpha^{(\ell)}_{ij} y_iy_j + \sum_{1 \leq i \leq n} \beta^{(\ell)}_i y_i
+ c^{(\ell)},
\]
which gives part (2) of the lemma for $k_1+1 \leq \ell \leq k_2=k'.$
For $k_2 + 1 \leq \ell \leq k$ each polynomial $r_\ell(y)$
is of the form
\[
r_\ell(y) = \sum_{1 \leq i \leq j \leq n}
\alpha^{(\ell)}_{ij} y_iy_j + \sum_{1 \leq i \leq n} \beta^{(\ell)}_i y_i
+ c^{(\ell)}
\quad \quad \text{with~}
\Var[\quadtail_t(p_{\ell}(y))] > \eta/2,
\]
which gives part (4) of the lemma.

Part (3) of the lemma
follows immediately from item (2) of Lemma \ref{lem:change-basis}.
Thus the only part which remains to be shown is part (1).

We first deal with the polynomials $r_1,\dots,r_{k_1}$ using
the following simple claim:
\begin{claim}\label{clm:remove-low-var}
For each $\ell \in [k_1]$ we have that the $r_\ell(x)$ defined in Step 1
of {\bf Process-Polys} satisfies
$$\Pr_{x \sim N(0,1)^n}[\sign(r_\ell(x) \neq \sign(p_\ell(x))]
\le O(\sqrt{\log(1/\eta)} \cdot \eta^{1/4}).$$
\end{claim}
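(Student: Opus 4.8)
The plan is to note that $p_\ell$ and $r_\ell$ differ only by the centered tail of $p_\ell$, which has small variance, and then to trade off a concentration bound for this difference against an anti-concentration bound for $p_\ell$ at $0$. Set $E_\ell(y) \eqdef p_\ell(y) - r_\ell(y) = \tail_t(p_\ell(y)) - \E[\tail_t(p_\ell)]$, which is a degree-$2$ Gaussian polynomial with $\E[E_\ell]=0$ and $\Var[E_\ell] = \Var[\tail_t(p_\ell)] \le \eta$, where the inequality holds because $\ell \in [k_1]$ and so, by the reordering in Step~1 of {\bf Process-Polys}, $p_\ell$ is one of the polynomials with $\Var[\tail_t(p_\ell)] \le \eta$. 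The elementary observation driving the argument is that if $\sign(p_\ell(x)) \ne \sign(r_\ell(x))$ then $p_\ell(x)$ and $r_\ell(x)$ lie on opposite sides of $0$, which forces $|p_\ell(x)| \le |p_\ell(x)-r_\ell(x)| = |E_\ell(x)|$. Hence, for any threshold $\beta>0$,
\[
\Pr_{x \sim N(0,1)^n}[\sign(p_\ell(x)) \ne \sign(r_\ell(x))] \le \Pr_x[\,|p_\ell(x)| \le |E_\ell(x)|\,] \le \Pr_x[\,|E_\ell(x)| > \beta\,] + \Pr_x[\,|p_\ell(x)| \le \beta\,].
\]

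To bound the first term I would apply the degree-$2$ Chernoff bound (Theorem~\ref{thm:dcb}) to $E_\ell$: taking $\beta = t_0\sqrt{\eta}$ for a parameter $t_0$ to be chosen, and using $\Var[E_\ell]\le\eta$, we have $\beta \ge t_0\sqrt{\Var[E_\ell]}$, so as long as $t_0 > e^2$ the bound gives $\Pr_x[\,|E_\ell(x)| > \beta\,] \le 2e^{-\Omega(t_0)}$ (the degenerate case $\Var[E_\ell]=0$ is trivial, since then $E_\ell\equiv 0$ and the signs agree everywhere). To bound the second term I would apply the Carbery--Wright anti-concentration bound (Theorem~\ref{thm:cw}) to $p_\ell$, which is not identically zero since $\Var[p_\ell]=1$; with $d=2$ and $\theta=0$ this yields $\Pr_x[\,|p_\ell(x)| \le \beta\,] \le O(\sqrt{\beta}) = O(\sqrt{t_0}\cdot\eta^{1/4})$.

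Finally I would choose $t_0 = C\log(1/\eta)$ for a large enough absolute constant $C$, so that $2e^{-\Omega(t_0)} = 2\eta^{\Omega(C)} = O(\eta^{1/4})$; combining the two estimates gives $\Pr_x[\sign(p_\ell(x)) \ne \sign(r_\ell(x))] \le O(\sqrt{\log(1/\eta)}\cdot\eta^{1/4})$, as claimed. (For $\eta$ bounded away from $0$ the claimed bound is $\Omega(1)$, hence trivially true, which is exactly what permits assuming $t_0 = C\log(1/\eta) > e^2$ in the interesting regime.) This is a routine argument; the only mildly delicate point is the choice of $\beta$ that balances the Chernoff tail against the width of the Carbery--Wright anti-concentration window, together with checking that the hypothesis $t_0 > e^d$ of Theorem~\ref{thm:dcb} is satisfied.
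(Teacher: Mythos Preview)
Your proof is correct and follows essentially the same approach as the paper: both argue that a sign disagreement forces $|p_\ell(x)|$ to be at most the centered tail $|E_\ell(x)|$, then split on whether $|E_\ell(x)|$ exceeds a threshold $\beta = \Theta(\sqrt{\eta}\log(1/\eta))$, bounding one event by the degree-$2$ Chernoff tail (Theorem~\ref{thm:dcb}) and the other by Carbery--Wright (Theorem~\ref{thm:cw}) using $\Var[p_\ell]=\Theta(1)$. The only cosmetic difference is that the paper phrases the disjunction slightly less explicitly and pauses to justify $\Var[p_\ell]=\Theta(1)$ via closeness to $q_\ell$, whereas you take it from the stated input assumption of {\bf Process-Polys}.
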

\begin{proof}
Recall that for $\ell \in [k_1]$ we have 
$r_\ell = \head_t(p_\ell) + \E[\tail_t(p_\ell)]$
while $p_\ell = \head_t(p_\ell) + \tail_t(p_\ell).$  Hence 
$\sign(r_\ell(x) \neq \sign(p_\ell(x))$ only if for some $s>0$ we have both
\[
|\head_t(p_\ell(x)) + \tail_t(p_\ell(x))| \leq s \sqrt{\eta}
\quad \text{and} \quad
|\tail_t(p_\ell(x)) - \E[\tail_t(p_\ell(x))]| > s \sqrt{\eta}.
\]

To bound the probability of the first event, 
recalling that $\dk(p_\ell,q_\ell) \leq O(\epsa)$ 
(by part (1) of Lemma \ref{lem:change-basis})
and that $\Var[q_\ell]=1$, it easily follows 
that $\Var[p_\ell]=\Theta(1)$.
Hence the Carbery-Wright inequality (Theorem \ref{thm:cw})
implies that
\begin{equation} \label{eq:parsnip}
\Pr_x[|\head_t(p_\ell(x)) + \tail_t(p_\ell(x))]|] \leq s \sqrt{\eta}
\leq O(s^{1/2} \eta^{1/4}).
\end{equation}

For the second event, we recall that $\Var[\tail_t(p_\ell)]\leq \eta$,
and hence for $s>e$ we may apply 
Theorem \ref{thm:dcb} to conclude that 
\begin{equation} \label{eq:radish}
\Pr_x[|\tail_t(p_\ell(x)) - \E[\tail_t(p_\ell(x))]| > s \sqrt{\eta}]
\leq O(e^{-s}).
\end{equation}
Choosing $s=\Theta(\log(1/\eta))$ we get that the RHS of (\ref{eq:parsnip})
and (\ref{eq:radish}) are both $\Theta(\sqrt{\log(1/\eta)} \cdot \eta^{1/4})$,
and the claim is proved.
\end{proof}

It remains to handle the polynomials $r_{k_1+1},\dots,r_{k_2}.$
For this we use the following claim:

\begin{claim}\label{clm:remove-small-quad-part}
For each $\ell \in [k_1+1,\dots,k_2]$
we have that the $r_\ell(x)$ defined in Step 2
of {\bf Process-Polys} satisfies
$$\Pr_{x \sim N(0,1)^n}[\sign(r_\ell(x) \neq \sign(p_\ell(x))]
\leq O(\sqrt{\log(1/\eta)} \cdot \eta^{1/4}).$$
\end{claim}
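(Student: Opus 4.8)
The plan is to follow the proof of Claim~\ref{clm:remove-low-var} essentially verbatim, with $\tail_t$ replaced by $\quadtail_t$. Recall that for $\ell\in[k_1+1,\dots,k_2]$ the polynomial $r_\ell$ agrees with $p_\ell$ except that the quadratic part of the $t$-tail is removed (with its mean restored, exactly as the $\E[\tail_t(p_\ell)]$ term is restored in Step~1), so that $p_\ell(x)-r_\ell(x)=\quadtail_t(p_\ell(x))-\E[\quadtail_t(p_\ell)]$ is a mean-zero degree-$2$ Gaussian polynomial whose variance equals $\Var[\quadtail_t(p_\ell)]\le\eta/2$ --- this last bound being precisely the property defining the second group of polynomials in Step~2. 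The first step is a purely deterministic observation: if $\sign(r_\ell(x))\neq\sign(p_\ell(x))$ then $0$ lies (weakly) between $r_\ell(x)$ and $p_\ell(x)$, whence $|p_\ell(x)|\le|p_\ell(x)-r_\ell(x)|=|\quadtail_t(p_\ell(x))-\E[\quadtail_t(p_\ell)]|$. Consequently, for every $s>0$,
\[
\{\sign(r_\ell(x))\neq\sign(p_\ell(x))\} \subseteq \{|p_\ell(x)|\le s\sqrt{\eta}\} \cup \{|\quadtail_t(p_\ell(x))-\E[\quadtail_t(p_\ell)]|> s\sqrt{\eta}\}.
\]

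Next I would bound the two events on the right-hand side. For the first, I would record that $\Var[p_\ell]=\Theta(1)$: in the idealized model $p_\ell$ and $q_\ell$ are identically distributed with $\Var[q_\ell]=1$, and in general $\dk(p_\ell,q_\ell)=O(\epsa)$ by Lemma~\ref{lem:change-basis}, so the variance is preserved up to lower-order terms, exactly as in the proof of Claim~\ref{clm:remove-low-var}. The Carbery--Wright anti-concentration inequality (Theorem~\ref{thm:cw}) applied with $d=2$ and $\theta=0$ then gives $\Pr_x[|p_\ell(x)|\le s\sqrt{\eta}]=O(s^{1/2}\eta^{1/4})$. For the second event, since $\quadtail_t(p_\ell)-\E[\quadtail_t(p_\ell)]$ has mean $0$ and variance at most $\eta/2$, the degree-$2$ Chernoff bound (Theorem~\ref{thm:dcb}) applied with deviation parameter $\Theta(s)$ yields $\Pr_x[|\quadtail_t(p_\ell(x))-\E[\quadtail_t(p_\ell)]|> s\sqrt{\eta}]=O(e^{-\Omega(s)})$. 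Choosing $s=\Theta(\log(1/\eta))$ makes the first bound $O(\sqrt{\log(1/\eta)}\cdot\eta^{1/4})$ and the second bound $O(\eta^{\Omega(1)})\le O(\sqrt{\log(1/\eta)}\cdot\eta^{1/4})$; a union bound then gives the claim.

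I do not expect any real obstacle here, since this is the exact counterpart of Claim~\ref{clm:remove-low-var}. The one point that deserves care is the constant term: what must actually be subtracted from $p_\ell$ to keep the sign-change probability under control is the \emph{centered} polynomial $\quadtail_t(p_\ell)-\E[\quadtail_t(p_\ell)]$, so that $p_\ell-r_\ell$ has mean zero, paralleling the restoration of $\E[\tail_t(p_\ell)]$ in Step~1; one should read the definition of $r_\ell$ in Step~2 with this centering, since $\E[\quadtail_t(p_\ell)]$ (the trace of the residual quadratic form, which can be of order $\sqrt{n\eta}$) need not be small even though $\Var[\quadtail_t(p_\ell)]\le\eta/2$. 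Everything else --- the deterministic containment, the Carbery--Wright estimate, the degree-$2$ concentration estimate, and the optimization over $s$ --- is routine.
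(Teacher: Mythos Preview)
Your proof is correct and follows the paper's argument essentially verbatim. The centering point you raise at the end is a genuine correction rather than a mere nicety: with the paper's literal Step~2 definition $r_\ell = p_\ell - \quadtail_t(p_\ell)$, the difference $p_\ell - r_\ell = \quadtail_t(p_\ell)$ has mean $\sum_{i>t}\alpha^{(\ell)}_{ii}$, which is bounded only by $O(\sqrt{n\eta})$ via Cauchy--Schwarz and can be made large, and in that case the claim itself is false (e.g.\ take $p_\ell$ to be essentially $y_1$ plus a small-variance but large-mean positive quadratic tail). The paper's own proof inserts an $\E[\quadtail_t(p_\ell)]$ shift into its two events, but the stated containment does not actually follow from the uncentered definition either; your centered reading of $r_\ell$ is what makes the union-bound argument work, and it is harmless downstream since adding a constant to $r_\ell$ preserves Item~(2) of Lemma~\ref{lem:process-polys} and merely shifts the mean vector $\mu(r|_\rho)$ in Theorem~\ref{thm:transform}.
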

\begin{proof}
The proof is similar to Claim \ref{clm:remove-low-var}.
Recall that for $\ell \in [k_1+1,k_2]$ we have 
$r_\ell = p_\ell - \quadtail_t(p_\ell)$.  Hence
$\sign(r_\ell(x)) \neq \sign(p_\ell(x))$ only if for some $s>0$ we have both
\[
|p_\ell(x) + \E[\quadtail_t(p_\ell)]| \leq s \sqrt{\eta}
\quad \text{and} \quad
|\quadtail_t(p_\ell(x)) - \E[\quadtail_t(p_\ell)]| > s\sqrt{\eta}.
\]
For the first inequality, as above we have that
$\Var[p_\ell]=\Theta(1)$ so as above we get that
$\Pr_x[|p_\ell(x) + \E[\quadtail_t(p_\ell)]|  \leq s \sqrt{\eta}]
\leq O(s^{1/2}\eta^{1/4}).$  For the second
inequality we have $\Var[\quadtail_t(p_\ell)] \leq \eta/2$ so 
as above we get that 
$\Pr_x[|\quadtail_t(p_\ell) - \E[\quadtail_t(p_\ell)]| > s
\sqrt{\eta}] \leq O(e^{-s}).$  Choosing $s = \Theta(\log(1/\eta))$
as before the claim is proved.

\end{proof}

Recalling that $\eta = \Theta((\eps/k)^4/(\log(k/\eps))^2)$,
Claims \ref{clm:remove-low-var} and
\ref{clm:remove-small-quad-part}, together with a union bound,
give Lemma \ref{lem:process-polys}.
\qed

\subsection{Proof of Theorem \ref{thm:transform}.}
\label{sec:transform-proof}
Given what we have done so far in this section with the
{\bf Change-Basis} and {\bf Process-Polys} procedures,
the proof of Theorem \ref{thm:transform} is simple.
Item (1) of Lemma \ref{lem:change-basis} and Item (1) of Lemma
\ref{lem:process-polys} immediately give part (i) of Theorem
\ref{thm:transform}.  For part (ii),
\ignore{recall Lemma \ref{lem:process-polys} and }consider 
any restriction  $\rho=(\rho_1,\dots,\rho_t) \in \R^t$
fixing variables $y_1,\dots,y_t$ of the polynomials $r_1,\dots,r_k$.

We begin by observing that if the value $k'$ returned by {\bf Process-Polys}
equals $k$, then Item (2) of Lemma \ref{lem:process-polys}
ensures that for all $1 \leq \ell \leq k$ the restricted
polynomial $r_\ell|_\rho(y)$ has degree at most 1.
In this case the distribution of $(r_1|_\rho(y),\dots,r_k|_\rho(y))$ for
$y \sim N(0,1)^n$ is precisely that of a multivariate Gaussian over $\R^k$.  Since such
a multivariate Gaussian is completely determined by its mean and covariance matrix,
in this case we actually get that $\dk(r|_\rho,N(\mu(r|_\rho),\Sigma(r_\rho)))=0.$
So for the rest of the argument we may assume that $k' < k$, and consequently that
there is at least one polynomial $r_k$ that has 
${\frac {\lambda_{\max}(\tail_t(r_k))^2}{\Var[\tail_t(r_k)]}}
\leq \epsa$ and $\Var[\quadtail_t(r_k)] \geq \eta/2.$

First suppose that no restricted polynomial
$r_\ell|_\rho$ has $\Var[r_\ell|_\rho] > 1$.
Item (2) of Lemma \ref{lem:process-polys}
ensures that for $1 \leq \ell \leq k'$ the restricted
polynomial $r_\ell|_\rho(y)$ has degree 1 (note that in terms
of Theorem \ref{thm:small-dk}, this means
that the maximum magnitude of any eigenvalue of $r_\ell|_\rho$
is zero).  Now consider any $\ell \in [k'+1,k].$  Recalling 
Remark \ref{rem:restric}, we have that the polynomial $r_\ell|_\rho$
equals $\tail_t(r_\ell) + L$ for some affine form $L$.  Hence
\[
|\lambda_{\max}(r_\ell|_\rho)| =
|\lambda_{\max}(\tail_t(r_\ell))|
\leq \sqrt{{\Var[\tail_t(r_\ell)]} \cdot
\epsa} \leq O(\sqrt{\epsa}).
\]
where the first inequality is by Item (3).  The second inequality holds because
for $\ell \in [k'+1,k]$, the polynomial $r_\ell$ output by
{\bf Process-Polys} is simply $p_\ell$, so we have
$\Var[\tail_t(r_\ell)] \leq \Var[p_\ell] \leq \E[p_\ell^2].$
As in the proof of 
Claim \ref{clm:remove-low-var} we have that $\E[p_\ell^2] = O(1)$, giving the
second inequality above.
\ignore{Item (3) ensures
that for $k' < \ell \leq k$ the restricted polynomial $r_\ell|_\rho(y)$ has
${\frac {\lambda_{\max}(r_\ell|_\rho)^2}{\Var[r_\ell|_{\rho}]}} \leq
\epsa$, and hence
$|\lambda_{\max}(r_\ell|_\rho)| \leq \sqrt{{\Var[r_\ell|_{\rho}]} \cdot
\eps^2} \leq \eps$.}
Item (4) ensures that
that for $\ell  \in [k'+1,k]$ we have $\Var[r_\ell|_\rho(y)] =
\Var[\tail_t(r_\ell) + L] \geq
\Var[\quadtail_t(r_{\ell}(y))] \geq \eta/2.$
Thus we may apply Theorem \ref{thm:small-dk}
and conclude that the distribution of $(r_1|\rho,\dots,r_k|\rho)$
is $O(k^{2/3}\epsa^{1/12}/\eta^{1/6})$-close 
(i.e. $O(\eps)$-close) in $\dk$
to the distribution of the appropriate multivariate Gaussian, as claimed
in the theorem. 

Finally, consider the case that some restricted polynomial
$r_\ell|_\rho$ has $\Var[r_\ell|_\rho] > 1$.
In this case rescale each such restricted polynomial
$r_\ell|_\rho$ to reduce its variance down to 1; let
$\tilde{r}_1|_\rho,\dots,\tilde{r}_k|_\rho$ be the
restricted polynomials after this rescaling.
As above for $1 \leq \ell \leq k'$ we have that each restricted polynomial
$\tilde{r}_\ell|_\rho$ has $\lambda_{\max}(\tilde{r}_\ell|_\rho)=0$,
so consider any $\ell \in [k'+1,k].$
The rescaled polynomials $\tilde{r}_\ell$ satisfy
$\tilde{r}_\ell|_\rho = \tail_t(\tilde{r}_\ell) + \tilde{L}$, and we have
\[
{\frac {\lambda_{\max}(\tail_t(\tilde{r}_\ell))^2}
{\Var[\tail_t(\tilde{r}_\ell)]}} =
{\frac {\lambda_{\max}(\tail_t(r_\ell))^2}
{\Var[\tail_t(r_\ell)]}} \leq \epsa,
\]
so we get
\[
|\lambda_{\max}(\tilde{r}_\ell|_\rho)| = 
|\lambda_{\max}(\tail_t(\tilde{r}_\ell|_\rho))| \leq
\sqrt{\Var[\tail_t(\tilde{r}_\ell)] \cdot \epsa} \leq
\sqrt{\Var[\tail_t(r_\ell)] \cdot \epsa} \leq O(\sqrt{\epsa}),
\]
where for the penultimate inequality we recall that $\tilde{r}_\ell$ is
obtained by scaling $r_\ell$ down.  By assumption we have that
some $\ell$ has $\Var[\tilde{r}_\ell|_\rho]=1$, so we can 
apply Theorem \ref{thm:small-dk} and conclude that 
\[
\dk(\tilde{r}_\ell|_\rho,
N(\mu(\tilde{r}|_\rho),\Sigma(\tilde{r}|_\rho)) \leq O(k^{2/3} \epsa^{1/12}).
\]
Un-rescaling to return to $r_\ell$ from $\tilde{r}_\ell$, we get that
\[
\dk(r_\ell|_\rho,
N(\mu(r|_\rho),\Sigma(r|_\rho)) \leq O(k^{2/3} \epsa^{1/12}) = o(\eps),
\]
and Theorem \ref{thm:transform} is proved.
\qed

\ignore{
\newpage
\subsection{The non-idealized version of the algorithm}

Let us consider Idealized Assumption \#1.  It is an idealized assumption
because even if we are given an orthonormal set ${\cal A}=
\{L_i(x)\}_{i=1,\dots,r}$ where each $L_i$ has rational coefficients, then
it may not be possible to exactly represent $L_{k+1}(x),\dots,L_n(x)$
with rational coefficients (recall that Gram-Schmidt orthogonalization
involves normalizing, i.e. dividing by a square root, to obtain
unit vectors).
Since we are going to apply the procedure repeatedly to build up
our set ${\cal A}$ one vector at a time,
what we really need is a robust procedure which, given a set
${\cal A}'=\{L'_1(x),\dots,L'_k(x)\}$ of almost orthonormal linear
forms (with rational coefficients), outputs a set ${\cal B}'=
\{L'_j(x)\}_{j=r+1,\dots,n}$ such that ${\cal A} \cup {\cal B}$
is very close to being a full orthonormal basis.

\ignore{
Note first that if we were
given a full basis of orthonormal linear forms $L_1(x)=v^{(1)} \cdot x,
\dots,L_n(x)=v^{(n)} \cdot x$ with rational coefficients,
it is possible to efficiently exactly compute the $U=(u_{ij})$ matrix
such that $x_i = \sum_{j=1}^n u_{ij}L_j(x)$ and thus, given
$p(x)$ as in (\ref{eq:qorig}), to compute the $\alpha_{ij}$ and $\beta_j$
coefficients of (\ref{eq:qnew}) and thus to exactly compute
$\proj(p,\{L_1,\dots,L_k\})$
and
$\res(p,\{L_1,\dots,L_k\})$.
However, if we are only given an orthonormal set ${\cal A}=
\{L_1(x),\dots,L_k(x)\}$ where each $L_i$ has rational coefficients, then
it may not be possible to exactly represent $L_{k+1}(x),\dots,L_n(x)$
with rational coefficients (recall that Gram-Schmidt orthogonalization
involves normalizing, i.e. dividing by a square root, to obtain
unit vectors).
}

\bigskip

Next let us consider Idealized Assumption \#2.
}

\fi

\section{Proof of Theorem \ref{thm:maingauss-informal}:
Efficient deterministic approximate counting using
transformed degree-2 Gaussian polynomials} \label{sec:gauss}
\ifnum\confversion=0
Throughout this section we focus on counting intersections of degree-$2$ PTFs.
The proof for an arbitrary $k$-junta follows by expressing it as a disjunction of $\mathrm{AND}_k$
functions and a union bound. \fi

Given Theorem \ref{thm:transform}, there is a natural approach for the counting algorithm {\bf Count-Gauss}, corresponding to the following steps:

\begin{framed}
\noindent {\bf Count-Gauss}

\smallskip

\noindent {\bf Input:}  $k$-tuple $p=(p_1,\dots,p_k)$ of degree-2
polynomials $p_\ell(y_1,\dots,y_n)$, $\ell \in [k]$, such that $\Var_{y \sim N(0,1)^n}
[p_\ell(y)]=1$; parameter $\eps>0$.

\noindent {\bf Output:} An $\pm O(\eps)$ additive approximation to the probability $\Pr_{x \sim N(0,1)^n} [\forall \ell \in [k], p_\ell(x) \ge 0].$

\medskip

\begin{enumerate}

\item Run {\bf Transform}$\left(p, \eps \right)$
to obtain a $k$-tuple of polynomials $r=(r_1,\dots,r_k)$ each of unit variance and a 
value $0 \leq t \leq O(k \ln(1/\eps)/\eps^2)$.  

\item Deterministically construct a product distribution $D^t = \otimes_{i=1}^t D_i$ supported
on a set $S \subseteq \R^t$ of cardinality ${(kt/\eps)^{O(t)}}$ such that a $t$-tuple
$\tau=(\tau_1,\dots,\tau_t) \in \R^t$ drawn from $D^t$ 
is ``close'' to a draw of $\rho=(\rho_1,\dots,\rho_t)$ from $N(0,1)^t$. In particular, $D_i = D$ for all $i \in [t]$,
where $D$ is a sufficiently accurate discrete approximation to $N(0,1).$ (See the proof of Lemma~\ref{lem:rounding}
for a precise description of the construction and guarantee.)

\item For each $\tau \in S$, 
simplify the polynomials $r_1,\dots,r_k$ by applying
the restriction to obtain $(r_1|_\tau,\dots,r_k|_\tau)$,
and compute the vector of means $\mu(r_\tau)$
and matrix of covariances $\Sigma(r_\tau)$.  

\item Finally, for each $\tau \in S$,
deterministically compute a $\pm \eps$-accurate additive 
approximation to the probability 
$\Pr_{y \sim N(\mu(r_\tau),\Sigma(r_\tau))} [ \forall i \in [k],  \ y_i \geq 0]$; let $p_\tau$ be the
value of the approximation that is computed.
Average all the values of $p_\tau$ obtained for each value $\tau \in S$ , and return the average.

\end{enumerate}
\end{framed}

Recall that the $k$-vector of polynomials $r = (r_1. \ldots, r_k)$ constructed in Step~1 satisfies
the statement of Theorem~\ref{thm:transform}. In particular, for every restriction of the first $t$ variables, the restricted
polynomials are $\eps$-close in Kolmogorov distance to a Gaussian with the corresponding mean and covariance matrix.
Hence, for each possible restriction $\rho$ of these $t$ variables, the probability that the restricted intersection of polynomials
is satisfied is $\eps$-close to the quantity $\Pr_{y \sim N(\mu(r_\rho),\Sigma(r_\rho))} [ \forall i \in [k],  \ y_i \geq 0]$.
Hence, if we could take ``all'' possible restrictions of these $t$ variables,  compute the corresponding probabilities and ``average'' the outcomes,
we would end up with an $\eps$-approximation to the desired probability. To achieve this efficiently, in Step~2, we construct
a sufficiently accurate discrete approximation to the normal distribution $N(0,1)^t$.



We have the following lemma:

\begin{lemma} \label{lem:rounding}
Let $r_{\ell}:\R^n \to \R$, $\ell \in [k]$,  be $k$ unit variance degree-$2$ polynomials. 
There exists a discrete distribution $D^t = \otimes_{i=1}^t D_i$ supported
on ${(kt/\eps)^{O(t)}}$ points that can be constructed explicitly
in output polynomial time such that 
$$ \left|
\Pr_{x \sim N^t(0,1), y \sim N^{n-t}(0,1)} \left[ \forall \ell \in [k],   r_{\ell}(x, y) \ge 0 \right] -  
\Pr_{\tilde{x} \sim D^t, y \sim N^{n-t}(0,1)} \left[ \forall \ell \in [k],  r_{\ell}(\tilde{x}, y) \ge 0 \right] 
\right| \leq O(\eps).$$
\end{lemma}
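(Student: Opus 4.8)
The plan is to discretize the first $t$ Gaussian coordinates one coordinate at a time, via a hybrid argument, using in each coordinate a single discrete distribution $D$ that is close to $N(0,1)$ in one-dimensional Kolmogorov distance. I would set $M = \lceil t(k+1)/\eps\rceil$ and let $D$ place mass $1/M$ on each of the points $\Phi^{-1}\big((2j-1)/(2M)\big)$, $j \in [M]$, where $\Phi$ is the standard Gaussian CDF. These $M$ points are computable to any prescribed accuracy in time $\poly(M)$, so $D$ — and hence the product $D^t = \otimes_{i=1}^t D$, supported on $M^t = (kt/\eps)^{O(t)}$ points — is constructible in output-polynomial time. A one-line computation gives $\dk\big(D, N(0,1)\big) \le 1/(2M)$: for any $\theta \in \R$ we have $\Pr_{z\sim D}[z \le \theta] = \lfloor M\Phi(\theta) + 1/2\rfloor / M$, which is within $1/(2M)$ of $\Phi(\theta)$. (Any comparably fine explicit discretization of $N(0,1)$ would serve equally well.)

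Next, define hybrid distributions $H_i$ on $\R^t$ for $i = 0, 1, \dots, t$, where under $H_i$ the first $i$ coordinates are i.i.d.\ from $D$ and the last $t - i$ are i.i.d.\ from $N(0,1)$; thus $H_0 = N(0,1)^t$ and $H_t = D^t$, and the two probabilities in the lemma are precisely those obtained by drawing the first $t$ coordinates from $H_0$ and from $H_t$ respectively, with the remaining $n-t$ coordinates always from $N(0,1)$. By the triangle inequality it suffices to bound, for each $i \in [t]$,
\[
\Big| \Pr_{x \sim H_{i-1},\, y \sim N(0,1)^{n-t}}\!\big[\forall \ell\!: r_\ell(x,y) \ge 0\big] - \Pr_{x \sim H_{i},\, y \sim N(0,1)^{n-t}}\!\big[\forall \ell\!: r_\ell(x,y) \ge 0\big] \Big| \le \eps/t .
\]
Since $H_{i-1}$ and $H_i$ have identical marginals on every coordinate except the $i$-th (which is $N(0,1)$ under $H_{i-1}$ and $D$ under $H_i$), I would condition on all coordinates of $(x,y)$ other than $x_i$ and reduce to the following univariate statement: for every fixing of the other coordinates, writing $r_\ell^{(i)}(z)$ for the resulting univariate polynomial in $z := x_i$ and $S := \{z \in \R : \forall \ell \in [k],\ r_\ell^{(i)}(z) \ge 0\}$, we have $\big|\Pr_{z \sim N(0,1)}[z \in S] - \Pr_{z \sim D}[z \in S]\big| \le \eps/t$; taking expectations over the conditioning and then summing over $i$ yields the claimed $O(\eps)$ bound.

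The crux is the structure of $S$. Since each $r_\ell$ has total degree $2$, each $r_\ell^{(i)}$ has degree at most $2$ in $z$, so $\{z : r_\ell^{(i)}(z) \ge 0\}$ is a union of at most two intervals (an interval, the complement of an interval, a ray, all of $\R$, or $\emptyset$); equivalently its $\{0,1\}$-indicator changes value at most twice on $\R$. Hence the indicator of $S = \bigcap_{\ell} \{z : r_\ell^{(i)}(z) \ge 0\}$ changes value at most $2k$ times, so $S$ is a union of at most $k+1$ intervals. For any set $I = \bigsqcup_{j=1}^{m} [a_j, b_j]$ that is a union of at most $m$ intervals and any real random variables $X, Y$, one has $|\Pr[X \in I] - \Pr[Y \in I]| \le 2m\,\dk(X, Y)$, because $\Pr[\,\cdot\, \in [a_j, b_j]] = F(b_j) - F(a_j^-)$ and both the values $F(b_j)$ and the left limits $F(a_j^-)$ of the two CDFs differ by at most $\dk(X,Y)$. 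Applying this with $m = k+1$, $X = N(0,1)$, $Y = D$, and $\dk(N(0,1), D) \le 1/(2M) \le \eps/\big(2t(k+1)\big)$ gives the desired univariate bound, and summing over the $t$ hybrid steps completes the proof.

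The one mild subtlety to track is the interaction of the closed events $\{r_\ell^{(i)}(z) \ge 0\}$ with the atoms of the discrete distribution $D$ — this is exactly why the estimate above is phrased through the one-sided CDF limits $F(a_j^-)$ — but Kolmogorov distance, being a supremum over all thresholds, controls these left limits as well, so no extra loss occurs. I expect the step ``$S$ is a union of at most $k+1$ intervals,'' together with its correct use of one-dimensional Kolmogorov distance, to be the only real content; the rest is a routine hybrid argument, and (in the bit-complexity model) the negligible perturbation of $D$ caused by computing the quantiles only approximately is absorbed exactly as in the error analysis discussed in \sref{prelim}.
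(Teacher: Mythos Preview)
Your proof is correct and takes a genuinely different route from the paper's. The paper couples $N(0,1)^t$ to a fine grid (spacing $\delta = \Theta(\eps^2/(k^2\log(k/\eps)))$, truncated at $\Theta(\sqrt{\log(kt/\eps)})$), then argues that with high probability $\max_j|x_j-\tilde{x}_j|\le\delta$, and controls the resulting sign discrepancy by combining the degree-$2$ Chernoff bound (Theorem~\ref{thm:dcb}) on $r_\ell(x,y)-r_\ell(\tilde x,y)$ with Carbery--Wright anti-concentration (Theorem~\ref{thm:cw}) on $r_\ell$. Your argument instead exploits the elementary structural fact that restricting a degree-$2$ polynomial to one coordinate yields a univariate quadratic, so the joint acceptance region in that coordinate is a union of at most $k+1$ intervals; this lets you swap $N(0,1)$ for $D$ coordinate-by-coordinate using only one-dimensional Kolmogorov distance, with no appeal to concentration or anti-concentration. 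Your approach is more elementary and in fact extends immediately to degree-$d$ polynomials (the univariate slice has at most $d$ roots, so $S$ is a union of $O(dk)$ intervals), whereas the paper's coupling argument is the one that more naturally interfaces with the Carbery--Wright machinery used elsewhere. Both yield the same $(kt/\eps)^{O(t)}$ support size.
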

\ifnum\confversion=0
\begin{proof}
Before we proceed with the formal proof, we provide some intuition. The main technical point is how ``fine''
a discretization we need to guarantee an $\pm \eps$ approximation to the desired probability \[\Pr_{z \sim N^{n}(0,1)}[ \forall \ell \in [k],  r_{\ell}(z) \ge 0 ].\] 
Each component $D_j$, $j \in [t]$, of the product distribution $D^t$ will be a discrete approximation to the standard Gaussian distribution $N(0,1)$.
Consider a sample $x = (x_1, \ldots, x_t) \sim N^t(0,1)$ drawn from the standard Gaussian and its coordinate-wise closest discretized value
$\tilde{x} = (\tilde{x}_1, \ldots, \tilde{x}_t)$. The main idea is to construct each $D_j$ in such a way so that with probability at least $1-{O(\eps/k)}$ over $x$, 
the absolute difference $\max_{j \in [t]} |x_j - \tilde{x}_j|$ is at most $\delta$ (where $\delta$ is a sufficiently small quantity). Conditioning on this event, the difference between
the two probabilities $ \Pr_{x \sim N^t(0,1), y \sim N^{n-t}(0,1)} \left[ \forall \ell \in [k], r_{\ell}(x, y) \ge 0 \right]$ and 
$ \Pr_{\tilde{x} \sim D^t, y \sim N^{n-t}(0,1)} \left[ \forall \ell \in [k], r_{\ell}(\tilde{x}, y) \ge 0 \right]$
can be bounded from above by the probability of the following event: 
there exists $\ell \in [k]$ such that the polynomial $r_{\ell}(x, y)$ is ``close'' to $0$ or the difference between the two restricted polynomials 
$r_{\ell}(x, y) - r_{\ell}(\tilde{x}, y)$ is ``large''.
Each of these events can in turn be bounded by a combination of anti-concentration and concentration for degree-$2$ polynomials which completes the proof
by a union bound.



\medskip

\noindent {\bf Construction of the discrete distribution $D^t$.} The distribution $D^t = \otimes_{j=1}^t D_j$ is a product distribution, whose individual marginals
$D_j$, $j \in [t]$, are identical, i.e., $D_j = D$. The distribution $D$ is a discrete approximation to $N(0,1)$. Intuitively, to construct $D$ we proceed as follows.
After truncating the ``tails'' of the Gaussian distribution, we partition the domain into a set of subintervals $I_i$. The distribution $D$ will be supported
on the leftmost points of the $I_i$'s and the probability mass of each such point will be approximately equal to the mass the Gaussian distribution 
assigns to the corresponding interval. More specifically, let us denote {$\eps' = \eps/(kt)$ and $M = \Theta (\sqrt{\log(1/\eps')})$}.
Then $D$ is supported on the grid of points $s_i = i \cdot \delta$, where $i$ is an integer and $\delta$ is chosen (with foresight) to be 
{$\delta \eqdef \Theta \left( \eps^2 / (k^2 \log(k/\eps)) \right)$}.  The range of the index $i$ is such that $|i| \cdot \delta \le M$, i.e. $i \in [-s, s]$, where
$s \in \Z_+$ with $s = O((1/\delta)\cdot M)$.

The probability mass that $D$ 
assigns to the point $s_i = i \cdot \delta$ is approximately equal to the probability that a standard Gaussian random variable assigns to the interval $I_i = [s_i, s_{i+1})$.
In particular, if $\Phi(I)$ denotes the probability that a standard Gaussian puts in interval $I$, we will guarantee that 
\begin{equation} \label{eqn:close}
\littlesum_{i} \left|  \Phi(I_i) - D(s_i) \right| \le \eps'.
\end{equation}
To achieve this we make the error in each interval to be at most $\eps'$ divided by the number of intervals.
It is clear that $D$ can be constructed explicitly in time $\poly(tk/\eps)$.
{Note that, as a consequence of (\ref{eqn:close}) we have that $\dk(D, N(0,1)) \le \eps'$.}

\medskip

\noindent {\bf Properties of $D$.} We define the natural coupling between $N(0,1)$ and $D_j$, $j \in [t]$: a sample $x_j \sim N(0,1)$ such that $x_j \in I_i$ is coupled to the point $\tilde{x}_j$ that corresponds to the left endpoint of the interval $I_i$. If $x_j$ is such that $|x_j| > M$ we map $x_j$ to an arbitrary point.
This defines a coupling between the product distributions $D^t$ and $N^t(0,1)$.
The main property of this coupling is the following: 

\begin{fact} \label{fact:couple}
With probability at least $1-{O(\eps/k)}$ over a sample $x \sim N^{t}(0,1)$ its ``coupled'' version $\tilde{x}$ satisfies ${\max_{j \in [t]}} |x_j - \tilde{x}_j| \le \delta$. 
\end{fact}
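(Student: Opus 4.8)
The plan is to pin down exactly when the coupling ``moves'' a coordinate by more than $\delta$, and then control that bad event by a union bound plus a Gaussian tail estimate. First I would record the following consequence of the construction: the grid points $s_i = i\delta$ with $|i|\delta \le M$ and the associated intervals $I_i = [s_i, s_{i+1})$ together cover $[-M,M]$, since $s = O(M/\delta)$ (with the leading constant taken $\ge 1$) gives $s\delta \ge M$. Hence, if a coordinate $x_j$ of a sample $x \sim N^t(0,1)$ has $|x_j| \le M$, then $x_j$ falls in a unique interval $I_i$, the coupling sets $\tilde x_j = s_i$ (its left endpoint), and therefore $0 \le x_j - \tilde x_j < \delta$, i.e.\ $|x_j - \tilde x_j| \le \delta$ automatically. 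The only coordinates the coupling sends to an ``arbitrary'' point are those with $|x_j| > M$; consequently the event $\{\max_{j\in[t]}|x_j - \tilde x_j| \le \delta\}$ can fail only if $|x_j| > M$ for some $j \in [t]$.

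It then remains to bound $\Pr_{x \sim N^t(0,1)}[\exists j\in[t] : |x_j| > M]$. By a union bound this is at most $t \cdot \Pr_{z \sim N(0,1)}[|z| > M] \le 2t\, e^{-M^2/2}$, using the standard Gaussian tail bound. Recalling that $M = \Theta(\sqrt{\log(1/\eps')})$ with $\eps' = \eps/(kt)$, we have $e^{-M^2/2} \le (\eps')^{\Omega(1)}$, and choosing the hidden constant in the definition of $M$ large enough forces $e^{-M^2/2} \le \eps'/2 = \eps/(2kt)$. Plugging in gives $\Pr[\exists j : |x_j| > M] \le t\cdot \eps/(kt) = \eps/k$, hence $\Pr[\max_{j\in[t]}|x_j - \tilde x_j| > \delta] \le O(\eps/k)$, which is the claim.

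There is no real obstacle in this argument; the only points requiring care are bookkeeping ones. One is checking that the truncated grid and its intervals genuinely tile $[-M,M]$, so that every non-truncated coordinate has a well-defined nearest left endpoint within distance $\delta$ --- this is where the choice $s = O(M/\delta)$ enters. The other is fixing the constant in $M = \Theta(\sqrt{\log(1/\eps')})$ so that, after the union bound over the $t$ coordinates, the Gaussian-tail contribution is at most $\eps/k$. Note that the specific value $\delta = \Theta(\eps^2/(k^2\log(k/\eps)))$ is irrelevant to this particular estimate: all that is used here is that the grid spacing is exactly $\delta$, so the displacement cannot exceed $\delta$; the precise smallness of $\delta$ will be exploited only afterwards, when this coupling is converted into a bound on the change in the acceptance probability via anti-concentration (Theorem~\ref{thm:cw}) and concentration (Theorem~\ref{thm:dcb}) for degree-$2$ Gaussian polynomials.
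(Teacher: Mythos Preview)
Your proposal is correct and follows essentially the same approach as the paper: a per-coordinate Gaussian tail bound at level $M = \Theta(\sqrt{\log(1/\eps')})$ followed by a union bound over the $t$ coordinates, yielding failure probability $t\cdot \eps' = \eps/k$. The paper's proof also cites Condition~(\ref{eqn:close}), but as you observed this is not needed for the Fact itself---once $|x_j|\le M$, the deterministic rounding to the left endpoint already guarantees $|x_j-\tilde x_j|<\delta$, and the only bad event is the tail event $|x_j|>M$.
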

\begin{proof}
For each coordinate $j \in [t]$, it follows from Condition (\ref{eqn:close}) and the concentration of the standard Gaussian random variable that
with probability at least $1-\eps'$ we have $|x_j - \tilde{x}_j| \le \delta$. The fact then follows by a union bound.
\end{proof}

We henceforth condition on this event. For technical reasons, we will further condition on the event that {$ \eps' \le |x_j| \le M$ for all $j \in [t]$}. 
This event will happen with probability at least $1-{O(\eps/k)}$, by Gaussian concentration and anti-concentration followed by a union bound. 
Note that the complementary event affects the desired probabilities by at most $\eps$.


Fix an $x = (x_1, \ldots, x_t)$ with $ \eps' \le |x_j| \le M$ for all $j \in [t]$
and a value $\tilde{x} = (\tilde{x}_1, \ldots, \tilde{x}_t)$ such that $\max_{j \in [t]} |x_j - \tilde{x}_j| \le \delta$.
For $\ell \in [k]$, consider the difference $e_{\ell}(x, \tilde{x}, y) = r_{\ell}(x, y) - r_{\ell}(\tilde{x}, y)$ as a random variable in $y \sim N(0,1)^{n-t}$.
We have the following claim:

\begin{claim} \label{claim:var-upper}
We have that $\Var_y [e_{\ell}] = O(\delta^2)$.
\end{claim}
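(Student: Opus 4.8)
The plan is to reduce the estimate to a one-line calculation by first observing that, as a function of $y$, the difference $e_\ell$ is \emph{affine}. Write $r_\ell(z) = \sum_{1\le i\le j\le n} a^{(\ell)}_{ij} z_i z_j + \sum_{i=1}^n b^{(\ell)}_i z_i + c^{(\ell)}$ and split the indices into the ``head'' $\{1,\dots,t\}$ (the coordinates fixed by $x,\tilde x$) and the ``tail'' $\{t+1,\dots,n\}$ (the coordinates $y$). The monomials of $r_\ell$ that involve only tail variables --- these are precisely $\quadtail_t(r_\ell)$ together with the tail linear terms --- do not depend on the head coordinates at all, so they are identical in $r_\ell(x,y)$ and $r_\ell(\tilde x,y)$ and cancel in $e_\ell$. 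Likewise the purely-head monomials contribute only a $y$-independent constant $C=C(x,\tilde x)$. What survives are exactly the cross terms $a^{(\ell)}_{ij}z_iz_j$ with $i\le t<j$; setting $\Delta_i \eqdef x_i-\tilde x_i$ (so $|\Delta_i|\le\delta$ for all $i\in[t]$, by the conditioning of Fact~\ref{fact:couple}) this gives
\[
e_\ell(x,\tilde x,y) \;=\; C \;+\; \sum_{j=t+1}^{n}\Big(\,\sum_{i=1}^{t} a^{(\ell)}_{ij}\,\Delta_i\Big)\, y_j .
\]

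Given this, the variance is immediate. Since $y_{t+1},\dots,y_n$ are i.i.d.\ $N(0,1)$ and the constant $C$ drops out, $\Var_y[e_\ell] = \sum_{j=t+1}^{n}\big(\sum_{i=1}^{t} a^{(\ell)}_{ij}\Delta_i\big)^2$. By Cauchy--Schwarz and $|\Delta_i|\le\delta$, for each fixed $j$ we have $\big(\sum_{i=1}^{t} a^{(\ell)}_{ij}\Delta_i\big)^2 \le t\delta^2\sum_{i=1}^{t}(a^{(\ell)}_{ij})^2$; summing over $j$ and using $\sum_{1\le i\le j\le n}(a^{(\ell)}_{ij})^2 \le \mathop{SS}(r_\ell)$ together with $\mathop{SS}(r_\ell)\le \Var(r_\ell)=1$ (Claim~\ref{clm:variance-bound-SS} and the unit-variance normalization of $r_\ell$) yields $\Var_y[e_\ell] \le t\delta^2\cdot\mathop{SS}(r_\ell) = O(t\delta^2)$. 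For the parameter settings in force ($\delta$ polynomially small in $\eps/(kt)$ and $t=O(k\ln(1/\eps)/\eps^2)$) this is far below what the subsequent anti-concentration/concentration union bound requires, which is the purpose of the claim (the factor $t$ here being harmless).

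I do not expect a genuine obstacle. The only point that needs care is the bookkeeping in the first step --- verifying that the quadratic-in-$y$ block $\quadtail_t(r_\ell)$ and all the purely-tail linear terms really do cancel, so that $e_\ell$ is linear in $y$. Once that reduction is in place, the estimate is nothing more than ``the variance of a linear form in independent Gaussians equals the sum of its squared coefficients,'' combined with the $\mathop{SS}$-versus-$\Var$ inequality already recorded in Claim~\ref{clm:variance-bound-SS}.
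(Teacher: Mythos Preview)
Your approach is essentially the same as the paper's: write out the difference, observe that the tail--tail monomials cancel so that $e_\ell$ is affine in $y$, and then bound the variance of the surviving linear form using the $\mathop{SS}$-versus-$\Var$ inequality. Your derivation is in fact more careful than the paper's one-line assertion, and you correctly arrive at $\Var_y[e_\ell]\le t\delta^2$ rather than $O(\delta^2)$. With only the hypotheses in play here ($|\Delta_i|\le\delta$ and $\sum a_{ij}^2\le 1$) the factor $t$ is unavoidable: for instance, a single tail variable $y_{t+1}$ with cross coefficients $a_{i,t+1}=1/\sqrt{t}$ for all $i\le t$ yields $\Var_y[e_\ell]=t\delta^2$. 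So the paper's stated $O(\delta^2)$ appears to be imprecise, and your $O(t\delta^2)$ is the honest bound from this argument.

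One small correction: the paper's chosen $\delta=\Theta(\eps^2/(k^2\log(k/\eps)))$ does \emph{not} already contain a factor of $1/t$, contrary to your parenthetical that ``$\delta$ is polynomially small in $\eps/(kt)$.'' With the extra $\sqrt{t}$ in the standard deviation, the downstream choice $\gamma=\Theta((\eps/k)^2)$ no longer satisfies $\gamma=\Omega(\log(k/\eps)\cdot\sqrt{t}\,\delta)$ for the given $\delta$. The fix is routine --- shrink $\delta$ by an additional $1/\sqrt{t}$ (or $1/t$) factor --- and since the support of $D^t$ is $(kt/\eps)^{O(t)}$ this does not affect the stated running-time bound; but it is worth noting that the $t$ is not entirely ``harmless'' without this adjustment.
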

\begin{proof}
Let $r_{\ell}(x_1, \ldots, x_n) = \littlesum_{i,j} a_{ij}x_ix_j + \littlesum_i b_i x_i + C$.
By our assumption that $\Var[r_{\ell}]=1$ and Claim~\ref{clm:variance-bound-SS}, 
it follows that the sum of the squares of the coefficients of $r_{\ell}$ is in $[1/2, 1]$.
A simple calculation yields that the difference between $r_{\ell}(x_1, x_2, \ldots, x_n)$ and $r_{\ell}(\tilde{x}_1, \ldots, \tilde{x}_t, x_{t+1}, \ldots, x_n)$
is at most 
\[  \littlesum_{1\le i \le j \le t} a_{ij} (x_ix_j - \tilde{x}_i \tilde{x}_j) + \littlesum_{i \le t, j \ge t+1} a_{ij} (x_i - \tilde{x}_i) x_j + \littlesum_{i\le t} b_i (x_i - \tilde{x}_i)   \]
Taking into consideration our assumption that the sum of the squared coefficients of $r_{\ell}$ is at most $1$ 
and that $|x_j - \tilde{x}_j| \le \delta$ for all $j \in [t]$, the variance of the above quantity term can be bounded from above by $O(\delta^2)$.
\end{proof}

\ignore{ 
 
We will also need the following claim bounding from below the variance of $r(x, y)$:

\begin{claim} \label{claim:var-lower}
We have that $\Var_y [r(x, y)] = \Omega({\eps'^2 \cdot \eps^4 / \log^2(1/\eps)})$.
\end{claim} 
\begin{proof}
Recall that we have conditioned on the fact that {$|x_j|>\eps'$ for all $j \in [t]$}.
By expanding $r(x_1, x_2, \ldots, x_n)$ as a function of $x_{t+1}, \ldots, x_n$ it follows that its sum of squared coefficients
is 
\[\littlesum_{i \le t, j \ge t+1} a_{ij}^2x_i^2 + \littlesum_{t+1 \le i \le j \le n} a_{ij}^2 + \littlesum_{i \ge t+1} b_i^2 \] 
{We now recall that by assumption the above sum of squares is at least   $\eps^4 / \log^2(1/\eps)$.
Indeed, if this is not the case, the polynomial $r$ is $O(\eps)$-close in Kolmogorov distance to the $t$-junta 
$\littlesum_{1 \le i \le j \le t} a_{ij}x_ix_j + \littlesum_{i \le t} b_i x_i + C.$
Under this assumption, it follows that the variance $\Var_y [r(x, y)]$ is at least $\Omega ( \eps'^2 \cdot \eps^4 / \log^2(1/\eps)).$
}
\end{proof}
 
} 
 
Given a value of $\gamma > 0$, the two desired probabilities differ only if there exists $\ell \in [k]$ such that
\begin{equation} \label{eqn:conc}
\Pr_{x, \tilde{x}, y} [|e_{\ell}(x, \tilde{x}, y)| \ge \gamma]
\end{equation} 
or
\begin{equation} \label{eqn:ac} 
\Pr_{x, y} [|r_{\ell}(x, y)| \le \gamma].
\end{equation}
We will select the parameter $\gamma$ appropriately so that
for a given $\ell \in [k]$, both probabilities above are at most $O(\eps/k)$. 
The proof of the lemma will then follow by a union bound {over $\ell$}.

For fixed $x, \tilde{x}$, an application of the Chernoff bound (Theorem~\ref{thm:dcb}) in conjunction with Claim~\ref{claim:var-upper}
implies that $\Pr_{y} [|e_{\ell}(x, \tilde{x}, y)| \ge \gamma]$
is at most {$\tilde{\eps} = \eps/k$} as long as $\gamma = \Omega(\log(1/\tilde{\eps}) \delta)$.
By Fact~\ref{fact:couple} it thus follows that (\ref{eqn:conc}) is at most $O(\eps/k)$.
Similarly, since $\Var[r_{\ell}]=1$, by choosing $\gamma = \Theta(\tilde{\eps}^2)$, Carbery--Wright (Theorem~\ref{thm:cw})  implies 
that (\ref{eqn:ac}) is at most {$O(\tilde{\eps}).$} By our choice of $\delta$, it follows that for this choice of $\gamma$ we indeed have that $\gamma = \Omega(\log(1/\tilde{\eps}) \delta)$, which completes the proof. 
\end{proof}
\fi

For Step~4 we note that the corresponding problem is that of counting an intersection 
of $k$ halfspaces with respect to a Gaussian distribution over $\R^k$. We recall that, by Theorem~1.5 of~\cite{GOWZ10},
$s = \tilde{O}(k^6/\eps^2)$-wise independence $\eps$-fools such functions. Since we are dealing with a $k$-dimensional problem,
any explicit construction of an $s$-wise independent distribution yields a deterministic $\eps$-approximate counting algorithm
that runs in time $k^{O(s)}$, completing the proof of Theorem~\ref{thm:maingauss-informal}.

\section{Deterministic approximate counting for $g(\sign(q_1(x)),\dots,
\sign(q_k(x)))$ over $\{-1,1\}^n$}
\label{sec:Bool}

In this section we extend the deterministic approximate counting result that
we established for the Gaussian distribution on $\R^n$ to the uniform
distribution over $\{-1,1\}^n$, and prove Theorem~\ref{thm:main-boolean-informal}.
As discussed in the introduction,
there are three main ingredients in the
proof of Theorem~\ref{thm:main-boolean-informal}.
The first, of course, is the Gaussian counting result,
Theorem \ref{thm:maingauss-informal}, established earlier.
The second is a deterministic algorithmic
regularity lemma for $k$-tuples of low-degree polynomials:

\begin{lemma} \label{lem:k-reg-alg} [algorithmic
regularity lemma, general $k$, general $d$]
There is an algorithm $\tt{ConstructTree}$ with the following property:

Let $p_1,\dots,p_k$ be degree-$d$ multilinear polynomials
with $b$-bit integer coefficients
over $\{-1,1\}^n$.  Fix $0 < \tau, \eps, \delta < 1/4$.
Algorithm $\tt{ConstructTree}$ (which is deterministic) runs in time
$\poly(n,b,2^{D_{d,k}(\tau,\eps,\delta)})$ and
outputs a decision tree
$T$ of depth at most
\ifnum\confversion=1
$D_{d,k}(\tau,\eps,\delta) := 
\left( {\frac 1 \tau} \cdot \log {\frac 1 \eps} \cdot
\right)^{(2d)^{\Theta(k)}} \cdot
\log {\frac 1 \delta} .
$
\fi
\ifnum\confversion=0
\[
D_{d,k}(\tau,\eps,\delta) := 
\left( {\frac 1 \tau} \cdot \log {\frac 1 \eps} \cdot
\right)^{(2d)^{\Theta(k)}} \cdot
\log {\frac 1 \delta} .
\]
\fi
Each internal node of the tree is labeled with a variable and each leaf 
$\rho$ is
labeled with a $k$-tuple of polynomials $((p_1)_\rho,\dots,(p_k)_\rho)$ 
and with a
$k$-tuple of labels $(\lab_1(\rho),\dots,\lab_k(\rho)).$
For each leaf $\rho$ and each
$i \in [k]$ the polynomial $(p_i)_\rho$ is the polynomial obtained by
applying restriction $\rho$ to polynomial $p_i$, and $\lab_i(\rho)$
belongs to the set $\{+1,-1,$``fail'', ``regular''$\}.$
The tree $T$ has the following properties:

\begin{enumerate}

\item For each leaf $\rho$ and index $i \in [k]$, if $\lab_i(\rho) \in \{+1,-1\}$,
then $\Pr_{x \in \{-1,1\}^n}[\sign((p_i)_\rho(x)) \neq \lab_i(\rho)] \leq \eps$;

\item For each leaf $\rho$ and index $i \in [k]$, if $\lab_i(\rho) = $``regular''
then $(p_i)_\rho$ is $\tau$-regular; and

\item With probability at least $1-\delta$, a random path from the root reaches
a leaf $\rho$ such that $\lab_i(\rho) \neq$``fail'' for all $i \in [k].$

\end{enumerate}

\end{lemma}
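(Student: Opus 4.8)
The plan is to build the decision tree $T$ by a recursive ``one polynomial at a time'' strategy, reducing repeatedly to the known single-polynomial regularity lemmas of \cite{DSTW:10,HKM:09}. At the root we focus on $p_1$: we run the (deterministic) single-polynomial $\tt{ConstructTree}$ procedure of \cite{DSTW:10} on $p_1$ with regularity parameter $\tau_1$, closeness parameter $\eps$, and failure probability $\delta_1$, obtaining a depth-$D_1$ decision tree whose leaves are each labeled either by a sign in $\{+1,-1\}$ (meaning $\sign((p_1)_\rho)$ is $\eps$-close to that constant), by ``regular'' (meaning $(p_1)_\rho$ is $\tau_1$-regular), or by ``fail'' (a $\delta_1$-fraction of leaves). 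At each leaf $\rho$ of this first tree we have fixed some set of variables; we then \emph{recurse on the restricted polynomials} $(p_2)_\rho,\dots,(p_k)_\rho$, appending the subtree produced by the recursive call to that leaf. Because $p_1$ has already been handled and its restriction is fixed down every path, $\lab_1$ is never disturbed by later restrictions: restricting more variables of a $\tau_1$-regular degree-$d$ polynomial keeps it $\tau_1$-regular (this is immediate since the sum of influences can only shrink and the variance is essentially preserved up to the standard factor of $d$), and restricting a polynomial whose sign is $\eps$-close to a constant keeps it $\eps$-close (again by an averaging argument over the fixed coordinates). So the labels $\lab_1,\dots,\lab_k$ are all ``frozen in'' once their respective rounds are done.

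The key bookkeeping is the choice of parameters so that the depth and failure probability come out as claimed. The single-polynomial regularity lemma of \cite{DSTW:10} produces, for a degree-$d$ polynomial, a tree of depth $(1/\tau')^{O(d)}\cdot\log(1/\eps')\cdot\log(1/\delta')$ (roughly), and crucially the leaf polynomials there are $\tau'$-regular where $\tau'$ can be taken \emph{much smaller} than the target $\tau$. The issue is that when we recurse on $(p_2)_\rho$ at a leaf, the variance of $(p_2)_\rho$ may have shrunk relative to $p_2$ — but this is exactly what the ``fail''/``skewed''-style accounting absorbs, and more importantly the regularity demanded at round $i$ must be strong enough to survive the $(k-i)$ further rounds of restriction. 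This forces a recursion of the form: to get final regularity $\tau$ for all $k$ polynomials, round $i$ must achieve regularity $\tau_i$ where $\tau_k=\tau$ and $\tau_{i-1}$ is a sufficiently small polynomial (depending on $d$) in $\tau_i$ — unwinding the recursion gives $\tau_1 = \tau^{(2d)^{\Theta(k)}}$ up to $\log(1/\eps)$ factors, hence the depth bound $D_{d,k}(\tau,\eps,\delta)=(\frac1\tau\log\frac1\eps)^{(2d)^{\Theta(k)}}\log\frac1\delta$. The failure probabilities add: taking $\delta_i = \delta/k$ at each round and union-bounding over the (at most $k$) rounds along any root-to-leaf path gives total failure probability at most $\delta$, establishing property~(3). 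Properties~(1) and~(2) are inherited directly from the guarantees of the single-polynomial lemma at the round where each index $i$ was processed, together with the ``restrictions preserve regularity / preserve sign-closeness'' observations above.

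The running time is $\poly(n,b,2^{D_{d,k}})$: the tree has $2^{O(D_{d,k})}$ leaves, and at each node we do $\poly(n,b,\ldots)$ work to run the single-polynomial procedure and to compute the restricted polynomials' (integer, polynomially-bounded-bit-length) coefficients; standard tracking shows the bit-complexity stays polynomial. The main obstacle I expect is \emph{the parameter recursion and the interaction between variance loss and the regularity threshold}: one has to verify carefully that a polynomial which is $\tau_i$-regular before a round of restrictions remains (close to) $\tau_i$-regular after, and — the genuinely delicate point — that a polynomial whose \emph{variance has dropped} under restriction either (a) becomes ``skewed'' (constant term dominates variance, so the PTF is near-constant and gets a $\pm1$ label) or (b) still has enough variance that the single-polynomial lemma applied in its round yields a useful leaf. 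Getting the quantitative relationship between $\tau_{i-1}$ and $\tau_i$ exactly right — so that it both survives the remaining rounds and telescopes to the stated $(2d)^{\Theta(k)}$ tower — is the technical heart of the argument, and it is here that the proof diverges from the \cite{GOWZ10} approach (which, as noted in the introduction, does not extend past $d=1$ because it relies on the linear-algebraic structure of linear forms).
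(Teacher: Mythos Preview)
Your high-level recursive strategy is the same as the paper's (process one polynomial, then recurse at each leaf on the remaining $k-1$), and you correctly identify that the earlier rounds need much stronger parameters so that they ``survive'' the later restrictions, leading to the $(2d)^{\Theta(k)}$ blow-up. However, the justifications you give in the first paragraph for why labels are ``frozen in'' are both incorrect, and this is not a minor slip: it is precisely the technical content of the lemma.

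First, it is \emph{not} true that restricting variables in a $\tau_1$-regular degree-$d$ polynomial keeps it $\tau_1$-regular, and your stated reason (``the sum of influences can only shrink and the variance is essentially preserved'') is false on both counts. Individual influences can \emph{grow} under restriction: for $U\subseteq T$ with $j\in U$, one has $\widehat{p_\rho}(U)=\sum_{S'\subseteq S}\widehat{p}(S'\cup U)\prod_{i\in S'}\rho_i$, which by Cauchy--Schwarz can be as large as $(es/(d-1))^{(d-1)/2}$ times any single original coefficient; and the variance can shrink substantially (e.g.\ $p(x)=x_1$ after restricting $x_1$). The paper handles this via two explicit claims: one bounding how much any influence can grow when $s$ variables are fixed (by a factor $(es/(d-1))^{d-1}$), and one showing that the variance of a $\kappa$-regular polynomial drops by at most a factor $(1-2\sqrt\kappa)$ per restricted variable. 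Combining these, a $\tau'$-regular polynomial with $\tau' \approx (d/D)^{d-1}\cdot \min\{\tau, 1/D^2\}$ is guaranteed to be $\tau$-regular after \emph{any} restriction of at most $D$ variables. This quantitative relationship between $\tau'$ and $\tau$ (in terms of the depth $D$ of the next phase) is what drives the parameter recursion; it does not come for free.

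Second, ``$\sign(p)$ is $\eps$-close to a constant'' is \emph{not} preserved under every restriction --- your averaging argument only shows it holds for a \emph{typical} restriction, which is not enough since you need the guarantee at every leaf that gets a $\pm 1$ label. The paper sidesteps this by working with the stronger structural property of being \emph{$\eps'$-skewed} (i.e.\ $|\widehat{p}(\emptyset)| \geq (C\log(d/\eps'))^{d/2}\sqrt{\Var[p]}$), which \emph{is} preserved deterministically: if $p$ is $\eps'$-skewed and you restrict at most $D$ variables, then $|\widehat{p_\rho}(\emptyset)|$ can drop by at most $(eD/d)^{d/2}$ and $\Var[p_\rho]$ can grow by at most $(eD/(d-1))^{d-1}$, so choosing $\eps'$ doubly-exponentially smaller than $\eps$ (in terms of $D$) ensures $p_\rho$ is still $\eps$-skewed. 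Your proposal should replace the ``$\pm 1$ label'' criterion by this skewness condition and carry out the corresponding bound; once you do, your paragraphs~2--3 line up with the paper's argument.
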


The third ingredient is the following version of the
multidimensional invariance principle, which lets us move
from the Gaussian to the Boolean domain:

\begin{theorem} \label{thm:our-invar}
Let $p_1(x),\dots,p_k(x)$ be degree-$d$ multilinear
polynomials over $\{-1,1\}^n$, and let $P_i(x) = \sign(p_i(x))$ for $i=1,\dots,k$.
Suppose that each $p_i$ is $\tau$-regular.  Then
for any $g: \{-1,1\}^k \to \{-1,1\}$, we have that
\[
\left|
\Pr_{x \sim \{-1,1\}^n}[g(P_1(x),\dots,P_k(x)) = 1] -
\Pr_{{\cal G} \sim N(0,1)^n}[g(P_1({\cal G}),\dots,P_k({\cal G})) = 1]
\right|
\leq \widetilde{\eps} (d, \tau, k),
\]
where $\widetilde{\eps} (d, \tau, k) : = 2^{O(k)} \cdot 2^{O(d)} \cdot \tau^{1/(8d)}$.
\end{theorem}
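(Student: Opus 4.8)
\medskip
\noindent\emph{Proof proposal.}
The plan is to follow the ``mollification plus smooth invariance principle'' route. Because $\sign$ and the notion of $\tau$-regularity are both invariant under rescaling a polynomial, we may assume $\Var(p_i)=1$ for every $i\in[k]$. Writing $g^{-1}(1)\subseteq\{-1,1\}^k$ for the accepting set of $g$, we have
\[
\Pr_{x}[g(P_1(x),\dots,P_k(x))=1] = \sum_{b\in g^{-1}(1)} \Pr_{x}\big[\forall i\in[k]:\ \sign(p_i(x))=b_i\big],
\]
and the analogous identity over $N(0,1)^n$. By the triangle inequality it suffices to bound, for each fixed $b\in\{-1,1\}^k$, the discrepancy $\big|\Pr_x[p(x)\in R_b]-\Pr_{{\cal G}}[p({\cal G})\in R_b]\big|$, where $p=(p_1,\dots,p_k)$ and $R_b=\{z\in\R^k:\ b_iz_i\ge 0\ \forall i\}$ is the corresponding closed orthant (the difference between the open and closed orthant is absorbed into the anti-concentration estimates below). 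Summing the resulting bound over the at most $2^k$ sign patterns $b$ produces the $2^{O(k)}$ factor in the final estimate.

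Fix $b$, set $I=\mathbf{1}_{R_b}$, and invoke the mollification construction of Theorem~\ref{thm:smooth-apx} with a parameter $c>0$ to be chosen, obtaining $\tilde I_c:\R^k\to[0,1]$ with $\|\partial^\beta \tilde I_c\|_\infty\le(2c)^{|\beta|}$ for every multi-index $\beta$ and $|I(z)-\tilde I_c(z)|\le\min\{1,\ O(k^2/(c^2\,d(z,\partial R_b)^2))\}$. Then I would split
\[
\big|\E_x I(p(x))-\E_{{\cal G}} I(p({\cal G}))\big| \le \big|\E_x(I-\tilde I_c)(p(x))\big| + \big|\E_x\tilde I_c(p(x))-\E_{{\cal G}}\tilde I_c(p({\cal G}))\big| + \big|\E_{{\cal G}}(I-\tilde I_c)(p({\cal G}))\big|.
\]
The middle term is controlled by Mossel's multidimensional invariance principle~\cite{Mossel10}, applied to the $k$-tuple of $\tau$-regular degree-$d$ polynomials $(p_1,\dots,p_k)$ and the ${\cal C}^3$ test function $\tilde I_c$ (whose third-order partials are $O(c^3)$): it is at most $O(c^3)\cdot\poly(k)\cdot 2^{O(d)}\cdot\tau^{1/2}$. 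For the outer two terms I would use $|I-\tilde I_c|\le 1$ together with the decay bound to get, for any $\delta>0$,
\[
\E_x\big|(I-\tilde I_c)(p(x))\big| \le \Pr_x[d(p(x),\partial R_b)\le\delta] + O(k^2/(c^2\delta^2)),
\]
and likewise over ${\cal G}$; since $d(z,\partial R_b)\le\delta$ forces $|z_i|\le\delta$ for some $i$, we have $\Pr_x[d(p(x),\partial R_b)\le\delta]\le k\cdot\max_i\Pr_x[|p_i(x)|\le\delta]$, and the same over ${\cal G}$.

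It remains to establish anti-concentration for a $\tau$-regular degree-$d$ polynomial of unit variance. Over $N(0,1)^n$ this is immediate from Carbery--Wright (Theorem~\ref{thm:cw}): $\Pr_{{\cal G}}[|p_i({\cal G})|\le\delta]=O(d\,\delta^{1/d})$. Over $\{-1,1\}^n$ I would bootstrap from the Gaussian bound by comparing with a smooth bump: let $\psi_\delta$ be nonnegative, equal to $1$ on $[-\delta,\delta]$, supported in $[-2\delta,2\delta]$, with $\|\psi_\delta'''\|_\infty=O(\delta^{-3})$; applying the (single-polynomial) invariance principle and then Carbery--Wright gives
\[
\Pr_x[|p_i(x)|\le\delta] \le \E_x\psi_\delta(p_i(x)) \le \E_{{\cal G}}\psi_\delta(p_i({\cal G})) + O(\delta^{-3})\cdot 2^{O(d)}\tau^{1/2} \le O(d\,\delta^{1/d}) + O(\delta^{-3}\cdot 2^{O(d)}\tau^{1/2}).
\]
Collecting the pieces, a single orthant contributes
\[
O\big(c^3\poly(k)2^{O(d)}\tau^{1/2}\big) + O\big(k^2/(c^2\delta^2)\big) + O\big(kd\,\delta^{1/d}\big) + O\big(k\,\delta^{-3}\,2^{O(d)}\tau^{1/2}\big),
\]
and I would choose $\delta$ and $c$ as appropriate powers of $\tau$ (roughly $\delta\approx\tau^{1/8}$ and $c\approx\tau^{-1/6}$) so that every term is $\poly(k,d)\cdot 2^{O(d)}\cdot\tau^{a}$ with $a\ge 1/(8d)$; since $\tau<1/4$ this is at most $2^{O(k)}2^{O(d)}\tau^{1/(8d)}$, and the $2^k$ union bound over orthants yields $\widetilde{\eps}(d,\tau,k)$ as claimed. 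I expect the main obstacle to be the quantitative bookkeeping in invoking Mossel's invariance principle in the multidimensional, degree-$d$ regime so that the dependence on $k$, $d$, and $\tau$ comes out exactly as stated, together with the attendant parameter optimization; none of these steps is conceptually hard, but they must be composed so that the various error contributions do not blow up.
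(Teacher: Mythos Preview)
Your approach is essentially the paper's: reduce $g$ to a union of at most $2^k$ orthants (equivalently, $\AND_k$'s), mollify the orthant indicator via Theorem~\ref{thm:smooth-apx}, apply Mossel's smooth multidimensional invariance principle to the mollified function, and control the mollification error by anti-concentration of each $p_i$. The only substantive difference is a quantitative one in the mollification step, and as written it prevents you from reaching the stated $\tau^{1/(8d)}$ rate for small $d$ (in particular for $d=2$, the case the paper cares about).

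Concretely, your single-scale bound $\E|I-\tilde I_c|(p)\le \Pr[d(p,\partial R)\le\delta]+O(k^2/(c^2\delta^2))$ forces you to balance $c^3\tau^{1/2}$, $k^2/(c^2\delta^2)$, $\delta^{1/d}$ and $\delta^{-3}\tau^{1/2}$ simultaneously; the optimum is only $\tau^{\Theta(1/(6d+5))}$, and your suggested $c\approx\tau^{-1/6}$ actually makes the invariance term $c^3\tau^{1/2}$ non-decaying. The paper instead takes $c=k/\rho$ with $\rho=\tau^{1/8}$ and replaces your single tail term by a dyadic sum $\sum_{s\ge 0}(k^2/(c^2 2^{2s}\rho^2))\Pr[d(p,\partial R)\le 2^{s+1}\rho]$; with this choice the prefactor at scale $s$ is $2^{-2s}$, and using $\Pr[d(p,\partial R)\le w]\le O(kd\,w^{1/d})+O(kd\,\tau^{1/(8d)})$ the sum telescopes to $O(kd\,\rho^{1/d})=O(kd\,\tau^{1/(8d)})$, while the smooth invariance error becomes $\poly(k)2^{O(d)}\tau^{1/8}\le\tau^{1/(8d)}$. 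For the Boolean anti-concentration input to this dyadic sum, the paper simply invokes the Kolmogorov-distance form of the single-polynomial invariance principle from \cite{MOO10} (yielding the fixed additive $O(d\tau^{1/(8d)})$), which is cleaner than your smooth-bump bootstrap, though your route also works once plugged into the dyadic argument. With these two sharpenings your proof goes through exactly as the paper's does.
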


The regularity lemma for $k$-tuples of polynomials, Lemma
\ref{lem:k-reg-alg}, requires significant technical work; 
we prove it in Section \ifnum\confversion=0\ref{sec:regularity}\else 7 of the full version\fi.
In contrast, Theorem \ref{thm:our-invar} is a fairly direct consequence of the
multidimensional invariance principle of Mossel \cite{Mos08} \ifnum\confversion=1 (see full version)\fi. 
\ifnum\confversion=0 We explain how Theorem \ref{thm:our-invar} follows from \cite{Mos08} 
in Section \ref{sec:our-invar}.
Before establishing the regularity lemma and the invariance principle that
we will use, though, we first show how Theorem \ref{thm:main-boolean-informal} follows
from these results.\fi
\ifnum\confversion=1
In the full version we show how these results can be combined to prove Theorem \ref{thm:main-boolean-informal}.
\fi

\ifnum\confversion=0

\medskip

\noindent {\bf Proof of Theorem \ref{thm:main-boolean-informal} using
Theorem \ref{thm:maingauss-informal}, Lemma \ref{lem:k-reg-alg} and
Theorem \ref{thm:our-invar}:}
The algorithm for approximating $\Pr_{x \sim \{-1,1\}^n}
[g(Q_1(x),\dots,Q_k(x))=1]$
to within an additive $\pm \eps$ works as follows.  It
first runs algorithm $\tt{ConstructTree}$ from Lemma \ref{lem:k-reg-alg}
with parameters $d,k,\tau_0,\eps_0,$ and $\delta_0$, where
$\tau_0$ satisfies $\\widetilde{\eps}(d,\tau_0,k) \leq \eps/4$,
$\eps_0$ equals $\eps/(4k)$, and
$\delta_0$ equals $\eps /4$,
to construct the decision tree $T$.  It initializes the value $\tilde{v}$
to be 0, and then iterates over all leaves $\rho$ of the tree $T$, adding
a contribution $\tilde{v}_\rho$ to $\tilde{v}$
at each leaf $\rho$ according to the following rules:
for a given leaf $\rho$ at depth $d_\rho$,

\begin{itemize}

\item If any $i \in [k]$ has $\lab_i(\rho) =$ ``fail'' then
the contribution $\tilde{v}_\rho$ from that leaf is 0.  Otherwise,

\item Let $\kappa(\rho)$ be the restriction of variables
$y_1,\dots,y_k$ corresponding to the string
$(\lab_1(\rho),\dots,\lab_k(\rho)) \in \{+1,-1,$ ``regular''$\}$,
so $\kappa(\rho)$ fixes variable $y_i$ to $b \in \{+1,-1\}$ if
$\lab_i(\rho) = b$ and $\kappa(\rho)$ leaves variable $y_i$
unfixed if $\lab_i(\rho)=$``regular.''
Run the algorithm of Theorem \ref{thm:maingauss-informal}, providing as input
the $k$-tuple of polynomials $((p_1)_\rho,\dots,(p_k)_\rho)$,
the Boolean function $g_{\kappa(\rho)}$ (i.e. $g$ with restriction
$\kappa(\rho)$ applied to it), and the accuracy parameter $\eps/4$;
let $\tilde{w}_\rho$ be the value thus obtained.  The contribution
from this leaf is
$\tilde{v}_\rho := \tilde{w}_\rho \cdot 2^{-d_\rho}$.

\end{itemize}

Theorem \ref{thm:maingauss-informal} and Lemma \ref{lem:k-reg-alg}
imply that the running time is as claimed; we now prove correctness.
Let $v$ denote the true value of $\Pr_{x \sim \{-1,1\}^n}[
g(Q_1(x),\dots,Q_k(x))=1].$  We may write
$v$ as $\sum_{\rho} v_\rho$, where the sum is over all leaves
$\rho$ of $T$ and $v_\rho = w_\rho \cdot 2^{-d_\rho}$
where
\[
w_\rho = \Pr_{x \sim \{-1,1\}^n}[g((Q_1)_\rho(x),\dots,
(Q_k)_\rho(x))=1].
\]
We show that $|v-\tilde{v}| \leq \eps$ by showing that
$\sum_{\rho} |\tilde{v}_\rho - v_\rho| \leq \eps.$
To do this, let us partition the set of all leaves $\rho$ of $T$
into two disjoint subsets $A$ and $B$, where a leaf $\rho$ belongs
to $A$ if some $i \in [k]$ has $\lab_i(\rho)=$``fail''.
Part (3) of Lemma \ref{lem:k-reg-alg} implies that
$\sum_{\rho \in A} 2^{-d_\rho} \leq \delta_0 = \eps/4$, so we have
that
\[
\sum_{\rho \in A} |\tilde{v}_\rho - v_\rho| =
\sum_{\rho \in A} v_\rho \leq
\sum_{\rho \in A} 2^{-d_\rho} \leq \eps/4.
\]
We bound $\sum_{\rho \in B} |\tilde{v}_\rho - v_\rho| \leq 3 \eps/4$
by showing that each leaf $\rho \in B$ satisfies
$|w_\rho - \tilde{w}_\rho| \leq 3 \eps/4$; this is sufficient since
\[
\sum_{\rho \in B} |\tilde{v}_\rho - v_\rho| =
\sum_{\rho \in B} 2^{-d_\rho}|\tilde{w}_\rho - w_\rho| \leq
\left(\max_{\rho \in B} |w_\rho - \tilde{w}_\rho|\right)
\cdot
\sum_{\rho \in B} 2^{-d_\rho} \leq
\max_{\rho \in B} |w_\rho - \tilde{w}_\rho|
\leq 3 \eps/4.
\]
So fix any leaf $\rho \in B$.  Let $S_{\kappa(\rho)} \subseteq [k]$
be the subset of those indices $i$ such that $\lab_i(\rho)=$
``regular''.  By part (2) of Lemma \ref{lem:k-reg-alg} we have that
$(p_i)_\rho$ is $\tau_0$-regular for each $i \in S_{\kappa(\rho)}.$
Hence we may apply Theorem \ref{thm:our-invar} to
the Boolean function $g_{\kappa(\rho)}:
\{-1,1\}^{S_{\kappa(\rho)}} \to \{-1,1\}$, and we get that
\begin{eqnarray}
&&
\left|
\Pr_{x \sim \{-1,1\}^n}[g_{\kappa(\rho)}((Q_1)_\rho(x),\dots,(Q_k)_{\rho}
(x)) = 1] -
\Pr_{{\cal G} \sim N(0,1)^n}[g_{\kappa(\rho)}
((Q_1)_\rho({\cal G}),\dots,(Q_k)_{\rho}({\cal G})) = 1]
\right| \nonumber \\
&\leq&
\widetilde{\eps}(d,\tau_0,k) \leq \eps/4. \label{eq:bool-to-gauss}
\end{eqnarray}

By Theorem \ref{thm:maingauss-informal} we have that
\begin{eqnarray}
\left|\tilde{w}_\rho -
\Pr_{{\cal G} \sim N(0,1)^n}[g_{\kappa(\rho)}
((Q_1)_\rho({\cal G}),\dots,(Q_k)_{\rho}({\cal G})) = 1]
\right|
\leq \eps/4.
\label{eq:alg-to-gauss}
\end{eqnarray}
Finally, part (1) of Lemma \ref{lem:k-reg-alg} and a union bound give that
\begin{eqnarray}
\left|
w_\rho -
\Pr_{x \sim \{-1,1\}^n}[g_{\kappa(\rho)}((Q_1)_\rho(x),\dots,(Q_k)_{\rho}
(x)) = 1]\right|
&\leq&
\sum_{i \in ([k] \setminus S_{\kappa(\rho)})}
\Pr_{x \sim \{-1,1\}^n}[(Q_i)_\rho(x) \neq \lab_i(\rho)] \nonumber\\
&\leq& k \cdot \eps_0 = \eps/4.
\label{eq:bool-restric-to-bool}
\end{eqnarray}
Combining (\ref{eq:bool-to-gauss}), (\ref{eq:alg-to-gauss}) and
(\ref{eq:bool-restric-to-bool}) with the triangle inequality we get that
$|w_\rho - \tilde{w}_\rho| \leq 3\eps/4$, which concludes the proof of
Theorem \ref{thm:main-boolean-informal}.
\qed

\subsection{Proof of Theorem \ref{thm:our-invar}}\label{sec:our-invar}.
We start by proving the theorem for the case that the 
$k$-junta $g$ is the AND$_{k}$ function.
In fact, in this particular 
case the dependence of the error on the parameter $k$ is polynomial.
The generalization to an arbitrary $k$-junta follows using a 
union bound and
the fact that any $k$-junta can be written as an OR of at most $2^k$ AND$_{k}$
functions, each of which is satisfied by a different point in $\{-1,1\}^k.$

The proof has two steps: In the first step we prove the theorem for ``smooth'' functions; in the second step 
we use FT-mollification to reduce the theorem to the smooth case.
The first step is an immediate application of Theorem~4.1 in~\cite{Mossel10}. 
In particular, the following statement is a corollary of his statement to our setting:

\begin{theorem}[\cite{Mossel10}, Corollary of Theorem~4.1] \label{thm:invariance-smooth}
Let $p_1(x), p_2(x), \ldots, p_k(x)$ be degree-$d$ multilinear polynomials (where either $x \in \{-1, 1\}^n$ or $x \in \R^n$) such that $\Var[p_i]=1$  
and $\max_j \Inf_j(p_i) \le \tau$ for all $i=1, \ldots, k$. Let $\Psi:\R^k \to \R$ be a $C^3$ function with 
$\|\Psi^{(\mathbf{i})}\|_{\infty} \le B$ 
for every vector $\mathbf{i} \in (\Z_{\geq 0})^n$ with 
$\|\mathbf{i}\|_1 \le 3$,
where $\Psi^{(\mathbf{i})}$ denotes the $\mathbf{i}$-th iterated
partial derivative of $\Psi$.  Then,
\[ \left| \E_{x \sim  \{-1, 1\}^n} 
\left[ \Psi \left(p_1(x),\ldots, p_k(x)\right) \right] -  
\E_{G \sim  N(0,1)^n} \left[ \Psi \left(p_1(G),\ldots, p_k(G)\right) \right]  
 \right| \le 
\eps: =  2Bk^{9/2}(8\sqrt{2})^d \cdot d \sqrt{\tau}. \]
\end{theorem}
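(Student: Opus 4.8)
\textbf{Proof of Theorem~\ref{thm:invariance-smooth} (plan).}
The plan is to derive the statement as a direct consequence of the multidimensional invariance principle of Mossel (Theorem~4.1 of~\cite{Mossel10}), after checking that its hypotheses are met in our setting and then tracking the explicit constants. First I would recall the general form of Mossel's theorem: it concerns a vector $Q=(Q_1,\dots,Q_k)$ of multilinear polynomials of degree at most $d$ over a sequence of independent ensembles, compared across two choices of ensembles with matching first and second moments. We instantiate it with the ``discrete'' ensemble equal to uniform on $\{-1,1\}$ in each coordinate and the ``continuous'' ensemble equal to $N(0,1)$ in each coordinate, so that $Q(\mathcal{X})$ is distributed as $(p_1(x),\dots,p_k(x))$ for $x \sim \{-1,1\}^n$ and $Q(\mathcal{G})$ is distributed as $(p_1(G),\dots,p_k(G))$ for $G \sim N(0,1)^n$; here we use that each $p_i$ is multilinear (automatic over $\{-1,1\}^n$, and assumed over $\R^n$). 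Mossel's theorem then bounds $|\E[\Psi(Q(\mathcal{X}))]-\E[\Psi(Q(\mathcal{G}))]|$ by an explicit function of $B$, $k$, $d$, and $\tau = \max_{i\in[n],\,j\in[k]}\Inf_i(Q_j)$.

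Next I would recall \emph{why} Mossel's bound has the claimed shape, so that the specific constants can be read off. The argument is a Lindeberg-style hybrid: one replaces the $n$ Boolean coordinates by Gaussian coordinates one at a time, producing a telescoping sum of $n$ error terms. For the $i$-th term one Taylor-expands $\Psi$ to third order around the point obtained by deleting the contribution of coordinate $i$ from each $Q_j$; because the two ensembles have matching first and second moments and each $Q_j$ is multilinear in coordinate $i$, the zeroth-, first-, and second-order terms cancel exactly, leaving a third-order remainder bounded by
\[
\frac{B}{6}\cdot\Bigl(\text{number of ordered triples from }[k]\Bigr)\cdot \E\Bigl[\bigl(\littlesum_{j\in[k]}|\partial_i Q_j|\bigr)^3\Bigr]
\]
plus an analogous Gaussian term, where $\partial_i Q_j \eqdef \littlesum_{S\ni i}\widehat{Q_j}(S)\prod_{\ell\in S\setminus\{i\}}x_\ell$ has degree at most $d-1$ and $\E[(\partial_i Q_j)^2]=\Inf_i(Q_j)$. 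Applying the power-mean inequality to pull the sum over $j$ out of the third power (this costs a power of $k$), and then $(2,3)$-hypercontractivity for degree-$\le d$ polynomials in both the Boolean and Gaussian settings to bound $\E[|\partial_i Q_j|^3]$ by a $2^{O(d)}$ multiple of $\Inf_i(Q_j)^{3/2}$, reduces the whole telescoping sum to bounding $\littlesum_{i=1}^n \Inf_i(Q_j)^{3/2}$ for each $j$. For this one uses the elementary estimate $\littlesum_i \Inf_i(Q_j)^{3/2} \le (\max_i \Inf_i(Q_j))^{1/2}\cdot \littlesum_i \Inf_i(Q_j) \le \sqrt{\tau}\cdot d\cdot \Var[Q_j] = d\sqrt{\tau}$, using that $Q_j$ has degree at most $d$ and variance $1$. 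Collecting the factors — the combinatorial term $\binom{k+2}{3}$ from the third-order partials of $\Psi$, the power-mean factor, and the hypercontractive constant $(8\sqrt2)^d$ — and inserting Mossel's explicit constants yields exactly $\eps = 2Bk^{9/2}(8\sqrt2)^d\, d\sqrt{\tau}$.

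Finally I would verify the hypotheses in our situation: each $p_i$ is a degree-$d$ multilinear polynomial with $\Var[p_i]=1$, and $\max_j \Inf_j(p_i)\le\tau$ is assumed, so $\max_{i,j}\Inf_i(p_j)\le\tau$; and $\Psi\in C^3(\R^k)$ has all partials of order at most $3$ bounded by $B$. These are precisely the hypotheses of Theorem~4.1 of~\cite{Mossel10}, so the conclusion follows immediately. The only real work is bookkeeping: confirming that the combinatorial factor counting third-order derivatives of $\Psi$, the power-mean loss in passing from $\E[(\littlesum_j|\partial_iQ_j|)^3]$ to $\littlesum_j\E[|\partial_iQ_j|^3]$, and the Boolean/Gaussian hypercontractive constants combine to give exactly the exponent $9/2$ on $k$ and the base $8\sqrt2$ on the degree; there is also the minor point that Mossel's theorem is stated for general product ensembles (and for two correlated copies), so one must specialize to the standard uniform-$\{-1,1\}$ versus $N(0,1)$ instantiation, which is routine.
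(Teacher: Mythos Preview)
Your proposal is correct and follows essentially the same approach as the paper: both derive the statement by instantiating Theorem~4.1 of \cite{Mossel10} with the uniform-$\{-1,1\}$ and $N(0,1)$ ensembles and reading off the constants. The difference is only one of presentation. You re-sketch the Lindeberg hybrid argument underlying Mossel's theorem (Taylor expansion, hypercontractivity, the bound $\sum_i \Inf_i(Q_j)^{3/2} \le d\sqrt{\tau}$) to justify the specific shape of the error. The paper instead applies Mossel's theorem purely as a black box and explains the constants via parameter substitution: in Mossel's formulation the $k$-tuple is treated as a single vector-valued multilinear polynomial whose variance is the \emph{sum} of the component variances (hence $\Var[p]=k$) and whose $i$-th influence is the sum of the component influences (hence $\Inf_i[p]\le k\tau$), and the hypercontractivity parameter $\alpha$ equals $1/2$. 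Plugging these into Mossel's stated bound yields the $k^{9/2}$, the $(8\sqrt{2})^d$, and the $d\sqrt{\tau}$ directly, without re-deriving the proof. Your longer route is fine but not needed; the paper's substitution is the more economical way to obtain the explicit constants.
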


\begin{remark}
{\em We now briefly explain how the above is obtained from Theorem~4.1 of~\cite{Mossel10}. 
Theorem~4.1 considers a $k$-dimensional multi-linear polynomial $q = (q_1, \ldots, q_k)$.
The variance of the $k$-vector $q$ is defined to be the sum of the variances of the individual components, i.e., $\Var[q] = \sum_{j \in [k]} \Var[q_j]$. Similarly,
the influence of the $i$-th variable on $q$ is defined as the sum of the influences of the components, i.e., $\Inf_i[q] = \sum_{j \in [k]} \Inf_i[q_j]$. 
The degree of $q$ is the maximum of the degree of the components. Note that when we apply Theorem~4.1 to our setting, the corresponding $k$-dimensional 
multi-linear polynomial $p = (p_1, \ldots, p_k)$ has variance equal to $k$. Similarly, the influence of each variable in $p$ is at most $k\tau$. 
Finally, the value $\alpha$ in the notation of~\cite{Mossel10} is by definition equal to $1/2$. (See the derivation on top of p. 21 of the ArXiV 
version of \cite{Mossel10}}.)
\end{remark}
\noindent Note that in Theorem \ref{thm:invariance-smooth} 
the error parameter $\eps$ depends polynomially on $k$ and exponentially on 
$d$. As we now show, when the $k$-junta $g$ is the AND function, the second 
step (FT-mollifcation) also results in a polynomial dependence on $k$.


Let $g$ be the AND function on $k$ variables. We assume (wlog) that the range of $g$ is $\{0, 1\}$ as opposed to $\{-1, 1\}$. 
Let $p = (p_1, \ldots, p_k)$ be our $k$-vector of degree-$d$ multilinear polynomials satisfying the assumptions of Theorem~\ref{thm:invariance-smooth}.
Denote by $\theta_i$ and $p'_i$ the constant and non-constant parts of $p_i$ respectively, for $i=1, \ldots, k$, so
$p_i(x) = p'_i(x) + \theta_i$ for $i=1, \ldots, k$, where $p'_i(x)$ is a degree-$d$ polynomial with constant term $0$ and variance $1$.

Consider the region $R = \{y_i + \theta_i \ge 0, i \in [k]\} \subseteq \R^k$.
We claim that, in order to prove  Theorem~\ref{thm:our-invar} for $g$ being 
the AND$_{k}$ function, it suffices to establish the existence of a smooth 
function $\Psi$ such that the following two bounds hold:
\begin{equation} \label{eqn:ft-goal}
\E_{x \sim {\cal D}} \left[ \Psi \left(p'_1(x),\ldots, p'_k(x)\right) 
\right]  \approx_{\delta}  \E_{x \sim {\cal D}}
\left[ I_R \left(p'_1(x),\ldots, p'_k(x)\right) \right],
\end{equation}
where ${\cal D}$ is taken either to be the uniform distribution over
$\bn$ or to be $N(0,1)^n$,
for an appropriately small value of $\delta$.
Indeed, given these two versions of Equation~\ref{eqn:ft-goal}, 
Theorem~\ref{thm:our-invar} follows from Theorem~\ref{thm:invariance-smooth} 
and the triangle inequality with $\widetilde{\eps}= 2\delta+\eps$.

To establish the existence of a smooth approximation $\Psi$ to $I_R$ satisfying ~\ref{eqn:ft-goal}, we appeal to Theorem~\ref{thm:smooth-apx}.
In particular, the smooth function $\Psi$ will be the function $\tilde{I}_c$ of that theorem, for an appropriately large value of the parameter $c>0$.
Note that there is a tradeoff between the relevant parameters: On the one hand, the higher the value of $c$, the better an approximation 
$\tilde{I}_c$ will be to $I_R$, and hence the smaller the parameter $\delta$ will be. On the other hand, when $c$ increases, so does the upper bound on the magnitude 
of the derivatives of $\tilde{I}_c$ (see the first condition of Theorem~\ref{thm:smooth-apx}). 
This in turn places a lower bound on the value of $B$ (the maximum value of the third derivative) in Theorem~\ref{thm:invariance-smooth} -- hence, the parameter
$\eps$ increases. As a consequence of this tradeoff, one needs to select the parameter $c$ carefully to minimize the total error 
of $\widetilde{\eps}= O(\delta+\eps)$.

We will additionally need to use the fact that the random vector $p' = (p'_1, \ldots, p'_k)$ is sufficiently anti-concentrated (so that the contribution to the error from the region
where $I_R$ and its FT-mollified version differ by a lot is sufficiently small). For the case of the Gaussian distribution, this follows immediately from the Carbery-Wright inequality (Theorem~\ref{thm:cw}).
For the case of the uniform distribution over the cube, this follows (as usual), by a combination of the ``basic'' invariance principle of~\cite{MOO10} combined with Theorem~\ref{thm:cw}.

We perform the calculation for the regular boolean case below. It turns out that this is the bottleneck quantitatively -- and it subsumes the 
Gaussian case (since the corresponding anti-concentration bound holds for the Gaussian case as well). We start by recording the following fact,
which is a corollary of \cite{MOO10} combined with Theorem~\ref{thm:cw}:
\begin{fact} \label{fact:regular-ac}
Let $q:\bn \to \R$ be a $\tau$-regular degree-$d$ polynomial with $\Var[q]=1$ and $\rho>0$. Then, for all $\theta \in \R$ we have
\[ \Pr_{x \in \bn} \left[ |p(x)-\theta| \le \rho \right] \le O(d\tau^{1/(8d)}) + O(d\rho^{1/d}).\]
\end{fact}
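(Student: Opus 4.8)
The plan is to reduce the anti-concentration statement over $\bn$ to the corresponding statement over Gaussian inputs via the invariance principle, and then apply the Carbery--Wright inequality (Theorem~\ref{thm:cw}) in the Gaussian world, where anti-concentration of low-degree polynomials is already available.

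First I would record the ``Kolmogorov-distance'' form of the invariance principle for a single regular polynomial: if $q:\bn\to\R$ is $\tau$-regular of degree $d$ with $\Var[q]=1$, then
\[
\sup_{\alpha\in\R}\Bigl|\Pr_{x\in\bn}[q(x)\le\alpha] - \Pr_{\mathcal{G}\sim N(0,1)^n}[q(\mathcal{G})\le\alpha]\Bigr| \le O\bigl(d\,\tau^{1/(8d)}\bigr).
\]
This is the standard consequence of \cite{MOO10}: one applies the ``smooth'' invariance principle to a mollified approximation of the indicator of a halfline $(-\infty,\alpha]$ and controls the mollification error using the Gaussian anti-concentration bound (Theorem~\ref{thm:cw}) for $q(\mathcal{G})$; optimizing the mollification width against that anti-concentration bound is exactly what produces the exponent $1/(8d)$. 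Note that $q(\mathcal{G})$ is a genuinely continuous random variable, since $\Var[q]=1$ forces $q$ to be non-constant, so there is no issue with strict versus non-strict inequalities below.

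Then I would write, for any $\theta\in\R$ and $\rho>0$,
\[
\Pr_{x\in\bn}[\,|q(x)-\theta|\le\rho\,] = \Pr_{x\in\bn}[q(x)\le\theta+\rho] - \Pr_{x\in\bn}[q(x)<\theta-\rho],
\]
and replace each of the two Boolean probabilities on the right-hand side by the corresponding Gaussian probability, incurring an additive error of at most $O(d\,\tau^{1/(8d)})$ for each by the display above. This yields
\[
\Pr_{x\in\bn}[\,|q(x)-\theta|\le\rho\,] \le \Pr_{\mathcal{G}\sim N(0,1)^n}\bigl[\,\theta-\rho\le q(\mathcal{G})\le\theta+\rho\,\bigr] + O\bigl(d\,\tau^{1/(8d)}\bigr).
\]
Finally, since $\Var(q)=1$, the Carbery--Wright bound (Theorem~\ref{thm:cw}) gives $\Pr_{\mathcal{G}}[\,|q(\mathcal{G})-\theta|\le\rho\,]\le O(d\,\rho^{1/d})$, and combining the last two displays proves the Fact.

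The routine parts here are the triangle-inequality bookkeeping and the direct application of Carbery--Wright; the only real content is the Kolmogorov-distance form of the invariance principle, and the main place where care is needed about the exponents is precisely the mollification-versus-anti-concentration trade-off inside that step, which is what makes the error scale like $\tau^{1/(8d)}$ rather than a larger power of $\tau$. Since this derivation from \cite{MOO10} is by now standard (and is of the same flavor as the mollification argument already used in Section~\ref{sec:mollification}), I would simply cite it.
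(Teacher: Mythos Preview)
Your proposal is correct and takes essentially the same approach as the paper: the paper states this fact simply as ``a corollary of \cite{MOO10} combined with Theorem~\ref{thm:cw}'' without any further details, and your argument spells out exactly that corollary---pass from Boolean to Gaussian via the Kolmogorov-distance invariance principle and then apply Carbery--Wright.
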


\noindent {\bf Choice of Parameters:} We set $\rho \eqdef O(\tau^{1/8})$ and choose the parameter $c$ in Theorem~\ref{thm:smooth-apx} equal to $c \eqdef k/\rho$. 
We proceed to bound from above the quantity 
\[  \left| \E_{x  \sim \{-1,1\}^n} \left[ 
I_R \left(p'_1(x),\ldots, p'_k(x)\right) \right]  -  
\E_{x \sim \{-1,1\}^n} \left[ \tilde{I}_c \left(p'_1(x),\ldots, 
p'_k(x)\right) \right]  \right|.\] 

We start by observing that for any $y \in \R^k$, the Euclidean distance
$\|y - \partial R\|$ is at least $\min_i |y_i + \theta_i|$. Hence 
by a union bound combined with the above fact we obtain
\[ \Pr_x [ \|p'(x) - \partial R\| \le \rho  ]  
\le \Pr_x [ \min_i \{|p'_i(x) + \theta_i|\} \le \rho] \le \littlesum_{i=1}^k  \Pr_x [ |p'_i(x) + \theta_i| \le \rho]  = O(kd\tau^{1/(8d)}).\]
Similarly, for $w \ge \rho$ we have 
\[ \Pr_x [ \|p'(x) -  \partial R\| \le w  ]  = O(kdw^{1/d}).\]
Using these inequalities and Theorem~\ref{thm:smooth-apx} we bound from above the desired quantity as follows:
\begin{eqnarray*}
&& \left| \E_{x} \left[ I_R \left(p'(x)\right) \right]  -  \E_{x} \left[ \tilde{I}_c \left(p'(x)\right) \right]  \right| \\
&\le& \E_x \left[ \left| I_R (p'(x))  - \tilde{I}_c (p'(x))  \right| \right]  \\
&\le& \Pr_x [ \|p'(x) -  \partial R\| \le \rho  ] + 
\littlesum_{s=0}^{\infty} \left( \frac{k^2}{c^2 2^{2s} \rho^2}  \right)  \Pr_x [ \|p'(x) - \partial R\| \le 2^{s+1}\rho]\\
&\le&  
O(kd\tau^{1/(8d)}) +O(kd \rho^{1/d}) \littlesum_{s=0}^{\infty} 2^{-2s}  2^{s/d}
\quad \text{(by our choice of $c=k/\rho$)}\\
&=&  O(kd\tau^{1/(8d)}).
\end{eqnarray*}
Hence we obtain Equation~\ref{eqn:ft-goal} 
for $\delta =  O(kd\tau^{1/(8d)}).$
\ignore{

So, it suffices for us to calculate how large $c$ needs to be in Theorem~12 so that the two expectations differ by $\delta$.
The calculation we need is essentially the same as the first part of the proof in Appendix~H of~\cite{DKNfocs10}. The only difference is that instead of
using the anti-concentration for linear forms, we need to use the corresponding quantitative expression for degree-$d$ polynomials. 
In particular, we need to calculate the right value of $\rho$ in their notation.
Basically, the total error term in our case will be at most 
\[ O\left( k (d\rho^{1/d}+\tau^{1/(8d)}) \right).\]
For the case of the Gaussian distribution the second term does not exist (this terms comes from invariance).
Hence, in the Gaussian setting we can make the error as small as we want by making $\rho$ smaller.
However, in the regular Boolean case, we cannot. So, overall it makes sense to make $\rho  = \poly(\tau)$ (we will determine the exact value later) 
for a total error of (roughly)
\[ O\left( k \tau^{1/(8d)}) \right).\]
This will be our value of $\delta$. Now, the value of $c$ we need to choose is sth like $c = k/\rho.$ 
}
It remains to determine the  corresponding value of $\eps$ in Theorem~\ref{thm:invariance-smooth}.
Note that, by Theorem~\ref{thm:smooth-apx}, the value of the third derivative of the FT-mollified function $\tilde{I}_c$ 
will be at most $(2c)^3 = O(k/\rho)^3$. This is the value of $B$, which
determines the value of $\eps$.
The total error $\eps$ is roughly
\[ \eps = B \cdot \poly(k) \cdot 2^{O(d)} \cdot \sqrt{\tau} = \poly(k) \cdot 2^{O(d)}  \cdot \sqrt{\tau} / \rho^3 =   \poly(k) \cdot 2^{O(d)}  \cdot \tau^{1/8}. \]
\ignore{
Note that we need to have $\sqrt{\tau} / \rho^3 = o(1)$, otherwise the 
error will be too big. E.g. by setting $\rho = \tau^{1/8}$, we get 
that  $\sqrt{\tau} / \rho^3 = \tau^{1/8}$.} Therefore, the total error 
is $\widetilde{\eps} = 2\delta+\eps$ which is at 
most $\poly(k) \cdot 2^{O(d)} \cdot \tau^{1/(8d)}.$
This completes the proof for the case of the AND function. The general case 
follows via a union bound by viewing an arbitrary $k$-junta as 
a disjunction of $2^k$ AND$_k$ functions.

\fi

\section{An algorithmic regularity lemma:  Proof of Lemma \ref{lem:k-reg-alg}} \label{sec:regularity}

\subsection{Useful definitions and tools}

For $p(x_1,\dots,x_n) = \sum_{S \subset [n], |S| \leq d} \widehat{p}(S) \prod_{i \in S} x_i$
a multilinear degree-$d$ polynomial over $\{-1,1\}^n$, recall that
\[
\Inf_i(p) = \sum_{S \ni i} \widehat{p}(S)^2 =
\E_{x_i \in \{-1,1\}}[\Var_{x \setminus x_i \in
\{-1,1\}^{n-1}}[p(x)]]
\]
and that
\begin{equation} \label{eq:var-inf}
\sum_{0 \neq S} \widehat{p}(S)^2 =
\Var[p] \leq \sum_{i=1}^n \Inf_i(p) \leq d \cdot \Var[p].
\end{equation}
We say that $p$ is \emph{$\tau$-regular} if for all $i \in [n]$
we have
\[
\Inf_i(p) \leq \tau \cdot \Var[p].
\]

We will use the following standard tail bound on low-degree polynomials
over $\{-1,1\}^n$, see e.g. Theorem 2.12 of \cite{AH11} for a proof.
(Here and throughout this section unless otherwise indicated, we write $\Pr[\cdot]$, $\E[\cdot]$
and $\Var[\cdot]$ to indicate probability, expectation, and variance
with respect to a uniform draw of $x$ from $\{-1,1\}^n$.)
\begin{theorem}[``degree-$d$ Chernoff bound'',
\cite{AH11}] \label{thm:dcb}
Let $p: \{-1,1\}^n \to \R$ be a degree-$d$ polynomial. For any
$t > e^d$, we have
\[
\Pr[
|p(x) - \E[p]| > t  \cdot \sqrt{\Var[p]} ] \leq
{d e^{-\Omega(t^{2/d})}}.
\]
\ignore{{The same bound holds for $x$ drawn uniformly from $\{-1,1\}^n$
and with $\E_{\{-1,1\}^n}[p]$ and
$\Var_{\{-1,1\}^n}[p]$ in place of $\E_{N(0,1)^n}[p]$ and
$\Var_{N(0,1)}[p]$.}}
\end{theorem}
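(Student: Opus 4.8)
\noindent The plan is to derive this tail bound from the hypercontractive (Bonami--Beckner) inequality for low-degree polynomials on the Boolean cube, in the standard fashion; this is exactly the argument behind Theorem~2.12 of \cite{AH11}. First I would reduce to the centered, unit-variance case: replacing $p$ by $\widetilde{p} \eqdef (p - \E[p])/\sqrt{\Var[p]}$ affects neither side of the claimed inequality and preserves the degree bound, so we may assume $\E[p]=0$ and $\|p\|_2^2 = \Var[p] = 1$, and it suffices to bound $\Pr[\,|\widetilde{p}(x)| > t\,]$. The key tool is the $(2 \to q)$ hypercontractive inequality: for every multilinear polynomial of degree at most $d$ over $\{-1,1\}^n$ (uniform measure) and every real $q \ge 2$,
\[
\|p\|_q \;\le\; (q-1)^{d/2}\,\|p\|_2 \;\le\; q^{d/2}\,\|p\|_2 ,
\]
which follows by tensorization from the elementary two-point inequality (applying $T_\rho$ with $\rho = 1/\sqrt{q-1}$ to $T_\rho^{-1}p$, whose $2$-norm is at most $\rho^{-d}\|p\|_2$).

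Applying Markov's inequality to the nonnegative random variable $|\widetilde{p}(x)|^q$ then gives, for every $q \ge 2$,
\[
\Pr\big[\,|\widetilde{p}(x)| > t\,\big] \;=\; \Pr\big[\,|\widetilde{p}(x)|^q > t^q\,\big] \;\le\; \frac{\E\big[\,|\widetilde{p}(x)|^q\,\big]}{t^q} \;=\; \frac{\|\widetilde{p}\|_q^q}{t^q} \;\le\; \left(\frac{q^{d/2}}{t}\right)^{q}.
\]
It remains only to choose $q$. Taking $q = t^{2/d}/e$: the hypothesis $t > e^d$ forces $t^{2/d} > e^2$, hence $q > e > 2$, so the moment inequality is legitimately applicable; and with this value $q^{d/2}/t = (t^{2/d}/e)^{d/2}/t = e^{-d/2}$, so the bound above becomes $e^{-dq/2} = e^{-d t^{2/d}/(2e)}$. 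Since $d \ge 1$ this is at most $e^{-\Omega(t^{2/d})}$, and a fortiori at most $d \cdot e^{-\Omega(t^{2/d})}$, which is the claimed form.

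I do not anticipate a genuine obstacle here; the argument is entirely routine once hypercontractivity is granted. The only points that need a little care are (i) verifying that the optimizing choice $q = t^{2/d}/e$ lies in the admissible range $q \ge 2$ --- this is precisely what the hypothesis $t > e^d$ guarantees, and is the reason such a hypothesis appears in the statement --- and (ii) the harmless slack in passing from $(q-1)^{d/2}$ to $q^{d/2}$ and in inserting the extra factor of $d$ out front, both of which only weaken the bound. If one wished to avoid quoting hypercontractivity as a black box, one could instead bound the even moments $\E[\widetilde{p}(x)^{2m}]$ directly via a combinatorial estimate, but the hypercontractive route is by far the cleanest.
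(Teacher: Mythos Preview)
Your argument is correct and is exactly the standard hypercontractivity-plus-moment-method proof. Note that the paper does not actually prove this statement at all: it simply cites Theorem~2.12 of \cite{AH11} for a proof, and the argument there is precisely the one you have sketched (Bonami--Beckner giving $\|p\|_q \le (q-1)^{d/2}\|p\|_2$, Markov on the $q$-th moment, then optimizing $q \asymp t^{2/d}$). So your proposal matches the intended reference both in approach and in detail.
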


As a corollary we have:

\begin{corollary} \label{cor:skew} 
There is an absolute constant $C$\ignore{(which we may assume is $>1$; this 
will be convenient later)\rnote{where do we use this?}} such that the 
following holds:

Let $p: \{-1,1\}^n \to \R$ be a degree-$d$ multilinear polynomial that has
\begin{equation} \label{eq:skew}
|\widehat{p}(\emptyset)| =
|\E[p]| \geq (C \log(d /\eps))^{d/2} \cdot
\Var[p].
\end{equation}
Then
$\Pr[\sign(p(x)) \neq \sign(\widehat{p}(\emptyset))]
\leq \eps.$  We say that a polynomial $p$ satisfying
(\ref{eq:skew}) is \emph{$\eps$-skewed}.
\end{corollary}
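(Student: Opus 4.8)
The plan is to apply Theorem~\ref{thm:dcb} (the degree-$d$ Chernoff bound) with an appropriate choice of the deviation parameter $t$. The key observation is that if $p$ is $\eps$-skewed, then $|\E[p]|$ is so large compared to $\sqrt{\Var[p]}$ that for $\sign(p(x))$ to disagree with $\sign(\E[p])$, the polynomial $p(x)$ must deviate from its mean by at least $|\E[p]|$, which corresponds to many standard deviations. First I would set $t := |\E[p]|/\sqrt{\Var[p]}$; note that $\sign(p(x)) \neq \sign(\E[p])$ forces $|p(x) - \E[p]| \geq |\E[p]| = t \sqrt{\Var[p]}$, so $\Pr[\sign(p(x)) \neq \sign(\E[p])] \leq \Pr[|p(x) - \E[p]| > \tfrac{t}{2}\sqrt{\Var[p]}]$ (or just $\geq t\sqrt{\Var[p]}$, being slightly careful with the strict-vs-weak inequality by using $t/2$ to be safe).

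Next I would verify that the hypothesis $t > e^d$ of Theorem~\ref{thm:dcb} holds. From the skew condition~(\ref{eq:skew}) we have $|\E[p]| \geq (C\log(d/\eps))^{d/2}\cdot \Var[p]$; dividing by $\sqrt{\Var[p]}$ gives $t \geq (C\log(d/\eps))^{d/2}\cdot\sqrt{\Var[p]}$. Here one needs to be slightly careful: if $\Var[p]$ is tiny this bound on $t$ could be weak. However, without loss of generality we may assume $\Var[p] \geq 1$ (or handle the degenerate small-variance case separately — if $\Var[p]$ is extremely small then $p$ is essentially constant and the claim is immediate; alternatively one normalizes). Assuming $\Var[p] \ge 1$, we get $t \geq (C\log(d/\eps))^{d/2} \geq t > e^d$ for $C$ a large enough absolute constant, since $\log(d/\eps) \geq \log d$ and we can absorb $e^d$. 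Then $t^{2/d} \geq C\log(d/\eps)$, so by Theorem~\ref{thm:dcb},
\[
\Pr[\sign(p(x)) \neq \sign(\E[p])] \leq d\, e^{-\Omega(t^{2/d})} \leq d\, e^{-\Omega(C\log(d/\eps))} \leq \eps,
\]
where the final inequality holds by choosing the absolute constant $C$ large enough that $\Omega(C\log(d/\eps))$ exceeds $\log(d/\eps)$ with room to spare (so that $d\cdot(\eps/d)^{\Omega(C)} \leq \eps$).

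The main obstacle — really the only subtlety — is the interplay between the normalization of $\Var[p]$ and the form of condition~(\ref{eq:skew}): as written, the skew condition involves $\Var[p]$ rather than $\sqrt{\Var[p]}$, so to translate it into a lower bound on $t = |\E[p]|/\sqrt{\Var[p]}$ cleanly one wants $\Var[p] \geq 1$ (which in the application is the normalization that will be in force), or else one argues that the small-variance regime is trivial. I would state this normalization assumption explicitly at the start of the proof. Everything else is a routine substitution into Theorem~\ref{thm:dcb} and a choice of the absolute constant $C$.
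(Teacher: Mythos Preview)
Your approach is correct and matches the paper's: the paper states this as an immediate corollary of Theorem~\ref{thm:dcb} with no further argument, and the proof you outline (take $t = |\E[p]|/\sqrt{\Var[p]}$ and read off the tail bound) is exactly the intended one. You are also right to flag the $\Var[p]$ versus $\sqrt{\Var[p]}$ issue --- the hypothesis as written is not scale-invariant, and the paper uses the corollary only after normalizing to $\Var[a]=1$ (see the proof of Lemma~\ref{lem:skewed-stays-skewed}), so your proposed normalization is precisely what is in force.
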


The following terminology will be convenient for us:
\begin{definition} \label{def:good}
Fix $0 < \eps, \tau < 1/4$ and let $q(x_1,\dots,x_n)$ be a multilinear
degree-$d$ polynomial.  We say that $q$ is
\emph{$(\tau,\eps)$-good} if at least one of the following
two conditions holds:

\begin{enumerate}

\item $q$ is $\tau$-regular; or

\item $q$ is $\eps$-skewed.

\end{enumerate}

\end{definition}

Using this terminology we can give a concise statement of the 
regularity lemma for
a single degree-$d$ polynomial as follows:

\begin{lemma} \label{lem:reg} [regularity lemma, $k=1$]
\cite{DSTW:10,Kane13ccc}
There is a positive absolute constant $A$ such that the following holds:

Let $p$ be a degree-$d$ multilinear polynomial over $\{-1,1\}^n$ and
fix $0 < \tau, \eps, \delta < 1/4$.  Then there is a decision tree
$T$ of depth at most
\[
D_{d,1}(\tau,\eps,\delta) := {\frac 1 \tau} \left( d \log {\frac 1 \tau}
\log {\frac {1} \eps}
\right)^{Ad} \cdot \log {\frac 1 \delta},
\]
\footnote{Throughout the paper we write ``$D_{d,k}(\tau,\eps,\delta)$''
to denote the depth bound of the decision tree given by a regularity lemma
for $k$-tuples of degree-$d$ polynomials in which the regularity
parameter is $\tau$, the skew parameter is $\eps$, and the ``probability
that a leaf is not $(\tau,\eps)$-good'' parameter is $\delta.$}such that
with probability at least $1-\delta$,
at a random leaf $\rho$ the restricted polynomial
$p_\rho$ is $(\tau,\eps)$-good.
\end{lemma}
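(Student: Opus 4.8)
The plan is to build the tree $T$ by a greedy branching process and to bound its depth using a Fourier-analytic potential together with a recursion on the degree $d$.

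\textbf{The construction.} Starting from the empty restriction at the root, at a node with current restriction $\rho$ do the following: if $p_\rho$ is $(\tau,\eps)$-good, declare it a leaf (labelled ``regular'' if it is $\tau$-regular, and otherwise ``$+1$''/``$-1$'' according to $\sign(\widehat{p_\rho}(\emptyset))$, which is valid by Corollary~\ref{cor:skew}); otherwise $p_\rho$ is not $\tau$-regular, so pick a variable $i$ with $\Inf_i(p_\rho) > \tau\Var[p_\rho]$ and branch on $x_i$. Truncate at depth $D \eqdef D_{d,1}(\tau,\eps,\delta)$, making any depth-$D$ node a leaf. Since the tree is binary at every internal node and no ancestor of an internal node is good, it then suffices to show that a random root-to-leaf path (descending to a uniformly random child at each internal node) reaches a $(\tau,\eps)$-good leaf with probability at least $1-\delta$; writing $B_t$ for the set of depth-$t$ nodes none of whose ancestors (itself included) is good, this is exactly the statement $\sum_{u\in B_D}2^{-D}\le\delta$.

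\textbf{The potential.} For a polynomial $q$ over $\{-1,1\}^n$ put $\Phi(q)\eqdef\sum_{\emptyset\neq S}|S|\,\widehat q(S)^2$, so $\Var[q]\le\Phi(q)\le d\,\Var[q]$ since $1\le|S|\le d$ on every nonzero term. A one-line Fourier computation (using $\widehat{q|_{x_i=b}}(T)=\widehat q(T)+b\,\widehat q(T\cup\{i\})$ for $T\not\ni i$) gives, for uniform $b\in\{-1,1\}$, the identity $\E_b[\Phi(q|_{x_i=b})]=\Phi(q)-\Inf_i(q)$. At a node $u\in B_t$ we branch on a variable with $\Inf_i(p_{\rho_u})>\tau\Var[p_{\rho_u}]\ge(\tau/d)\Phi(p_{\rho_u})$, so the expected potential of its children drops by a factor $1-\tau/d$; summing over the still-bad subtree and discarding the nonnegative contributions of children that have become good, the probability-weighted potential $\Psi_t\eqdef\sum_{u\in B_t}2^{-t}\Phi(p_{\rho_u})$ satisfies $\Psi_{t+1}\le(1-\tau/d)\Psi_t$, hence $\Psi_t\le(1-\tau/d)^t\Phi(p)\le(1-\tau/d)^t d\,\Var[p]$. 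By Markov, at all but a $(1-\tau/d)^t d\,\Var[p]/\gamma$ fraction (by probability weight) of the nodes of $B_t$ one has $\Var[p_\rho]\le\gamma$.

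\textbf{Degree recursion and amplification.} The potential bound alone does not finish the job: a polynomial can be non-regular and non-skewed while having arbitrarily small $\ell_2$ mass (e.g.\ $\eps\,x_1x_2$), so small $\Var[p_\rho]$ does not certify goodness. This is the main obstacle, and the fix is to interleave the potential argument with a recursion that lowers the degree. In outline: run the potential argument with $\gamma$ chosen tiny relative to $\tau$ and to $(C\log(d/\eps))^{-d}$, so that at all but a $\delta/(2d)$ fraction of leaves either $p_\rho$ is $\tau$-regular, or the top-degree homogeneous part of $p_\rho$ has variance $\le\gamma\,\Var[p_\rho]$. In the latter case $\sign(p_\rho)$ agrees with the sign of its degree-$\le d-1$ truncation on a $1-\eps/(2d)$ fraction of inputs — by the degree-$d$ Chernoff bound (Theorem~\ref{thm:dcb}) for the discarded top part together with anti-concentration for the truncation — and one recurses on that lower-degree polynomial; after $d$ rounds the working polynomial is a constant, hence $\eps$-skewed by Corollary~\ref{cor:skew}, with accumulated sign error $O(\eps)$ and failure probability $O(\delta)$. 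Checking that, with $\gamma$ small enough, a polynomial whose degree-$\le d-1$ truncation is $(\tau,\eps)$-good is itself $(O(\tau),O(\eps))$-good, one rescales $\tau,\eps$ by constants to validate the leaf labels. Each round multiplies the depth by $O\!\big((d/\tau)\log(1/\gamma)\big)=\tfrac1\tau(d\log\tfrac1\tau\log\tfrac1\eps)^{O(1)}$, so $d$ rounds give depth $\tfrac1\tau(d\log\tfrac1\tau\log\tfrac1\eps)^{O(d)}$ with $O(\delta)$ failure probability; finally, re-running the whole construction inside each still-bad leaf $O(\log(1/\delta))$ times — using independence across these disjoint blocks to multiply a per-block failure probability $\le 1/2$ — drives the total failure probability below $\delta$ and supplies the $\log(1/\delta)$ factor, giving the claimed bound on $D_{d,1}(\tau,\eps,\delta)$.

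\textbf{Where the work is.} The engine (the potential $\Phi$ and its contraction along the still-bad subtree) is short and self-contained. The delicate part is the degree recursion: choosing the thresholds $\gamma$ at each level so that dropping the top homogeneous part is sign-preserving up to the allotted error — which needs a quantitative anti-concentration bound for degree-$\le d-1$ polynomials over $\{-1,1\}^n$, e.g.\ via the invariance principle together with Theorem~\ref{thm:cw} — and then tracking how the regularity parameter, the variance thresholds, the sign errors, and the failure probabilities compose across the $d$ levels so that the final depth is $\tfrac1\tau(d\log\tfrac1\tau\log\tfrac1\eps)^{\Theta(d)}\log\tfrac1\delta$ rather than something with a worse dependence on $d$.
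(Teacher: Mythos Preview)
The paper does not prove this lemma --- it is quoted from \cite{DSTW:10,Kane13ccc} and used as a black box --- so there is no ``paper's own proof'' to compare against. Your outline is in the spirit of those references (greedy branching on a high-influence variable, a potential that contracts in expectation, then a degree-lowering recursion), and your potential identity $\E_b[\Phi(q|_{x_i=b})]=\Phi(q)-\Inf_i(q)$ for $\Phi(q)=\sum_{S\neq\emptyset}|S|\,\widehat q(S)^2$ is correct and clean.

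The genuine gap is in the degree recursion. From $\Psi_t\le(1-\tau/d)^t\,d\,\Var[p]$ and Markov you only get an \emph{absolute} bound $\Var[p_\rho]\le\gamma$ at most depth-$t$ bad nodes; you do \emph{not} get the \emph{relative} statement ``the top-degree homogeneous part of $p_\rho$ has variance $\le\gamma\,\Var[p_\rho]$,'' which is what your recursion needs. The single-monomial example $p=x_1x_2\cdots x_d$ already breaks the dichotomy: every partial restriction $p_\rho$ is again a single monomial, hence $100\%$ top-degree, not $\tau$-regular for any $\tau<1$, and not skewed (constant term $0$). Your $\Phi$ does decrease along this path, but only because the degree drops; the relative top-degree mass never shrinks, so your stated dichotomy fails at every intermediate node. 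Relatedly, the ``truncate to degree $\le d-1$ and appeal to anti-concentration'' step is vacuous when the degree-$\le d-1$ part has zero variance, as in this same example.

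What actually makes the DSTW/Kane argument go through is a more careful potential/case analysis that couples the choice of branching variable to the structure of $p_\rho$: roughly, one argues that a random restriction of a high-influence variable causes a quantifiable, high-probability drop either in the variance of a fixed degree level or in a ``critical-index''-style invariant, and this is where anti-concentration is invoked (on the linear-in-$x_i$ part of $p_\rho$, not on a lower-degree truncation of $p_\rho$ itself). Your write-up would need to replace the unjustified dichotomy by an argument of this kind; as stated, the step from ``$\Psi_t$ small'' to ``top-degree part relatively small or regular'' does not hold.
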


(We note that \cite{DSTW:10} states the regularity lemma in a form which is
slightly weaker than this because it only claims that
for almost every leaf the restricted PTF at that leaf is
$\tau$-close to $\tau$-regular.
However, inspection of the \cite{DSTW:10} proof shows that it actually
gives the above result:  at almost every leaf the restricted polynomial is either regular
or skewed.
Proposition 15 of \cite{Kane13ccc} gives a statement equivalent to
Lemma \ref{lem:reg} above, along with a streamlined proof.  We further note
that \cite{HKM:09} independently established a very similar regularity lemma,
although with slightly different parameters, that could also be used in
place of Lemma \ref{lem:reg}.)

\ignore{, and follow the
[DSTW10] analysis for our $k$-way regularity lemma.)
However, in the stuff below I am planning
to build on Daniel's analysis instead. It is quite similar to
the [DSTW10] approach but seems slightly simpler since it does not
involve the critical index.)
}

\subsection{The structural result}
The main structural result we prove is the following extension of Lemma \ref{lem:reg} to $k$-tuples
of degree-$d$ polynomials:

\begin{lemma} \label{lem:k-reg} [regularity lemma, general $k$,
general $d$]
Let $p_1,\dots,p_k$ be degree-$d$ multilinear polynomials
over $\{-1,1\}^n$.  Fix $0 < \tau, \eps, \delta < 1/4$.
Then there is a decision tree
$T$ of depth at most
\[
D_{d,k}(\tau,\eps,\delta) \leq 
\left(
{\frac 1 \tau} \cdot \log {\frac 1 \eps}
\right)^{(2d)^{\Theta(k)}}
\cdot \log {\frac 1 \delta}
\]
such that with probability at least $1-\delta$,
at a random leaf $\rho$ all the restricted polynomials $(p_1)_\rho,
\dots,(p_k)_\rho$
are $(\tau,\eps)$-good.
\end{lemma}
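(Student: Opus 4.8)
The plan is to prove Lemma~\ref{lem:k-reg} by an iterative construction that bootstraps from the single-polynomial regularity lemma (Lemma~\ref{lem:reg}); phrased as induction on $k$, the base case $k=1$ is exactly Lemma~\ref{lem:reg}. Given degree-$d$ polynomials $p_1,\dots,p_k$ over $\bn$, I would process them in order: first apply Lemma~\ref{lem:reg} to $p_1$ (with a regularity target $\tau_1 \ll \tau$, skew target $\eps_1 \ll \eps$, and failure target $\delta_1 \ll \delta$ to be fixed below) to get a decision tree in which almost every leaf has $(p_1)_\rho$ $(\tau_1,\eps_1)$-good; then, at each leaf $\rho$, apply Lemma~\ref{lem:reg} to $(p_2)_\rho$ with target $\tau_2$ to refine the tree; and so on through $p_k$. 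At a final leaf $\rho$ the tree outputs, for each $i$, the restricted polynomial $(p_i)_\rho$ together with the label $\lab_i(\rho) \in \{+1,-1,\text{``regular''},\text{``fail''}\}$ determined by whether $(p_i)_\rho$ is $\tau$-regular, $\eps$-skewed (with the appropriate sign, via Corollary~\ref{cor:skew}), or neither. Properties (1) and (2) of the lemma are then immediate from the definitions of skewed and regular.

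\textbf{The main obstacle.} The one real difficulty is that the extra restrictions introduced when regularizing $p_i$ may destroy the goodness of the already-handled polynomials $p_1,\dots,p_{i-1}$. I would handle this with a \emph{robustness lemma}: if $q$ is a degree-$d$ polynomial that is $(\tau_0,\eps_0)$-good and $\rho$ is a uniformly random root-to-leaf restriction of a decision tree of depth $D_0$, then with probability at least $1-\delta_0$ over $\rho$ the restricted polynomial $q_\rho$ is again good, with only polynomial (not exponential) degradation of the parameters. The skewed case is easy: if $q$ is $\eps_0$-skewed then $\Pr_x[\sign(q(x)) \ne \sign(\widehat q(\emptyset))] \le \eps_0$, and averaging this identity over $\rho$ and applying Markov's inequality shows that for all but a $\sqrt{\eps_0}$ fraction of leaves the sign of $q_\rho$ is still determined up to error $\sqrt{\eps_0}$, so $q_\rho$ remains $O(\sqrt{\eps_0})$-good. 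For the regular case I would use that restriction does not increase the relevant quantities in expectation: $\E_\rho[\Var[q_\rho]]$ equals $\Var[q]$ minus the variance carried by the restricted coordinates, and $\E_\rho[\Inf_i(q_\rho)] \le \Inf_i(q)$. Hence, with high probability over $\rho$, either $\Var[q_\rho]$ dropped by more than a fixed polynomial factor (in which case $q_\rho$ is $\eps_1$-skewed, since its non-constant Fourier weight has become tiny relative to a constant term that cannot itself be tiny), or else $\Var[q_\rho] \ge \poly(\tau_0)\cdot\Var[q]$, and then the \emph{number} of coordinates with $\Inf_i(q_\rho) > \tau_1\Var[q]$ is, in expectation, at most $\E_\rho[\sum_i \Inf_i(q_\rho)]/(\tau_1\Var[q]) \le d/\tau_1$, so by Markov only $O(d/(\tau_1\delta_0))$ coordinates are ``bad'' for $q$ at a typical leaf; these few coordinates can then be eliminated by an additional, bounded amount of branching, after which $q_\rho$ is $\tau_1$-regular. (The threshold $\tau_1$ here is taken to be $\poly(\tau_0,1/D_0,\delta_0)$.)

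\textbf{Putting it together.} With the robustness lemma available, I would choose the regularity targets in a geometric hierarchy: $\tau_i$ is taken small enough that after the restrictions imposed by regularizing $p_{i+1},\dots,p_k$ --- a decision tree of total depth at most $\sum_{j>i} D_{d,1}(\tau_j,\eps_j,\delta_j) \le (1/\tau_{i+1})^{O(d)}$ --- the robustness lemma still guarantees that $p_i$ is $(\tau,\eps)$-good at almost every leaf; concretely $\tau_i \approx (\tau\,\log^{-1}(1/\eps))^{(2d)^{\Theta(k-i)}}$, with $\tau_k \approx \tau$. The skew and failure targets $\eps_i,\delta_i$ are chosen as $\eps/2^{O(k)}$- and $\delta/2^{O(k)}$-type quantities, small enough to absorb the $k$ rounds of $\sqrt{\cdot}$-type and additive losses; since $\eps_i,\delta_i$ only ever enter through logarithms, this affects the depth bound only by a factor absorbed into $(2d)^{\Theta(k)}$. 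The total tree depth is then dominated by $D_{d,1}(\tau_1,\eps_1,\delta_1) = (1/\tau_1)^{O(d)} = (\tfrac1\tau \log\tfrac1\eps)^{(2d)^{\Theta(k)}}$, matching the claimed $D_{d,k}(\tau,\eps,\delta)$, and a union bound over the at most $k$ (one per polynomial, plus the cleanup branchings) ``failure'' events bounds the probability of reaching a ``fail'' leaf by $\delta$, giving property (3). Finally, the algorithmic version (Lemma~\ref{lem:k-reg-alg}) follows because every step is a deterministic computation of cost $\poly(n,b)$ per tree node --- Lemma~\ref{lem:reg} is itself algorithmic, influences and variances of multilinear degree-$d$ polynomials are read off the coefficients, skewness is checked via Corollary~\ref{cor:skew}, and the cleanup branchings are explicit --- and the tree has $2^{D_{d,k}(\tau,\eps,\delta)}$ nodes.
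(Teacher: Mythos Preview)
Your overall strategy---iterate the single-polynomial regularity lemma and argue that subsequent restrictions cannot destroy goodness already established---matches the paper's. The difference is in how you prove robustness. You argue it \emph{probabilistically} (goodness survives a random depth-$D_0$ restriction with high probability, after some cleanup branching), whereas the paper argues it \emph{deterministically}: by taking $\tau'$ small enough relative to the depth $D$ of the subsequent tree (specifically $\tau' \approx ((d-1)/eD)^{d-1}/D^2$), it shows that for \emph{every} depth-$D$ restriction $\rho$ a $\tau'$-regular polynomial has its variance drop by at most a factor $2$ and each influence grow by at most $(eD/(d-1))^{d-1}$, so $\tau$-regularity holds at every leaf; an analogous worst-case bound handles the skewed case. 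The paper's footnote explicitly considers the probabilistic route and remarks that it can be made to work but is ``significantly more involved'' and yields no better bound.

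Your sketch of the probabilistic robustness has a genuine gap in the regular case. The claim that when $\Var[q_\rho]$ has dropped by a large factor the restricted polynomial must be $\eps_1$-skewed ``since \ldots\ a constant term that cannot itself be tiny'' is not justified: a $\tau_0$-regular $q$ may have $\widehat q(\emptyset)=0$, and for a \emph{particular} leaf $\rho$ there is no reason $\widehat{q_\rho}(\emptyset)$ should be large just because $\Var[q_\rho]$ is small (conservation of $\E[q^2]$ holds only on average over $\rho$, not leaf by leaf). The cleanup-branching step also needs more care: restricting the few high-influence coordinates can raise the influences of the remaining ones by the same $(eD/(d-1))^{d-1}$ factor and can disturb the goodness of the other $p_j$, so termination has to be argued. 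Finally, your expectation identities $\E_\rho[\Var[q_\rho]]\approx\Var[q]$ and $\E_\rho[\Inf_i(q_\rho)]\le\Inf_i(q)$ are written for a random restriction of a \emph{fixed} coordinate set; when $\rho$ is a random leaf of a decision tree that queries different variables on different paths, the bookkeeping is more delicate---this is exactly the difficulty the paper flags as its reason for preferring the worst-case argument. With the deterministic robustness lemmas in hand, both your ``Case~A'' and the cleanup branching become unnecessary: Case~A simply cannot occur, and in Case~B every coordinate already satisfies the required influence bound.
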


\begin{remark} \label{rem:alg}
It is easy to verify (see Theorem 52 of \cite{DDS13:deg2count})
that there is an efficient deterministic algorithm that
constructs the decision tree whose existence is asserted by the
original $k=1$ regularity lemma for degree-$d$ polynomials,
Lemma \ref{lem:reg}.  Given this, inspection of the proof of Lemma
\ref{lem:k-reg} shows that the same is true for the decision tree whose
existence is asserted by Lemma \ref{lem:k-reg}.
(The key observation, in both cases, is that given a degree-$d$
polynomial $q$, it is easy to efficiently deterministically
compute the values $|\widehat{q}(\emptyset)|$, $\Inf_i(q)$
and $\Var[q]$, and thus to determine whether or not $q$ is $\tau$-regular
and whether or not it is $\eps$-skewed.)
Thus in order to establish the algorithmic regularity lemma,
Lemma \ref{lem:k-reg-alg}, it is sufficient to prove Lemma \ref{lem:k-reg}.
\end{remark}

\begin{remark} \label{rem:auto-delta}
Suppose that we prove a result like Lemma \ref{lem:k-reg} but with a bound of
$\gamma(d,k,\tau,\eps,\delta)$ on the RHS upper bounding $D_{d,k}(\tau,\eps,\delta).$
Then it is easy to see that we immediately get a bound of
$\gamma(d,k,\tau,\eps,1/2) \cdot O(\log {\frac 1 \delta})$, simply
by repeating the construction $2 \ln {\frac 1 \delta}$ times
on leaves that do not satisfy the desired $(\tau,\eps)$-good condition.  Thus to prove
Lemma \ref{lem:k-reg} it
suffices to prove a bound of the form 
$\gamma(d,k,\tau,\eps,\delta)$ and indeed this is what
we do below, by showing that \[
\gamma(d,k,\tau,\eps,\delta) =
\left( {\frac 1 \tau} \cdot \log {\frac 1 \eps} \cdot
\log {\frac 1 \delta} \right)^{(2d)^{\Theta(k)}} 
\]
is an upper bound on the solution of the equations 
(\ref{eq:tauprime-epsprime-poly}) and~(\ref{eq:rec-Ddk}) given below; 
see Section \ref{sec:solution}.
\end{remark}

\ignore{
\new{
I did not work out the expression in its full glory but I think
that if $k,d$ are set to constants then it will
just be $\poly({\frac 1 \tau}, \log {\frac 1 \eps})$.
The exponent and coefficients of the polynomial
will depend badly on $k,d$ but I don't think there will be a tower
appearing anywhere. We may be able to improve the dependence
on these parameters $k,d$
by strengthening Claim \ref{claim:junta-reg-stays-reg},
see the red stuff in Section \ref{sec:jrreg}.  Even for $k=d=1$
the correct dependence on $\tau$ is $1/\tau$ and on
$\eps$ is $\log(1/\eps)$ so I don't think we can improve
those dependencies.
}
}

\subsection{Previous results and our approach.} \label{sec:prev-res}

As noted earlier, Gopalan et al. prove a regularity lemma for $k$-tuples of linear
forms in \cite{GOWZ10}.  While their lemma is phrased somewhat differently (they prove
it in a more general setting of product probability spaces), it yields a result that is qualitatively similar
to the special $d=1$ case of Lemma~\ref{lem:k-reg}.  Indeed, the quantitative bound (i.e. the
number of variables that are restricted) in the \cite{GOWZ10} lemma is better than the quantitative bounds we achieve in the case $d=1$.\ignore{; our bound is
${\frac 1 \tau}\left(
\log {\frac 1 \tau} \log {\frac 1 \eps}
\right)^{O(k)}
\log {\frac 1 \delta}
$
and I think theirs would be just
$k \cdot {\frac 1 \tau}\left( \log {\frac 1 \tau} \log {\frac 1 \eps}
\right)^{O(1)} \log {\frac 1 \delta}.$)}  However, there seem to be significant
obstacles in extending the \cite{GOWZ10} approach from linear
forms to degree-$d$ polynomials; we discuss their approach, and contrast it with
our approach, in the rest of this subsection.

The \cite{GOWZ10} regularity lemma works by
``collecting variables'' in a greedy fashion.
Each of the $k$ linear forms has an initial ``budget'' of at most $B$
(the exact value of $B$ is not important for us),
meaning that at most $B$ variables will be restricted ``on its behalf''.
The lemma iteratively builds a set $S$ where each linear form gets to contribute up to $B$ variables
to the set.  At each step in building $S$, if some linear form $\ell_i$ (a) has
not yet exceeded its budget of $B$ variables and (b) is not
yet regular, then a variable that has high influence in $\ell_i$ (relative
to the total influence of all variables in $\ell_i$) is put
into $S$ and the ``budget'' of $\ell_i$ is decreased by one.
If no such linear form exists then the process ends.  It is clear
that the process ends after at most $k B$ variables have been added
into $S$.  At the end of the process, each linear form $\ell_i$
is either regular, or else there have been $B$ occasions when $\ell_i$ contributed
a high-influence variable to $S$.  This ensures that if
$\rho$ is a random restriction fixing the variables in $S$, then
with high probability the restricted $(\ell_i)_{\rho}$ will be
skewed.  (The argument for this goes back to \cite{Servedio:07cc,
DGJ+:10} and employs a simple anti-concentration bound for linear forms with
super-increasing weights.)
\ignore{  Roughly speaking, each of the $B$ occasions when
$\ell_i$ has its influential variable added to $S$ ensure that
the ``tail'' $[n] \setminus S$ of $\ell_i$ will have very small variance
relative to the influences of those $B$ high-influence variables in $\ell_i$. Hence
for a random restriction $\rho$ the ``displacement of the head of
$(\ell_i)_\rho$'' will, with high probability over a random assignment, have
a very large magnitude relative to the variance of the tail, and hence with high
probability the linear form $(\ell_i)_\rho$ will be skewed).}

While these arguments work well for $d=1$ (linear forms), it is not clear
how to extend them to $d>1$.  One issue is that in a linear form,
any restriction of a set $S$ of ``head'' variables leaves the same
``tail'' linear form (changing only the constant term), while this is
not true for higher-degree polynomials.  A more significant obstacle
is that for $d>1$, restricted variables can interact with each other ``in the head'' of the
polynomial $p_i$, and we do not have a degree-$d$ analogue of the
simple anti-concentration bound for linear forms with super-increasing weights that is at the heart of the
$d=1$ argument.  (This anti-concentration bound uses
independence between variables in a linear form to enable a restriction argument
saying that regardless of the existence of other variables ``between'' the variables with
super-increasing weights, a linear form containing super-increasing weights must have
good anti-concentration.  This no longer holds in the higher degree setting.)

\medskip

\noindent {\bf Our approach.}
The idea behind our approach is extremely simple.  Consider first the
case of $k=2$ where there are two polynomials $p_1$ and $p_2$.  For
carefully chosen parameters
$\tau' \ll \tau$ and $\eps' \ll \eps$ we first use the usual regularity lemma
(for a single polynomial) on $p_1$
to construct a decision tree such that at a random leaf $\rho'$, the polynomial
$(p_1)_{\rho'}$ is with high probability $(\tau',\eps')$-good.  Then at each leaf $\rho'$,
we use the usual regularity lemma (for a single polynomial) on $(p_2)_{\rho'}$
to construct a decision tree such that at a random leaf $\rho_2$
of the tree, the polynomial $((p_2)_{\rho'})_{\rho_2}$
is with high probability $(\tau,\eps)$-good.

The only thing that can go wrong in the above scheme is that
$(p_1)_{\rho'}$ is $(\tau',\eps')$-good, but as a result of subsequently
applying the restriction $\rho_2$, the resulting polynomial $((p_1)_{\rho'})_{\rho_2}$
is not $(\tau,\eps)$-good.  However, if
$(p_1)_{\rho'}$ is $\tau'$-regular, then exploiting the fact that $\tau' \ll \tau$,
it can be shown that $((p_1)_{\rho'})_{\rho_2}$ will at least be $\tau$-regular
-- intuitively this is because restricting the (relatively few) variables $\rho_2$ required to ensure
that $(p_2)_{\rho'}$ becomes $(\tau,\eps)$-good, cannot ``damage'' the $\tau'$-regularity
of $(p_1)_{\rho'}$ by too much.
And similarly, if $(p_1)_{\rho'}$ is $\eps'$-skewed, then exploiting the fact that
$\eps' \ll \eps$) it can be shown that $((p_1)_{\rho'})_{\rho_2}$ will at least be $\eps$-skewed, for similar
reasons.
Thus, we can bound the overall failure probability that either polynomial fails to be $(\tau,\eps)$-good
as desired.
The general argument for $k>2$ is an inductive extension of the above
simple argument for $k=2$.
\footnote{As suggested by the sketch given above, we choose $\tau'$ relative to $\tau$ so that
if $(p_1)_{\rho'}$ is $\tau'$-regular then $((p_1)_{\rho'})_{\rho_2}$ will be $\tau$-regular
with probability 1 (and similarly for $\eps'$ and $\eps$).  A natural idea is to weaken
this requirement so that $((p_1)_{\rho'})_{\rho_2}$ will be $\tau$-regular only with high
probability over a random choice of $\rho_2$.  It is possible to give an analysis following
this approach, but the details are significantly more involved and the
resulting overall bound that we were able to obtain is not significantly better than
the bound we achieve with our simpler ``probability-1'' approach.
Very roughly speaking the difficulties arise because it is non-trivial to give a strong tail bound over
the choice of a random restriction sampled from a decision tree in which different sets of variables may be queried on different paths.}

\subsection{Proof of Lemma~\ref{lem:k-reg}}

In this section we prove Lemma~\ref{lem:k-reg}.
The argument is an inductive one using the result for $(k-1)$-tuples
of degree-$d$ polynomials.  As discussed in Remark \ref{rem:auto-delta}, to establish
Lemma \ref{lem:k-reg} it suffices to prove the following:

\begin{lemma} \label{lem:a}
[regularity lemma, general $k$,
general $d>1$]
Let $p_1,\dots,p_k$ be multilinear degree-$d$ polynomials
over $\{-1,1\}^n$.  Fix $0 < \tau, \eps, \delta < 1/4$.
Then there is a decision tree
$T$ of depth at most
\begin{equation}
\label{eq:Ddk}
D_{d,k}(\tau,\eps,\delta) \leq 
\left( {\frac 1 \tau} \cdot \log {\frac 1 \eps} \cdot
\log {\frac 1 \delta} \right)^{(2d)^{\Theta(k)}} 
\end{equation}
such that with probability at least $1-\delta$,
at a random leaf $\rho$ all of $(p_1)_\rho,\dots,(p_k)_\rho$ are
$(\tau,\eps)$-good.
\end{lemma}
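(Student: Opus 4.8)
\textbf{Proof plan for Lemma~\ref{lem:a}.}
The plan is to prove Lemma~\ref{lem:a} by induction on $k$, using the single-polynomial regularity lemma (Lemma~\ref{lem:reg}) both as the base case ($k=1$) and as the engine of the inductive step, together with a ``stability lemma'' asserting that a degree-$d$ polynomial which is $(\tau',\eps')$-good for suitably small $\tau'\ll\tau$ and $\eps'\ll\eps$ stays $(\tau,\eps)$-good after any restriction of a bounded number of variables. By Remark~\ref{rem:auto-delta} it is enough to control the dependence on $\tau$ and $\eps$ and re-insert the $\log(1/\delta)$ factor at the end, and by Remark~\ref{rem:alg} we may assume throughout that all polynomials involved have integer coefficients, so that every \emph{non-constant} restricted polynomial we meet has variance at least $1$ (which will rule out a degenerate case below).

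For the inductive step, given $p_1,\dots,p_k$ and $\tau,\eps,\delta$, let $m:=D_{d,1}(\tau,\eps,\delta/2)$ be the depth furnished by Lemma~\ref{lem:reg} with the target parameters, and choose auxiliary parameters $\tau':=\tau/(C(m+1)^{d+1})$ and $\eps'$ small enough that $(C\log(d/\eps'))^{d/2}\ge 2(m+1)^{d}(C\log(d/\eps))^{d/2}$ (so $\log(1/\eps')=\Theta(m^2\log(1/\eps))$ up to $d$-dependent factors), for an appropriate absolute constant $C$. Then:
\begin{enumerate}
\item apply the induction hypothesis to $(p_1,\dots,p_{k-1})$ with parameters $(\tau',\eps',\delta/2)$, obtaining a decision tree $T_1$ of depth $D_{d,k-1}(\tau',\eps',\delta/2)$ such that for a $1-\delta/2$ fraction of leaves $\rho_1$, all of $(p_1)_{\rho_1},\dots,(p_{k-1})_{\rho_1}$ are $(\tau',\eps')$-good;
\item at each leaf $\rho_1$ of $T_1$ apply Lemma~\ref{lem:reg} to $(p_k)_{\rho_1}$ with parameters $(\tau,\eps,\delta/2)$, obtaining a subtree of depth at most $m$ such that at a $1-\delta/2$ fraction of its leaves $\rho_2$ the polynomial $((p_k)_{\rho_1})_{\rho_2}$ is $(\tau,\eps)$-good;
\item let $T$ graft these subtrees onto the leaves of $T_1$, so a leaf of $T$ is $\rho=\rho_1\cup\rho_2$ with $\rho_2$ fixing at most $m$ variables and $\depth(T)=D_{d,k-1}(\tau',\eps',\delta/2)+m$.
\end{enumerate}
A union bound gives that at a $1-\delta$ fraction of leaves $\rho=\rho_1\cup\rho_2$ we have both that $(p_i)_{\rho_1}$ is $(\tau',\eps')$-good for all $i<k$ and that $(p_k)_\rho$ is $(\tau,\eps)$-good; invoking the stability lemma on each $(p_i)_{\rho_1}$ (which is then only further hit by the $\le m$-variable restriction $\rho_2$) upgrades the former to ``$(p_i)_\rho$ is $(\tau,\eps)$-good'', so all $k$ restricted polynomials are $(\tau,\eps)$-good at such a leaf.

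The technical heart is the stability lemma. For a $\tau'$-regular $q$ hit by a restriction $\rho$ fixing at most $m$ variables to values of magnitude $1$: expanding in the Fourier basis, each surviving coefficient is a sum of at most $(m+1)^{O(d)}$ original coefficients times signs, so Cauchy--Schwarz gives $\Inf_j(q_\rho)\le (m+1)^{O(d)}\Inf_j(q)\le (m+1)^{O(d)}\tau'\Var[q]$; and comparing $q_\rho$ to its ``full tail'' $q^{(\emptyset)}$ (the part of $q$ on monomials avoiding the fixed variables entirely), using $\Var[q^{(\emptyset)}]\ge(1-m\tau')\Var[q]$ together with $\|q_\rho-q^{(\emptyset)}\|_2^2\le (m+1)^{O(d)}\tau'\Var[q]$ (both from $\tau'$-regularity), gives $\Var[q_\rho]\ge\Var[q]/4$ once $\tau'\le (m+1)^{-O(d)}$; hence $\Inf_j(q_\rho)/\Var[q_\rho]\le (m+1)^{O(d)}\tau'\le\tau$ by the choice of $\tau'$. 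For an $\eps'$-skewed $q$ the same manipulations give $\Var[q_\rho]\le(m+1)^{O(d)}\Var[q]$, while $|\widehat{q_\rho}(\emptyset)-\widehat q(\emptyset)|\le\sqrt{(m+1)^{O(d)}\Var[q]}$ is dominated by $|\widehat q(\emptyset)|\ge (C\log(d/\eps'))^{d/2}\Var[q]$ (here the integer-coefficient normalization $\Var[q]\ge1$ for non-constant $q$ rules out the degenerate tiny-variance regime), so $|\widehat{q_\rho}(\emptyset)|\ge\tfrac12|\widehat q(\emptyset)|\ge\tfrac{(C\log(d/\eps'))^{d/2}}{2(m+1)^{O(d)}}\Var[q_\rho]\ge(C\log(d/\eps))^{d/2}\Var[q_\rho]$, i.e.\ $q_\rho$ is $\eps$-skewed.

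Finally one must solve the resulting recursion $D_{d,k}(\tau,\eps,\delta)\le D_{d,k-1}(\tau',\eps',\delta/2)+D_{d,1}(\tau,\eps,\delta/2)$ with $\tau'=\Theta(\tau\,m^{-d-1})$, $\log(1/\eps')=\Theta(m^2\log(1/\eps))$ and $m=D_{d,1}(\tau,\eps,\delta/2)=\poly(1/\tau,\log(1/\eps),\log(1/\delta))$; an induction on $k$ shows each level multiplies the exponent of $\bigl(\tfrac1\tau\cdot\log\tfrac1\eps\cdot\log\tfrac1\delta\bigr)$ by a $\Theta(d)$ factor, so after $k$ levels the bound is $\bigl(\tfrac1\tau\cdot\log\tfrac1\eps\cdot\log\tfrac1\delta\bigr)^{(2d)^{\Theta(k)}}$, which is the solution of equations~(\ref{eq:tauprime-epsprime-poly}) and~(\ref{eq:rec-Ddk}). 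I expect the main obstacles to be (a) the worst-case (over restrictions) variance lower bound $\Var[q_\rho]\ge\Var[q]/4$ for regular $q$, which is what forces the ``probability-$1$'' choice of $\tau',\eps'$ and the degree-dependent loss (and whose statement is what makes the integer-coefficient reduction convenient for the skewed case), and (b) verifying that the parameter recursion genuinely closes to a $(2d)^{\Theta(k)}$-type exponent rather than blowing up to a tower in $k$.
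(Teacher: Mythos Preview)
Your proposal is essentially the same approach as the paper's: induction on $k$, applying the $(k-1)$-case to $p_1,\dots,p_{k-1}$ with strengthened parameters $(\tau',\eps',\delta/2)$, then the $k=1$ case to $(p_k)_{\rho'}$, plus a deterministic ``stability'' claim that $(\tau',\eps')$-good survives any restriction of $\le D_{d,1}(\tau,\eps,\delta/2)$ variables (the paper's Claim~\ref{claim:general-d}, proved via Lemmas~\ref{lem:reg-stays-reg} and~\ref{lem:skewed-stays-skewed}). Your influence-growth bound via Cauchy--Schwarz is exactly Claim~\ref{claim:inf-cant-grow}, and your worst-case variance lower bound (comparing $q_\rho$ to the ``full tail'' $q^{(\emptyset)}$) is a clean alternative to the paper's one-variable-at-a-time argument (Claim~\ref{claim:1-var-restric} iterated); both yield the same conclusion. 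The recursion you write down is exactly~(\ref{eq:rec-Ddk}).

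One small correction: your appeal to Remark~\ref{rem:alg} to assume integer coefficients is a misreading---that remark goes the other way (the algorithmic Lemma~\ref{lem:k-reg-alg} follows from the structural Lemma~\ref{lem:k-reg}), and Lemma~\ref{lem:a} is stated for arbitrary multilinear polynomials. Fortunately you do not need this assumption. In the skewed case the paper simply normalizes to $\Var[a]=1$ at the outset (Claim~\ref{claim:gutsofsss}); then $|\widehat{a_\rho}(\emptyset)-\widehat a(\emptyset)|\le (eD/d)^{d/2}$ and $\Var[a_\rho]\le (eD/(d-1))^{d-1}$ are absolute bounds, and one checks directly that $|\widehat{a_\rho}(\emptyset)|\ge (C\log(d/\eps))^{d/2}\Var[a_\rho]$ once $\eps'$ satisfies~(\ref{eq:c2}). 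No lower bound on $\Var[q]$ is needed, so you can drop the integer-coefficient reduction entirely.
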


\begin{proof}
\ignore{As noted above we assume $d>1.$}
The proof is by induction on $k$.  The base case $k=1$ is given by
Lemma~\ref{lem:reg}; we have that $D_{d,1}(\tau,\eps,\delta)$
satisfies the claimed bound (\ref{eq:Ddk}).
So we may suppose that $k \geq 2$ and that
Lemma~\ref{lem:k-reg} holds for $1,2,\dots,k-1.$

\medskip

Here is a description of how the tree for $p_1,\dots,p_k$ is constructed.

\begin{itemize}

\item [(a)] Let
\begin{equation} \label{eq:tauprime-epsprime-poly}
\tau' =
{\frac {\tau^{\Theta(d)}}
{\left(
d \log {\frac 1 \tau} \log {\frac 1 \eps}
\log {\frac 1 \delta} \right)^{\Theta(d^2)}}}
,
\quad
\eps' = \left(
{\frac \eps d}
\right)^{{\frac 1 {\tau^2}} \left(d \log {\frac 1 \tau} \log {\frac 1 \eps} \right)^{\Theta(d)}
\cdot (\log {\frac 1 \delta})^2}
.
\end{equation}
Let $T'$ be the depth-$D_{d,k-1}(\tau',\eps',\delta/{2})$
decision tree obtained by inductively
applying the ``$k-1$'' case of
Lemma~\ref{lem:a}
to the  polynomials $p_1(x),\dots,p_{k-1}$ with parameters
$\tau'$, $\eps'$, and $\delta/{2}.$

\item [(b)]
For each leaf $\rho'$ in $T'$ such that
all of $(p_1)_{\rho'},\dots,(p_{k-1})_{\rho'}$
are $(\tau',\eps')$-good:

\begin{itemize}

\item Apply
the ``$k=1$'' case of Lemma~\ref{lem:a}
to the  polynomial $(p_k)_{\rho'}$ with parameters
$\tau$, $\eps$, and $\delta/{2}$.  (We say that a leaf/restriction
obtained in this second phase, which we denote
$\rho_k$, \emph{extends $\rho'$}.)

\item
Replace the leaf $\rho'$ with the depth-$D_{d,1}(\tau,\eps,\delta/{2})$
tree (call it $T_{\rho'}$)
thus obtained.

\end{itemize}

\item [(c)]
Output the resulting tree $T$.

\end{itemize}

It is clear that the decision tree $T$ has depth
at most
\begin{equation} \label{eq:rec-Ddk}
D_{d,k}(\tau,\eps,\delta) \eqdef D_{d,k-1}(\tau',\eps',\delta/{2}) +
D_{d,1}(\tau,\eps,\delta/{2}).
\end{equation}
In Section \ref{sec:solution}
we shall show that the quantity $D_{d,k}(\tau,\eps,\delta)$ that is
defined by (\ref{eq:tauprime-epsprime-poly}) and~(\ref{eq:rec-Ddk}) later
indeed satisfies (\ref{eq:Ddk}).

For a given leaf $\rho$ of $T$, let $\rho'$ be the restriction
corresponding to the variables fixed in step (a), and let $\rho_k$ be
the restriction that extends $\rho'$ in step (b), so $\rho = \rho' \rho_k$.

In order for it not to be the case that all of
$(p_1)_{\rho},\dots,(p_k)_{\rho}$ are
$(\tau,\eps)$-good at a leaf $\rho = \rho' \rho_k$,
one of the following must occur:

\begin{enumerate}

\item [(i)]
one of $(p_1)_{\rho'},\dots,(p_{k-1})_{\rho'}$ is not
$(\tau',\eps')$-good;

\item [(ii)]
all of $(p_1)_{\rho'},\dots,(p_{k-1})_{\rho'}$ are
$(\tau',\eps')$-good but
$(p_k)_{\rho'\rho_k}$
is not $(\tau,\eps)$-good;

\item [(iii)]
all of $(p_1)_{\rho'},\dots,(p_{k-1})_{\rho'}$ are
$(\tau',\eps')$-good but
one of
$(p_1)_{\rho'\rho_k},\dots,(p_{k-1})_{\rho'\rho_k}.$
is not $(\tau,\eps)$-good.

\end{enumerate}

By step (a), we have $\Pr[(i)] \leq \delta/{2}$.
Given any fixed $\rho'$ such that
all of $(p_1)_{\rho'},\dots,(p_{k-1})_{\rho'}$
are $(\tau',\eps')$-good, by step (b) we have
$\Pr_{\rho_k}[(p_k)_{\rho'\rho_k}$ is not $(\tau,\eps)$-good$]
\leq \delta/{2}$, and hence
$\Pr[(ii)] \leq \delta/{2}.$
So via a union bound, the desired probability
bound (that with probability $1-\delta$, all of
$(p_1)_{\rho'\rho_k},\dots,(p_k)_{\rho'\rho_k}$
are $(\tau,\eps)$-good at a random leaf $\rho = \rho' \rho_k$)
follows from the following claim, which says that (iii) above cannot occur:

\begin{claim} \label{claim:general-d}
Fix any $i \in \{1,\dots,k-1\}$.  Fix
$\rho'$ to be any leaf in $T'$ such that $(p_i)_{\rho'}$ is
$(\tau',\eps')$-good. Then
$(p_i)_{\rho' \rho_k}$ is $(\tau,\eps)$-good.
\ignore{\[
\Pr_{\rho_k: \rho_k \text{~extends~}\rho'}[
(p_i)_{\rho' \rho_k}
\text{~is not~}(\tau,\eps)\text{-good}]
\leq \delta/{(3k)}.
\]
}
\end{claim}

To prove Claim~\ref{claim:general-d},
let us write $a(x)$ to denote $(p_i)_{\rho'}(x)$, so the polynomial $a$
is $(\tau',\eps')$-good.
There are two cases depending on whether $a$ is
$\tau'$-regular or $\eps'$-skewed.

\medskip

\noindent {\bf Case I:  $a$ is $\tau'$-regular.}
\ignore{We may suppose wlog that $\Var[a]=1.$}
In this case the desired bound is given by the following lemma
which we prove in Section \ref{sec:rsr}.
(Note that the setting of $\tau'$ given in Equation
(\ref{eq:tauprime-epsprime-poly}) is compatible with the setting given
in the lemma below.)

\begin{lemma} \label{lem:reg-stays-reg}
Let $a(x)$ be a degree-$d$ $\tau'$-regular polynomial, where
\[\tau'
=
{\frac 1 2} \left( {\frac {d-1} {eD}} \right)^{d-1} \cdot {\frac 1 {16D^2}}
\quad \text{and~}D=D_{d,1}(\tau,\eps,\delta/2).
\]
Let $T$ be a depth-$D$ decision tree\ignore{ obtained by running the $k=1$ case of
Lemma~\ref{lem:k-reg} on a degree-$d$
polynomial $b(x)$,
using parameters $\tau, \eps, \delta/2$}.
Then for each leaf $\rho$ of $T$,
the polynomial $a_{\rho}$ is $\tau$-regular.
\end{lemma}

\medskip

\noindent {\bf Case II:  $a$ is $\eps'$-skewed.}
In this case the desired bound is given by the following lemma
which we prove in Section \ref{sec:sss}.
(Note that the setting of $\eps'$ given in Equation
(\ref{eq:tauprime-epsprime-poly}) is compatible with the setting given
in the lemma below.)

\begin{lemma} \label{lem:skewed-stays-skewed}
Let $a(x)$ be a degree-$d$ $\eps'$-skewed polynomial, where
\[
\eps' = \left(
{\frac \eps d}
\right)^{\Theta((eD/d)^2)}
\quad \text{and~}D=D_{d,1}(\tau,\eps,\delta/2).
\]
\ignore{
Let $T$ be the depth-$D_{d,1}(\tau,\eps,\delta/2)$
decision tree obtained by running the $k=1$ case of
Lemma~\ref{lem:k-reg} (i.e. Lemma~\ref{lem:reg})
on a degree-$d$ polynomial $b(x)$,
using parameters $\tau, \eps, \delta/2$.
Then for each leaf $\rho$ of $T$, the polynomial $a_{\rho}$ is $\eps$-skewed.
}
Let $T$ be a depth-$D$ decision tree.
Then for each leaf $\rho$ of $T$,
the polynomial $a_{\rho}$ is $\eps$-skewed.
\end{lemma}

These lemmas, together with the argument (given in Section \ref{sec:solution})
showing that $D_{d,k}(\tau,\eps,\delta) = 
\left( {\frac 1 \tau} \cdot \log {\frac 1 \eps} \cdot
\log {\frac 1 \delta} \right)^{(2d)^{\Theta(k)}} 
$
satisfies equations (\ref{eq:tauprime-epsprime-poly}) and (\ref{eq:rec-Ddk}),
yield Claim~\ref{claim:general-d}.
\end{proof}

\subsection{Proof of Lemma~\ref{lem:reg-stays-reg}}
\label{sec:rsr}

\ignore{
Recalling that
$D_{d,1}(\tau,\eps,\delta) = {\frac 1 \tau} \left( d \log {\frac 1 \tau}
\log {\frac {1} \eps}
\right)^{\Theta(d)} \cdot \log {\frac 1 \delta},$
Lemma~\ref{lem:reg-stays-reg} follows directly from the following claim:
}

The key to proving Lemma \ref{lem:reg-stays-reg} is establishing
the following claim.
(Throughout this subsection the expression
``$\left( {\frac {d-1} {es}} \right)^{d-1}$'' and its multiplicative
inverse should both be interpreted as 1 when $d=1.$)

\begin{claim} \label{claim:junta-reg-stays-reg}
Let $p(x_1,\dots,x_n)$ be a multilinear degree-$d$ polynomial which is
$\tau'$-regular.  Let $S \subset [n]$ be a set of at most $s$ variables
and let $\rho$ be a restriction fixing precisely the variables in $S.$
Suppose that
\[\tau' \leq {\frac 1 2} \left( {\frac {d-1} {es}} \right)^{d-1}
\cdot \min \left \{{\frac 1 {16s^2}}, \tau\right\}.
\]
Then\ignore{for any $t > e^d$, with probability at least BLAH over a random
restriction $\rho \in \{-1,1\}^S$,} we have that
$p_\rho$ is $\tau$-regular.
\end{claim}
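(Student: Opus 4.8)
The statement is about a single fixed (worst-case) restriction $\rho$, so I would isolate two quantitative facts and combine them: (1) restricting $s$ variables can increase any surviving variable's influence by at most a factor $\binom{s}{\le d-1}$, and (2) restricting $s$ variables of a $\tau'$-regular polynomial, \emph{for any} restriction, decreases the variance by at most a factor $2$. Granting these, regularity of $p_\rho$ is immediate from the hypothesis on $\tau'$. Throughout, write $p=\sum_{|T|\le d}\widehat p(T)\prod_{j\in T}x_j$, and for $W\subseteq S$ put $\rho_W=\prod_{j\in W}\rho_j\in\{-1,1\}$, so that the Fourier coefficient of $p_\rho$ on a set $V$ disjoint from $S$ is $\widehat{p_\rho}(V)=\sum_{W\subseteq S}\widehat p(V\cup W)\,\rho_W$. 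The basic bookkeeping fact I would use repeatedly is that each $T$ of size $\le d$ decomposes uniquely as $T=V\sqcup W$ with $V=T\setminus S$, $W=T\cap S$.

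\textbf{Step 1: influences do not blow up by more than $\binom{s}{\le d-1}$.} Fix $i\notin S$ and decompose $p=x_i g+h$ with $g,h$ not involving $x_i$ and $\deg g\le d-1$, so $\Inf_i(p)=\E[g^2]=\sum_{V}\widehat g(V)^2$ and $\Inf_i(p_\rho)=\E[g_\rho^2]=\sum_{V\cap S=\emptyset}\big(\sum_{W\subseteq S}\widehat g(V\cup W)\rho_W\big)^2$. Applying Cauchy--Schwarz to each inner sum (the number of admissible $W$ is $\binom{|S|}{\le d-1-|V|}\le\binom{s}{\le d-1}$ since $|W|\le d-1$) and then the unique-decomposition bookkeeping gives $\Inf_i(p_\rho)\le\binom{s}{\le d-1}\sum_{V}\sum_{W}\widehat g(V\cup W)^2=\binom{s}{\le d-1}\,\Inf_i(p)\le\binom{s}{\le d-1}\,\tau'\,\Var[p]$.

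\textbf{Step 2: the variance is not destroyed (the main obstacle).} For a \emph{random} restriction of $S$ one would get $\E_\rho[\Var[p_\rho]]\ge(1-s\tau')\Var[p]$ by orthogonality, but here $\rho$ is adversarial, and the real work is to rule out the cancellations that could in principle collapse most of the variance. The plan is to split, for each nonempty $V$ disjoint from $S$, the coefficient $\widehat{p_\rho}(V)=a_V+b_V$ into an ``untouched'' part $a_V=\widehat p(V)$ and a ``folded-in'' part $b_V=\sum_{\emptyset\ne W\subseteq S}\widehat p(V\cup W)\rho_W$, so that $\Var[p_\rho]=\sum_V(a_V+b_V)^2\ge\sum_V a_V^2-2\sqrt{\sum_V a_V^2}\,\sqrt{\sum_V b_V^2}$. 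Now $\sum_V a_V^2=\Var[p]-\sum_{T\ne\emptyset,\,T\cap S\ne\emptyset}\widehat p(T)^2\ge(1-s\tau')\Var[p]$, because the subtracted mass is at most $\sum_{j\in S}\Inf_j(p)\le s\tau'\Var[p]$ by regularity; and by Cauchy--Schwarz plus the same bookkeeping, $\sum_V b_V^2\le\binom{s}{\le d-1}\sum_{T\ne\emptyset,\,T\cap S\ne\emptyset}\widehat p(T)^2\le\binom{s}{\le d-1}\,s\,\tau'\,\Var[p]$ — again crucially using regularity. Combining, $\Var[p_\rho]\ge\big(1-s\tau'-2\sqrt{\binom{s}{\le d-1}\,s\,\tau'}\,\big)\Var[p]\ge\tfrac12\Var[p]$ once $\binom{s}{\le d-1}\,s\,\tau'$ is a small enough constant; since $\binom{s}{\le d-1}\le(es/(d-1))^{d-1}$, the hypothesis $\tau'\le\tfrac12\big(\tfrac{d-1}{es}\big)^{d-1}\cdot\tfrac1{16s^2}$ comfortably implies this (this is exactly the role of the $1/(16s^2)$ factor).

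\textbf{Step 3: conclude.} Dividing Step 1 by Step 2 yields $\Inf_i(p_\rho)/\Var[p_\rho]\le2\binom{s}{\le d-1}\,\tau'$ for every $i\notin S$, and using $\binom{s}{\le d-1}\le(es/(d-1))^{d-1}$ once more, the hypothesis $\tau'\le\tfrac12\big(\tfrac{d-1}{es}\big)^{d-1}\tau$ makes this $\le\tau$; hence $p_\rho$ is $\tau$-regular. The degenerate case $d=1$ is the same computation with every ``$\binom{s}{\le d-1}$'' and ``$(es/(d-1))^{d-1}$'' read as $1$ (restricting a variable of a linear form cannot increase a surviving influence, and the variance bound needs only $\tau'\lesssim1/s$). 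I expect Step 2 to be the only delicate point: it is the one place that genuinely needs $\tau'$ to be smaller than $1/\mathrm{poly}(s,d)$ rather than a mere constant, and it is precisely what blocks an adversarial restriction from wiping out the variance while leaving some surviving variable disproportionately influential.
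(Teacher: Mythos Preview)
Your proof is correct and follows the same overall two-part structure as the paper's (bound influence growth via Cauchy--Schwarz, then show the variance survives), but your treatment of the variance lower bound in Step~2 is genuinely different from the paper's and in fact more direct. The paper proves a one-variable claim, namely that restricting a single variable of a $\kappa$-regular polynomial drops the variance by at most a factor $(1-2\sqrt{\kappa})$, and then \emph{iterates} it: it restricts the $s$ variables one at a time, tracks the evolving regularity parameter $\reg_t$ after $t$ steps (using the influence bound at each stage), and finally checks $\sum_{t}\sqrt{\reg_t}\le 1/4$. You instead do the whole restriction at once by splitting each surviving Fourier coefficient into its ``untouched'' part $a_V=\widehat p(V)$ and its ``folded-in'' part $b_V$, and controlling the cross term by Cauchy--Schwarz together with the key observation $\sum_V b_V^2\le\binom{s}{\le d-1}\cdot s\tau'\cdot\Var[p]$. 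Both routes land on the same numerical conclusion $\Var[p_\rho]\ge\tfrac12\Var[p]$ under the stated hypothesis on $\tau'$; your argument avoids the bookkeeping of the inductive regularity sequence, while the paper's one-variable lemma is a slightly more reusable standalone statement. Your Step~1 and Step~3 match the paper's essentially verbatim.
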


\noindent {\bf Proof of Claim \ref{claim:junta-reg-stays-reg}:}
\ignore{We may assume without loss of generality that $\Var[p]=\sum_{0 \neq U} \widehat{p}(U)^2 = 1.$}
Since $p$ is $\tau'$-regular, for each $i \in [n]$ we have that $\Inf_i(p)
\leq \tau' \cdot \Var[p]$.  Let $T$ denote $[n] \setminus S$, the set
of variables that ``survive'' the restriction.
The high level idea of the proof is to show that
\ignore{with high probability over a random $\rho$,}both of the following
events take place:

\begin{enumerate}

\item [(i)] No variable $j \in T$ has $\Inf_j(p_\rho)$ ``too much larger''
than $\tau' \cdot \Var[p]$, i.e. all $j \in T$ satisfy
$\Inf_j(p_\rho) \leq \alpha \tau' \Var[p]$ for some
``not too large'' $\alpha > 1$; and

\item [(ii)] The variance $\Var[p_\rho]$ is ``not too much smaller'' than
$\Var[p]$, i.e. $\Var[p_\rho] \geq (1-\beta)\Var[p]$ for some ``not too
large'' $0 < \beta < 1$.
\end{enumerate}

Given (i) and (ii), the definition of regularity implies that $p_\rho$ is
$\left({\frac \alpha {1-\beta}} \cdot  \tau' \right)$-regular.
\ignore{
We show below that we can achieve $\alpha =
\left( {\frac {es} {d-1}} \right)^{d-1}$ and
$\beta = \new{blug}$, which gives
Claim \ref{claim:junta-reg-stays-reg}.
}

\paragraph{Event (i):  Upper bounding influences in the
restricted polynomial.}  We use the following simple
claim, which says that even in the worst case
influences cannot grow too much under restrictions
fixing ``few'' variables in low-degree polynomials.

\begin{claim} \label{claim:inf-cant-grow}
Let $p(x_1,\dots,x_n)$ be a degree-$d$ polynomial
and $S \subset [n]$ a set of at most $s$ variables.  Then for any $j \in
[n] \setminus S$ and any $\rho \in \{-1,1\}^S$, we have
$\Inf_j(p_\rho) \leq \left( {\frac {es} {d-1}} \right)^{d-1}
\cdot \Inf_j(p).$
\end{claim}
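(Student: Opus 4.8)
The plan is a direct Fourier-analytic computation; I expect no substantial obstacle, only some bookkeeping. First I would express the Fourier coefficients of the restricted polynomial $p_\rho$ in terms of those of $p$. Writing $T = [n] \setminus S$, for every $U \subseteq T$ we have
\[
\widehat{p_\rho}(U) = \sum_{W \subseteq S} \widehat{p}(U \cup W) \prod_{i \in W} \rho_i ,
\]
since restricting the coordinates in $S$ according to $\rho$ turns the monomial $\prod_{i \in U \cup W} x_i$ into $\left(\prod_{i \in W}\rho_i\right)\prod_{i \in U} x_i$, and grouping the terms of $p$ by $S' \cap T$ gives exactly the sum above. Because $p$ has degree $d$, only sets with $|U \cup W| \leq d$ contribute, so the sum ranges over $W \subseteq S$ with $|W| \leq d - |U|$.

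Next I would fix $j \in T$ and restrict attention to $U \ni j$, so that $|U| \geq 1$ and hence $|W| \leq d - 1$ in the sum above. Let $N := \sum_{i=0}^{d-1}\binom{s}{i}$, which upper bounds (using $|S| \leq s$) the number of sets $W$ that can appear; the standard estimate gives $N \leq \left(\tfrac{es}{d-1}\right)^{d-1}$ (interpreted as $1$ when $d = 1$). By Cauchy--Schwarz,
\[
\widehat{p_\rho}(U)^2 = \left(\sum_{W \subseteq S,\ |W| \leq d - |U|} \widehat{p}(U \cup W)\prod_{i \in W}\rho_i\right)^{2} \leq N \cdot \sum_{W \subseteq S,\ |W| \leq d - |U|} \widehat{p}(U \cup W)^{2} .
\]

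Finally I would sum this over all $U \subseteq T$ with $j \in U$. The map $(U,W) \mapsto U \cup W$ is a bijection between pairs $(U,W)$ with $U \subseteq T$, $j \in U$, $W \subseteq S$ and sets $S' \ni j$ (the inverse being $U = S' \cap T$, $W = S' \cap S$, using $S \cap T = \emptyset$), so summing the displayed inequality term by term yields
\[
\Inf_j(p_\rho) = \sum_{U \subseteq T,\ j \in U} \widehat{p_\rho}(U)^{2} \leq N \cdot \sum_{S' \ni j} \widehat{p}(S')^{2} = N \cdot \Inf_j(p) \leq \left(\frac{es}{d-1}\right)^{d-1}\Inf_j(p) ,
\]
which is the claim. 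The only points needing care are keeping the exponent at $d-1$ rather than $d$ (this is exactly where the hypothesis $j \in U$, hence $|W| \leq d - |U| \leq d-1$, is used) and the degenerate case $d = 1$, where $p_\rho$ and $p$ have the same degree-$1$ part up to a constant shift and the inequality is a trivial equality. (In the regime $s < d-1$, which does not arise in the applications of this claim, one would instead bound $N \leq 2^{s}$ directly.)
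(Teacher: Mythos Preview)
Your proof is correct and follows essentially the same approach as the paper's own proof: express $\widehat{p_\rho}(U)$ as a $\pm 1$-weighted sum over $W\subseteq S$ of $\widehat{p}(U\cup W)$, use the degree bound to cap $|W|\le d-1$ when $j\in U$, apply Cauchy--Schwarz with the count $\sum_{i=0}^{d-1}\binom{s}{i}\le (es/(d-1))^{d-1}$, and sum over $U\ni j$. Your additional remarks on the bijection, the $d=1$ case, and the regime $s<d-1$ are fine clarifications but not needed beyond what the paper does.
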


\begin{proof}
Let $T$ denote $[n] \setminus S$.  Fix any $j \in T$ and any $U \subseteq T$
such that $j \in U.$ The Fourier coefficient $\widehat{p_\rho}(U)$ equals
$\sum_{S' \subseteq S} \widehat{p}(S' \cup U) \prod_{i \in S'} \rho_i$.
Recalling that $p$ has degree $d$, we see that in order for a subset
$S'$ to make a nonzero contribution to the sum it must be the
case that $|S'| \leq d - |U| \leq d - 1$, so we have that
$\widehat{p_\rho}(U)$ is a $(\pm 1)$-weighted sum of at most
$\sum_{j=0}^{d-1} {s \choose j} \leq \left( {\frac {es} {d-1}} \right)^{d-1}$
Fourier coefficients of $p$.  It follows from Cauchy-Schwarz
that
\[
\widehat{p_\rho}(U)^2 = \left(
\sum_{S' \subseteq S} \widehat{p}(S' \cup U) \prod_{i \in S'} \rho_i
\right)^2 \leq
\left(
\sum_{S' \subseteq S} \widehat{p}(S' \cup U)^2
\right) \cdot
\left( {\frac {es} {d-1}} \right)^{d-1}.
\]
Summing this inequality over all $U \subseteq T$ such that $j \in U$, we get
that
\[
\Inf_j(p_\rho) = \sum_{j \in U \subseteq T} \widehat{p_\rho}(U)^2
\leq
\left( \sum_{j \in V \subseteq [n]} \widehat{p}(V)^2 \right) \cdot
\left( {\frac {es} {d-1}} \right)^{d-1}
=
\left( {\frac {es} {d-1}} \right)^{d-1} \Inf_j(p).
\]
\end{proof}

In the context of event (i), since $\Inf_j(p) \leq \tau' \cdot \Var[p]$, we
get that $\Inf_j(p_\rho) \leq \left( {\frac {es} {d-1}} \right)^{d-1}
\cdot \tau' \cdot \Var[p]$, i.e. the ``$\alpha$'' parameter
of (i) is $\left( {\frac {es} {d-1}} \right)^{d-1}.$

\ignore{
Work out, put in stuff saying influences can't grow too much --
Claim 8 (single var influences don't grow), and general case we need as well.
}

\paragraph{Event (ii):  Lower bounding the variance of the
restricted polynomial.}
The following simple claim says that restricting a single variable in a regular
polynomial cannot decrease the variance by too much:

\begin{claim} \label{claim:1-var-restric}
For $p(x_1,\dots,x_n)$ any multilinear degree-$d$ $\kappa$-regular polynomial
and $\rho$ any restriction that fixes a single variable to a value in $\{-1,1\}$, the
restricted polynomial $p_\rho$ satisfies $\Var[p_\rho] \geq (1 - 2 \sqrt{\kappa})\Var[p].$
\end{claim}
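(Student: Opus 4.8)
The plan is to diagonalize $p$ along the single restricted variable and then carry out a careful (but elementary) Fourier accounting. Assume without loss of generality that $\rho$ fixes $x_n$ to a value $b \in \{-1,1\}$, and write
\[
p(x) = p_0(x_1,\dots,x_{n-1}) + x_n\,p_1(x_1,\dots,x_{n-1}),
\]
where $p_0 = \sum_{n \notin S}\widehat{p}(S)\prod_{i\in S}x_i$ and $p_1 = \sum_{n \in S}\widehat{p}(S)\prod_{i\in S\setminus\{n\}}x_i$, so that $p_\rho = p_0 + b\,p_1$. First I would record the elementary identities (all expectations and variances over a uniform draw of $x$): $\Var[p] = \Var[p_0] + \Inf_n(p)$; $\Var[p_1] = \Inf_n(p) - \widehat{p}(\{n\})^2$; and $\Var[p_\rho] = \Var[p_0] + \Var[p_1] + 2b\,\Cov(p_0,p_1)$ (using $b^2=1$). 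Combining these three gives the exact identity
\[
\Var[p] - \Var[p_\rho] = \widehat{p}(\{n\})^2 - 2b\,\Cov(p_0,p_1).
\]

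Next I would bound the covariance term. Since $\Cov(p_0,p_1) = \sum_{\emptyset\neq S\subseteq[n-1]}\widehat{p}(S)\,\widehat{p}(S\cup\{n\})$, Cauchy--Schwarz yields
\[
|\Cov(p_0,p_1)| \le \sqrt{\Var[p_0]}\cdot\sqrt{\Var[p_1]} \le \sqrt{\Var[p]}\cdot\sqrt{\Inf_n(p) - \widehat{p}(\{n\})^2}.
\]
Writing $V = \Var[p]$, $I = \Inf_n(p)$ (so $I \le \kappa V$ by $\kappa$-regularity) and $a = \widehat{p}(\{n\})^2 \in [0,I]$, the two displays above combine to give
\[
\Var[p] - \Var[p_\rho] \le a + 2\sqrt{V}\cdot\sqrt{I-a}.
\]
Finally I would argue that the right-hand side is maximized over $a \in [0,I]$ at $a = 0$: for $a < I$ its derivative in $a$ is $1 - \sqrt{V/(I-a)}$, which is negative because $I - a \le I \le \kappa V < V$ (equivalently, use the tangent-line bound $\sqrt{I-a} \le \sqrt{I} - a/(2\sqrt{I})$ together with $\sqrt{V}\ge\sqrt{I}$ to cancel the $+a$). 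Hence $\Var[p] - \Var[p_\rho] \le 2\sqrt{VI} \le 2\sqrt{\kappa}\,V$, which is exactly the claimed bound $\Var[p_\rho] \ge (1-2\sqrt{\kappa})\Var[p]$.

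The only point that requires care is this last step: the crude estimates $\widehat{p}(\{n\})^2 \le \kappa V$ and $|\Cov(p_0,p_1)| \le \sqrt{\kappa}\,V$ by themselves only give the weaker conclusion $\Var[p_\rho] \ge (1 - \kappa - 2\sqrt{\kappa})\Var[p]$, so it is essential to keep the $-\widehat{p}(\{n\})^2$ inside the square root (via the exact identity $\Var[p_1] = \Inf_n(p) - \widehat{p}(\{n\})^2$) and invoke the monotonicity above to absorb the leftover $\widehat{p}(\{n\})^2$ term. Everything else is routine Fourier bookkeeping.
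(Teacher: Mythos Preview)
Your proof is correct and follows essentially the same approach as the paper: decompose $p$ along the restricted variable and bound the variance drop via Cauchy--Schwarz on the cross terms $\widehat{p}(U)\widehat{p}(U\cup\{n\})$. Your handling of the $\widehat{p}(\{n\})^2$ contribution (via the optimization in $a$, using $I\le\kappa V<V$) is in fact slightly more careful than the paper's, which absorbs this term somewhat loosely into the Cauchy--Schwarz sum.
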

\begin{proof}
Let $\kappa$ be a restriction that fixes $x_1$ to either
$+1$ or $-1$.  For a set $U \subset [n]$, $1 \notin U$ we have that the
sets $U$ and $U \cup \{1\}$ together contribute $\widehat{p}(U)^2 +
\widehat{p}(U \cup \{1\})^2$ to $\Var[p] = \sum_{0 \neq V} \widehat{p}(V)^2.$
In $p_{\rho}$, we have $\widehat{p_{\rho}}(U \cup \{1\})=0$ and
$\widehat{p_{\rho}}(U)=\widehat{p}(U) \pm \widehat{p}(U \cup \{1\})$, so
the sets $U$ and $U \cup \{1\}$ together contribute
$(\widehat{p}(U) \pm \widehat{p}(U \cup \{1\}))^2$ to $\Var[p_{\rho}].$
Hence the difference between the contributions in $p$ versus in $p_{\rho}$
is at most $2|\widehat{p}(U)\widehat{p}(U \cup \{1\})|$ in magnitude.
Summing over all $U \subset [n], 1 \notin U$ we get that
\begin{eqnarray*}
\Var[p] - \Var[p_\rho] &\leq& 2 \sum_{1 \notin U \subset [n]} |\widehat{p}(U) \widehat{p}(U \cup \{1\})| \\
&\leq& 2 \cdot \sqrt{\sum_{1 \notin U \subset [n]} \widehat{p}(U)^2} \cdot
\sqrt{\sum_{1 \notin U \subset [n]} \widehat{p}(U \cup \{1\})^2}\\
&\leq& 2 \cdot \sqrt{\Var[p]} \cdot
\sqrt{\Inf_1(p)}\\
&\leq& 2 \cdot \sqrt{\Var[p]} \cdot \sqrt{\kappa \cdot \Var[p]}
\quad\text{(because $p$ is $\kappa$-regular)}\\
&=& 2 \sqrt{\kappa }\cdot \Var[p].
\end{eqnarray*}
\end{proof}

To establish part (ii), we consider the restriction $\rho$ fixing
all variables in $S$ as being built up by restricting one variable
at a time.  We must be careful in doing this, because the variance lower bound
of Claim \ref{claim:1-var-restric} depends on the regularity of the current
polynomial, and this regularity changes as we successively
restrict variables (indeed this regularity is what we are trying to bound).
Therefore, for $0 \leq t \leq s$, let us define $\reg_t$ as the ``worst-case''
(largest possible)
regularity of the polynomial $p$ after $t$ variables have been restricted
(so we have $\reg_0=\tau'$ since by assumption $p$ is initially
$\tau'$-regular); our goal is to upper bound $\reg_s.$
For $0 \leq t \leq s$, let $\rho_t$ denote a restriction that fixes
exactly $t$ of the $s$ variables in $S$ (so $p_{\rho_0}$ is simply $p$).
By repeated applications of Claim \ref{claim:1-var-restric} we have

\begin{eqnarray*}
\Var[p_{\rho_t}] &\geq&
\left(1 - 2 \sqrt{\reg_{t-1}}\right) \Var[p_{\rho_{t-1}} ]\\
&\geq& \left(1 - 2 \sqrt{\reg_{t-1}}\right)
\left(1 - 2 \sqrt{\reg_{t-2}}\right) \Var[p_{\rho_{t-2}} ]\\
&\geq& \cdots\\
&\geq&
(1 - 2 \sqrt{\reg_{t-1}}) \cdots (1 - 2 \sqrt{\reg_0})
\Var[p],
\end{eqnarray*}
and by Claim \ref{claim:inf-cant-grow} we have that every $j$ satisfies
$\Inf_j(p_{\rho_t}) \leq
\left( {\frac {es} {d-1}} \right)^{d-1}\cdot \max_{i \in [n]} \Inf_i(p).$
We shall set parameters so that $\sum_{r=0}^{s-1} \sqrt{\reg_r} \leq
{\frac 1 4}$; since
\[
(1 - 2 \sqrt{\reg_{t-1}}) \cdots (1 - 2 \sqrt{\reg_0}) \geq
1 - 2 \sum_{r=0}^{s-1} \sqrt{\reg_r},
\]
this means that for all $0 \leq t \leq s$ we shall have $\Var[p_{\rho_t}]
\geq {\frac 1 2} \Var[p].$  We therefore have that every $t$ satisfies
\[
{\frac {\Inf_{j}(p_{\rho_t})}
{\Var[p_{\rho_t}]}}
\leq
{\frac {
\left( {\frac {es} {d-1}} \right)^{d-1}
\cdot \max_{i \in [n]} \Inf_i(p)
}
{{\frac 1 2} \Var[p]}
}
\leq 2 \left( {\frac {es} {d-1}} \right)^{d-1} \tau',
\]
and therefore $\reg_{t} \leq 2\left( {\frac {es} {d-1}} \right)^{d-1}
\tau'.$   Finally, to confirm that $\sum_{r=0}^{s-1} \sqrt{\reg_r} \leq
{\frac 1 4}$ as required, we observe that we have
\[
\sum_{r=0}^{s-1} \sqrt{\reg_r}
\leq s \sqrt{\max_{0 \leq r \leq s-1} \reg_r} \leq
s \sqrt{2 \left( {\frac {es} {d-1}} \right)^{d-1} \tau'}
\]
which is at most ${\frac 1 4}$ by the conditions that Claim
\ref{claim:junta-reg-stays-reg} puts on $\tau'$.  So we indeed have that
\[
\reg_s \leq 2\left( {\frac {es} {d-1}} \right)^{d-1}
\tau' \leq \tau,
\]
again by the conditions that Claim \ref{claim:junta-reg-stays-reg}
puts on $\tau'.$  This concludes the proof of
Claim \ref{claim:junta-reg-stays-reg}.
\qed

\ignore{
Since Claim \ref{claim:junta-reg-stays-reg} holds for all restrictions fixing
at most $s$ variables rather than just with high probability, it immediately
gives the following consequence for
``decision tree restrictions.''

\begin{claim} \label{claim:tree-reg-stays-reg}
Let $p(x_1,\dots,x_n)$ be a multilinear degree-$d$ polynomial which is
$\tau'$-regular.  Let $T$ be a decision tree of depth $D$, and suppose
that
\[\tau'
\leq {\frac 1 2} \left( {\frac {d-1} {eD}} \right)^{d-1} \cdot \min \left \{{\frac 1 {16D^2}},
 \tau\right\}.
\]
Then\ignore{for any
$t > e^{d}$, with probability at least $1 - 3 \ell d e^{-\Omega(t^{1/d})}$
over a random restriction $\rho$ corresponding to a random leaf of $T$,}
for a restriction $\rho$ corresponding to a leaf of $T$, we have
that $p_\rho$ is $\tau$-regular.
\end{claim}
}

With Claim \ref{claim:junta-reg-stays-reg} in hand
we are ready to prove Lemma \ref{lem:reg-stays-reg}.
As stated in the lemma, let
$a(x)$ be a degree-$d$ $\tau'$-regular polynomial, where
\[\tau'
= {\frac 1 2} \left( {\frac {d-1} {eD}} \right)^{d-1} \cdot {\frac 1 {16D^2}}
\quad \text{and~}D=D_{d,1}(\tau,\eps,\delta/2)
\]
(note that by the definition of the $D_{d,1}(\cdot,\cdot,\cdot)$
function we have that ${\frac 1 {16D^2}} < \tau$). \ignore{
Let $T$ be the depth-$D$
decision tree obtained by running the $k=1$ case of
Lemma~\ref{lem:k-reg} on a degree-$d$
polynomial $b(x)$,
using parameters $\tau, \eps, \delta/3$.}
Claim \ref{claim:junta-reg-stays-reg}
gives that
\ignore{with probability at least $1 - {\frac \delta {3k}}$ over a
random leaf $\rho$ of $T$,}
at every leaf $\rho$ of $T$
the polynomial $a_{\rho}$ is $\tau$-regular, and
Lemma \ref{lem:reg-stays-reg} is proved.
\qed

\subsection{Proof of Lemma~\ref{lem:skewed-stays-skewed}} \label{sec:sss}

We may suppose w.l.o.g. that $\Var[a] = 1$.
Since $a$ is $\eps'$-skewed, we may suppose that
$\widehat{p}(\emptyset) \geq (C \log (d/\eps'))^{d/2}$.

Let $\rho$ be any restriction fixing up to $D$ variables.  The idea of the proof
is to show that  (i) $\widehat{p_\rho}(\emptyset)>0$ is still ``fairly large'', and
(ii) $\Var[p]$ is ``not too large''; together these conditions imply that $p_\rho$ is
skewed.  We get both (i) and (ii) from the following claim which is quite similar to
Claim \ref{claim:inf-cant-grow}:

\begin{claim} \label{claim:gutsofsss}
Let $p(x_1,\dots,x_n)$ be a degree-$d$ polynomial with $\Var[p]=1$ and $\widehat{p}(\emptyset)=0.$
Let $S \subset [n]$ be a set of at most $s$ variables.  Then
for any $\rho \in \{-1,1\}^{S}$, we have that (i) $|\widehat{p_\rho}(\emptyset)| \leq \left(
{\frac {es} d}\right)^{d/2}$, and (ii) $\Var[p_\rho] \leq \left( {\frac {es}{d-1}} \right)^{d-1}
\Var[p].$
\end{claim}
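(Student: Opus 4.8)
The plan is to carry out, for the empty-set Fourier coefficient and for the full collection of nonempty Fourier coefficients of $p_\rho$, exactly the Cauchy--Schwarz bookkeeping used in the proof of Claim~\ref{claim:inf-cant-grow}. Write $T=[n]\setminus S$ for the set of variables surviving $\rho$, and recall the standard binomial estimate $\sum_{j=0}^{m}\binom{s}{j}\le (es/m)^m$ valid for $1\le m\le s$ (with the expression read as $1$ when $m=0$, consistent with the convention in force in this subsection).

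For part~(i): since $\widehat{p_\rho}(\emptyset)=\sum_{S'\subseteq S}\widehat{p}(S')\prod_{i\in S'}\rho_i$, and $p$ has degree $d$ with $\widehat{p}(\emptyset)=0$, only the sets $S'$ with $1\le|S'|\le d$ make a nonzero contribution, and there are at most $\sum_{j=1}^{d}\binom{s}{j}\le (es/d)^d$ of them. Cauchy--Schwarz then gives
\[
\widehat{p_\rho}(\emptyset)^2\;\le\;\Big(\sum_{\emptyset\ne S'\subseteq S}\widehat{p}(S')^2\Big)\cdot(es/d)^d\;\le\;\Var[p]\cdot(es/d)^d\;=\;(es/d)^d,
\]
so $|\widehat{p_\rho}(\emptyset)|\le(es/d)^{d/2}$. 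For part~(ii): for each nonempty $U\subseteq T$ we have $\widehat{p_\rho}(U)=\sum_{S'\subseteq S}\widehat{p}(S'\cup U)\prod_{i\in S'}\rho_i$, and since $|U|\ge1$ and $p$ has degree $d$ only the sets $S'$ with $|S'|\le d-1$ contribute, so there are at most $\sum_{j=0}^{d-1}\binom{s}{j}\le(es/(d-1))^{d-1}$ nonzero terms. Cauchy--Schwarz yields $\widehat{p_\rho}(U)^2\le(es/(d-1))^{d-1}\sum_{S'\subseteq S}\widehat{p}(S'\cup U)^2$, and summing over all nonempty $U\subseteq T$ --- using that a set $V$ with $V\cap T\ne\emptyset$ is recovered uniquely from the pair $(S',U)$ via $S'=V\cap S$, $U=V\cap T$, so no Fourier weight of $p$ is counted twice --- gives
\[
\Var[p_\rho]=\sum_{\emptyset\ne U\subseteq T}\widehat{p_\rho}(U)^2\;\le\;(es/(d-1))^{d-1}\sum_{\emptyset\ne V\subseteq[n]}\widehat{p}(V)^2\;=\;(es/(d-1))^{d-1}\cdot\Var[p],
\]
which is~(ii) (for $d=1$ the factor is $1$ and the bound is immediate, since restricting variables only deletes Fourier weight of a linear form).

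There is no genuine obstacle here; the only points needing a little care are the combinatorial estimate $\sum_{j=0}^{m}\binom{s}{j}\le(es/m)^m$ and the observation that the double sum over pairs $(S',U)$ double-counts no Fourier weight of $p$. Finally I would record how Claim~\ref{claim:gutsofsss} yields Lemma~\ref{lem:skewed-stays-skewed}: writing $a=\widehat{a}(\emptyset)+\bar{a}$ with $\Var[a]=1$ (w.l.o.g.) and $\widehat{\bar a}(\emptyset)=0$, and applying the claim to $\bar{a}$ with $s\le D=D_{d,1}(\tau,\eps,\delta/2)$, part~(i) shows that $\widehat{a_\rho}(\emptyset)=\widehat{a}(\emptyset)+\widehat{\bar{a}_\rho}(\emptyset)$ stays within $(eD/d)^{d/2}$ of $\widehat{a}(\emptyset)$, while part~(ii) gives $\Var[a_\rho]=\Var[\bar{a}_\rho]\le(eD/(d-1))^{d-1}$. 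Since $a$ is $\eps'$-skewed with $\eps'=(\eps/d)^{\Theta((eD/d)^2)}$, the quantity $(C\log(d/\eps'))^{d/2}$ dominates both $(eD/d)^{d/2}$ and $(C\log(d/\eps))^{d/2}(eD/(d-1))^{d-1}$, so $|\widehat{a_\rho}(\emptyset)|\ge(C\log(d/\eps))^{d/2}\,\Var[a_\rho]$ holds and $a_\rho$ is $\eps$-skewed at every leaf $\rho$.
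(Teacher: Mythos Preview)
Your proof is correct and follows essentially the same approach as the paper: both parts use Cauchy--Schwarz together with the binomial estimate $\sum_{j=0}^{m}\binom{s}{j}\le(es/m)^m$, part~(i) bounding the single coefficient $\widehat{p_\rho}(\emptyset)$ and part~(ii) mirroring the calculation of Claim~\ref{claim:inf-cant-grow} summed over all nonempty $U\subseteq T$. Your explicit remarks about no double-counting and the $d=1$ convention are minor elaborations, not substantive departures, and your derivation of Lemma~\ref{lem:skewed-stays-skewed} from the claim matches the paper's.
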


\begin{proof}
Let $T$ denote $[n] \setminus S$ and let us write $x_S$ to denote the vector
of variables $(x_i)_{i \in S}$ and likewise $x_T$ denotes $(x_i)_{i \in T}.$
We may write $p(x)$ as $p(x_S,x_T) = p'(x_S) + q(x_S,x_T)$ where
$p'(x_S)$ is the truncation of $p$ comprising only the monomials all of whose
variables are in $S$, i.e. $p'(x_S) = \sum_{U \subseteq S} \widehat{p}(U)
\prod_{i \in U} x_i.$

For part (i), it is clear that for $\rho \in \{-1,1\}^S$
we have that $\widehat{p_\rho}$ equals $p'(\rho)$.  Since
\[
\Var[p'] =
\sum_{U \subseteq S} \widehat{p}(U)^2
\leq
\sum_{U \subseteq [n]} \widehat{p}(U)^2
= \Var[p] = 1,
\]
we have
\[
\left|\widehat{p_\rho}(\emptyset)\right|
=
\left|
\sum_{U \subseteq S} \widehat{p}(U) \prod_{i \in U} \rho_i
\right|
\leq
\sqrt{\sum_{U \subseteq S}
\widehat{p}(U)^2} \cdot \sqrt{\sum_{j=0}^d {s \choose j}} \leq {{es} \choose d}^{d/2}.
\]

For (ii), as in the proof of Claim \ref{claim:inf-cant-grow} we get that any
nonempty $U \subseteq T$ has

\[
\widehat{p_\rho}(U)^2 = \left(
\sum_{S' \subseteq S} \widehat{p}(S' \cup U) \prod_{i \in S'} \rho_i
\right)^2 \leq
\left(
\sum_{S' \subseteq S} \widehat{p}(S' \cup U)^2
\right) \cdot
\left( {\frac {es} {d-1}} \right)^{d-1}.
\]

Summing this inequality over all nonempty $U \subseteq T$,we get
that
\[
\Var[p_\rho] = \sum_{\emptyset \neq U \subseteq T} \widehat{p_\rho}(U)^2
\leq
\left( \sum_{\emptyset \neq V \subseteq [n]} \widehat{p}(V)^2 \right) \cdot
\left( {\frac {es} {d-1}} \right)^{d-1}
=
\left( {\frac {es} {d-1}} \right)^{d-1} \Var[p].
\]
This concludes the proof of Claim \ref{claim:gutsofsss}.
\end{proof}

\noindent {\bf Proof of Lemma~\ref{lem:skewed-stays-skewed}:}
Fix any leaf $\rho$ in the decision tree $T$ from the statement of
Lemma~\ref{lem:skewed-stays-skewed}.  As noted at the start of this subsection
we may suppose w.l.o.g. that $\Var[a]=1$ and 
$\widehat{a}(\emptyset) \geq
(C \log(d /\eps'))^{d/2}$.  Claim \ref{claim:gutsofsss} gives us that
$\widehat{p_\rho}(\emptyset) \geq 
(C \log ({\frac d {\eps'}}))^{d/2} - ({\frac {eD}d})^{d/2}$ and that
$\Var[p_\rho] \leq ({\frac {eD}{d-1}})^{d-1}$, so $p_\rho$ must be $\eps$-skewed as long
as the following inequality holds:

\begin{equation}
\label{eq:c2}
\left(C \log \left({\frac d {\eps'}}\right)\right)^{d/2} \geq \left({\frac {eD}d}\right)^{d/2} + 
\left({\frac {eD}{d-1}}\right)^{d-1} \cdot
\left(C \log \left({\frac d {\eps}}\right)\right)^{d/2}.
\end{equation}

Simplifying the above inequality we find that taking $\eps'$ as specified in 
Lemma~\ref{lem:skewed-stays-skewed} satisfies the inequality, and
Lemma~\ref{lem:skewed-stays-skewed} is proved.

\subsection{The solution to the equations}
\label{sec:solution}

To complete the proof of Lemma~\ref{lem:k-reg} it suffices to show that the 
quantity $D_{d,k}(\tau,\eps,\delta)$ that is
defined by (\ref{eq:tauprime-epsprime-poly}) and~(\ref{eq:rec-Ddk})
indeed satisfies (\ref{eq:Ddk}).
It is clear from Lemma \ref{lem:reg} that (\ref{eq:Ddk}) holds when
$k=1$.  A tedious but straightforward induction using 
(\ref{eq:tauprime-epsprime-poly}) and~(\ref{eq:rec-Ddk})
shows that (\ref{eq:Ddk}) gives a valid upper bound.  (To verify
the inductive step it is helpful to
note that for $k >1$, equations
(\ref{eq:tauprime-epsprime-poly}) and~(\ref{eq:rec-Ddk})
together imply that 
$D_{d,k}(\tau,\eps,\delta) \leq 2 D_{d,k-1}(\tau',\eps',\delta/2)$.)


\bibliography{allrefs}
\bibliographystyle{alpha}


\end{document}